\documentclass[12pt]{article}


\usepackage[left=1in,top=1in,right=1in,bottom=1in,head=.1in,nofoot]{geometry}
\setlength{\footskip}{24pt} 
\usepackage{setspace,url,bm,amsmath} 
\usepackage{titlesec} 
\titlelabel{\thetitle.\quad} 
\usepackage[margin=20pt]{subcaption}
\usepackage{amsmath,amsfonts,amsthm,amssymb}
\usepackage{graphicx}
\usepackage[colorlinks=true,linkcolor=black,citecolor=blue,urlcolor=blue]{hyperref}
\usepackage[table]{xcolor}
\usepackage{multirow}
\usepackage{array}
\usepackage{booktabs}
\usepackage{threeparttable}
\usepackage{comment}
\usepackage{float}
\usepackage{natbib}
\usepackage{indentfirst}

\newtheorem*{theorem*}{Theorem}
\newtheorem{theorem}{Theorem}
\newtheorem{proposition}{Proposition}
\newtheorem{lemma}{Lemma}
\newtheorem{remark}{Remark}
\newtheorem{example}{Example}
\newtheorem{definition}{Definition}
\newtheorem{corollary}{Corollary}
\newtheorem*{corollary*}{Corollary}
\newtheorem{condition}{Condition}

\newcommand{\LXTL}[1]{\textcolor{black}{#1}}


\def\top{{\mathrm{\scriptscriptstyle T}}}

\def\tdx{\tilde{{X}}}
\def\tdxi{{\tilde{x}}_{i\cdot}}

\def\tdyi{\tilde{Y}_{i\cdot}}
\def\tdy{\tilde{Y}}
\def\tdadjy{\tilde{Y}^{\textnormal{adj}}}
\def\tdadjyi{\tilde{Y}^{\textnormal{adj}}_{i\cdot}}
\def\tdadje{\tilde{\varepsilon}^{\textnormal{adj}}}
\def\tdadjei{\tilde{\varepsilon }^{\textnormal{adj}}_{i\cdot}}
\def\tdei{\tilde{\varepsilon }_{i\cdot}}
\def\tde{\tilde{\varepsilon}}

\def\tauht{\hat{\tau}_{ \scalebox{0.6}{$\textnormal{ht}$}}}
\def\tauhtb{\check{\tau}_{\scalebox{0.6}{$\textnormal{ht}$}}^{\scalebox{0.6}{$\textnormal{adj}$}}}

\def\tauhaj{\hat{\tau}_{ \scalebox{0.6}{$\textnormal{haj}$}}}
\def\tauhajb{\check{\tau}_{\scalebox{0.6}{$\textnormal{haj}$}}^{\scalebox{0.6}{$\textnormal{adj}$}}}

\def\taucht{{\hat{\tau}_{ \scalebox{0.6}{\textnormal{ht},${c}$ }}}}
\def\tauvht{{\hat{\tau}_{\textnormal{ht}, v}}}
\def\tauxhaj{{\hat{\tau}_{ \scalebox{0.6}{\textnormal{haj,${x}$}}}}}

\def\tauadjhaj{{\hat{\tau}_{ \scalebox{0.6}{$\textnormal{haj}$}}^{\scalebox{0.6}{$\textnormal{adj}$}}}}

\def\tauadjht{\hat{\tau}_{ \scalebox{0.6}{$\textnormal{ht}$}}^{\scalebox{0.6}{$\textnormal{adj}$}}}

\def\vttht{V_{\textnormal{ht},\tau\tau}}
\def\vtthtb{V_{\textnormal{ht},\tau\tau}^{\scalebox{0.6}{$\textnormal{adj}$}}}

\def\vtcht{V_{\textnormal{ht},\tau c}}
\def\vtchtb{V^{\scalebox{0.6}{$\textnormal{adj}$}}_{\textnormal{ht},\tau c}}

\def\vctht{V_{\textnormal{ht},c\tau}}
\def\vcthtb{V^{\scalebox{0.6}{$\textnormal{adj}$}}_{\textnormal{ht},c\tau }}

\def\vccht{V_{\textnormal{ht},cc}}

\def\vtthaj{V_{\textnormal{haj},\tau\tau}}
\def\vtxhaj{V_{\textnormal{haj},\tau x}}
\def\vxthaj{V_{\textnormal{haj},x\tau}}
\def\vxxhaj{V_{\textnormal{haj},xx}}

\def\vtthajb{V_{\textnormal{haj},\tau\tau}^{\scalebox{0.6}{$\textnormal{adj}$}}}

\def\vtxhajb{V^{\scalebox{0.6}{$\textnormal{adj}$}}_{\textnormal{haj},\tau x}}

\def\vxthajb{V^{\scalebox{0.6}{$\textnormal{adj}$}}_{\textnormal{haj},x\tau }}

\def\hatvtthajb{\hat{V}_{\textnormal{haj},\tau\tau}^{\scalebox{0.6}{$\textnormal{adj}$}}}
\def\hatvtthtb{\hat{V}_{\textnormal{ht},\tau\tau}^{\scalebox{0.6}{$\textnormal{adj}$}}}

\def\mx{\mathcal{M}_{x}}

\def\mc{\mathcal{M}_{c}}

\def\betaw{{{\beta}_{w}}}
\def\betav{{{\beta}_{v}}}

\def\hbetaw{{{\hat \beta}_{w}}}
\def\hbetav{{{\hat \beta}_{v}}}

\def\bary{\bar Y}
\def\dca{\mathcal{D}_{c}(A_c)}
\def\dxa{\mathcal{D}_{x}(A_x)}

\def\var{\operatorname{var}}
\def\cov{\operatorname{cov}}
\def\hcovfz{\hat{\operatorname{cov}}_{\textnormal{f,z}}}
\def\hcovft{\hat{\operatorname{cov}}_{\textnormal{f,1}}}
\def\hcovfc{\hat{\operatorname{cov}}_{\textnormal{f,0}}}
\def\hcovf{\hat{\operatorname{cov}}_{\textnormal{f}}}
\def\cova{\textnormal{cov}_{\textnormal{a}}}
\def\pra{\textnormal{pr}_{\textnormal{a}}}
\def\vara{\textnormal{var}_{\textnormal{a}}}
\def\varf{\operatorname{var}_{\textnormal{f}}}
\def\hvarfz{\hat{\operatorname{var}}_{\textnormal{f,z}}}

\def\hvarf{\hat{\operatorname{var}}_{\textnormal{f}}}
\def\covf{\operatorname{cov}_{\textnormal{f}}}
\def\vhathw{\hat{V}_{\scalebox{.5}{$\textnormal{HW}$}}}
\def\vhatadjhw{\hat{V}^{\textnormal{adj}}_{\scalebox{.5}{$\textnormal{HW}$}}}
\def\vhatlz{\hat{V}_{\scalebox{.5}{$\textnormal{LZ}$}}}
\def\vhatadjlz{\hat{V}^{\textnormal{adj}}_{\scalebox{.5}{$\textnormal{LZ}$}}}

\def\radjx{R^{\textnormal{adj}}_{x}}
\def\radjc{R^{\textnormal{adj}}_{c}}
\addtolength\topmargin{35pt}

\def\hradjx{\hat{R}^{\textnormal{adj}}_{x}}
\def\hradjc{\hat{R}^{\textnormal{adj}}_{c}}

\setlength{\textheight}{8.2in} \setlength{\textwidth}{7.1in}
\setlength{\topmargin}{-0pt} \setlength{\oddsidemargin}{-0.1in}
\setlength{\evensidemargin}{0pt} \tolerance=500

\begin{document}
\begin{singlespace}
\title{\bf Design-based theory for cluster rerandomization}

\author{
\small
{
Xin Lu$^{1}$, Tianle Liu$^{2}$, Hanzhong Liu$^{1}$, Peng Ding$^{3}$
}
\\ \\
{\small $^{1}$ Center for Statistical Science, Department of Industrial Engineering, Tsinghua University, Beijing, China}\\
{\small $^{2}$ Department of Statistics, Harvard University, Cambridge, MA, U.S.A.}\\
{\small $^{3}$ Department of Statistics, University of California, Berkeley, California, U.S.A.}
}

\date{}
\maketitle
\end{singlespace}

\thispagestyle{empty}
\vskip -8mm 

\begin{singlespace}
\begin{abstract}

Complete randomization balances covariates on average, but covariate imbalance often exists in finite samples. Rerandomization can ensure covariate balance in the realized experiment by discarding the undesired treatment assignments. Many field experiments in public health and social sciences assign the treatment at the cluster level due to logistical constraints or policy considerations. Moreover, they are frequently combined with rerandomization in the design stage. We refer to cluster rerandomization as a cluster-randomized experiment compounded with rerandomization to balance covariates at the individual  or cluster level. Existing asymptotic theory can only deal with rerandomization with treatments assigned at the individual level, leaving that for cluster rerandomization an open problem. To fill the gap, we provide a design-based theory for cluster rerandomization. Moreover, we compare two cluster rerandomization schemes that use prior information on the importance of the covariates: one based on the weighted Euclidean distance and the other based on the Mahalanobis distance with tiers of covariates. We demonstrate that the former dominates the latter with optimal weights and orthogonalized covariates. Last but not least, we discuss the role of covariate adjustment in the analysis stage and recommend covariate-adjusted procedures that can be conveniently implemented by least squares with the associated robust standard errors.

\vspace{12pt}
\noindent {\bf Key words}: cluster randomization; cluster-robust standard error; constrained randomization; covariate balance
\end{abstract}

\end{singlespace}

\newpage

\clearpage
\setcounter{page}{1}

\allowdisplaybreaks
\baselineskip=24pt

\begin{singlespace}

\section{Introduction}\label{sec:intro}


Cluster randomization has been widely used in public health \citep{donner2000design, turner2017reviewdesign, turner2017reviewanalysis, hayes2017cluster} and social sciences \citep{raudenbush1997statistical, Schochet2013, Athey2017, schochet2020analyzing}. It assigns the treatments at the cluster level, with units within a cluster receiving the same treatment or control condition, which helps to avoid interference within clusters and is applicable when individual-level assignments are logistically infeasible.

Before experiments, researchers often collect covariates at the individual or cluster level. For instance, in clinical trials, individual-level covariates may include gender, race, age, and education of patients while cluster-level covariates may include the capacity of the clinics and whether they are in a metropolitan area \citep{li2016evaluation}. The probability of having imbalance in covariates after treatment assignments is non-negligible and would complicate the interpretation of the experimental results, especially if some particular covariates are predictive to the outcome.

A common approach to tackle covariate imbalance is to perform regression adjustment in the analysis stage of an experiment \citep{Lin2013Agnostic, su2021modelassisted}. In contrast,
rerandomization actively reduces the imbalance in the design stage. Rerandomization, also known as constrained randomization \citep{moulton2004covariate, morgan2012rerandomization, li2016evaluation, li2017evaluation, Li9157}, is a procedure where one discards the undesired assignments that result in highly unbalanced covariates and repeats randomization until a proper assignment appears. The existing design-based theory for rerandomization assumes that the treatments are assigned at the individual level \citep{morgan2012rerandomization, Li9157}, leaving it unclear how rerandomization would theoretically work in cluster experiments. The overarching goal of this paper is to fill this gap.


We study cluster rerandomization with both individual- and cluster-level covariates, and derive an asymptotic theory for estimators either with or without regression adjustment. Our theory precisely quantifies the asymptotic efficiency gain of performing cluster rerandomization at both levels. We further show that under cluster rerandomization, the adjusted estimators based on linear regressions recommended by \cite{su2021modelassisted} still have desired theoretical guarantees. All these inferential procedures can be conveniently realized by standard statistical packages. Moreover, previous asymptotic theory mainly focuses on the rerandomization scheme based on the Mahalanobis distance \citep{morgan2012rerandomization, Li9157, Li2020factorial} while we also discuss a distinct but widely-adopted practical scheme based on the weighted Euclidean distance where one could utilize prior knowledge of the covariate importance \citep{raab2001balance, wight2002limits,althabe2008behavioral, de2012best, li2016evaluation,li2017evaluation,hayes2017cluster,dempsey2018effect}. We show that for orthogonalized covariates with optimal weights, it dominates the cluster rerandomization with tiers of covariates, another scheme accounting for the covariate importance, the non-cluster version of which is proposed in \cite{morgankari2015}.

To facilitate the discussion, we introduce the following notation. Let $\operatorname{det}(\cdot)$ denote the determinant of a matrix. \LXTL{For a real-valued matrix $A$, we write $A>0$ if $A$ is strictly positive definite.} For a finite population $\{a_1, \ldots, a_M\}$, let $\var_\textup{f}(a) = (M-1)^{-1} \sum_{i=1}^M (a_i - \bar{a})^2$ denote its finite-population variance, where $\bar{a}=M^{-1} \sum_{i=1}^M a_i$. Analogously, let $\cov_\textup{f}$ denote the finite-population covariance. For two random sequences $A$ and $B$, let $A\mathrel{\dot{\sim}}B$ indicate that they have the same asymptotic distribution. Let $\pra$, $\vara$ and $\cova$ denote the asymptotic probability, variance, and covariance, respectively. Besides, our asymptotic theory requires mild moment conditions, but we relegate them to the appendix to avoid clutter in the main text.


\section{Notation and review for cluster-randomized experiments}\label{sec:crt}

Consider a cluster-randomized experiment with a finite population of $N$ units grouped into $M$ clusters. The experimenter randomly assigns $M_1$ clusters to the treatment arm and $M_0=M-M_1$ clusters to the control arm, with units in the same cluster receiving the same treatment or control. For cluster $i$, let $Z_i$ be the treatment indicator with $Z_i=1$ when it is assigned to the treatment arm and $Z_i=0$ when it is assigned to the control arm. Let $n_i$ be the size of cluster $i$, with $\sum_{i=1}^M n_i = N$. Let $(i,j)$ or simply $ij$ to index unit $j$ in cluster $i$. Let $Z_{ij}$ be the treatment indicator for unit $(i,j)$ and by design, $Z_{ij} = Z_{i}$ for $j=1,\ldots,n_i$. Let $N_1=\sum_{i=1}^{M}Z_in_i$ and $N_0=N-N_1$  denote the numbers of units assigned to the treatment and control arms, respectively. Importantly, $N_1$ and $N_0$ are random if the $ n_i $'s vary. Let $Y_{ij}(1)$ and $Y_{ij}(0)$  denote the potential outcomes of  unit $(i,j)$ under the treatment and control, respectively. We adopt the design-based framework by conditioning on the potential outcomes with the $Z_i$'s being the only source of randomness. The observed outcome $Y_{ij}= Z_iY_{ij}(1)+(1-Z_i)Y_{ij}(0)$ is random in general due to the random $Z_i$'s. We are interested in estimating the average treatment effect
$$
  \tau = N^{-1}\sum_{i=1}^{M}\sum_{j=1}^{n_i} \{Y_{ij}(1)-Y_{ij}(0)\}
$$
based on the observed data.

The  difference-in-means estimator of $\tau$ is given by
\begin{equation*}
  \tauhaj   = N_1^{-1}  \sum_{i=1}^M\sum_{j=1}^{n_i}Z_{ij}Y_{ij}  - N_0^{-1}  \sum_{i=1}^M\sum_{j=1}^{n_i} (1-Z_{ij})Y_{ij}  ,
\end{equation*}
with the subscript ``haj'' signifying that it is the  Hajek estimator based on the inverse of the treatment probabilities \citep{su2021modelassisted}.
As pointed out by \cite{middleton2015unbiased}, the random denominators in $\tauhaj$ can result in bias when the $n_i$'s vary. The unbiased Horvitz--Thompson estimator replaces the random denominators $N_1$ and $N_0$ with their expectations:
\begin{equation*}
  \tauht = (NM_1/M)^{-1}\sum_{i=1}^M Z_i\sum_{j=1}^{n_i}Y_{ij} -(NM_0/M)^{-1}\sum_{i=1}^M  (1-Z_i)\sum_{j=1}^{n_i}Y_{ij} .
\end{equation*}
Define the scaled cluster total potential outcome as $\tilde{Y}_{i\cdot}(z) = \sum_{j=1}^{n_i}Y_{ij}(z)M/N$ so that 
$
  \tau = M^{-1}\sum_{i=1}^M \{ \tilde{Y}_{i\cdot}(1) - \tilde{Y}_{i\cdot}(0)\}.
$
Based on the observed analog $\tilde{Y}_{i\cdot} = Z_i \tilde{Y}_{i\cdot}(1) + (1-Z_i)\tilde{Y}_{i\cdot}(0)  $, the  Horvitz--Thompson estimator simplifies to
\begin{equation}
  \label{eq:tauhtnew}
  \tauht   = M_1^{-1}\sum_{i=1}^MZ_i\tilde{Y}_{i\cdot}-M_0^{-1}\sum_{i=1}^M (1-Z_i)\tilde{Y}_{i\cdot}. \nonumber
\end{equation}
This is unbiased for $\tau$ because it is the scaled difference in means of the potential outcomes if each cluster is treated as a whole.



Both $\tauhaj$ and $\tauht$ are asymptotically normal under some mild regularity conditions \citep{su2021modelassisted}.
Let $e_z =M_z/M$ be the proportion of clusters assigned to the treatment arm $z$ ($z=0,1$). Denote the normalized cluster size vector by $\tilde{\omega}=(\tilde{\omega}_1,\ldots,\tilde{\omega}_M)^{\top}$, with $\tilde{\omega}_i = n_iM/N$. Let $\bar{Y}(z) = N^{-1}\sum_{i=1}^{M}\sum_{j=1}^{n_i}Y_{ij}(z)$.
Define $\tilde{\varepsilon }(z)=(\tilde{\varepsilon} _{1\cdot}(z),\ldots,\tilde{\varepsilon} _{M\cdot}(z))^{\top}$ and $\tilde{Y}(z)=(\tilde{Y}_{1\cdot}(z),\ldots,\tilde{Y}_{M\cdot}(z))^{\top}$, where
\begin{equation*}
  \tdei(z) = \tilde{Y}_{i\cdot}(z)-\tilde{\omega}_i\bar{Y}(z)= \sum_{j=1}^{n_i}\varepsilon _{ij}(z)M/N, \quad \varepsilon _{ij}(z) = Y_{ij}(z)-\bar{Y}(z).
\end{equation*}
\citet{su2021modelassisted} showed that $M^{1/2} (\hat\tau_{\star}  -\tau)\mathrel{\ \dot{\sim}\ } \mathcal N(0, V_{\star, \tau\tau})$ for $\star = \textnormal{ht}, \textnormal{haj}$, where
\begin{gather}
  \vttht   = e_1 ^{-1}\varf   \{\tilde{Y}(1) \}  +  e_0 ^{-1} \varf   \{\tilde{Y}(0) \} - \varf   \{\tilde{Y}(1)-\tilde{Y}(0) \}, \label{eq::ht-asym-var} \\
  \vtthaj  = e_1 ^{-1} \varf   \{\tilde{{\varepsilon} } (1) \} +  e_0 ^{-1}\varf   \{\tilde{{\varepsilon} } (0) \} -
  \varf   \{\tilde{{\varepsilon} } (1)-\tilde{{\varepsilon} } (0) \}. \label{eq::haj-asym-var}
\end{gather}
Therefore, $\tauht$ is unbiased in finite samples and $\tauhaj$ is unbiased asymptotically. By \cite{middleton2015unbiased} and \cite{su2021modelassisted}, $\tauhaj$ often practically outperforms $\tauht$ in terms of the mean squared error.

\section{Cluster rerandomization using the Mahalanobis distance}\label{sec:clre}

\subsection{Design}\label{sec:design}


In a cluster-randomized experiment, suppose we observe baseline covariates at the individual  or cluster level.
Let $x_{ij}=(x_{ij1},\ldots,x_{ijK})^{\top}\!\!$ denote the individual-level covariates for unit $(i,j)$, and $c_i=(c_{i1},\ldots,c_{iK})^{\top}$ be the cluster-level covariates for cluster $i$. To simplify the presentation, we center them to ensure that $\bar{x} = N^{-1}\sum_{i=1}^{M} \sum_{j=1}^{n_i} x_{ij}=0$ and $M^{-1}\sum_{i=1}^M c_i = 0$, and use the same $K$ to denote their dimensions, although they may differ. Let $\tilde{x}_{i\cdot} = \sum_{j=1}^{n_i}x_{ij}M/N$ be the scaled cluster totals, concatenated as  $\tilde{X} = (\tilde{x}_{1\cdot},\ldots,\tilde{x}_{M\cdot})^{\top}$. Similarly, let $C=(c_1,\ldots,c_M)^{\top}$.


Two types of covariates suggest two schemes of cluster rerandomization. The first one is based on the Mahalanobis distance of covariates at the cluster level. Define
\begin{equation*}
  \taucht   =M_1^{-1} \sum_{i=1}^M Z_i c_i - M_0^{-1}\sum_{i=1}^M (1-Z_i)c_i
\end{equation*}
as the Horvitz--Thompson estimator applied to the cluster-level covariates. For a pre-determined threshold  $a > 0$, cluster rerandomization based on cluster-level covariates accepts the treatment assignment if and only if the following event happens:
$$
  \mc  = \{  \taucht ^{\top} \cov (\taucht) ^{-1}\ \taucht  \leq a \}
  =  \{ e_1 e_0  M\ \taucht ^{\top} \covf ({C}) ^{-1}\ \taucht  \leq a \} .
$$
The second one is based on the Mahalanobis distance of covariates at the individual level. Define
$$
  \tauxhaj   = N_1^{-1} \sum_{i=1}^M Z_i\sum_{j=1}^{n_i}x_{ij} -N_0^{-1}  \sum_{i=1}^M (1-Z_i)\sum_{j=1}^{n_i}x_{ij}
$$
as the difference in means of the individual-level covariates. For a pre-determined threshold  $a > 0$, cluster rerandomization based on individual-level covariates accepts the treatment assignment if and only if the following event happens:
\begin{equation}\label{eq:ddif}
  \mx =  \{  \tauxhaj ^{\top} \cova (\tauxhaj) ^{-1}\ \tauxhaj \leq a \}
  = \{e_1 e_0  M\  \tauxhaj  ^{\top} \covf(\tilde{{X}}) ^{-1} \tauxhaj  \leq a\}.
\end{equation}
We use the same threshold $a$ for two types of cluster rerandomization to simplify the notation. Due to the complicated form of the exact covariance of $\tauxhaj$, we have used its asymptotic analog in \eqref{eq:ddif}, with more details given in  Proposition~\ref{prop:FCLT} in the next section.

\LXTL{The definition of cluster-level or individual-level covariates can be subtle, and different levels of covariates could be used interchangeably. On the one hand, we could take some statistics of aggregated individual-level covariates within a cluster to construct cluster-level covariates. For instance, we could define ${c}_{i} = \tilde{{x}}_{i\cdot}$. On the other hand, it is common that only cluster-level covariates $c_i$ are observed prior to rerandomization, in which case we can define individual-level covariates as $x_{ij}=c_i$ for all $j$ in cluster $i$. Hence, both cluster rerandomization schemes are applicable.}

\subsection{Asymptotic distributions}

We have introduced two estimators $\tauht$ and $\tauhaj  $ for $\tau$, and two cluster rerandomization schemes $\mc $ and $\mx$. Theoretically, they generate four combinations. Nevertheless, it is more natural to consider the design and analysis with the same type. Therefore, we will focus on the asymptotic distributions of $\tauht$ given $\mc$ and $ \tauhaj$ given $\mx$. We start with unconditional joint asymptotic distributions of $(\tauhaj ,  \tauxhaj )$ and $(    \tauht,     \taucht )$ as shown in Proposition~\ref{prop:FCLT} below.


%
%
%
%

\begin{proposition}\label{prop:FCLT}
  Under regularity conditions,
  \begin{gather*}
    M^{1/2}
    \left(
    \begin{array}{c}
        \tauhaj  -\tau \\
        \tauxhaj
      \end{array}
    \right) \mathrel{\ \dot{\sim}\ }
    \mathcal{N} \left( 0 ,
    \begin{bmatrix}
        \vtthaj & \vtxhaj \\
        \vxthaj & \vxxhaj
      \end{bmatrix}
    \right),\\
    M^{1/2}
    \left(
    \begin{array}{c}
        \tauht  -\tau \\
        \taucht
      \end{array}
    \right)
    \mathrel{\ \dot{\sim}\ }  \mathcal{N}  \left( 0 ,
    \begin{bmatrix}
        \vttht & \vtcht \\
        \vctht & \vccht
      \end{bmatrix} \right),
  \end{gather*}
  where $ \vttht $ and  $      \vtthaj $ are defined in \eqref{eq::ht-asym-var} and \eqref{eq::haj-asym-var}, respectively, and
  \begin{eqnarray*}
    \vxthaj &=&(\vtxhaj)^{\top}=  e_1 ^{-1}\covf   \{\tilde{{X}},\tilde{\varepsilon } (1) \} +  e_0 ^{-1}\covf\{\tilde{{X}},\tilde{\varepsilon } (0) \}, \quad \vxxhaj =( e_1  e_0)^{-1}\covf (\tilde{{X}}) , \\
    \vctht  &=&(\vtcht  )^{\top}= e_1 ^{-1}\covf   \{{C},\tilde{Y}(1) \}+  e_0 ^{-1}\covf   \{{C},\tilde{Y}(0) \}, \quad \vccht=( e_1   e_0 ) ^{-1} \covf ({C}).
  \end{eqnarray*}
\end{proposition}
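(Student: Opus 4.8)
The plan is to reduce both joint statements to a single multivariate finite-population central limit theorem for a completely randomized experiment at the \emph{cluster} level, since the only randomness is the assignment vector $(Z_1,\dots,Z_M)$, obtained by sampling $M_1$ of the $M$ clusters without replacement.

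For $(\tauht-\tau,\ \taucht^{\top})^{\top}$, I would first observe that both coordinates are \emph{exactly} linear in the $Z_i$'s: using $Z_i\tilde Y_{i\cdot}=Z_i\tilde Y_{i\cdot}(1)$ and $(1-Z_i)\tilde Y_{i\cdot}=(1-Z_i)\tilde Y_{i\cdot}(0)$, the statistic $\tauht$ is the difference in means of the cluster-level ``potential outcomes'' $\tilde Y_{i\cdot}(1),\tilde Y_{i\cdot}(0)$, while $\taucht$ is the difference in means of the cluster-level covariates $c_i$, which behave like a potential outcome common to both arms with finite-population mean zero after centering. Stacking $(\tilde Y_{i\cdot}(1),\tilde Y_{i\cdot}(0),c_i^{\top})$ into one cluster-level vector, $(\tauht-\tau,\ \taucht^{\top})^{\top}$ becomes a vector of difference-in-means statistics for a completely randomized experiment on $M$ units, so the multivariate finite-population central limit theorem invoked by \citet{su2021modelassisted} yields joint asymptotic normality with mean zero. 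I would then read off the covariance blocks from Neyman's identity $M\cov(\hat\tau_a,\hat\tau_b)=e_1^{-1}\covf\{a(1),b(1)\}+e_0^{-1}\covf\{a(0),b(0)\}-\covf\{a(1)-a(0),b(1)-b(0)\}$: taking $b=C$ annihilates the last term because $C$ is common to both arms, which gives $\vctht$ and $\vccht$, and taking $a=b=\tilde Y$ recovers $\vttht$ in \eqref{eq::ht-asym-var}.

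For $(\tauhaj-\tau,\ \tauxhaj^{\top})^{\top}$, the denominators $N_1=\sum_iZ_in_i$ and $N_0=N-N_1$ are random, so these statistics are only \emph{asymptotically} linear and the key step is a ratio linearization. Since $N_1/M$ and $N_0/M$ concentrate at $e_1\bar n$ and $e_0\bar n$ with $\bar n=N/M$ at the $M^{-1/2}$ rate, a Taylor expansion of the two ratios about their probability limits shows that $M^{1/2}(\tauhaj-\tau)$ is asymptotically equal to $M^{1/2}$ times the Horvitz--Thompson-type statistic built from the centered cluster totals $\tdei(z)=\tilde Y_{i\cdot}(z)-\tdomega_i\bary(z)$, which is precisely \eqref{eq::haj-asym-var} and is established in \citet{su2021modelassisted}; by the same argument, $M^{1/2}\tauxhaj$ is asymptotically equal to the analogous statistic built from $\tilde x_{i\cdot}-\tdomega_i\bar x=\tilde x_{i\cdot}$, whose covariate ``treatment effect'' vanishes since $\bar x=0$ and $x_{ij}$ is common to both arms. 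Because the two linearizations are driven by the same assignment vector they hold jointly, so stacking $(\tdei(1),\tdei(0),\tilde x_{i\cdot}^{\top})$ reduces the claim once more to the multivariate finite-population central limit theorem, and the covariance blocks again follow from the Neyman identity above, with the $\covf\{\cdot,\tilde X\}$ cross terms losing their treatment-effect-difference piece.

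The hard part will be controlling the remainder of the ratio linearization for $\tauhaj$ and $\tauxhaj$ precisely enough to preserve the \emph{joint} limit --- in particular, to confirm that the random-denominator corrections do not perturb the cross-covariance $\vxthaj$ --- which is where the moment conditions relegated to the appendix enter; for $\tauhaj$ alone this is already in \citet{su2021modelassisted}, and extending it to the stacked vector, together with verifying the Lindeberg-type condition required by the finite-population central limit theorem, should be routine. A secondary point to check is that the stated covariance matrices are genuine limits of the finite-sample (linearized) covariances, which follows from the same moment and non-degeneracy conditions.
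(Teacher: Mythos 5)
Your proposal is correct and follows essentially the same route as the paper: the Horvitz--Thompson pair is handled by a direct application of the vector-form finite-population CLT to the stacked cluster-level quantities $(\tilde Y_{i\cdot}(1),\tilde Y_{i\cdot}(0),c_i^{\top})$, while the Hajek pair is first linearized elementwise into Horvitz--Thompson-type statistics in $\tdei(z)$ and $\tilde x_{i\cdot}$ (the paper's Lemma~\ref{lem:tauhaj-approx}, imported from \citet{su2021modelassisted}) and then treated the same way, with Slutsky's theorem absorbing the $o_{\textup{p}}(1)$ remainders. The joint validity of the linearization that you flag as the ``hard part'' is exactly what the paper settles by applying that lemma to each coordinate of the stacked vector before invoking the CLT.
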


By Proposition~\ref{prop:FCLT}, the Mahalanobis distances based on $\tauxhaj$ and $\taucht$
converge in distribution to  $\chi_K^2$. This provides a guidance for choosing $a$.  For example, we can choose $a$ as the $\alpha$th quantile of $\chi_K^2$ to ensure an asymptotic acceptance rate of $\alpha$. \citet{morgan2012rerandomization} suggested $\alpha=0.001$ when the cluster numbers are moderate or large. \LXTL{However, practitioners need to be aware that with small $M$ and $a$, we do not have enough randomizations and asymptotic approximation can be poor. We suggest the Fisher randomization tests in such situations, where the threshold $a$ is chosen to ensure non-trivial powers \citep{johansson2021optimal}.}


Proposition~\ref{prop:FCLT} provides the basis for deriving the conditional asymptotic distributions, $\tauht$ given $\mc $ and $ \tauhaj$ given $ \mx $, which are the asymptotic distributions of $\tauhaj$ and $\tauht$ under cluster rerandomization. Extending \citet{Li9157}, we introduce two squared multiple correlations: 
\begin{eqnarray*}
R_{ c}^2 & = & \cova   (\tauht  , \taucht  ) \cova    (\taucht  )^{-1}\cova   (\taucht ,\tauht   )/\vara    (\tauht   ), \\
R_{ x}^2 & = &  \cova   (\tauhaj  , \tauxhaj   )\cova    ( \tauxhaj   )^{-1}\cova   ( \tauxhaj  ,\tauhaj   )/\vara    ( \tauhaj   ). 
\end{eqnarray*}
Moreover, let  $L_{K, a} \sim D_{1} \mid {D}^{\top} {D} \leq a$ where ${D}=(D_{1}, \ldots, D_{K})^{\top}$ is a $K$-dimensional standard normal random vector, and let $\epsilon$ be a standard normal random variable independent of $L_{K,a}$. These quantities play important roles in Theorem \ref{thm:asymp2} below.

\begin{theorem}\label{thm:asymp2}
  Under regularity conditions,
  \begin{eqnarray*}
    M^{1/2}(\tauhaj  -\tau) \mid \mx &\mathrel{\ \dot{\sim}\ } & ( \vtthaj )^{1/2}\{(1-R_{ x}^{2})^{1/2}  \epsilon+R_{ x}  L_{K, a}\}, \\
    M^{1/2}(\tauht  -\tau) \mid \mc  &\mathrel{\ \dot{\sim}\ } & (\vttht )^{1/2}\{(1-R_{ c}^{2})^{1/2}  \epsilon+R_{ c} L_{K, a}\}.
  \end{eqnarray*}
\end{theorem}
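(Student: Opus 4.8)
The plan is to adapt the projection argument of \citet{Li9157} to the cluster setting: decompose the scaled estimator into the component predicted by the rerandomization covariates and an orthogonal residual, exploit the spherical symmetry of the limiting Gaussian to handle the Mahalanobis truncation, and then interchange conditioning with the weak limit. I will carry out the Hajek case given $\mx$ in detail; the Horvitz--Thompson case given $\mc$ is word-for-word identical after replacing $(\tauhaj,\tauxhaj)$ by $(\tauht,\taucht)$ and $(\vtthaj,\vtxhaj,\vxthaj,\vxxhaj)$ by $(\vttht,\vtcht,\vctht,\vccht)$.

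First I would introduce $S_M=M^{1/2}(\tauhaj-\tau)$, $T_M=M^{1/2}\tauxhaj$, and the whitened vector $W_M=(\vxxhaj)^{-1/2}T_M$, where $(\vxxhaj)^{-1/2}$ exists by the positive-definiteness assumptions in the regularity conditions. Because the acceptance event in \eqref{eq:ddif} is built from the asymptotic covariance $\cova(\tauxhaj)=M^{-1}\vxxhaj$, it equals exactly $\mx=\{W_M^{\top}W_M\le a\}$, a function of $W_M$ alone. Proposition~\ref{prop:FCLT} then gives joint asymptotic normality of $(S_M,W_M)$ with $W_M\mathrel{\dot{\sim}}\mathcal N(0,I_K)$. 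Setting $\gamma=(\vxxhaj)^{-1/2}\vxthaj$ and $U_M=S_M-\gamma^{\top}W_M$, a short computation shows $\gamma^{\top}\gamma=\vtxhaj(\vxxhaj)^{-1}\vxthaj=R_x^2\,\vtthaj$, that $U_M$ has zero asymptotic covariance with $W_M$ and is therefore asymptotically independent of it, and that $\vara(U_M)=\vtthaj-R_x^2\vtthaj=(1-R_x^2)\vtthaj$.

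Next I would use spherical symmetry. If $R_x=0$ the $\gamma^{\top}W_M$ term is absent and the conclusion is immediate; otherwise write $\gamma^{\top}W_M=\|\gamma\|\,v^{\top}W_M$ with $v=\gamma/\|\gamma\|$ a fixed unit vector and $\|\gamma\|=R_x(\vtthaj)^{1/2}$. Rotational invariance of $\mathcal N(0,I_K)$ yields joint convergence of $(v^{\top}W_M,\,W_M^{\top}W_M,\,U_M)$ to $(D_1,\,D^{\top}D,\,(1-R_x^2)^{1/2}(\vtthaj)^{1/2}\epsilon)$, with $D=(D_1,\dots,D_K)^{\top}\sim\mathcal N(0,I_K)$ and $\epsilon$ an independent standard normal. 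Since $\{W_M^{\top}W_M\le a\}$ is in the limit a continuity set of probability $\mathrm{pr}(\chi_K^2\le a)>0$, conditioning commutes with the limit, so $S_M\mid\mx=(\gamma^{\top}W_M+U_M)\mid\mx$ converges to $\|\gamma\|L_{K,a}+(1-R_x^2)^{1/2}(\vtthaj)^{1/2}\epsilon=(\vtthaj)^{1/2}\{R_xL_{K,a}+(1-R_x^2)^{1/2}\epsilon\}$, as claimed; here $\epsilon$ inherits independence from $L_{K,a}$ because $U_M$ is asymptotically independent of $W_M$.

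The main obstacle is the last step, namely legitimizing the interchange of conditioning and the weak limit in the design-based finite-population setting. Three points need care: (i) the acceptance region must be an asymptotic continuity set with limiting probability bounded away from zero, which holds for fixed $a>0$ since $\chi_K^2$ is absolutely continuous; (ii) the residual $U_M$ must be genuinely \emph{asymptotically} independent of $W_M$, i.e., zero finite-population covariance in Proposition~\ref{prop:FCLT} must upgrade to joint normality with independent blocks, and this is exactly where the Lindeberg-type moment conditions relegated to the appendix do the work; and (iii) if one additionally wants convergence of conditional moments, a uniform-integrability argument is needed, although it is not required for the distributional statement. A minor case to record separately is $R_x^2=1$, where the $\epsilon$-component degenerates but the argument is otherwise unchanged.
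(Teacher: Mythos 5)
Your proposal is correct and follows essentially the same route as the paper: both reduce the problem to the joint finite-population CLT of Proposition~\ref{prop:FCLT} at the cluster level and then apply the projection decomposition together with the fact that conditioning on a positive-probability continuity set commutes with the weak limit (Proposition~\ref{prop:weakconv}). The only difference is presentational—you write out the whitening/rotational-invariance argument that produces $L_{K,a}$ explicitly, whereas the paper imports it by citing Theorem~1 of \citet{Li9157} applied to $\tilde{\varepsilon}_{i\cdot}(z)$, $\tilde{x}_{i\cdot}$ and to $\tilde{Y}_{i\cdot}(z)$, $c_i$, after using Lemma~\ref{lem:tauhaj-approx} to handle the random denominators in the Hajek case.
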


Theorem~\ref{thm:asymp2} extends \citet[][Theorem 1]{Li9157} for rerandomization with treatments assigned at the individual level. We comment on the result on $\tauht$ in Theorem~\ref{thm:asymp2}, and analogous discussion also applies to $\tauhaj$.
It demonstrates that the asymptotic distribution of $\tauht$ is a linear combination of two independent random variables: a standard normal $\epsilon $ and a truncated normal $L_{k, a}$. The truncated normal random variable is due to the projection of $\tauht$ onto $\taucht $, which is affected by cluster rerandomization. The untruncated normal random variable is due to the residual from the projection, which is unaffected by cluster rerandomization. \LXTL{By Theorem~\ref{thm:asymp2}, rerandomization narrows the quantile ranges of $\tauhaj$ and $\tauht$, and with proper moment estimators of the asymptotic distributions, we can construct narrower confidence intervals for $\tau$.}


%

Theorem~\ref{thm:asymp2} also allows for a direct comparison of the two cluster rerandomization schemes. \LXTL{We construct cluster-level covariates from individual-level ones by defining ${c}_{i} = \tilde{{x}}_{i\cdot}$ for $\mc$, while we use the individual-level covariates ${x}_{ij}$ for $\mx$ to enable a fair comparison.} Corollary~\ref{coro:clusterbetter} below demonstrates the superiority of the cluster rerandomization scheme $\mc $.

\begin{corollary}\label{coro:clusterbetter}
 Under regularity conditions, if ${c}_{i}=( n_i ,\tilde{{x}}_{i \cdot }^{\top})^{\top}$, then
  %
  %
  $
    \vtthaj   (1-R_{ x}^{2} ) \geq  \vttht   (1-R_{ c}^{2}).
  $
\end{corollary}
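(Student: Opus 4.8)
The plan is to rewrite both sides of the asserted inequality as residual variances of $L^{2}$ projections inside one jointly Gaussian limiting model, and then to invoke the elementary fact that enlarging the set of predictors cannot increase a projection residual. To set this up, substitute $\tilde{Y}_{i\cdot}(z) = \tilde{\varepsilon}_{i\cdot}(z) + \tilde{\omega}_{i}\bar{Y}(z)$ into the Horvitz--Thompson estimator and use $M^{-1}\sum_{i=1}^{M}\tilde{\omega}_{i} = 1$ to obtain the exact decomposition
\[
  \tauht - \tau = H_{\varepsilon} + \gamma\, H_{\omega} ,
\]
where $\gamma = e_{0}\bar{Y}(1) + e_{1}\bar{Y}(0)$ is a fixed scalar, $H_{\varepsilon} = M_{1}^{-1}\sum_{i} Z_{i}\tilde{\varepsilon}_{i\cdot}(1) - M_{0}^{-1}\sum_{i}(1-Z_{i})\tilde{\varepsilon}_{i\cdot}(0)$, and $H_{\omega} = M_{1}^{-1}\sum_{i} Z_{i}\tilde{\omega}_{i} - M_{0}^{-1}\sum_{i}(1-Z_{i})\tilde{\omega}_{i}$. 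By the Hajek linearization used to establish Proposition~\ref{prop:FCLT} \citep{su2021modelassisted}, $\tauhaj - \tau = H_{\varepsilon} + o_{\mathrm{p}}(M^{-1/2})$ and $\tauxhaj = H_{x} + o_{\mathrm{p}}(M^{-1/2})$, with $H_{x} = M_{1}^{-1}\sum_{i} Z_{i}\tilde{x}_{i\cdot} - M_{0}^{-1}\sum_{i}(1-Z_{i})\tilde{x}_{i\cdot}$. Finally, since $c_{i} = (n_{i},\tilde{x}_{i\cdot}^{\top})^{\top}$ and $n_{i} = (N/M)\tilde{\omega}_{i}$, the imbalance vector $\taucht$ defining $\mc$ equals $(H_{\omega},H_{x}^{\top})^{\top}$ up to an invertible rescaling of its first coordinate; as $R_{ c}^{2}$ is invariant under invertible linear reparametrizations of $c_{i}$, I may and do use $(H_{\omega},H_{x}^{\top})^{\top}$ in its place.

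Next, by the joint central limit theorem for cluster-level Horvitz--Thompson statistics behind Proposition~\ref{prop:FCLT}, $M^{1/2}(H_{\varepsilon},H_{\omega},H_{x})$ has a joint Gaussian limit $(U_{\varepsilon},U_{\omega},U_{x})$; combined with the previous paragraph this makes $M^{1/2}(\tauhaj-\tau)$, $M^{1/2}\tauxhaj$, $M^{1/2}(\tauht-\tau)$, and the reparametrized imbalance vector of $\mc$ jointly asymptotically equivalent to $U_{\varepsilon}$, $U_{x}$, $U_{\varepsilon}+\gamma U_{\omega}$, and $(U_{\omega},U_{x}^{\top})^{\top}$, respectively. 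Let $P_{x}$ be the $L^{2}$-orthogonal projection onto the linear span of the coordinates of $U_{x}$, and $P_{\omega x}$ the projection onto the span of $U_{\omega}$ together with those coordinates. Identifying the asymptotic variances and covariances in the definitions of $R_{ x}^{2}$ and $R_{ c}^{2}$ with the corresponding blocks of the limiting covariance matrix yields the Schur-complement representations
\begin{align*}
  \vtthaj\,(1-R_{ x}^{2}) &= \bigl\| U_{\varepsilon} - P_{x} U_{\varepsilon} \bigr\|_{2}^{2} , \\
  \vttht\,(1-R_{ c}^{2}) &= \bigl\| (U_{\varepsilon} + \gamma U_{\omega}) - P_{\omega x}(U_{\varepsilon} + \gamma U_{\omega}) \bigr\|_{2}^{2} .
\end{align*}

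Since $U_{\omega} \in \operatorname{span}\{U_{\omega},U_{x}\}$, we have $P_{\omega x}(U_{\varepsilon} + \gamma U_{\omega}) = P_{\omega x}U_{\varepsilon} + \gamma U_{\omega}$, so the term $\gamma U_{\omega}$ cancels from the residual and $\vttht(1-R_{ c}^{2}) = \| U_{\varepsilon} - P_{\omega x}U_{\varepsilon}\|_{2}^{2}$. Because $\operatorname{span}\{U_{x}\}\subseteq\operatorname{span}\{U_{\omega},U_{x}\}$, projecting onto the larger subspace cannot increase the residual norm, hence $\| U_{\varepsilon} - P_{\omega x}U_{\varepsilon}\|_{2} \leq \| U_{\varepsilon} - P_{x} U_{\varepsilon}\|_{2}$; combining with the last display gives $\vttht(1-R_{ c}^{2}) \leq \vtthaj(1-R_{ x}^{2})$, which is the claim. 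In degenerate cases where $\covf(\tilde{X})$ or $\covf(C)$ fails to be positive definite, such as equal cluster sizes (where equality holds), one reads the Schur complements with generalized inverses, or argues directly with the $L^{2}$ projections, and nothing changes. The one delicate point is the bookkeeping just above: certifying that the four statistics share a single jointly Gaussian limit and that the $\tauht$-limit sits exactly $\gamma U_{\omega}$ above the $\tauhaj$-limit, which is exactly what the exact decomposition and the Hajek linearization supply; the projection argument itself is then routine linear algebra.
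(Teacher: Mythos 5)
Your proof is correct and is essentially the paper's argument in different clothing: both reduce the claim to the fact that $\vtthaj(1-R_x^2)$ and $\vttht(1-R_c^2)$ are residual variances of projecting the same (composite) quantity onto nested covariate spans, with the cluster-size term $\tilde\omega_i\bar{Y}(z)$ absorbed because $n_i$ is included among the cluster-level covariates. The only difference is presentational — you carry out the projection in the limiting Gaussian model via Schur complements, whereas the paper performs the equivalent finite-population least-squares minimization directly on $e_1\tilde{Y}(0)+e_0\tilde{Y}(1)$ and observes that the $\mathcal{M}_c$ regression optimizes over the $\tilde\omega$ coefficient while the $\mathcal{M}_x$ regression fixes it at $e_1\bar{Y}(0)+e_0\bar{Y}(1)$.
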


With a small threshold $a>0$, the variance of $L_{K,a}$ is small, and therefore, $    \vtthaj   (1-R_{ x}^{2} )$ and $\vttht   (1-R_{ c}^{2})$ are the leading terms of the asymptotic variances of $\tauhaj$ and $\tauht$, respectively.  Corollary \ref{coro:clusterbetter} demonstrates that based on the asymptotic variances, $\mc $ with cluster-level covariates dominates $\mx$ with individual-level covariates if the cluster-level covariates include the cluster size and scaled cluster totals of the individual-level covariates. A key requirement of Corollary \ref{coro:clusterbetter} is that the cluster size must be used as a cluster-level covariate. This parallels the theory for regression adjustment in the analysis of cluster-randomized experiments that the regression-adjusted estimator based on scaled cluster totals dominates the regression-adjusted estimator based on individual-level data with properly defined covariates \citep{su2021modelassisted}.

\section{Cluster rerandomization with prior knowledge on covariate importance}
\label{sec:quadrageneral}

\subsection{Weighted Euclidean distance criterion}
\label{sec:4.1}

The cluster rerandomization schemes in Section \ref{sec:clre} view all covariates as equally important. Although they have the advantage of being invariant to non-degenerate linear transformation of the covariates, they are not ideal in experiments with prior knowledge about the relative importance of the covariates. In those cases, a better choice is cluster rerandomization with the weighted Euclidean distance, which has been frequently used in practice \citep{wight2002limits,althabe2008behavioral,li2016evaluation, li2017evaluation,hayes2017cluster, dempsey2018effect}. We will study its design-based properties in this subsection.

Consider cluster rerandomization schemes defined by general quadratic forms of the measures of covariate imbalance:
\[
  \mathcal{D}_{x}(A_x)=\{M\tauxhaj  ^{\top}A_x\ \tauxhaj\leq a\}, \quad \mathcal{D}_{c}(A_c) = \{M\taucht ^{\top}A_c\ \taucht \leq a\},
\]
where $A_x > 0 $ and $A_c > 0$. They reduce to cluster rerandomization schemes in Section \ref{sec:clre} if $A_x = M^{-1} \cova ( \tauxhaj )^{-1} $ and $A_c = M^{-1} \cov ( \taucht )^{-1} $, respectively. 
The weighted Euclidean distances correspond to diagonal $A_x$ and $A_c$.
Theorem~\ref{thm:asymquadra} below provides the basis for inference under $\dxa$ and $\dca$.
\begin{theorem}
  \label{thm:asymquadra}
  Under regularity conditions,
  \begin{align*}
    M^{1/2}(\tauhaj-\tau)\mid \mathcal{D}_{x}(A_x)                      & \mathrel{\ \dot{\sim}\ }  V_{\textnormal{haj},\tau \tau}^{1/2} \big\{  (1-R_{x}^2 )^{1/2}\epsilon +   R_{x}\mu_x^{\top}{\eta} \mid
    {\eta}^{\top}V_{\textnormal{haj},xx}^{1/2}A_{x}V^{1/2}_{\textnormal{haj},xx}{\eta}\leq a \big\},                                                                                                                       \\
    M^{1/2}(\hat{\tau}_{\textnormal{ht}}-\tau)\mid \mathcal{D}_{c}(A_c) & \mathrel{\ \dot{\sim}\ }  V_{\textnormal{ht},\tau \tau}^{1/2} \big\{  (1-R_{c}^2 )^{1/2}\epsilon +   R_{c}\mu_c^{\top}{\eta} \mid
    {\eta}^{\top}{V_{\textnormal{ht},cc}^{1/2}A_{c}V^{1/2}_{\textnormal{ht},cc}}{\eta}\leq a\big\},
  \end{align*}
  where ${\eta}=(\eta_{1},\ldots,\eta_{K})^{\top}$, $\epsilon,\eta_1, \ldots,\eta_K$ are independent standard normal random variables,  and
  $${\mu}_x^{\top}=(V_{\textnormal{haj},\tau x}V^{-1}_{\textnormal{haj},xx}V_{\textnormal{haj},x \tau})^{-1/2}V_{\textnormal{haj},\tau x}V_{\textnormal{haj},xx}^{-1/2},\quad {\mu}_c^{\top}=(V_{\textnormal{ht},\tau c}V^{-1}_{\textnormal{ht},cc}V_{\textnormal{ht},c \tau})^{-1/2}V_{\textnormal{ht},\tau c}V_{\textnormal{ht},cc}^{-1/2}.$$
\end{theorem}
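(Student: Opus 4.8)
The plan is to reduce Theorem~\ref{thm:asymquadra} to the joint central limit theorem of Proposition~\ref{prop:FCLT} together with a conditioning argument, mirroring the proof of Theorem~\ref{thm:asymp2} but keeping track of the general weight matrix $A_x$ (resp.\ $A_c$). I will carry out the argument for the Hajek estimator under $\dxa$; the Horvitz--Thompson case under $\dca$ is identical with the obvious substitutions. By Proposition~\ref{prop:FCLT}, the pair $M^{1/2}(\tauhaj-\tau,\ \tauxhaj^{\top})^{\top}$ is asymptotically jointly Gaussian with mean zero and the stated block covariance matrix. The first step is a linear change of coordinates: write $M^{1/2}\tauxhaj \mathrel{\dot{\sim}} \vxxhaj^{1/2}\eta$ for a standard Gaussian vector $\eta$, and decompose $M^{1/2}(\tauhaj-\tau)$ into the part explained by $\eta$ and an orthogonal residual. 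Concretely, the regression coefficient of $M^{1/2}(\tauhaj-\tau)$ on $\vxxhaj^{-1/2}\cdot M^{1/2}\tauxhaj$ is $\vtxhaj\vxxhaj^{-1/2}$, the explained variance is $\vtxhaj\vxxhaj^{-1}\vxthaj = R_x^2 \vtthaj$, and the residual is an independent Gaussian with variance $(1-R_x^2)\vtthaj$, represented as $\vtthaj^{1/2}(1-R_x^2)^{1/2}\epsilon$ with $\epsilon$ independent of $\eta$.

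The second step is to recognize that, after factoring out $\vtthaj^{1/2}$, the coefficient vector in the $\eta$-direction is exactly $R_x \mu_x^{\top}$: indeed $\vtxhaj\vxxhaj^{-1/2} = (\vtxhaj\vxxhaj^{-1}\vxthaj)^{1/2}\cdot\mu_x^{\top} = R_x\vtthaj^{1/2}\mu_x^{\top}$ by the definition of $\mu_x$ and of $R_x^2$, where one checks $\mu_x$ is a unit vector ($\mu_x^{\top}\mu_x=1$). Hence unconditionally $M^{1/2}(\tauhaj-\tau)\mathrel{\dot{\sim}}\vtthaj^{1/2}\{(1-R_x^2)^{1/2}\epsilon + R_x\mu_x^{\top}\eta\}$. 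The third step handles the conditioning event: since $M^{1/2}\tauxhaj\mathrel{\dot{\sim}}\vxxhaj^{1/2}\eta$, the acceptance event $\dxa=\{M\tauxhaj^{\top}A_x\tauxhaj\le a\}$ becomes, in the limit, $\{\eta^{\top}\vxxhaj^{1/2}A_x\vxxhaj^{1/2}\eta\le a\}$. Because the residual term $\epsilon$ is independent of $\eta$, conditioning on this event leaves the $\epsilon$-part untouched and only truncates the law of $\eta$, yielding the displayed conditional representation.

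The one point that needs genuine care — and which I expect to be the main obstacle — is justifying that conditioning commutes with the asymptotic-distribution limit, i.e.\ that the limit of the conditional law equals the conditional law of the limit. This requires that the acceptance probability $\pra(\dxa)$ be bounded away from zero (so the conditioning is non-degenerate; this holds since $A_x>0$ and $a>0$, so the limiting event has positive Gaussian measure) and a uniform-integrability / continuous-mapping type argument: one writes the conditional characteristic function of $M^{1/2}(\tauhaj-\tau)$ as the ratio of $E[e^{\mathrm i t M^{1/2}(\tauhaj-\tau)}\mathbf{1}_{\dxa}]$ to $\pr(\dxa)$, applies the joint weak convergence of $(M^{1/2}(\tauhaj-\tau), M^{1/2}\tauxhaj)$ from Proposition~\ref{prop:FCLT} together with the fact that the boundary of $\{v: v^{\top}\vxxhaj^{1/2}A_x\vxxhaj^{1/2}v\le a\}$ has zero Lebesgue (hence Gaussian) measure, and concludes by the continuous mapping theorem for the numerator and denominator separately. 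This is exactly the device used in \citet{Li9157} and in the proof of Theorem~\ref{thm:asymp2}; the only new ingredient here is that the quadratic form defining the event has a general positive-definite matrix rather than the Mahalanobis one, which does not affect the argument since all that is used is that the form is continuous and its level set has null boundary. Finally, I would note that the stated quantities $\mu_x,\mu_c$ are well defined because $\vtxhaj\vxxhaj^{-1}\vxthaj>0$ whenever $R_x^2>0$; the edge case $R_x^2=0$ makes the $\eta$-term vanish and the statement reduces to the unconditional CLT, so it holds trivially.
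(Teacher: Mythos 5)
Your proposal is correct and follows essentially the same route as the paper: the joint CLT from Proposition~\ref{prop:FCLT}, the transfer of conditioning to the Gaussian limit (which the paper delegates to Proposition~\ref{prop:weakconv}, i.e.\ Proposition A1 of \citet{Li9157}, and which you re-derive via the positive limiting acceptance probability and null-boundary/continuous-mapping argument), and then the orthogonal decomposition $T=(T-\vtxhaj\vxxhaj^{-1}T_x)+\vtxhaj\vxxhaj^{-1}T_x$ with the identification $\vtxhaj\vxxhaj^{-1/2}=R_x\vtthaj^{1/2}\mu_x^{\top}$. The algebraic identities you check ($\mu_x^{\top}\mu_x=1$, residual variance $(1-R_x^2)\vtthaj$, and the acceptance event becoming $\eta^{\top}\vxxhaj^{1/2}A_x\vxxhaj^{1/2}\eta\le a$) all match the paper's computation.
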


Theorem \ref{thm:asymquadra} is similar to Theorem \ref{thm:asymp2} in that the asymptotic distributions have two components: one is normal, and the other is truncated normal. It allows us to derive properties of $\tauhaj$ and $\hat{\tau}_{\textnormal{ht}}$ under cluster rerandomization.

\begin{proposition}\label{prop:concentration-symmetric-unimodality}
  Under regularity conditions, (i) the asymptotic distributions in Theorem \ref{thm:asymquadra} are symmetric around zero and unimodal, and (ii)  $\pra \{ M^{1/2}|\tauhaj-\tau|<\delta\mid  \mathcal{D}_{x}(A_x) \}$ is a non-decreasing function of $R^2_{x}$ and $\pra  \{M^{1/2}| \hat{\tau}_{\textnormal{ht}}-\tau | <\delta  \mid  \mathcal{D}_{c}(A_c) \}$ is a non-decreasing function of ${R}_{c}^2$ for any fixed $\delta>0$.
\end{proposition}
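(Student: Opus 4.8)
The plan is to put both asymptotic laws into a common canonical form and then handle (i) and (ii) by log-concavity and peakedness arguments; the $\tauhaj$ and $\tauht$ cases are identical, so I describe the former. Write the limit law in Theorem~\ref{thm:asymquadra} as $V^{1/2}S$, where $V=\vtthaj>0$ is a fixed scalar and
\[
  S \;\overset{d}{=}\; \sqrt{1-R_{x}^{2}}\,\epsilon + R_{x}\,\mu^{\top}\eta \;\Big|\; \{\eta^{\top}B\eta\le a\},
\]
with $\epsilon,\eta_{1},\dots,\eta_{K}$ independent $N(0,1)$, $B=\vxxhaj^{1/2}A_{x}\vxxhaj^{1/2}>0$, and $\mu=\vxxhaj^{-1/2}\vxthaj(\vtxhaj\vxxhaj^{-1}\vxthaj)^{-1/2}$ a \emph{unit} vector (since $\vtxhaj\vxxhaj^{-1}\vxthaj$ is a positive scalar). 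Because the conditioning event involves only $\eta$ and the Gaussian density of $(\epsilon,\eta)$ factors, $\epsilon$ is independent of $W:=\mu^{\top}\eta\mid\{\eta^{\top}B\eta\le a\}$, so $S=\sqrt{1-R_{x}^{2}}\,\epsilon+R_{x}W$ with $\epsilon\sim N(0,1)$. Multiplying by the fixed constant $V^{1/2}>0$ preserves symmetry and unimodality and only rescales $\delta$, so it suffices to treat $S$. The structural fact that drives everything is that $W$ has a symmetric density $f_{W}$ with $f_{W}/\phi$ non-increasing in $|\cdot|$ ($\phi$ the standard normal density): rotating the $\eta$-coordinates so that $\mu=(1,0,\dots,0)^{\top}$, one has $f_{W}(w)\propto\phi(w)h(w)$ with $h(w)=\int\mathbf{1}\{q(w,\eta_{-1})\le a\}\prod_{j\ge2}\phi(\eta_{j})\,d\eta_{-1}$, where $q(w,\eta_{-1})=\eta^{\top}B'\eta$ (with $\eta=(w,\eta_{-1})$ and $B'>0$ the rotated matrix) is jointly convex; by Pr\'ekopa's theorem $h$ is log-concave in $w$, and it is even via $\eta_{-1}\mapsto-\eta_{-1}$, hence non-increasing in $|\cdot|$. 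In particular $f_{W}$ is symmetric unimodal, and $W$ is \emph{more peaked} than $\epsilon$, i.e.\ $\int_{-t}^{t}(f_{W}-\phi)\ge0$ for all $t$, since $f_{W}-\phi=\phi(f_{W}/\phi-1)$ is nonnegative on a symmetric interval about $0$ and nonpositive outside.

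Part (i) is then quick. Symmetry: the law of $(\epsilon,\eta)$ conditioned on $\{\eta^{\top}B\eta\le a\}$ is invariant under $(\epsilon,\eta)\mapsto(-\epsilon,-\eta)$, which maps $S$ to $-S$. Unimodality: $S$ is a sum of the two independent symmetric unimodal variables $\sqrt{1-R_{x}^{2}}\,\epsilon$ and $R_{x}W$, and the convolution of symmetric unimodal laws is symmetric unimodal (a classical fact about unimodal distributions); hence $V^{1/2}S$ is symmetric and unimodal.

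For part (ii), set $\rho=R_{x}^{2}\in[0,1]$ and $\delta'=\delta/V^{1/2}$; the goal is that $\rho\mapsto\pra(|\sqrt{1-\rho}\,\epsilon+\sqrt{\rho}\,W|<\delta')$ is non-decreasing. Conditioning on $W=w$ gives $\sqrt{1-\rho}\,\epsilon+\sqrt{\rho}\,w\sim N(\sqrt{\rho}\,w,1-\rho)$, so this probability equals $\mathbb{E}_{W}[g(\rho,W)]$ with $g(\rho,w)=\Phi\{(\delta'-\sqrt{\rho}\,w)/\sqrt{1-\rho}\}-\Phi\{(-\delta'-\sqrt{\rho}\,w)/\sqrt{1-\rho}\}$, which is even in $w$ and non-increasing in $|w|$. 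Since $\mathbb{E}_{\epsilon}[g(\rho,\epsilon)]=2\Phi(\delta')-1$ does not depend on $\rho$,
\[
  \pra(|\sqrt{1-\rho}\,\epsilon+\sqrt{\rho}\,W|<\delta')
  \;=\; 2\Phi(\delta')-1 \;+\; \int\{f_{W}(w)-\phi(w)\}\,g(\rho,w)\,dw .
\]
The single-crossing sign pattern of $f_{W}-\phi$ together with the monotonicity of $g(\rho,\cdot)$ makes the correction integral nonnegative (so rerandomization is more concentrated than the normal benchmark); its monotonicity in $\rho$ is then obtained by integration by parts against the nonnegative antiderivative $w\mapsto\int_{0}^{w}(f_{W}-\phi)$ (iterated once more if necessary), combined with the sign structure of $\partial_{\rho}g(\rho,\cdot)$ in $w$. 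The $\tauht$ statement follows verbatim with $\vtthaj,\vxxhaj,\vxthaj,\vtxhaj$ replaced by $\vttht,\vccht,\vctht,\vtcht$, and this parallels, and extends to arbitrary quadratic criteria, the concentration results for the Mahalanobis scheme.

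The step I expect to be the main obstacle is the monotonicity half of (ii): nonnegativity of the correction integral is a one-line peakedness estimate, but monotonicity in $\rho$ is delicate because $\partial_{\rho}g(\rho,w)$ changes sign in $w$, so one cannot simply match sign patterns; the resolution has to run through the variation-diminishing/integration-by-parts bookkeeping above — equivalently, through a peakedness-preservation lemma for convex combinations $\sqrt{1-\rho}\,\epsilon+\sqrt{\rho}\,W$ of the normal $\epsilon$ with an independent, more-peaked $W$ — and it is here that the sharper property that $f_{W}/\phi$ is non-increasing, rather than mere peakedness, is used.
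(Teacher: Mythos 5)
Your part (i) is correct and essentially the paper's argument in a slightly different dress: you marginalize first and get log-concavity of the weight $h(w)$ via Pr\'ekopa's theorem, whereas the paper (Lemmas~\ref{lem:log-concave-density}--\ref{lem:unimodal-symmetric-truncated-normal}) observes that the full truncated Gaussian vector $\eta\mid\eta^\top B\eta\le a$ has a log-concave density, hence is symmetric unimodal as a vector, and that linear images of symmetric unimodal vectors are symmetric unimodal; both routes then finish with the closure of symmetric unimodality under independent convolution (Lemma~\ref{lem9157}). Either version is fine.

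Part (ii), however, has a genuine gap. You reduce the claim to monotonicity in $\rho$ of the correction integral $\int\{f_W(w)-\phi(w)\}\,g(\rho,w)\,dw$, correctly note that $\partial_\rho g(\rho,\cdot)$ changes sign in $|w|$ so that no direct sign-matching works, and then defer to ``variation-diminishing/integration-by-parts bookkeeping, iterated once more if necessary.'' That deferral is the entire content of the step: after one integration by parts against $F(w)=\int_0^w(f_W-\phi)$ you need $\partial_w\partial_\rho g\le 0$ on $(0,\infty)$, which is false in general (it changes sign), and the single-crossing point of $f_W-\phi$ need not align with that of $\partial_\rho g$, so a second iteration requires sign control of $\partial_w^2\partial_\rho g$ together with positivity of $\int_0^w F$, none of which you establish. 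Peakedness of $W$ relative to $\phi$ alone is not enough here. The paper avoids this entirely: it notes that $\bigl((1-\rho^2)^{1/2}\epsilon+\rho b^\top\eta,\;A^{1/2}\eta\bigr)$ is jointly Gaussian with cross-covariance linear in $\rho$, and invokes Gupta's correlation inequality for elliptically contoured densities (Lemma~\ref{lem:sapp3}, from Gupta 1972), which states that the probability of the product of a symmetric convex set and a symmetric slab is non-decreasing in the cross-covariance scale; since the denominator $\mathrm{pr}(\eta^\top B\eta\le a)$ is free of $\rho$, the conditional probability inherits the monotonicity (Lemma~\ref{lem:sapp4}). To complete your write-up you would either need to prove the peakedness-preservation lemma you allude to, or simply cite such a correlation inequality, which is the clean way out.
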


Proposition \ref{prop:concentration-symmetric-unimodality} (i) ensures that the asymptotic distributions in Theorem \ref{thm:asymquadra} are both bell-shaped, although they differ from normal distributions.
Proposition \ref{prop:concentration-symmetric-unimodality} (ii) ensures that
the asymptotic distributions in Theorem \ref{thm:asymquadra}  are  more concentrated at zero than those under standard cluster randomization.


To compare the asymptotic efficiency of the two cluster rerandomization schemes, we can compare their variance reductions given the same acceptance rate. Let $\alpha$ denote the common asymptotic acceptance rate:
$$
  \alpha = \textup{pr}_\textup{a} \{ \mathcal{D}_{x}(A_x) \}
  =  \textup{pr}_\textup{a}  (M\hat{\tau}_{\textnormal{haj},x}^{\top}A_x\hat{\tau}_{\textnormal{haj},x}\leq a ),\quad
  \alpha = \textup{pr}_\textup{a} \{ \mathcal{D}_{c}(A_c) \}
  =  \textup{pr}_\textup{a}  (M\taucht^{\top}A_c\taucht\leq a ) .
$$
Let $\Gamma (\cdot)$ be the Gamma function and 
\begin{equation} \label{eqn::pk}
    p_{K}=\frac{2\pi}{K+2}\left\{ \frac{2\pi^{K/2}}{K\Gamma (K/2)} \right\}^{-2/K}. \nonumber
\end{equation}
The expansions in Theorem~\ref{thm:varalpha} below provide the basis for comparing asymptotic efficiency.


\begin{theorem}\label{thm:varalpha}
  Under regularity conditions, 
  \begin{gather*}
    \vara\{M^{1/2}(\tauhaj-\tau)\mid  \mathcal{D}_{x}(A_x) \} = {V}_{\textnormal{haj},\tau\tau} \{(1-R^2_{x})+{R}^2_{x}p_K {\nu}_x (A_x)\alpha^{2/K} +o(\alpha^{2/K})\},\\
    \vara\{M^{1/2}(\hat{\tau}_{\textnormal{ht}}-\tau)\mid  \mathcal{D}_{c}(A_c) \} = {V}_{\textnormal{ht},\tau\tau} \{(1-R^2_{c})+{R}^2_{c}p_K {\nu}_c (A_c)\alpha^{2/K} +o(\alpha^{2/K})\},
  \end{gather*}
  for a small $\alpha$,  where
  \begin{eqnarray*}
    \nu_x (A_x) &=& \frac{V_{\textnormal{haj},\tau x}V_{\textnormal{haj},xx}^{-1}A_x^{-1}V_{\textnormal{haj},xx}^{-1}V_{\textnormal{haj},x\tau} \det(A_x)^{1/K} \det(V_{\textnormal{haj},xx}) ^{1/K}}{V_{\textnormal{haj},\tau x}V_{\textnormal{haj},xx}^{-1}V_{\textnormal{haj},x\tau}},\\
    \nu_c (A_c) &=& \frac{V_{\textnormal{ht},\tau c}V_{\textnormal{ht},cc}^{-1}A_c^{-1}V_{\textnormal{ht},cc}^{-1}V_{\textnormal{ht},c\tau} \det(A_c)^{1/K} \det(V_{\textnormal{ht},cc}) ^{1/K}}{V_{\textnormal{ht},\tau c}V_{\textnormal{ht},cc}^{-1}V_{\textnormal{ht},c\tau}}.
  \end{eqnarray*}
\end{theorem}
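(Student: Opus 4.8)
The plan is to read the conditional limiting laws off Theorem~\ref{thm:asymquadra}, reduce each variance to the second moment of a Gaussian vector truncated to a shrinking ellipsoid, expand that second moment and the acceptance probability in powers of the threshold $a$, and finally eliminate $a$ in favour of $\alpha$. I carry out the argument for $\tauhaj$; the $\hat{\tau}_{\textnormal{ht}}$ statement follows verbatim upon replacing every ``haj'' by ``ht'' and $x$ by $c$. Write $B=\vxxhaj^{1/2}A_x\vxxhaj^{1/2}$, which is positive definite under the regularity conditions, so that the conditioning event in Theorem~\ref{thm:asymquadra} is $\{\eta^{\top}B\eta\le a\}$ and the acceptance probability is $\alpha=\textup{pr}(\eta^{\top}B\eta\le a)$. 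Since $\epsilon$ is standard normal and independent of $\eta=(\eta_1,\dots,\eta_K)^{\top}$ while $\{\eta^{\top}B\eta\le a\}$ depends on $\eta$ only, conditioning leaves $\epsilon$ standard normal and independent of $\eta$; combined with $\mathbb{E}(\eta\mid\eta^{\top}B\eta\le a)=0$ by symmetry of the ellipsoid, this gives
\[
  \vara\{M^{1/2}(\tauhaj-\tau)\mid\mathcal{D}_x(A_x)\}
  =\vtthaj\,\Big\{(1-R_x^2)+R_x^2\,\mu_x^{\top}\Sigma(a)\,\mu_x\Big\},\qquad
  \Sigma(a)=\mathbb{E}\big(\eta\eta^{\top}\mid\eta^{\top}B\eta\le a\big).
\]
Thus the whole problem is to expand $\Sigma(a)$ and $\alpha$ as $a\downarrow 0$.

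The core step is a shrinking-ellipsoid approximation. Let $\phi$ be the $\mathcal N(0,I_K)$ density, so $\phi(u)=(2\pi)^{-K/2}\{1-\tfrac12\|u\|^2+O(\|u\|^4)\}$ near the origin, and let $\kappa_K=\pi^{K/2}/\Gamma(K/2+1)$ be the volume of the unit ball in $\mathbb{R}^K$, so that $p_K=\tfrac{2\pi}{K+2}\kappa_K^{-2/K}$ by \eqref{eqn::pk}. The numerator $\int_{u^{\top}Bu\le a}uu^{\top}\phi(u)\,du$ of $\Sigma(a)$ and its denominator $\int_{u^{\top}Bu\le a}\phi(u)\,du=\alpha$ are integrals over the ellipsoid $\{u^{\top}Bu\le a\}$; changing variables $u=a^{1/2}B^{-1/2}v$ with $v$ over the unit ball, using $\int_{\|v\|\le 1}vv^{\top}\,dv=\tfrac{\kappa_K}{K+2}I_K$, and noting that $\phi$ is even so that only even powers of $a$ enter and every correction is strictly higher order in $a$, one obtains
\[
  \Sigma(a)=\frac{a}{K+2}\,B^{-1}\,\{1+O(a)\},\qquad
  \alpha=(2\pi)^{-K/2}\,\kappa_K\,\det(B)^{-1/2}\,a^{K/2}\,\{1+O(a)\}.
\]
Solving the second relation gives $a=2\pi\,\kappa_K^{-2/K}\det(B)^{1/K}\,\alpha^{2/K}\{1+o(1)\}$, and substituting into the first yields $\Sigma(a)=p_K\,\det(B)^{1/K}B^{-1}\,\alpha^{2/K}+o(\alpha^{2/K})$; the stated error order holds because the relative correction to $\Sigma(a)$ is $O(a)=O(\alpha^{2/K})$ while $\Sigma(a)$ itself is of order $\alpha^{2/K}$, so the absolute error in $\mu_x^{\top}\Sigma(a)\mu_x$ is $o(\alpha^{2/K})$.

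To finish I translate back. Using $B^{-1}=\vxxhaj^{-1/2}A_x^{-1}\vxxhaj^{-1/2}$, $\mu_x^{\top}=(\vtxhaj\vxxhaj^{-1}\vxthaj)^{-1/2}\vtxhaj\vxxhaj^{-1/2}$, and $\det(B)^{1/K}=\det(A_x)^{1/K}\det(\vxxhaj)^{1/K}$ one computes
\[
  \mu_x^{\top}\Sigma(a)\,\mu_x
  =p_K\,\alpha^{2/K}\,
  \frac{\vtxhaj\vxxhaj^{-1}A_x^{-1}\vxxhaj^{-1}\vxthaj\;\det(A_x)^{1/K}\det(\vxxhaj)^{1/K}}{\vtxhaj\vxxhaj^{-1}\vxthaj}
  +o(\alpha^{2/K})
  =p_K\,\nu_x(A_x)\,\alpha^{2/K}+o(\alpha^{2/K}).
\]
Since $R_x^2=\vtxhaj\vxxhaj^{-1}\vxthaj/\vtthaj$ by the definition of $R_x^2$ together with Proposition~\ref{prop:FCLT}, plugging this into the first display gives the claimed expansion for $\tauhaj$, and repeating with ``ht'' and $c$ gives the one for $\hat{\tau}_{\textnormal{ht}}$. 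The computation is largely mechanical; the step requiring the most care is the shrinking-ellipsoid expansion, where one must verify that after forming the numerator-over-denominator ratio the Gaussian-density remainder is genuinely $o(\alpha^{2/K})$ rather than merely heuristic and that the normalizing constant indeed collapses to $p_K$ through the successive changes of variables. Both rely only on the smoothness and evenness of $\phi$ and a careful accounting of powers of $a$, so the risk is bookkeeping rather than a genuine obstacle.
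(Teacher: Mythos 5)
Your proof is correct and takes essentially the same route as the paper's: both start from the conditional representation in Theorem~\ref{thm:asymquadra}, Taylor-expand the Gaussian density over the shrinking ellipsoid to obtain the conditional second moment and the acceptance probability as powers of $a$ (the paper's Lemmas~\ref{lem:sapp9} and \ref{lem:sapp10}), and then eliminate $a$ in favour of $\alpha$. The only difference is cosmetic: you compute the full matrix $\Sigma(a)=E(\eta\eta^{\top}\mid\eta^{\top}B\eta\leq a)$ with a single change of variables to the unit ball, whereas the paper diagonalizes $B$, treats each $\var(\eta_k\mid\cdot)$ separately, and handles the cross terms by a symmetry argument — your matrix form absorbs that step automatically.
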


As a sanity check, the cluster rerandomization schemes using the Mahalanobis distances lead to ${\nu}_x(A_x)=\nu_x (V_{\textnormal{haj},xx}^{-1})=1$ and ${\nu}_c(A_c)=\nu_c (V_{\textnormal{ht},cc}^{-1})=1$. Given the same $R^2_x$ or $R^2_c$ in Theorem \ref{thm:varalpha}, the asymptotic efficiency of using  $\mathcal{D}_{x}(A_x)$ or $\mathcal{D}_{c}(A_c)$ depends only on $\nu_x (A_x)$ or $\nu_c (A_c)$, respectively. Cluster rerandomization schemes using the weighted Euclidean distances have diagonal $A_x$ and $A_c$, which allows us to derive the optimal weighting matrices as shown in Theorem~\ref{thm:optimdiag} below.   Let $\xi_k $ denote a $K$-dimensional vector with $1$ at the $k$th dimension and $0$ at other dimensions.

\begin{theorem}
  \label{thm:optimdiag}
  Under regularity conditions, if ${V}_{\textnormal{haj},\tau x}{V}_{\textnormal{haj},xx}^{-1}\xi_k $ and ${V}_{\textnormal{ht},\tau c}{V}_{\textnormal{ht},cc}^{-1}\xi_k $ are nonzero for all $k =1,\ldots,K$, then $ \nu_x \{{\operatorname{diag}}(w_1,\ldots,w_K)\}$ reaches minimum if $w_k \propto (V_{\textnormal{haj},\tau x}V_{\textnormal{haj},xx}^{-1}\xi_k )^2$ for $k =1,\ldots,K$, and $ \nu_c \{{\operatorname{diag}}(w_1,\ldots,w_K)\}$ reaches minimum if $w_k \propto (V_{\textnormal{ht},\tau c}V_{\textnormal{ht},cc}^{-1}\xi_k )^2$ for $k =1,\ldots,K$.
\end{theorem}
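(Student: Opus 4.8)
## Proof proposal for Theorem~\ref{thm:optimdiag}

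The plan is to reduce the constrained minimization over diagonal positive definite matrices to a single scalar inequality handled by the arithmetic mean--geometric mean (AM--GM) inequality. Write $b = \vxxhaj^{-1}\vxthaj$, a $K$-vector, and note that its $k$th coordinate is exactly $b_k = \vtxhaj\vxxhaj^{-1}\xi_k$, so the hypothesis of the theorem is precisely that $b_k \neq 0$ for every $k$. In the expression for $\nu_x(A_x)$, the factor $\vtxhaj\vxxhaj^{-1}\vxthaj$ in the denominator and the factor $\det(\vxxhaj)^{1/K}$ in the numerator do not involve $A_x$; hence minimizing $\nu_x$ over $A_x>0$ is equivalent to minimizing $b^{\top}A_x^{-1}b\cdot\det(A_x)^{1/K}$. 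Restricting to $A_x = \operatorname{diag}(w_1,\ldots,w_K)$ with all $w_k>0$, this becomes
\[
  g(w) = \Big( \sum_{k=1}^K \frac{b_k^2}{w_k} \Big) \Big( \prod_{k=1}^K w_k \Big)^{1/K}.
\]

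First I would record that $g$ is invariant under the rescaling $w \mapsto t w$ for $t>0$, which is why the conclusion can only determine the weights up to proportionality. Next I would apply the AM--GM inequality to the $K$ positive numbers $b_1^2/w_1,\ldots,b_K^2/w_K$:
\[
  \sum_{k=1}^K \frac{b_k^2}{w_k} \;\geq\; K \Big( \prod_{k=1}^K \frac{b_k^2}{w_k}\Big)^{1/K} \;=\; K \Big( \prod_{k=1}^K b_k^2\Big)^{1/K} \Big( \prod_{k=1}^K w_k\Big)^{-1/K},
\]
so that $g(w) \geq K \big(\prod_k b_k^2\big)^{1/K}$, a bound that is finite and strictly positive precisely because every $b_k \neq 0$. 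Equality in AM--GM holds if and only if the terms $b_k^2/w_k$ are all equal, i.e.\ $w_k \propto b_k^2$; since $b_k\neq 0$ for all $k$, the matrix $\operatorname{diag}(b_1^2,\ldots,b_K^2)$ is positive definite, hence admissible, and it attains the lower bound. This establishes that $\nu_x\{\operatorname{diag}(w_1,\ldots,w_K)\}$ is minimized at $w_k \propto (\vtxhaj\vxxhaj^{-1}\xi_k)^2$. The statement for $\nu_c$ follows by the identical argument with $(\textnormal{haj},x)$ replaced by $(\textnormal{ht},c)$ throughout, using $\vccht$, $\vtcht$, $\vctht$ in place of their Hajek counterparts.

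I do not anticipate a genuine obstacle; the points requiring care are bookkeeping rather than substance. One must verify that the determinant and denominator factors in $\nu_x$ are free of $A_x$ so the problem collapses to $g(w)$, note the scale invariance so that the optimum is a ray rather than a point, and observe that the hypothesis $b_k\neq 0$ is exactly what guarantees both that the AM--GM lower bound is positive (so a minimizer exists rather than an infimum approached by degenerating some $w_k$ to zero) and that the proposed minimizer $\operatorname{diag}(b_1^2,\ldots,b_K^2)$ lies in the feasible cone of positive definite diagonal matrices.
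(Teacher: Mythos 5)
Your proposal is correct and follows essentially the same route as the paper's proof: reduce $\nu_x$ to the scalar quantity $\bigl(\sum_k b_k^2/w_k\bigr)\bigl(\prod_k w_k\bigr)^{1/K}$ with $b=\vxxhaj^{-1}\vxthaj$ and apply AM--GM, with equality exactly at $w_k\propto b_k^2$. The only cosmetic difference is that the paper normalizes $\prod_k w_k=1$ whereas you invoke scale invariance, and you make explicit where the hypothesis $b_k\neq 0$ is used; both are the same argument.
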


Based on Theorem \ref{thm:optimdiag}, let $A^{\textnormal{opt}}_{x}$ and $A^{\textnormal{opt}}_{c}$ denote the optimal weighting matrices at the individual level and cluster level, respectively. Recall Proposition~\ref{prop:FCLT} to obtain that
\begin{gather*}
    \vtxhaj \vxxhaj^{-1}  = \covf   \{ e_1  \tilde{\varepsilon} (0)+ e_0  \tilde{\varepsilon} (1),\tilde{{X}} \}\{\covf (\tilde{{X}})\}^{-1}, \\
    \vtcht \vccht ^{-1}  = \covf   \{ e_1  \tilde{Y} (0)+ e_0  \tilde{Y} (1),{C} \}\{\covf ({C})\}^{-1} .
\end{gather*}
Therefore, the optimal weights in $A^{\textnormal{opt}}_{x}$ and $A^{\textnormal{opt}}_{c}$ correspond to the squared coefficients obtained from regressing $ e_1 \tde(0)+ e_0 \tde(1)$ on $\tdx$ and regressing $ e_1 \tilde{Y} (0)+ e_0 \tilde{Y} (1)$ on $\tilde{{C}}$, respectively.

\begin{remark}\label{remark::weighted-maha}
  Orthogonalizing the covariates ensures that $V_{\textnormal{haj},xx}^{-1}$ and $V_{\textnormal{ht},cc}^{-1}$ are both diagonal. Then the optimality of $A_x^{\textnormal{opt}}$ or $A_c^{\textnormal{opt}}$ implies $\nu_x (A_x^{\textnormal{opt}}) \leq \nu_x (V_{\textnormal{haj},xx}^{-1})=1$ or $\nu_c (A_c^{\textnormal{opt}}) \leq \nu_c (V_{\textnormal{ht},cc}^{-1})=1$, respectively. Therefore, with orthogonalized covariates, the cluster rerandomization schemes based on the optimal weighted Euclidean distances are superior to those based on the Mahalanobis distances. However, this conclusion does not hold without orthogonalizing the covariates. We give a counterexample in the appendix.
\end{remark}


\subsection{Comparison with cluster rerandomization with tiers of covariates}
\label{sec:comparison-tier}

With prior knowledge on the importance of the covariates, \citet{morgankari2015} proposed rerandomization with tiers of covariates as an alternative to rerandomization with the weighted Euclidean distance. \citet{morgankari2015} and \citet{Li9157} derived its properties under the assumption that treatments are assigned at the individual level. The theory for the cluster analog is still missing. Moreover, no comparison has been made between these two rerandomization schemes. We fill these gaps in this subsection.

Similar to Remark \ref{remark::weighted-maha}, the comparison does not yield definite answers with general covariates. However, we can demonstrate that  rerandomization  with the optimal weighted Euclidean distance is superior to  rerandomization with tiers of covariates when the covariates are orthogonalized. In this subsection, we focus on cluster rerandomization with individual-level covariates; in the appendix, we give the corresponding results for cluster rerandomization with cluster-level covariates. 

Orthogonalization is applied on the scaled cluster totals. The procedure proceeds as follows: first, find the upper triangular matrix $U$ by applying Gram--Schmidt orthogonalization to $\tdxi$ such that $U\tdxi$ has orthogonal components, and then transform the individual-level covariates to $Ux_{ij}$. To simplify the presentation, we assume that $x_{ij} = (x_{ij1}, \ldots, x_{ijK})^\top$ are already orthogonalized by the aforementioned procedure and are ordered in decreasing importance.


We first re-stated the results for cluster rerandomization with orthogonalized covariates. Let
$$
  V_{\textnormal{haj},xx} = \operatorname{diag}(V_{\textnormal{haj},x_1x_1},\ldots,V_{\textnormal{haj},x_Kx_K}),
  \quad
  V_{\textnormal{haj},\tau x} = (V_{\textnormal{haj},\tau x_{1}}, \ldots, V_{\textnormal{haj},\tau x_K}).
$$
Define the squared multiple correlation for the $k$th covariate as
$$
  R_{x_k}^2 =  V_{\textnormal{haj},\tau x_k }^2/  ( V_{\textnormal{haj},\tau\tau}V_{\textnormal{haj},x_k  x_k }), \quad k = 1,\ldots,K.
$$
Then by Theorem~\ref{thm:optimdiag}, the cluster rerandomization scheme based on the optimal weighted Euclidean distance is
$
  \mathcal{D}_x(A_{x}^{\textnormal{opt}})= \{M{\hat{\tau}_{\textnormal{haj}, x}}^\top A_{x}^{\textnormal{opt}} {\hat{\tau}_{\textnormal{haj}, x}} \leq a\}
$
with the optimal diagonal matrix
$$
  A_{x}^{\textnormal{opt}} = \operatorname{diag}\{(V_{\textnormal{haj},\tau x_{1}}V_{\textnormal{haj},x_{1}x_{1}}^{-1})^2,\ldots,(V_{\textnormal{haj},\tau x_{k}}V_{\textnormal{haj},x_{K}x_{K}}^{-1})^2\} .
$$
Corollary  \ref{cor:varalphaortho} below is a special case of Theorem~\ref{thm:varalpha}.

\begin{corollary}
  \label{cor:varalphaortho}
  Under regularity conditions with orthogonalized covariates and  optimal weighted Euclidean distance, if ${V}_{\textnormal{haj},\tau x_k }{V}_{\textnormal{haj},x_k x_k }^{-1}$ are nonzero for all $k =1,\ldots,K$, then
  $$
    \vara    \{M^{1/2}(\tauhaj-\tau) \mid \mathcal{D}_{x}(A_{x}^{\textnormal{opt}}) \} =  V_{\textnormal{haj},\tau \tau} \Big\{(1- R_{x}^2)+ K \Big(\prod_{k =1}^K R_{x_{k}}^2\Big)^{1/K}p_K \alpha^{2/K} +o(\alpha^{2/K}) \Big\},
  $$
  for a small $\alpha$. 
\end{corollary}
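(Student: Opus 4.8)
The plan is to derive Corollary~\ref{cor:varalphaortho} as a direct specialization of Theorem~\ref{thm:varalpha} to the orthogonalized setting with the optimal diagonal weighting matrix. Starting from the expansion
\[
\vara\{M^{1/2}(\tauhaj-\tau)\mid \mathcal{D}_x(A_x)\} = V_{\textnormal{haj},\tau\tau}\{(1-R_x^2) + R_x^2 p_K \nu_x(A_x)\alpha^{2/K} + o(\alpha^{2/K})\},
\]
it suffices to show that $R_x^2\, \nu_x(A_x^{\textnormal{opt}}) = K(\prod_{k=1}^K R_{x_k}^2)^{1/K}$ under the stated assumptions. The key simplification is that orthogonalization makes $V_{\textnormal{haj},xx}$ diagonal, so $V_{\textnormal{haj},xx}^{-1} = \operatorname{diag}(V_{\textnormal{haj},x_1x_1}^{-1},\ldots,V_{\textnormal{haj},x_Kx_K}^{-1})$ and $A_x^{\textnormal{opt}} = \operatorname{diag}((V_{\textnormal{haj},\tau x_k}V_{\textnormal{haj},x_kx_k}^{-1})^2)_{k=1}^K$ is also diagonal, reducing every matrix product in $\nu_x(\cdot)$ to a sum or product over coordinates.

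First I would expand the numerator and denominator of $\nu_x(A_x^{\textnormal{opt}})$ separately. For the denominator, $V_{\textnormal{haj},\tau x}V_{\textnormal{haj},xx}^{-1}V_{\textnormal{haj},x\tau} = \sum_{k=1}^K V_{\textnormal{haj},\tau x_k}^2/V_{\textnormal{haj},x_kx_k} = R_x^2 V_{\textnormal{haj},\tau\tau}$, where the last equality is the definition of $R_x^2$ specialized to the orthogonalized case (the cross terms vanish). Then $R_x^2 \nu_x(A_x^{\textnormal{opt}})$ has the denominator cancel, leaving
\[
R_x^2\,\nu_x(A_x^{\textnormal{opt}}) = \frac{1}{V_{\textnormal{haj},\tau\tau}}\, V_{\textnormal{haj},\tau x}V_{\textnormal{haj},xx}^{-1}(A_x^{\textnormal{opt}})^{-1}V_{\textnormal{haj},xx}^{-1}V_{\textnormal{haj},x\tau}\,\det(A_x^{\textnormal{opt}})^{1/K}\det(V_{\textnormal{haj},xx})^{1/K}.
\]
Next, with everything diagonal, the quadratic form is $\sum_{k=1}^K V_{\textnormal{haj},\tau x_k}^2 V_{\textnormal{haj},x_kx_k}^{-2}(V_{\textnormal{haj},\tau x_k}V_{\textnormal{haj},x_kx_k}^{-1})^{-2} = \sum_{k=1}^K 1 = K$ (this is exactly why the optimal weights are chosen this way — each coordinate contributes equally). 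The two determinant factors combine as $\prod_{k=1}^K (V_{\textnormal{haj},\tau x_k}^2 V_{\textnormal{haj},x_kx_k}^{-2}\cdot V_{\textnormal{haj},x_kx_k})^{1/K} = \prod_{k=1}^K (V_{\textnormal{haj},\tau x_k}^2/V_{\textnormal{haj},x_kx_k})^{1/K}$, so combining with the leading $1/V_{\textnormal{haj},\tau\tau}$ absorbed into each factor gives $K\prod_{k=1}^K(V_{\textnormal{haj},\tau x_k}^2/(V_{\textnormal{haj},\tau\tau}V_{\textnormal{haj},x_kx_k}))^{1/K} = K(\prod_{k=1}^K R_{x_k}^2)^{1/K}$. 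Substituting back into the Theorem~\ref{thm:varalpha} expansion and using $R_x^2 p_K \nu_x(A_x^{\textnormal{opt}}) = K(\prod R_{x_k}^2)^{1/K}p_K$ yields the claimed formula.

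This is essentially a bookkeeping exercise, so there is no genuine obstacle; the only point requiring a little care is tracking how the $1/V_{\textnormal{haj},\tau\tau}$ and the $K$ determinant exponents distribute so that each of the $K$ factors comes out as $R_{x_k}^2$ raised to the power $1/K$ — i.e., verifying that the $K$-fold product telescopes correctly into the geometric mean $(\prod_{k=1}^K R_{x_k}^2)^{1/K}$ and that the leading multiplicative constant is exactly $K$ rather than $1$. I would also state explicitly at the outset that the nonzero assumptions on $V_{\textnormal{haj},\tau x_k}V_{\textnormal{haj},x_kx_k}^{-1}$ guarantee $A_x^{\textnormal{opt}}$ is strictly positive definite and that each $R_{x_k}^2 > 0$, so all inverses and fractional powers above are well-defined, and then note that the result is immediate from Theorem~\ref{thm:varalpha} combined with Theorem~\ref{thm:optimdiag}'s identification of $A_x^{\textnormal{opt}}$.
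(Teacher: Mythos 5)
Your proposal is correct and follows essentially the same route as the paper: specialize Theorem~\ref{thm:varalpha} to diagonal $V_{\textnormal{haj},xx}$, use $R_x^2=\sum_{k=1}^K R_{x_k}^2$, and reduce $R_x^2\,\nu_x(A_x^{\textnormal{opt}})$ to $K\bigl(\prod_{k=1}^K R_{x_k}^2\bigr)^{1/K}$. The only cosmetic difference is that the paper re-derives the optimal weights $w_k \propto R_{x_k}^2/V_{\textnormal{haj},x_kx_k}$ via the arithmetic--geometric mean inequality and reads off the minimum value, whereas you plug in the optimizer from Theorem~\ref{thm:optimdiag} and evaluate $\nu_x$ directly; the underlying algebra is the same.
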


By Theorem \ref{thm:varalpha}, we can obtain that
$$
  {\vara}   \{M^{1/2} (\tauhaj-\tau) \mid \mathcal{M}_x  \} = V_{\textnormal{haj},\tau \tau} \{(1- R_{x}^2)+ R_{x}^2 p_{K}\alpha^{2/K}+o(\alpha^{2/K})  \}.
$$
Comparing it with Corollary \ref{cor:varalphaortho} above, we can  see that cluster rerandomization with the optimal weighted Euclidean distance is indeed superior to cluster rerandomization with the Mahalanobis distance due to $K \big(\prod_{k=1}^K R_{x_k}^2\big)^{1/K} \leq  R_{x}^2$ by the inequality of arithmetic and geometric means, which echoes Remark \ref{remark::weighted-maha}.

In rerandomization with tiers of covariates,  \citet{morgankari2015} partitioned the covariates into  $L$ tiers by their importance with $K_l$ covariates in tier $l$, for $l=1,\ldots,L$ and $\sum_{l=1}^L K_l = K$. Let ${x}_{[l]}$ be the covariates in the $l$th tier. 
Then ${x} =({x}_{[1]},\ldots,{x}_{[L]})$. \cite{morgankari2015} originally used the block-wise Gram--Schmidt orthogonalization, but we use the above element-wise orthogonalization because it is notationally simpler and  leads to the same asymptotic distribution. Within tier $l$, define $\hat{\tau}_{\textnormal{haj},x_{[l]}}$ as the difference in means of covariates, $V_{\textnormal{haj},x_{[l]}x_{[l]}}$ as its covariance, $\mathcal{M}_{[l]}$ as the event that the corresponding Mahalanobis distance is smaller than or equal to the threshold $a_{[l]}$, $R_{x_{[l]}}^2$ as the corresponding $R^2$, and
$\alpha_{[l]}  $ as the asymptotic acceptance rate. The overall asymptotic acceptance rate satisfies $\alpha = \prod_{l=1}^L \alpha_{[l]}$ due to the orthogonalization of the covariates. 

\begin{corollary}
  \label{cor:varalphatier}
  Under regularity conditions,
  $$
    \vara  \{M^{1/2} (\tauhaj-\tau)\mid \mathcal{M}_{[1]},\ldots,\mathcal{M}_{[L]}  \} =   V_{\textnormal{haj},\tau \tau} \Big\{(1- R_{x}^2)+ \sum_{l=1}^L R_{x_{[l]}}^2 p_{K_{l}}\alpha_{[l]}^{2/K_{l}}  +o(\alpha_{[l]}^{2/K_l}) \Big\},
  $$
  for small $\alpha_{[l]}$, $l=1,\ldots,L$.
\end{corollary}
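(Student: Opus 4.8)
The plan is to reduce Corollary~\ref{cor:varalphatier} to an $L$-fold iteration of the single-tier variance expansion already available from Theorem~\ref{thm:varalpha} (or rather its Mahalanobis specialization, as recorded in the display preceding Corollary~\ref{cor:varalphatier}). The key structural fact, stated in the text just above, is that after element-wise Gram--Schmidt orthogonalization the covariate blocks $x_{[1]},\ldots,x_{[L]}$ are mutually orthogonal in the finite-population inner product, so that $V_{\textnormal{haj},xx}$ is block-diagonal with blocks $V_{\textnormal{haj},x_{[l]}x_{[l]}}$. Consequently the rerandomization events $\mathcal{M}_{[1]},\ldots,\mathcal{M}_{[L]}$ become, asymptotically, constraints on $L$ independent standard Gaussian blocks, and the overall acceptance rate factorizes as $\alpha=\prod_{l=1}^L\alpha_{[l]}$, as already noted.

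First I would set up the joint asymptotic representation. By Proposition~\ref{prop:FCLT} applied with the full (orthogonalized) covariate vector, $M^{1/2}(\tauhaj-\tau,\hat\tau_{\textnormal{haj},x})$ is jointly asymptotically normal; after the orthogonalization the cross-block covariances vanish, so I can write the limiting version of $M^{1/2}\hat\tau_{\textnormal{haj},x_{[l]}}$ as $V_{\textnormal{haj},x_{[l]}x_{[l]}}^{1/2}\eta_{[l]}$ with $\eta_{[1]},\ldots,\eta_{[L]}$ independent standard Gaussian vectors, and decompose $M^{1/2}(\tauhaj-\tau)\dot\sim V_{\textnormal{haj},\tau\tau}^{1/2}\{(1-R_x^2)^{1/2}\epsilon+\sum_{l=1}^L R_{x_{[l]}}\mu_{[l]}^{\top}\eta_{[l]}\}$, where $\epsilon$ is an independent standard normal absorbing the residual from projecting $\tauhaj$ onto all covariates, $\mu_{[l]}$ is the unit vector in the direction of $V_{\textnormal{haj},\tau x_{[l]}}V_{\textnormal{haj},x_{[l]}x_{[l]}}^{-1/2}$, and $R_x^2=\sum_{l=1}^L R_{x_{[l]}}^2$ by orthogonality of the tiers. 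This is essentially the argument behind Theorem~\ref{thm:asymp2}/Theorem~\ref{thm:asymquadra}, now carried out block by block.

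Next I would compute the conditional variance. Conditioning on $\bigcap_{l=1}^L\mathcal{M}_{[l]}$ conditions each $\eta_{[l]}$ on $\{\eta_{[l]}^{\top}\eta_{[l]}\le a_{[l]}\}$ (the Mahalanobis event in tier $l$), and the blocks remain mutually independent and independent of $\epsilon$. Hence the conditional variance is the sum $V_{\textnormal{haj},\tau\tau}\{(1-R_x^2)+\sum_{l=1}^L R_{x_{[l]}}^2\,\vara(\mu_{[l]}^{\top}\eta_{[l]}\mid \eta_{[l]}^{\top}\eta_{[l]}\le a_{[l]})\}$. Each summand is exactly the quantity whose small-$\alpha_{[l]}$ expansion Theorem~\ref{thm:varalpha} (in its Mahalanobis form $\nu=1$) supplies: $\vara(\mu_{[l]}^{\top}\eta_{[l]}\mid \eta_{[l]}^{\top}\eta_{[l]}\le a_{[l]}) = p_{K_l}\alpha_{[l]}^{2/K_l}+o(\alpha_{[l]}^{2/K_l})$, using that $L_{K_l,a_{[l]}}$ along any fixed unit direction has variance $p_{K_l}\alpha_{[l]}^{2/K_l}+o(\alpha_{[l]}^{2/K_l})$ and the threshold-to-acceptance-rate relation $a_{[l]}\asymp$ the $\alpha_{[l]}$-quantile of $\chi^2_{K_l}$. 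Substituting gives the claimed formula.

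The main obstacle I anticipate is the bookkeeping around the orthogonalization, not any deep analytic step: one must verify carefully that the element-wise Gram--Schmidt used here yields the same block-diagonal covariance structure (and hence the same asymptotic distribution) as Morgan and Rubin's block-wise version, so that the tiers decouple and $\alpha=\prod_l\alpha_{[l]}$ holds; and one must keep track that the residual direction $\epsilon$ is common across tiers while the truncated parts are tier-specific, so the cross terms in the variance vanish by independence rather than by any cancellation. A secondary technical point is uniformity of the $o(\alpha_{[l]}^{2/K_l})$ remainders: the expansion from Theorem~\ref{thm:varalpha} is stated for a single small acceptance rate, and one should note that here each $\alpha_{[l]}$ is taken small separately, so the error term is a genuine sum $\sum_l o(\alpha_{[l]}^{2/K_l})$ as written, with no interaction between tiers. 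Once these points are dispatched, the rest is the routine substitution sketched above.
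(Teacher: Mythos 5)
Your proposal is correct and follows essentially the same route as the paper: the paper cites Theorem~3 of \citet{Li9157} to get the representation $V_{\textnormal{haj},\tau\tau}^{1/2}\{(1-R_x^2)^{1/2}\epsilon+\sum_{l=1}^L R_{x_{[l]}}L_{K_l,a_{[l]}}\}$ with jointly independent components, and then applies Theorems~\ref{thm:asymp2} and \ref{thm:varalpha} with $K=K_l$, $\alpha=\alpha_{[l]}$, $A_x=\vxxhaj^{-1}$, $R_x^2=1$ to get $\var(L_{K_l,a_{[l]}})=p_{K_l}\alpha_{[l]}^{2/K_l}+o(\alpha_{[l]}^{2/K_l})$ and sums. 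The only difference is that you re-derive the convolution representation directly from the block-diagonal covariance structure (your $\mu_{[l]}^{\top}\eta_{[l]}\mid\eta_{[l]}^{\top}\eta_{[l]}\le a_{[l]}$ is exactly $L_{K_l,a_{[l]}}$ by rotational invariance) rather than citing it, which is a fine and self-contained substitute.
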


Given $\alpha = \prod_{l=1}^L \alpha_{[l]}>0$,  to minimize the second term in the asymptotic variance in Corollary \ref{cor:varalphatier}, we must choose
$$
  \alpha_{[l]}=\left({c_0 {R}_{x_{[l]}}^2 p_{K_{l}}}/{K_{l}}\right)^{-K_{l}/2} \quad (l=1,\ldots,L),
$$
for some positive constant $c_0$. Theorem \ref{thm:wtedbetter} below compares the second terms in the asymptotic variances in Corollaries \ref{cor:varalphaortho} and \ref{cor:varalphatier}.

\begin{theorem}
  \label{thm:wtedbetter}
  Under regularity conditions with orthogonalized covariates, the following inequality holds for any $\alpha_{[1]},\ldots,\alpha_{[L]}$ satisfying $\alpha = \prod_{l=1}^L \alpha_{[l]}$:
  \begin{align*}
    \sum_{l=1}^L {R}_{x_{[l]}}^2 p_{K_{l}}\alpha_{[l]}^{2/K_l} \geq  K\Big(\prod_{k=1}^K R_{x_{k}}^2\Big)^{1/K}p_K \alpha^{2/K}.
  \end{align*}
\end{theorem}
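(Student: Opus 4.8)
The plan is to reduce the inequality to a pure optimization statement over the tier allocation and then invoke convexity. First, I would fix the constant $c_0$ and substitute the optimal tier-wise acceptance rates $\alpha_{[l]} = (c_0 R_{x_{[l]}}^2 p_{K_l}/K_l)^{-K_l/2}$ into the left-hand side; each summand $R_{x_{[l]}}^2 p_{K_l} \alpha_{[l]}^{2/K_l}$ then collapses to $K_l/c_0$, so the sum becomes $K/c_0$ (using $\sum_l K_l = K$). The constraint $\alpha = \prod_l \alpha_{[l]}$ pins down $c_0$: taking logarithms, $\log \alpha = -\tfrac12 \sum_l K_l \log(c_0 R_{x_{[l]}}^2 p_{K_l}/K_l)$, which I would solve for $c_0$ explicitly. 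This turns the left-hand side (at the optimum) into a closed-form expression in the $R_{x_{[l]}}^2$, $K_l$, and $\alpha$, and the claimed inequality for \emph{general} $\alpha_{[l]}$ follows once we check it at this minimizer — but actually, since Theorem~\ref{thm:wtedbetter} claims the bound for \emph{all} feasible $\alpha_{[l]}$, I need the minimum of the left-hand side over feasible allocations, which is exactly the value at the stated optimal $\alpha_{[l]}$; so it suffices to prove the inequality there.

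Next, I would relate the tier-level quantities to the per-covariate quantities. Because the covariates are orthogonalized, $R_{x_{[l]}}^2 = \sum_{k \in \text{tier } l} R_{x_k}^2$ (the Mahalanobis $R^2$ of a block of orthogonal covariates is the sum of the individual $R^2$'s). Writing $r_k = R_{x_k}^2$ and $S_l = \sum_{k \in \text{tier } l} r_k$, the right-hand side is $K p_K \big(\prod_k r_k\big)^{1/K} \alpha^{2/K}$ and, after the substitution above, the left-hand side at the optimum is a function of the $S_l$, the $K_l$, and $\alpha$. I would then carry out the algebra to see the inequality takes the shape: a weighted geometric-type mean built from the tiers dominates the fully disaggregated geometric mean only in one direction, and in fact the inequality goes the \emph{right} way because $p_{K_l}$ and $p_K$ satisfy a sub/super-multiplicativity relation and because within a tier $\big(\prod_{k \in l} r_k\big)^{1/K_l} \le S_l/K_l$ by AM--GM. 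Concretely, I expect the proof to factor as: (i) within each tier, bound the tier contribution below using AM--GM on the $r_k$ inside that tier, reducing to a statement purely about $K_1, \ldots, K_L$ and the geometric means $g_l := \big(\prod_{k \in l} r_k\big)^{1/K_l}$; (ii) a clean inequality comparing $\sum_l K_l p_{K_l} g_l^{?} \alpha_{[l]}^{2/K_l}$-type terms with $K p_K (\prod_l g_l^{K_l})^{1/K}\alpha^{2/K}$, which should itself follow from weighted AM--GM applied across tiers together with the key numeric fact about $p_K$.

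The key numeric fact I would isolate and prove separately is a log-convexity / super-multiplicativity property of the sequence $K \mapsto \log p_K$ or of $K \mapsto p_K^{K}$ — something like: for any partition $K = \sum_l K_l$, $\prod_l p_{K_l}^{K_l} \ge p_K^{K}$, or the appropriate variant that makes the cross-tier AM--GM go through. From the explicit formula $p_K = \tfrac{2\pi}{K+2}\{2\pi^{K/2}/(K\Gamma(K/2))\}^{-2/K}$ this reduces to a statement about $\Gamma$; using $\log\Gamma$ convexity (the Bohr--Mollerup characterization / Gautschi-type bounds) should deliver it, though tracking the $\tfrac{2\pi}{K+2}$ prefactor and the $-2/K$ exponent carefully is the fiddly part. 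I expect \textbf{this $p_K$ inequality to be the main obstacle}: everything else is AM--GM and bookkeeping, but pinning down the precise monotonicity/convexity property of $p_K$ that exactly matches what the tier-vs-Euclidean comparison needs — and verifying it holds for all integer $K_l \ge 1$ rather than merely asymptotically — is where the real work lies. Once that lemma is in hand, I would assemble the pieces: AM--GM inside tiers, the $p_K$ lemma across tiers, and a final weighted AM--GM in the $\alpha_{[l]}$ to absorb the constraint $\prod_l \alpha_{[l]} = \alpha$, yielding the stated inequality with equality characterized by all $r_k$ equal and a single tier (or all tiers balanced in the appropriate sense).
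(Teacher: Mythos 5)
Your proposal follows essentially the same route as the paper: weighted AM--GM across tiers (equivalently, evaluating at the minimizing $\alpha_{[l]}$) to absorb the constraint $\prod_{l}\alpha_{[l]}=\alpha$, AM--GM within each tier using $R_{x_{[l]}}^2=\sum_{k\in[l]}R_{x_k}^2$ under orthogonalization, and the super-multiplicativity $\prod_{l}p_{K_l}^{K_l}\geq p_K^{K}$. The only simplification you miss is that this last fact --- the ``main obstacle'' you flag --- follows immediately from the fact that $p_K$ is decreasing in $K$ (since each $K_l\leq K$), which the paper establishes in its lemma on $p_K$ via a digamma inequality of Alzer, so no separate log-convexity argument for $\Gamma$ over partitions is needed.
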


Based on the comparison of the asymptotic variances,  Theorem \ref{thm:wtedbetter} quantifies  the superiority of cluster rerandomization with the optimal weighted Euclidean distance when the covariates are orthogonalized.



\section{Rerandomization and regression adjustment}\label{sec:readj}


Rerandomization uses covariates in the design stage \citep{morgan2012rerandomization}, and regression adjustment uses covariates in the analysis stage \citep{Lin2013Agnostic}. \citet{Li9157} pointed out that they are dual in the design and analysis of randomized experiments with treatments assigned at the individual level. Moreover, \citet{2020Rerandomization} showed that they could be used simultaneously. In this section, we will show that  analogous results hold under cluster rerandomization and point out some subtle differences. 

Under $\mathcal{D}_x(A_x)$, we can use the coefficient of $Z_{ij}  $ in the least squares fit of $Y_{ij}$ on $(1,Z_{ij}, x_{ij}, Z_{ij} x_{ij})$ to estimate $\tau$, and use the cluster-robust standard error to approximate the true asymptotic standard error \citep{Liang1986}; under $\mathcal{D}_c(A_c)$, we can use the coefficient of $Z_i$ in the least squares fit of $\tilde Y_{i\cdot}$ on $(1,Z_i, c_i, Z_i c_i)$ to estimate $\tau$, and use the heteroskedasticity-robust standard error to approximate the true asymptotic standard error \citep{Huber1967,White1980}. Denote the two sets of regression coefficient and variance estimator by $(\tauadjhaj,  \vhatadjlz)$ and $(\tauadjht, \vhatadjhw )$, respectively. Let $z_{\varsigma}$ ($0 < \varsigma < 1$) be the $\varsigma$th quantile of a standard normal distribution. Theorem~\ref{thm::regression-adjustment-under--cluster-rerandomization} below summarizes the important results about cluster rerandomization combined with regression adjustment. 


\begin{theorem}
  \label{thm::regression-adjustment-under--cluster-rerandomization}
  Assume regularity conditions hold.
  \begin{itemize}
    \item[(i)] Under  $\mathcal{D}_c(A_c)$, the estimator $\tauadjht$ is consistent for $\tau$ and asymptotically normal, the probability limit of $M \vhatadjhw$ is larger than or equal to the true asymptotic variance of $ M^{1/2} \tauadjht$, and the $1-\varsigma$ confidence interval 
      \[
    \left[\tauadjht+ (\vhatadjhw)^{1/2}z_{\varsigma/2}, \tauadjht+(\vhatadjhw)^{1/2}z_{1-\varsigma/2}\right]
  \]
  has asymptotic coverage rate $\geq 1-\varsigma$;
    \item[(ii)] Under $\mathcal{D}_x(A_x)$, the estimator $\tauadjhaj$ is consistent for $\tau$ and its asymptotic distribution is a convolution of normal and truncated normal, the probability limit of $M \vhatadjlz$ is larger than or equal to the true asymptotic variance of $ M^{1/2}  \tauadjhaj$, and the $1-\varsigma$ confidence interval 
      \[
    \left[\tauadjhaj+(\vhatadjlz)^{1/2}z_{\varsigma/2}, \tauadjhaj+(\vhatadjlz)^{1/2}z_{1-\varsigma/2}\right]
  \]
  has asymptotic coverage rate $\geq 1-\varsigma$;
    \item[(iii)]  If  ${c}_{i}=( n_i ,\tilde{{x}}_{i \cdot }^{\top})^{\top}$, the asymptotic distribution of $\tauadjht \mid\mathcal{D}_c(A_c)$ is more concentrated at $\tau$ than $\tauadjhaj\mid\mathcal{D}_x(A_x)$, in the sense that for any $\delta>0$, we have
      \begin{align*}
        \pra\{M^{1/2}|\tauadjhaj-\tau|<\delta\mid  \mathcal{D}_{x}(A_x) \}\leq \pra\{M^{1/2}| \tauadjht-\tau | <\delta  \mid  \mathcal{D}_{c}(A_c) \}.
      \end{align*}
  \end{itemize}

\end{theorem}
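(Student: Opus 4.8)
The plan is to handle the three parts in sequence, leaning heavily on the machinery already developed in the excerpt: the joint asymptotic normality of Proposition~\ref{prop:FCLT}, the conditional representations of Theorem~\ref{thm:asymquadra}, and the regression-adjustment results of \citet{su2021modelassisted} for cluster-randomized experiments. For part (i), I would first recall that, without rerandomization, the Horvitz--Thompson-type regression estimator $\tauadjht$ obtained from the least squares fit of $\tilde Y_{i\cdot}$ on $(1, Z_i, c_i, Z_i c_i)$ is exactly the difference-in-means estimator applied to the cluster-level residuals $\tilde Y_{i\cdot}(z) - c_i^{\top}\hat\gamma(z)$, where $\hat\gamma(z)$ is the within-arm OLS coefficient; this is the cluster analog of Lin's result and is established in \citet{su2021modelassisted}. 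Consequently $\tauadjht$ fits into the framework of Proposition~\ref{prop:FCLT} and Theorem~\ref{thm:asymquadra} with $\tilde Y(z)$ replaced by the residualized outcome, and one shows that the projection coefficient of this residualized outcome onto $\taucht$ is asymptotically zero — i.e., $R_c^2 = 0$ for the adjusted estimator — because the adjustment already removes the linear part of $\tilde Y(z)$ explained by $C$. Hence the conditional distribution in Theorem~\ref{thm:asymquadra} degenerates to a pure normal: $\tauadjht \mid \mathcal{D}_c(A_c)$ is asymptotically $\mathcal N(0, V_{\textnormal{ht},\tau\tau}^{\textnormal{adj}})$, unaffected by rerandomization. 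The statement that $\plim M\vhatadjhw$ dominates the true asymptotic variance then follows exactly as in the classical Eicker--Huber--White analysis: the sandwich estimator targets a variance built from squared residuals that overestimates the true design-based variance by a nonnegative term (the finite-population variance of the individual treatment effects on the residual scale), and since rerandomization does not change the limit of $\tauadjht$, it does not change the limit of $\vhatadjhw$ either; the coverage claim is then immediate from asymptotic normality and conservative variance estimation.

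For part (ii), the same residualization argument applies at the individual level: $\tauadjhaj$ is asymptotically the difference-in-means of the Hajek residuals $\tilde\varepsilon_{i\cdot}^{\textnormal{adj}}(z)$, and \citet{su2021modelassisted} show the cluster-robust (Liang--Zeger) standard error is conservative for its true asymptotic variance. The crucial difference from part (i) is that residualizing the individual-level outcomes on $x_{ij}$ does not annihilate the projection onto $\tauxhaj$, because the rerandomization covariate $\tilde X$ (scaled cluster totals) is not the same object that the individual-level regression adjusts for; after adjustment there remains in general a nonzero $R_x^2$ for the adjusted estimator. Therefore Theorem~\ref{thm:asymquadra} gives a genuine convolution of a normal and a truncated normal for $\tauadjhaj \mid \mathcal{D}_x(A_x)$. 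The conservativeness of $\plim M\vhatadjlz$ and the resulting coverage bound require a little more care than in (i): one must check that the Liang--Zeger limit overestimates the \emph{unconditional} asymptotic variance $V_{\textnormal{haj},\tau\tau}^{\textnormal{adj}}$, and then invoke the fact (from Proposition~\ref{prop:concentration-symmetric-unimodality}(i), applied to the adjusted estimator) that the truncated-normal-convolution conditional law is symmetric and unimodal with variance no larger than $V_{\textnormal{haj},\tau\tau}^{\textnormal{adj}}$; a symmetric unimodal distribution with smaller variance, compared against a normal reference calibrated to the conservative variance, still yields coverage $\geq 1-\varsigma$ — this monotone-likelihood-ratio / Anderson-type argument is the standard device used in \citet{Li9157} and \citet{2020Rerandomization}, which I would cite and adapt.

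For part (iii), I would mirror the proof of Corollary~\ref{coro:clusterbetter}: when $c_i = (n_i, \tilde x_{i\cdot}^{\top})^{\top}$, the cluster-level regression of $\tilde Y_{i\cdot}$ on $(1, Z_i, c_i, Z_i c_i)$ subsumes both the implicit individual-level adjustment and the rerandomization structure. Concretely, one shows $V_{\textnormal{ht},\tau\tau}^{\textnormal{adj}} \leq V_{\textnormal{haj},\tau\tau}^{\textnormal{adj}}(1 - R_x^2)$ — the adjusted HT variance is already no larger than the \emph{post-rerandomization} leading term of the adjusted Hajek estimator — by an orthogonal-decomposition / Cauchy--Schwarz argument on the finite-population covariance matrices, using that including $n_i$ as a regressor soaks up the heteroskedasticity induced by varying cluster sizes (the same phenomenon flagged after Corollary~\ref{coro:clusterbetter}). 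Since $\tauadjht \mid \mathcal{D}_c(A_c)$ is asymptotically $\mathcal N(0, V_{\textnormal{ht},\tau\tau}^{\textnormal{adj}})$ by part (i) while $\tauadjhaj \mid \mathcal{D}_x(A_x)$ is a symmetric unimodal convolution with asymptotic variance $\geq V_{\textnormal{ht},\tau\tau}^{\textnormal{adj}}$, a normal is the most concentrated symmetric unimodal law at a given variance, so Anderson's lemma gives $\pra\{M^{1/2}|\tauadjht - \tau| < \delta \mid \mathcal{D}_c(A_c)\} \geq \pra\{M^{1/2}|\tauadjhaj - \tau| < \delta \mid \mathcal{D}_x(A_x)\}$ for all $\delta > 0$. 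The main obstacle I anticipate is establishing the variance inequality $V_{\textnormal{ht},\tau\tau}^{\textnormal{adj}} \leq V_{\textnormal{haj},\tau\tau}^{\textnormal{adj}}(1 - R_x^2)$ cleanly — keeping track of which covariates are being projected out at which level, and verifying that the $n_i$-inclusion really does dominate the individual-level residual covariance — together with making the symmetric-unimodal comparison rigorous when one distribution is a normal and the other a normal/truncated-normal convolution rather than both being normal.
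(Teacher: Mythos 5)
Your overall route matches the paper's: residualize, observe that with $v_i=c_i$ the cluster-level adjustment makes the adjusted multiple correlation vanish ($\radjc=0$) so that $\tauadjht\mid\mathcal{D}_c(A_c)$ is asymptotically normal while $\tauadjhaj\mid\mathcal{D}_x(A_x)$ retains a truncated-normal component; establish conservativeness of the sandwich variances unconditionally and transfer it to the conditional law via the positive limiting probability of the acceptance event; and for (iii) invoke the Corollary~\ref{coro:clusterbetter}-type variance inequality (the paper cites Corollary~\ref{coro:clusterbetter} directly, using the identity $\vtthajb\{1-(\radjx)^2\}=\vtthaj(1-R_x^2)$ and $\vtthtb\{1-(\radjc)^2\}=\vttht(1-R_c^2)$, rather than re-deriving it).

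There is, however, a genuine gap in how you close the coverage claim in (ii) and the concentration comparison in (iii). Twice you lean on the principle that a symmetric unimodal law with variance no larger than $v$ is at least as concentrated at zero as $\mathcal{N}(0,v)$ at every threshold (equivalently, that ``a normal is the most concentrated symmetric unimodal law at a given variance''). This is false: a symmetric unimodal mixture placing most of its mass in a tiny interval around zero and a small amount far out can have variance $v$ yet be far more concentrated than $\mathcal{N}(0,v)$, and conversely a uniform law with variance $v$ is \emph{less} concentrated than $\mathcal{N}(0,v)$ near the origin. The paper does not need any such general principle. For the coverage claims it uses Proposition~\ref{prop:concentration-symmetric-unimodality}(ii) — the concentration of the specific family $(1-R^2)^{1/2}\epsilon+R\mu^\top\eta\mid\eta^\top A\eta\le a$ is non-decreasing in $R^2$ (a Gaussian inequality of Gupta-type, Lemma~\ref{lem:sapp3}), and at $R^2=0$ the law is exactly $\mathcal{N}(0,1)$ — so the conditional law is pointwise at least as concentrated as $\mathcal{N}(0,\vtthajb)$, and the conservative variance estimate then delivers coverage $\ge 1-\varsigma$. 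For (iii), the correct chain is: the normal component of the individual-level convolution has variance $\vtthajb\{1-(\radjx)^2\}\ge\vtthtb$, hence is less concentrated than $\mathcal{N}(0,\vtthtb)$ at every $\delta$; convolving it with the independent symmetric truncated-normal component can only decrease concentration further (Anderson's lemma applied to a symmetric unimodal law plus independent symmetric noise). You have both ingredients in hand — you cite Anderson and the Li-type monotonicity device — but the one-line justifications you actually wrote would not survive as stated and should be replaced by these two specific arguments. A minor additional slip: in (iii) you write the target inequality as $\vtthtb\le\vtthajb(1-R_x^2)$ with the \emph{unadjusted} $R_x^2$ multiplying the \emph{adjusted} variance; the correct leading term is $\vtthajb\{1-(\radjx)^2\}$, which equals $\vtthaj(1-R_x^2)$ only after applying the identity that holds when $w_{ij}=x_{ij}$.
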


To avoid notation complexity, we relegate the details of the asymptotic distributions of $\tauadjht$ given $\mathcal{D}_c(A_c)$ and $\tauadjhaj$ given $\mathcal{D}_x(A_x)$ to Theorem~\ref{thm:tauhtadjasymp} in the appendix. Theorem~\ref{thm::regression-adjustment-under--cluster-rerandomization} parallels the key results in \citet{su2021modelassisted} without rerandomization. It justifies the standard Wald-type inference based on appropriate regressions and standard errors, which can be conveniently obtained by standard statistical software packages. It also extends the results in \cite{2020Rerandomization} from classic rerandomization to cluster rerandomization. However, some asymptotic results differ from those under rerandomization with treatments assigned at the individual level. In particular, with cluster rerandomization and regression adjustment using individual-level covariates, the asymptotic distribution is not normal anymore; see Theorem~\ref{thm:tauhtadjasymp} in the appendix for more details. Moreover, Theorem~\ref{thm::regression-adjustment-under--cluster-rerandomization} (iii) extends the theory in \cite{su2021modelassisted} that regression adjustment using cluster size and scaled cluster totals outperforms regression adjustment using individual-level covariates.

\begin{remark}\label{remark::discrepancy}
  \citet{2020Rerandomization} also discussed the scenarios that the designer and analyzer do not communicate well, allowing the covariates used in the analysis stage be different from those used in the design stage.  One important special case is that the set of covariates used in the analysis stage is empty. To be specific, under $\mathcal{D}_x(A_x)$, we can use the coefficient of $Z_{ij}  $ in the least squares fit of $Y_{ij}$ on $(1,Z_{ij})$ to estimate $\tau$, and use the cluster-robust standard error to approximate the true asymptotic standard error; under $\mathcal{D}_c(A_c)$, we can use the coefficient of $Z_i$ in the least squares fit of $\tilde Y_{i\cdot}$ on $(1,Z_i)$ to estimate $\tau$, and use the heteroskedasticity-robust standard error to approximate the true asymptotic standard error. These point estimators correspond to the Hajek and Horvitz--Thompson estimators discussed in Section \ref{sec:crt}, with standard errors $(\vhatadjlz)^{1/2}$ and $(\vhatadjhw)^{1/2}$, respectively. Analogous to Theorem \ref{thm::regression-adjustment-under--cluster-rerandomization}, these point estimators are still consistent for $\tau$ and converge in distribution to convolutions of normal and truncated normal under cluster rerandomization. Moreover, the normal-based confidence intervals are still asymptotically conservative. However, different from the results in \citet{su2021modelassisted}, the standard errors $(\vhatadjlz)^{1/2}$ and $(\vhatadjhw)^{1/2}$ are overly conservative because they ignore the truncated normal components in the asymptotic distributions.  Due to the technical complexities, we relegate the theory and improved inference methods to the appendix.
\end{remark}

\begin{remark}
\LXTL{Our asymptotic theory requires a large number of clusters, which may be unrealistic in many cluster randomized experiments. When $M$ is small, one alternative approach is to use a mixed-effects model by imposing modelling assumptions on the data generating process. Another alternative approach is to use Fisher randomization tests that deliver finite-sample exact $p$-values under the sharp null hypothesis.  See \citet[Section 4]{Zhao2021} for more details. Importantly, the Fisher randomization tests must follow the same treatment assignment rule of cluster rerandomization.}
\end{remark}




\section{Numeric Examples}\label{sec:sim}

\subsection{Simulation}
In this section, we conduct simulation to assess the finite-sample performances of all eight combinations with binary choices for three orthogonal axes, individual-level (\texttt{X}) versus cluster-level (\texttt{C}), the Mahalanobis distance (\texttt{M}) versus the optimal weighted Euclidean distance without orthogonalization (\texttt{W}), and using  regression adjustment (\texttt{.adj}) or not. We also consider two baseline methods: Hajek (\texttt{Haj}) and Horvitz--Thompson (\texttt{HT}) estimators without using cluster rerandomization for better comparison. The names of the methods are written as \texttt{ReMC}, \texttt{ReWC}, \texttt{ReMI}, \texttt{ReWI}, \texttt{ReMC.adj}, \texttt{ReWC.adj}, \texttt{ReMI.adj}, and \texttt{ReWI.adj}, where the first two letters refer to cluster rerandomization and the rest correspond to the aforementioned three axes.

We generate potential outcomes from the following model:
\begin{align*}
  Y_{ij}(z) = g (n_i) + {x}^\top_{ij}\beta_{iz} + \varepsilon _{ij}(z) \quad (i = 1,\ldots,M,\ j = 1,\ldots,n_i,\ z=0,1).
\end{align*}
Here $g (n_i)$ captures the cluster effects on individuals and $\varepsilon _{ij}(z)$'s are independent $ \mathcal{N}(0,16)$. Pre-treatment covariates $x_{ij}$ are generated from a $K$-dimensional normal distribution with mean $ 0 $ and covariance matrix $\Sigma$, where $\Sigma_{ij} = (1-\rho)\delta_{ij} + \rho$, with $\delta_{ij} = 1$ if $i=j$ and $0$ otherwise. We fix $M=100$, with $50$ assigned to the treatment arm and $50$  to the control arm.  We also conduct simulations with relatively small $M$, and assess the validity of such asymptotic inference. The results are shown in the appendix. The size of each cluster is sampled uniformly from $\{m \in \mathbb{N} \mid 4 \leq m \leq 10\}$. The coefficients are generated from $\beta_{iz} = \beta_{z} + U(-0.1, 0.1)$, where $U(\cdot,\cdot)$ denotes uniform distribution. Here each component of $\beta_{1}$ is sampled from $\{0.5\gamma,\gamma,1.5\gamma\}$ with equal probabilities, and $\beta_0 = 2\gamma{1}_{K}-\beta_1$, where $1_{K}$ is a $K$-dimensional all-one vector, and $\gamma$ is chosen to ensure  the proportion of variance contributed by the covariates to be $0.5$. These $K$ covariates are used to generate the potential outcomes, conduct cluster rerandomization, and perform regression adjustment.
We use the scaled cluster totals of individual-level covariates together with cluster size as cluster-level covariates. 
Each cluster rerandomization is conducted with an acceptance rate $\alpha = 0.1\%$. We consider four scenarios with parameters given in Table~\ref{tab:params}.

\begin{table}
  \centering
  \caption{Parameters of four scenarios}\label{tab:params}
  \begin{tabular}{ccccc}
    \toprule
    Scenario & $K$  & $\rho$  & $\gamma$ & $g (n_i)$     \\
    \midrule
    1        & $7$  & $0.8$   & $1$      & $( n_i -7)/2$ \\
    2        & $7$  & $-0.15$ & $5$      & $( n_i -7)/2$ \\
    3        & $12$ & $0.4$   & $0.5$    & $6$           \\
    4        & $12$ & $-0.09$  & $12$      & $6$           \\
    \bottomrule
  \end{tabular}
\end{table}

Once generated, the potential outcomes and covariates are fixed throughout the simulation. We repeat the simulation for each scenario, draw treatment assignments $1000$ times, and report the biases, standard deviations (SD), root mean squared errors (RMSE),  and empirical coverage probabilities (CP) of 95\% confidence intervals, and mean lengths of such intervals (CI Length). \LXTL{In particular, coverage probability is the percentage of confidence intervals that cover the true value of the average treatment effect from the $1000$ replications.} We consider two methods for constructing confidence intervals: the normal-based method given in Theorem~\ref{thm::regression-adjustment-under--cluster-rerandomization} and Remark~\ref{remark::discrepancy} and improved method provided in the appendix.

\begin{figure}
  \centering
  \includegraphics[width=\textwidth]{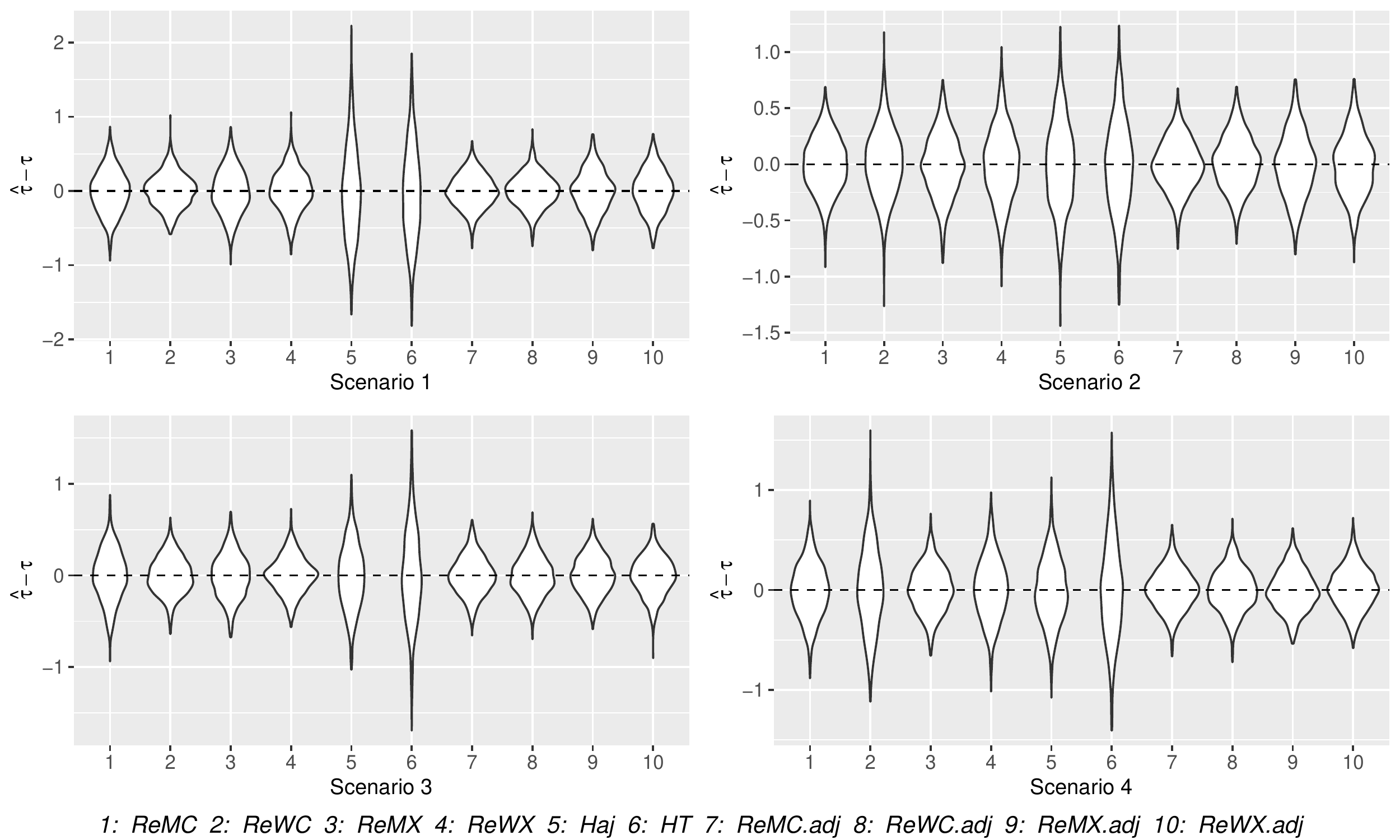}
  \caption{Comparison results in simulation.}  \label{fig:sim}
\end{figure}


\begin{table}[htbp]
  \centering
  \caption{Comparison results in  simulation}\label{tab:sim}
  \resizebox{0.6\textwidth}{!}{
    \begin{tabular}{cccccccc}

      \toprule

\multirow{2}{*}{Method} & \multirow{2}{*}{Bias} & \multirow{2}{*}{SD} & \multirow{2}{*}{RMSE} & \multicolumn{2}{c}{Normal-based}   & \multicolumn{2}{c}{Improved}                          \\ 
& & &  & CP           & CI Length    & CP       & CI Length \\ \toprule\addlinespace[0.5em]
                  \multicolumn{8}{c}{Scenario 1}                           \\ 
                  
    \midrule
      ReMC   & -0.01 & 0.29 & 0.29 & 1.00 & 2.53 & 0.97 & 1.24 \\ 
      ReWC     & 0.00 & 0.23 & 0.23 & 1.00 & 2.53 & 0.98 & 1.07 \\
      ReMX    & -0.01 & 0.30 & 0.30 & 1.00 & 2.53 & 0.97 & 1.28 \\ 
      ReWX      & 0.01 & 0.27 & 0.27 & 1.00 & 2.54 & 0.96 & 1.20 \\ 
      Haj     & -0.02 & 0.62 & 0.62 & 0.96 & 2.53 & -- & -- \\ 
      HT       & -0.03 & 0.62 & 0.62 & 0.95 & 2.53 & -- & -- \\ 
      ReMC.adj & 0.00 & 0.22 & 0.22 & 0.98 & 1.04 & -- & -- \\ 
      ReWC.adj & 0.00 & 0.22 & 0.22 & 0.97 & 1.05 & -- & -- \\ 
      ReMX.adj & 0.00 & 0.27 & 0.27 & 0.98 & 1.29 & 0.96 & 1.19 \\ 
      ReWX.adj & 0.00 & 0.28 & 0.28 & 0.98 & 1.29 & 0.97 & 1.22 \\ 
      
       \toprule\addlinespace[0.5em]
                  \multicolumn{8}{c}{Scenario 2}                           \\ 
        \midrule          
      ReMC     & 0.00 & 0.24 & 0.24 & 1.00 & 2.18 & 0.98 & 1.14 \\ 
      ReWC     & 0.01 & 0.30 & 0.30 & 1.00 & 2.17 & 0.96 & 1.23 \\ 
      ReMX     & -0.01 & 0.27 & 0.27 & 1.00 & 2.17 & 0.97 & 1.21 \\ 
      ReWX     & 0.00 & 0.32 & 0.32 & 1.00 & 2.18 & 0.95 & 1.31 \\ 
      Haj      & -0.01 & 0.38 & 0.38 & 1.00 & 2.17 & -- & -- \\ 
      HT      & -0.01 & 0.41 & 0.41 & 0.99 & 2.16 & -- & -- \\ 
      ReMC.adj & 0.00 & 0.22 & 0.22 & 0.98 & 1.04 &--  & -- \\ 
      ReWC.adj & 0.02 & 0.23 & 0.23 & 0.97 & 1.05 & -- &  --\\ 
      ReMX.adj & -0.01 & 0.28 & 0.28 & 0.98 & 1.29 & 0.96 & 1.20 \\ 
      ReWX.adj & 0.00 & 0.27 & 0.27 & 0.99 & 1.30 & 0.97 & 1.20 \\ 
      
       \toprule\addlinespace[0.5em]
                  \multicolumn{8}{c}{Scenario 3}                           \\ 
         \midrule         
      ReMC     & 0.00 & 0.27 & 0.27 & 1.00 & 2.12 & 0.97 & 1.19 \\ 
      ReWC     & 0.00 & 0.20 & 0.20 & 1.00 & 2.12 & 0.99 & 1.03 \\ 
      ReMX     & -0.01 & 0.24 & 0.24 & 1.00 & 1.58 & 0.97 & 1.07 \\ 
      ReWX     & 0.01 & 0.20 & 0.20 & 1.00 & 1.58 & 0.99 & 1.02 \\ 
      Haj      & -0.01 & 0.35 & 0.35 & 0.97 & 1.58 & -- & -- \\ 
      HT       & 0.01 & 0.50 & 0.50 & 0.95 & 2.10 & -- & --  \\ 
      ReMC.adj & 0.00 & 0.21 & 0.21 & 0.98 & 0.99 & -- & -- \\
      ReWC.adj & -0.01 & 0.21 & 0.21 & 0.98 & 1.01 & -- & -- \\ 
      ReMX.adj & 0.00 & 0.20 & 0.20 & 1.00 & 1.16 & 0.99 & 1.00 \\ 
      ReWX.adj & -0.01 & 0.21 & 0.21 & 1.00 & 1.16 & 0.98 & 1.01 \\

       \toprule\addlinespace[0.5em]
                  \multicolumn{8}{c}{Scenario 4}                           \\ 
      \midrule
      ReMC     & -0.01 & 0.27 & 0.27 & 1.00 & 4.50 & 0.98 & 1.34 \\ 
      ReWC     & 0.01 & 0.41 & 0.41 & 1.00 & 4.49 & 0.97 & 1.76 \\ 
      ReMX     & 0.01 & 0.23 & 0.23 & 1.00 & 4.29 & 0.99 & 1.20 \\ 
      ReWX     & 0.01 & 0.31 & 0.31 & 1.00 & 4.30 & 0.97 & 1.49 \\ 
      Haj      & -0.01 & 0.32 & 0.32 & 1.00 & 4.27 & -- & -- \\
      HT       & 0.01 & 0.49 & 0.49 & 1.00 & 4.48 & -- & -- \\ 
      ReMC.adj & 0.00 & 0.20 & 0.20 & 0.98 & 0.99 & -- & -- \\ 
      ReWC.adj & 0.00 & 0.21 & 0.21 & 0.98 & 0.99 & -- & -- \\
      ReMX.adj & 0.00 & 0.20 & 0.20 & 1.00 & 1.16 & 0.99 & 1.01 \\ 
      ReWX.adj & 0.01 & 0.21 & 0.21 & 1.00 & 1.16 & 0.98 & 1.00 \\ 
      
      \bottomrule
    \end{tabular}
    }
\end{table}

As shown in Table~\ref{tab:sim} and Figure~\ref{fig:sim}, the combination of cluster rerandomization and regression adjustment results in better performance for all four scenarios. However, in some occasions, the combination has small improvement over cluster rerandomization only. This is because using cluster rerandomization only already has significant improvement, as the standard deviations are 30\%--50\% smaller than the two baseline methods.
Moreover,  the biases are negligible for all methods.

In Scenarios 1 and 3, cluster rerandomization schemes based on the optimal weighted Euclidean distances outperform cluster rerandomization schemes based on the Mahalanobis distances. This is due to the positively correlated covariates. On the contrary, in Scenarios 2 and 4, cluster rerandomization schemes based on the Mahalanobis distances are better because the covariates are negatively correlated. We conclude that the correlation structure between covariates can severely influence the efficiency of the balancing criterion.

Moreover, cluster rerandomization with cluster-level covariates outperforms cluster rerandomization with individual-level covariates in Scenarios 1 and 2, whereas cluster rerandomization with  individual-level covariates is slightly better in Scenarios 3 and 4. To illustrate this, if the potential outcomes are not influenced by cluster size like in Scenarios  3 and 4, especially if the potential outcomes have a small variation across units, cluster rerandomization with individual-level covariates could be better. This is because they are based on the Hajek estimator, which partially eliminates the effect of cluster size by using $\tilde{{\varepsilon}}$ instead of $Y$ and fits a more precise model for this particular situation. On the other hand, we recommend cluster rerandomization with cluster-level covariates, especially when the potential outcomes depend heavily on cluster size, because in such settings, the effect of cluster size is not directly accounted for if we use cluster rerandomization with individual-level covariates. Finally, by Corollary~\ref{coro:clusterbetter}, with a sufficient small threshold, cluster rerandomization with cluster-level covariates is more efficient than cluster rerandomization with individual-level covariates in terms of variance reduction.

\subsection{Child growth monitoring data}\label{sec:real}
\cite{Tembo2017Home} conducted a cluster-randomized experiment in developing countries to study the effect of community-based monitoring on child growth faltering. We use part of this dataset to illustrate our theoretical results.  We use two arms of the study, community-based monitoring as treatment and no monitoring as control. In the experiment, $41$ communities were randomly assigned to the treatment arm while $41$ to the control arm. The mean, median, and maximum of cluster size are $3.085$, $3$, and $8$, respectively. We choose $6$ individual-level covariates, including the child age, family background information, and some baseline growth data. We use cluster size and scaled cluster totals of individual-level covariates as cluster-level covariates.  The height of a child is used as the outcome. As the potential outcomes are not completely available, we fit two simple linear regression models to impute the missing potential outcomes, and perform simulation based on the imputed dataset. The overall acceptance rate of all cluster rerandomization schemes are approximately $0.1\%$, computed from the quantile of the empirical distances. We use $4$ covariates at the design stage for cluster rerandomization and $6$ covariates at the analysis stage for regression adjustment.

\begin{figure}
  \centering
  \includegraphics[width=.9\textwidth]{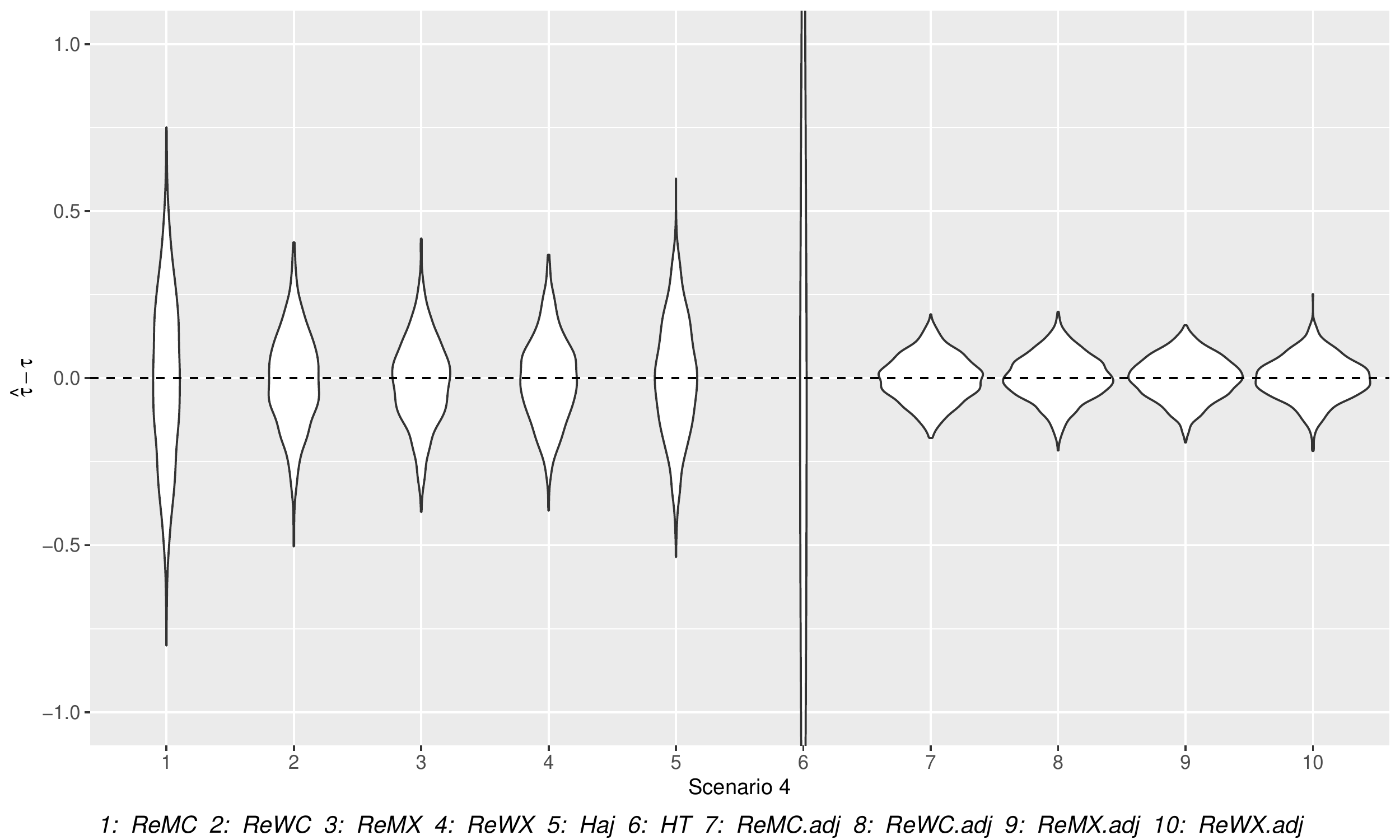}
  \caption{Comparison results on the child monitoring data.}
  \label{fig:real}
\end{figure}

Table~\ref{tab:real} and Figure~\ref{fig:real} show that cluster rerandomization always improves the performance, and cluster rerandomization schemes based on the weighted Euclidean distances are better than cluster rerandomization schemes based on the Mahalanobis distances. In the case when cluster-level covariates are used, the standard deviations under cluster rerandomization schemes using the weighted Euclidean distances are about half of those using the Mahalanobis distances. The combinations of cluster rerandomization and regression adjustment perform the best, and they have almost the same standard deviations and mean squared errors.

\begin{table}
  \centering
  \caption{Comparison results on the child monitoring data}
  \label{tab:real}
  \begin{tabular}{cccccccc}
    \toprule
\multirow{2}{*}{Method} & \multirow{2}{*}{Bias} & \multirow{2}{*}{SD} & \multirow{2}{*}{RMSE} & \multicolumn{2}{c}{Normal-based}   & \multicolumn{2}{c}{Improved}                          \\ 
& & &  & CP           & CI Length    & CP       & CI Length \\

    \midrule
    ReMC     & -0.01 & 0.25 & 0.25 & 1.00 & 4.63 & 0.94 & 0.95 \\ 
    ReWC      & -0.01 & 0.14 & 0.14 & 1.00 & 4.63 & 0.93 & 0.54 \\ 
    ReMX     & -0.01 & 0.13 & 0.13 & 0.99 & 0.72 & 0.94 & 0.51 \\ 
    ReWX    & 0.00 & 0.13 & 0.13 & 0.99 & 0.72 & 0.95 & 0.52 \\ 
    Haj      & 0.00 & 0.17 & 0.17 & 0.96 & 0.72 & --  & -- \\ 
    HT       & -0.05 & 1.20 & 1.20 & 0.94 & 4.60 & -- & -- \\ 
    ReMC.adj & 0.00 & 0.07 & 0.07 & 0.97 & 0.31 &  --& -- \\
    ReWC.adj & 0.00 & 0.07 & 0.07 & 0.97 & 0.31 & -- & -- \\ 
    ReMX.adj & 0.00 & 0.06 & 0.06 & 0.99 & 0.34 & 0.99 & 0.31 \\ 
    ReWX.adj & 0.00 & 0.06 & 0.06 & 0.99 & 0.34 & 0.98 & 0.32 \\ 
    \bottomrule
  \end{tabular}
\end{table}

For this particular dataset, the covariates and outcomes are all positive. All cluster-level covariates and outcomes are strongly correlated with cluster size and there exists strong collinearity between covariates, which helps to explain why cluster rerandomization with the weighted Euclidean distance is much better than cluster rerandomization with the Mahalanobis distance. For cluster rerandomization with cluster-level covariates, the existence of collinearity and the fact that the weighted Euclidean distance puts more weights on more important covariates give rise to a dramatic increase of efficiency. The outcomes in this dataset, children's heights, have a small variation across units, which helps to explain why the Hajek estimator performs much better than the Horvitz--Thompson estimator.

\section*{Acknowledgement}
Dr. Hanzhong Liu was supported by the National Natural Science Foundation of China, Grant No. 12171476. Dr. Peng Ding was partially supported by the U.S. National Science Foundation. 


\bibliographystyle{apalike}
\bibliography{paper-ref}


\appendix







\section{Additional Simulation}
\label{sec::F}
\LXTL{Let $I_K$ denote a $K \times K$ identity matrix. We generate potential outcomes from the following model:
\begin{equation*}
    Y_{ij}(z) = f(2+\epsilon_{iz}+x_{ij}^\top \beta_{iz})+\varepsilon_{ij}(z) \quad (i=1,\ldots,M,\ j=1,\ldots,n_i,\ z=0,1),
\end{equation*}
where $\beta_{iz} = \beta_z+\varsigma_{iz}$ with the components of $\beta_{z}$ $(z=0,1)$ being independently generated from the  $t$-distribution with three degrees of freedom and $\varsigma_{iz}$ $(i=1,\ldots,M,z=0,1)$ being independently generated from $\mathcal{N}(0,I_K)$. We independently generate $x_{ij}$ $(i=1,\ldots,M,\ j=1,\ldots,n_i)$ from $\mathcal{N}(0,I_K)$ and $\epsilon_{iz}$ $(i=1,\ldots,M,z=0,1)$ from $\mathcal{N}(0,1)$. The error terms $\varepsilon_{ij}(z)$ $(i=1,\ldots,M,\ j=1,\ldots,n_i,\ z=0,1)$ are independently generated from $N(0,\sigma^2_z)$, where $\sigma_z^2$ ($z=0,1$) are chosen so that the $f(2+\epsilon_{iz}+x_{ij}^\top \beta_{iz})$ $(i=1,\ldots,M,\ j=1,\ldots,n_i)$ constitute half of the variation of $Y_{ij}(z)$. We use $(\tilde{x}_{i\cdot},n_i)\in \mathbb{R}^{K+1}$ in cluster rerandomization using Mahalanobis distance and regression adjustment. We view simulation as a full factorial experiment and generate data under all combinations of the following five factors:
\begin{itemize}
    \item Number of clusters $M=\{20+4k\mid k\in\mathbb{N}, \ 0\leq k\leq 15\}$;
    \item Variance of cluster sizes $\textnormal{vn}=\{\textnormal{H},\textnormal{L}\}$. If vn $=$ H, $n_i$'s are generated uniformly from $\{m\in \mathbb{N}\mid 2\leq m \leq 16\}$; if vn $=$ L, $n_i$'s are generated uniformly from $\{m\in \mathbb{N}\mid 4\leq m \leq 10\}$;
    \item Type of outcome-generating functions $\textnormal{fn}=\{\textnormal{linear},\textnormal{nonlinear}\}$. $f(t)=t$ for fn $=$ linear; $f(t)=t^3$ for fn $=$ nonlinear; 
    \item Dimension of covariates $K=\{1,5\}$;
    \item Acceptance rate $\alpha=\{0.001,0.1\}$.
\end{itemize}}

\LXTL{Each of the $16\times 2\times 2\times2\times 2=256$ scenarios is replicated under $100$ random seeds. In each scenario under each random seed, we draw assignments for $1000$ times with half of the clusters being assigned to the treatment arm. We construct $95\%$ normal-based confidence intervals by the method provided in Section~\ref{sec::B} based on the regression residuals. We compute the empirical coverage probabilities (CP) and percentages of reduction in  root mean squared errors (RMSE) of the combination of cluster rerandomization and regression adjustment relative to two benchmark estimators without using cluster rerandomization, i.e., the Horvitz--Thompson estimator $\tauht$ and Hajek estimator $\tauhaj$ under classic cluster randomization.}

\LXTL{As shown in Figure~\ref{fig:cp}, the asymptotic inference does not vary much between vn$=$H and vn$=$L and between fn$=$linear and fn$=$nonlinear. The dimension of covariates plays an important role in the validity of asymptotic inference. When $K=1$, $M=20$ is enough for the validity of the confidence interval under half of the random seeds and $M=24$ is enough for the validity of the confidence interval under most of the random seeds. When $K=5$, $M$ must be greater than $44$ to ensure the validity of asymptotic inference.}

\begin{figure}
  \centering
  \includegraphics[width=.8\textwidth]{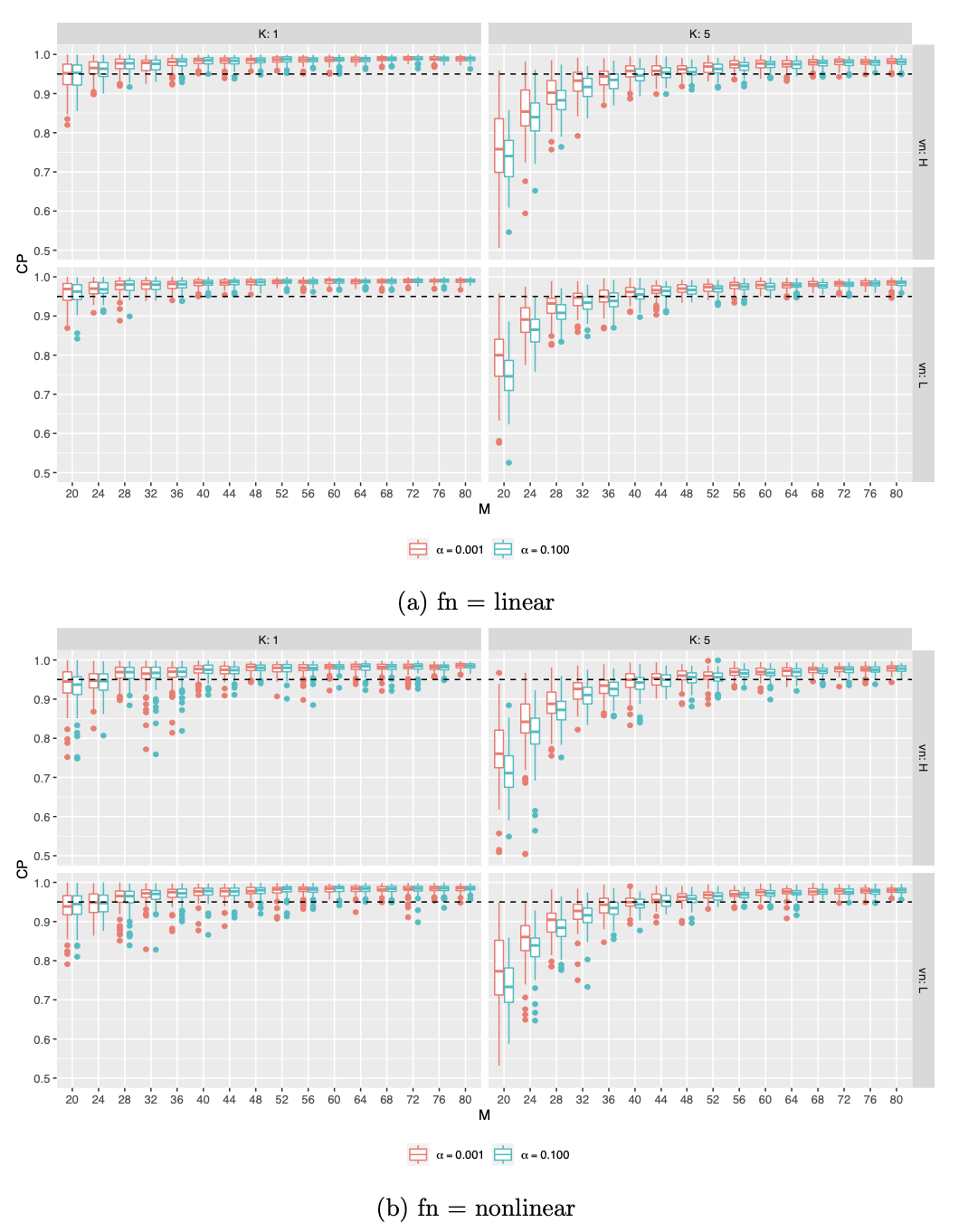}
  \caption{Coverage probabilities for different values of $M$, $K$, and variance of cluster sizes.}
  \label{fig:cp}
\end{figure}

\begin{figure}
  \centering
  \includegraphics[width=.8\textwidth]{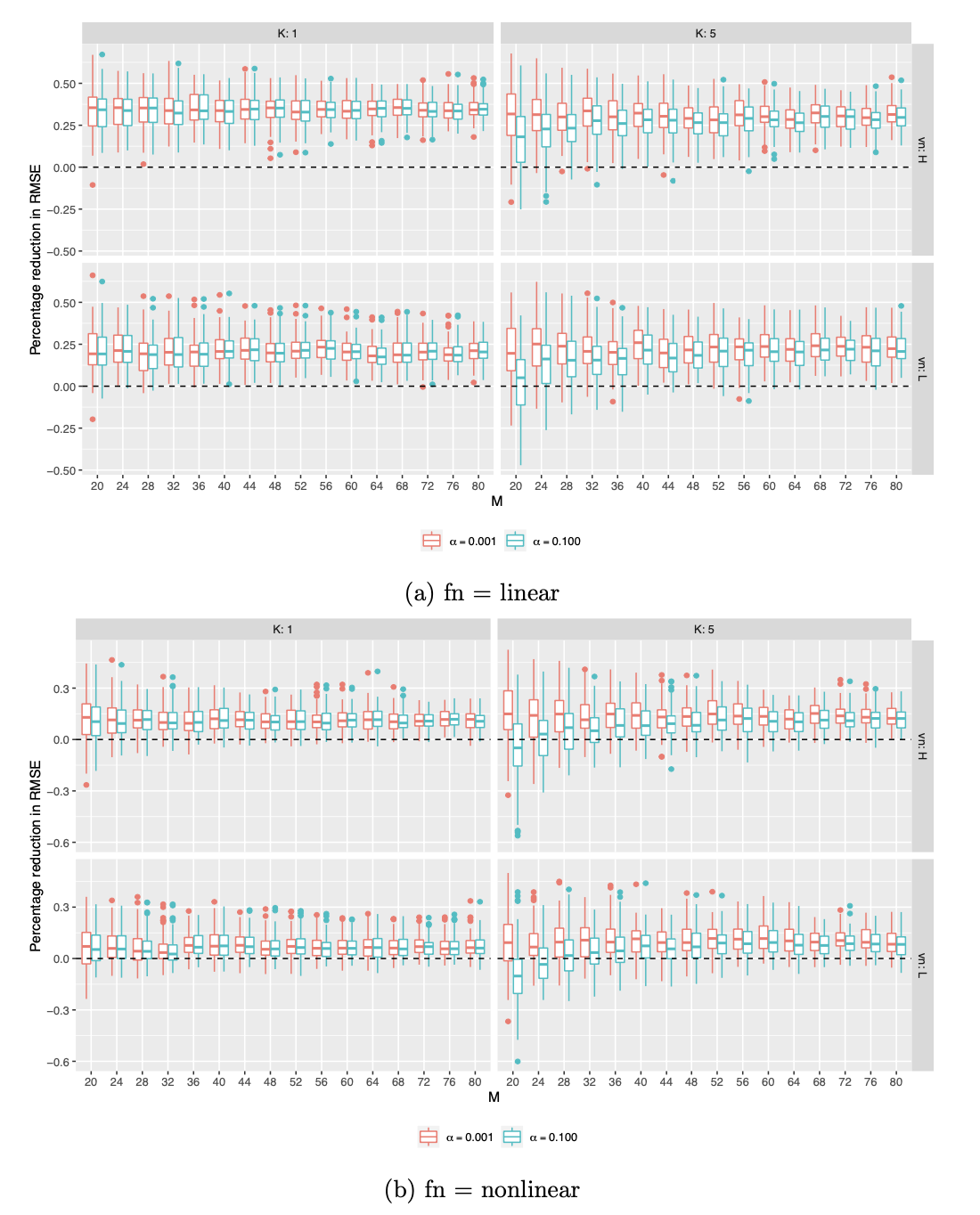}
  \caption{Percentages of reduction in RMSE relative to the Horvitz--Thompson estimator without using cluster rerandomization for different values of $M$, $K$, and variance of cluster sizes.}
  \label{fig:prmse_ht}
\end{figure}

\begin{figure}
  \centering
  \includegraphics[width=.8\textwidth]{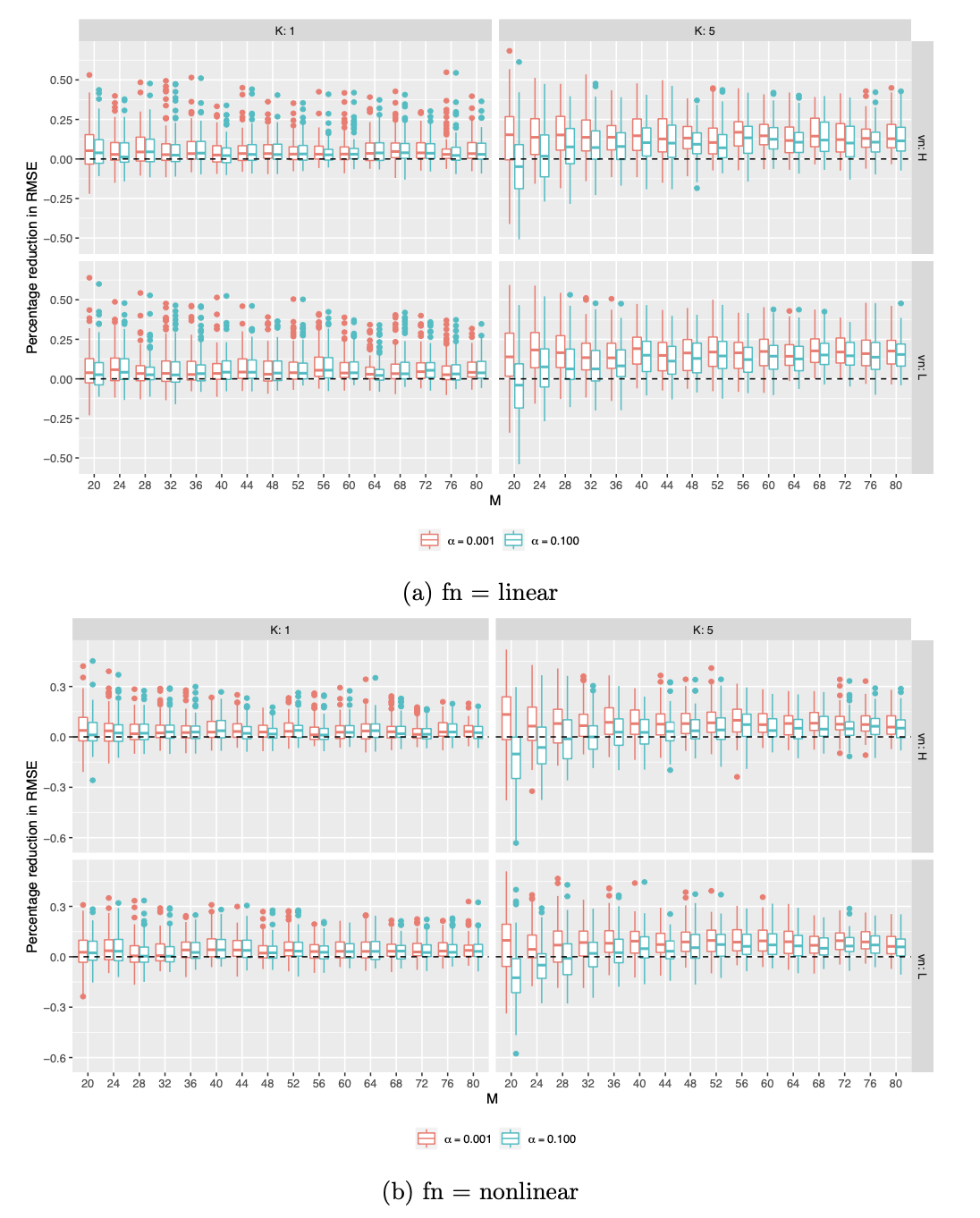}
  \caption{Percentages of reduction in RMSE relative to the Hajek  estimator without using cluster rerandomization for different values of $M$, $K$, and variance of cluster sizes.}
  \label{fig:prmse_haj}
\end{figure}

Figures~\ref{fig:prmse_ht}--\ref{fig:prmse_haj} show the percentages of reduction in RMSE. We can see that the combination of cluster rerandomization and regression adjustment is beneficial in reducing RMSE relative to the two benchmark estimators under most of the random seeds for all values of $M$. The benefit is more evident relative to the Horvitz--Thompson estimator than to the Hajek estimator.

\LXTL{The asymptotic inference is more likely to be valid and the regression adjustment provides more efficiency gains under more balanced assignments with $\alpha=0.001$ than with $\alpha=0.1$. This demonstrates the benefit of cluster rerandomization.}

\section{More detailed results on cluster rerandomization and regression adjustment}\label{sec:apreadj}

\subsection{Asymptotic properties of regression-adjusted estimators under cluster rerandomization}


In general, the covariates used in the analysis stage can be different from those used in the design stage. In the analysis stage,  we use ${w}_{ij}$ for individual-level covariates and ${v}_i$ for cluster-level covariates. Suppose that all these covariates are centered such that
$\sum_{i=1}^M\sum_{j=1}^{n_i}{w}_{ij} = 0$ and $\sum_{i=1}^M{v}_i = 0$.
Define $\tilde{{w}}_{i\cdot} = \sum_{j=1}^{n_i}{w}_{ij}M/N$. Let ${V}$ and $
  {\tilde{W}}$ be the concatenations of ${v}_i$ and $\tilde{  w}_{i\cdot}$ respectively.

With a slight abuse of notation, let $\tauadjhaj$ denote the coefficient of $Z_{ij}$ obtained from the ordinary least squares fit of $Y_{ij}$ on $(1,Z_{ij},{w}_{ij},Z_{ij}{w}_{ij})$ in the individual-level regression adjustment, and $\tauadjht$  denote the coefficient of $Z_i$ obtained from the ordinary least squares fit of $\tilde{Y}_{i\cdot}$ on $(1,Z_i,{v}_{i},Z_i{v}_{i})$ in the cluster-level regression adjustment.


\cite{2020Rerandomization} derived the asymptotic distributions of the regression-adjusted estimators under rerandomization with treatments assigned at the individual level. We extend their results to cluster rerandomization. Define, for $z=0,1$,
\begin{align*}
  {\beta}_{w}(z) =& \arg \min_{{\beta}} \sum_{i=1}^M\sum_{j=1}^{n_i}\left\{Y_{ij}(z)-\Bar{Y}(z)- {w}_{ij}^\top {\beta}  \right\}^2,\\
  {\beta}_{v}(z) =& \arg \min_{\beta} \sum_{i=1}^M\left\{\tilde{Y}_{i\cdot}(z)-\Bar{Y}(z)- {v}_{i}^\top {\beta}  \right\}^2.
\end{align*}

Let $\vtthajb$, $\vtxhajb$, $\vxthajb$, $\vtthajb$, and $(R_{ x}^{\textnormal{adj}})^{2}$ be the quantities defined similarly to those in Proposition~\ref{prop:FCLT} except that we replace $Y_{ij}(z)$ by $Y_{ij}(z)- {w}_{ij}^{\top} {\beta}_{w}(z)$, $z=0,1$. Let $\vtthtb$, $\vtchtb$, $\vcthtb$, $\vtthtb$ and $(R_{c}^{\textnormal{adj}})^{2}$ be the quantities defined similarly to those in Proposition~\ref{prop:FCLT}  except that we replace $\tilde{Y}_{i\cdot}(z)$ by $\tilde{Y}_{i\cdot}(z)- {v}_{i}^{\top} {\beta}_{v}(z)$, $z=0,1$. Their limits exist under regularity conditions given in Section~\ref{sec:regcon}, and we use the same notation to denote their limits if no confusion arises. We have the following result.
\begin{theorem}
  \label{thm:tauhtadjasymp}
  Under regularity conditions,  cluster rerandomization with individual-level covariates satisfies 
  \begin{gather*}
    M^{1/2}(\tauadjhaj-\tau)\mid\mx   \mathrel{\ \dot{\sim}\ } (\vtthajb)^{1/2}\big[\{1-(R^{\textnormal{adj}}_{ x})^{2}\}^{1/2} \epsilon+R^{\textnormal{adj}}_{ x}L_{K,a}\big],\\
    M^{1/2}(\tauadjhaj-\tau)\mid \mathcal{D}_{x}(A_x) \mathrel{\ \dot{\sim}\ } (\vtthajb)^{1/2} \big[
    \{1-(R^{\textnormal{adj}}_{ x})^{2}\}^{1/2} \epsilon+
    R^{\textnormal{adj}}_{ x}{\mu}_x^{\top}{\eta} \mid {\eta}^{\top}\vxxhaj^{1/2}A_x\vxxhaj^{1/2}{\eta} \leq a
    \big],
  \end{gather*}
  and cluster rerandomization with cluster-level covariates satisfies 
  \begin{gather*}
      M^{1/2}(\tauadjht-\tau)\mid\mc  \mathrel{\ \dot{\sim}\ } (\vtthtb)^{1/2}\big[
    \{1-(R^{\textnormal{adj}}_{ c})^{2}\}^{1/2} \epsilon+
    R^{\textnormal{adj}}_{ c}L_{K,a}
    \big],\\
    M^{1/2}(\tauadjht-\tau)\mid \mathcal{D}_{c}(A_c)  \mathrel{\ \dot{\sim}\ } (\vtthtb)^{1/2} \big[
    \{1-(R^{\textnormal{adj}}_{c})^{2}\}^{1/2} \epsilon+
    R^{\textnormal{adj}}_{c}{\mu}_c^{\top}{\eta} \mid {\eta}^{\top}\vccht^{1/2}A_c\vccht^{1/2}{\eta}\leq a
    \big],
  \end{gather*}
  where  $\eta = (\eta_1, \ldots,\eta_K)^\top$,  $\epsilon$, $\eta_1, \ldots,\eta_K$ are independent standard normal random variables, and
  \begin{gather*}
    {\mu}_x^\top =\vtxhajb \vxxhaj^{-1/2}/\{\vtxhajb \vxxhaj^{-1}\vxthajb\}^{1/2},\quad
    {\mu}_c^\top = \vtchtb \vccht^{-1/2}/\{\vtchtb \vccht^{-1}\vcthtb\}^{1/2}.
  \end{gather*}
\end{theorem}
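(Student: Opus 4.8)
The plan is to reduce the regression-adjusted estimators to the unadjusted ones studied in Theorems~\ref{thm:asymp2} and~\ref{thm:asymquadra}, applied to the \emph{adjusted} potential outcomes $Y_{ij}(z)-w_{ij}^{\top}\beta_w(z)$ (respectively $\tilde Y_{i\cdot}(z)-v_i^{\top}\beta_v(z)$). The starting observation is the standard algebraic decomposition of the Lin-type interacted least-squares coefficient: the coefficient of $Z_{ij}$ in the fit of $Y_{ij}$ on $(1,Z_{ij},w_{ij},Z_{ij}w_{ij})$ equals the Hajek difference-in-means of the residualized outcomes $Y_{ij}-w_{ij}^{\top}\hat\beta_w(Z_i)$, where $\hat\beta_w(z)$ is the within-arm OLS slope. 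First I would show, exactly as in \citet{su2021modelassisted}, that $\hat\beta_w(z)\to\beta_w(z)$ in probability (finite-population analog), so that $\tauadjhaj = \hat\tau_{\textnormal{haj}}^{\,\text{(res)}} + o_{\mathrm p}(M^{-1/2})$, where the superscript denotes the Hajek estimator computed on $Y_{ij}(z)-w_{ij}^{\top}\beta_w(z)$; the $o_{\mathrm p}(M^{-1/2})$ slack is controlled because the cross term between $(\hat\beta_w-\beta_w)$ and the centered covariate totals is $O_{\mathrm p}(M^{-1})$. The same argument gives $\tauadjht = \hat\tau_{\textnormal{ht}}^{\,\text{(res)}} + o_{\mathrm p}(M^{-1/2})$ at the cluster level.

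The key subtlety is that the rerandomization event $\mx$ (or $\mathcal D_x(A_x)$) is defined through the \emph{design-stage} covariates $x_{ij}$, not the analysis-stage covariates $w_{ij}$, so the conditioning set is unchanged by residualization. Hence I would invoke the joint CLT of Proposition~\ref{prop:FCLT}, but applied to the triple $(\tauadjhaj-\tau,\ \hat\tau_{\textnormal{haj},x},\ \ldots)$ — more precisely, to the residualized Hajek estimator jointly with the imbalance vector $\hat\tau_{\textnormal{haj},x}$. This requires re-deriving the finite-population covariances: the $(1,1)$ entry becomes $\vtthajb$ (the asymptotic variance of the residualized estimator), the $(2,2)$ block stays $\vxxhaj$ since $x_{ij}$ is untouched, and the cross-covariance becomes $\vtxhajb$ because it involves $\covf$ between $\tilde X$ and the residualized $\tilde\varepsilon^{\mathrm{adj}}(z)$. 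The regularity conditions in Section~\ref{sec:regcon} guarantee these limits exist. Feeding this joint normal approximation into the conditioning argument of Theorem~\ref{thm:asymp2} (Mahalanobis case) and Theorem~\ref{thm:asymquadra} (general quadratic case) produces exactly the stated convolution of $\epsilon$ and $L_{K,a}$ (resp.\ the truncated-quadratic form), with $R^{\textnormal{adj}}_x$ and $\mu_x$ being the squared multiple correlation and projection direction built from $\vtthajb$, $\vtxhajb$, $\vxxhaj$. The cluster-level statements follow identically with $v_i$, $C$, and the $\textnormal{ht}$ quantities.

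The main obstacle I anticipate is the \textbf{joint} asymptotic linearity: one must show that, after conditioning on the rerandomization event, the $o_{\mathrm p}(M^{-1/2})$ remainder from estimating $\beta_w(z)$ does not interact pathologically with the truncation. Since rerandomization only restricts a $K$-dimensional linear functional to a bounded region and the remainder is $o_{\mathrm p}(M^{-1/2})$ \emph{unconditionally}, its conditional order is preserved as long as the acceptance probability is bounded away from zero (which holds for fixed $a$); making this precise requires a uniform-integrability or Slutsky-type argument under conditioning, paralleling Lemma-level results in \citet{Li9157} and \citet{2020Rerandomization}. A secondary technical point is verifying that the residualized potential outcomes satisfy the same moment conditions as the originals, which follows from boundedness of $\beta_w(z)$ and the covariate moment assumptions but should be stated explicitly. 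Once these are in hand, the four displays are immediate corollaries of the already-proved theorems applied to the residualized population.
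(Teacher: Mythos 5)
Your proposal is correct and follows essentially the same route as the paper: reduce $\tauadjhaj$ and $\tauadjht$ to the oracle-adjusted estimators built from $Y_{ij}(z)-w_{ij}^{\top}\beta_w(z)$ and $\tilde Y_{i\cdot}(z)-v_i^{\top}\beta_v(z)$, transfer the unconditional $o_{\mathrm p}(M^{-1/2})$ remainder to the conditional distribution via the acceptance probability being bounded away from zero (the paper's Lemma on $M^{1/2}(\tauadjht-\check\tau)\mid\mathcal D_c(A_c)=o_{\mathrm p}(1)$ uses exactly the bound $\operatorname{pr}(A\mid B)\leq\operatorname{pr}(A)/\operatorname{pr}(B)$ you anticipate), and then apply Theorems~\ref{thm:asymp2} and~\ref{thm:asymquadra} to the adjusted potential outcomes.
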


\subsection{The gains from cluster rerandomization and regression adjustment}

To compare the efficiency, \citet{2020Rerandomization} introduced the notion of S-optimality to refer to the coefficient vector ${\beta}$ that could lead to the shortest asymptotic quantile range among all adjusted estimators. For cluster rerandomization with cluster-level covariates, the coefficient vector with respect to S-optimal regression-adjusted average treatment effect estimator coincides with ${\beta}_{v}(z)$ when ${v}_{i} = B{c}_{i}$ or ${c}_{i}=B{v}_{i}$ for certain matrix $B$. For cluster rerandomization with individual-level covariates, it may not be ${\beta}_{w}(z)$, which is the coefficient vector of $w_{ij}$ from the ordinary least squares fit of $Y_{ij}(z)$ on $(1,{w}_{ij})$. Instead, it equals the coefficient vector of ${\tilde{w}}_{i\cdot}$ from the ordinary least squares fit of $\tilde{Y}_{i\cdot}(z)$ on $(1,{\tilde{w}}_{i\cdot})$. \cite{su2021modelassisted} concluded that the individual-level regression adjustment is suboptimal, so we will not discuss the efficiency gains from the individual-level regression adjustment.

\cite{Li2020factorial} proved the efficiency gain for the combination of rerandomization and regression adjustment with treatments assigned at the individual level if we use more or less information  in the regression adjustment. We reexamine their results under cluster rerandomization using cluster-level covariates. We consider two special cases of cluster-level regression adjustment: ${v}_{i} = B{c}_{i}$ or ${c}_{i}=B{v}_{i}$. The former indicates that there is less information at the analysis stage, whereas the latter indicates that there is more information at the analysis stage. For the latter case, Corollary~\ref{coro:anamoreinfo} below shows that regression adjustment can further improve the efficiency. 


\begin{corollary}
Under regularity conditions,
  \label{coro:anamoreinfo}
  if ${c}_{i}=B{v}_{i}$,
  \begin{align*}
    M^{1/2}(\tauadjht-\tau)\mid \mathcal{D}_{c}(A_c)  \mathrel{\ \dot{\sim}\ } (\vttht)^{1/2} (1-R^2_{ v})^{1/2} \epsilon,
  \end{align*}
  where $\epsilon$ is a standard normal random variable, and $R^2_{ v}$ is the squared multiple correlation defined similarly to $R_{  c}^2$ with $ c$ being replaced by $ v$.
\end{corollary}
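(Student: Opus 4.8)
The plan is to apply Theorem~\ref{thm:tauhtadjasymp} with design covariate $c_i$ and analysis covariate $v_i$, and then use the assumption $c_i=Bv_i$ to force the truncated-normal component to degenerate. By Theorem~\ref{thm:tauhtadjasymp},
$$
M^{1/2}(\tauadjht-\tau)\mid \mathcal{D}_{c}(A_c)\mathrel{\ \dot{\sim}\ }(\vtthtb)^{1/2}\big[\{1-(R^{\textnormal{adj}}_{c})^{2}\}^{1/2}\epsilon+R^{\textnormal{adj}}_{c}\mu_c^{\top}\eta\mid\eta^{\top}\vccht^{1/2}A_c\vccht^{1/2}\eta\leq a\big].
$$
Hence it suffices to establish (i) $(R^{\textnormal{adj}}_{c})^{2}=0$, so that $R^{\textnormal{adj}}_{c}=0$ and the bracket collapses to $\epsilon$, and (ii) the algebraic identity $\vtthtb=\vttht(1-R^2_v)$.

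For (i), recall that $\beta_v(z)$ is the finite-population least-squares coefficient from regressing $\tilde Y_{i\cdot}(z)$ on $(1,v_i)$; because the covariates are centered, its normal equations give $\covf\{V,\tilde Y(z)-V\beta_v(z)\}=0$ for $z=0,1$. Since each $c_i$ is a fixed linear function of $v_i$, it follows that $\covf\{C,\tilde Y(z)-V\beta_v(z)\}=0$ as well. By the definition of the adjusted quantities as the Proposition~\ref{prop:FCLT} quantities with $\tilde Y_{i\cdot}(z)$ replaced by $\tilde Y_{i\cdot}(z)-v_i^{\top}\beta_v(z)$, this gives $\vcthtb=e_1^{-1}\covf\{C,\tilde Y(1)-V\beta_v(1)\}+e_0^{-1}\covf\{C,\tilde Y(0)-V\beta_v(0)\}=0$ and $\vtchtb=\vcthtb^{\top}=0$. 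As $\vcchtb=\vccht=(e_1e_0)^{-1}\covf(C)>0$, we obtain $(R^{\textnormal{adj}}_{c})^{2}=\vtchtb\,\vcchtb^{-1}\,\vcthtb/\vtthtb=0$.

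For (ii), which does not use $c_i=Bv_i$, I would expand $\vtthtb$ from \eqref{eq::ht-asym-var} with $\tilde Y_{i\cdot}(z)$ replaced by $\tilde Y_{i\cdot}(z)-v_i^{\top}\beta_v(z)$ and substitute $\beta_v(z)=\covf(V)^{-1}\covf\{V,\tilde Y(z)\}$. Using the within-arm residual variances $\varf\{\tilde Y(z)-V\beta_v(z)\}$ together with $\varf\{(\tilde Y(1)-V\beta_v(1))-(\tilde Y(0)-V\beta_v(0))\}$ — noting that $\beta_v(1)-\beta_v(0)$ is exactly the least-squares coefficient of $\tilde Y_{i\cdot}(1)-\tilde Y_{i\cdot}(0)$ on $v_i$ — one collects terms to obtain $\vtthtb=\vttht-V_{\textnormal{ht},\tau v}V_{\textnormal{ht},vv}^{-1}V_{\textnormal{ht},v\tau}$, where, by the Proposition~\ref{prop:FCLT} analogue with $v$ in place of $c$, $V_{\textnormal{ht},v\tau}=e_1^{-1}\covf\{V,\tilde Y(1)\}+e_0^{-1}\covf\{V,\tilde Y(0)\}=V_{\textnormal{ht},\tau v}^{\top}$ and $V_{\textnormal{ht},vv}=(e_1e_0)^{-1}\covf(V)$. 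Since $R^2_v=V_{\textnormal{ht},\tau v}V_{\textnormal{ht},vv}^{-1}V_{\textnormal{ht},v\tau}/\vttht$ is exactly the squared multiple correlation in the corollary, this is $\vtthtb=\vttht(1-R^2_v)$. Combining (i) and (ii) with the displayed form of Theorem~\ref{thm:tauhtadjasymp} yields $M^{1/2}(\tauadjht-\tau)\mid\mathcal{D}_{c}(A_c)\mathrel{\ \dot{\sim}\ }(\vttht)^{1/2}(1-R^2_v)^{1/2}\epsilon$.

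The argument is not deep; the main work is bookkeeping. The two points to watch are invoking Theorem~\ref{thm:tauhtadjasymp} with the correct pairing of the design covariate $c_i$ and the distinct analysis covariate $v_i$ (so that $(R^{\textnormal{adj}}_c)^2$ really is the correlation between the $v$-adjusted outcome and the $c$-imbalance, which vanishes precisely because $c$ lies in the column span of $v$), and tracking the cancellations in step (ii) — in particular verifying that the cross terms involving $\beta_v(1)-\beta_v(0)$ in the variance of the adjusted treatment-minus-control contrast recombine with the within-arm residual variances to reproduce exactly $V_{\textnormal{ht},\tau v}V_{\textnormal{ht},vv}^{-1}V_{\textnormal{ht},v\tau}$.
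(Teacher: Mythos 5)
Your proposal is correct and follows essentially the same route as the paper: the paper invokes Lemma~\ref{sup:lembeta} to replace $\tauadjht$ by its oracle version, notes that when $c_i=Bv_i$ the $v$-adjusted outcomes are uncorrelated with $\taucht$ (your step (i), phrased as $(R^{\textnormal{adj}}_c)^2=0$ via the normal equations), and then applies Theorem~\ref{thm:asymquadra}. Your step (ii), verifying $\vtthtb=\vttht(1-R^2_v)$, is a detail the paper leaves implicit in ``the conclusion follows immediately,'' and your derivation of it is correct.
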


Corollary~\ref{coro:anamoreinfo} can be interpreted as follows. If we use cluster-level covariates and have more information in the analysis stage, then using the combination of cluster rerandomization  and regression adjustment is asymptotically equivalent to using regression adjustment only, which is no worse than using cluster rerandomization only.

For the case we have less information in the analysis stage, we have the following corollaries.

\begin{corollary}
  \label{coro:analessinfo}
 Under regularity conditions, if ${v}_{i}=B{c}_{i}$,
  \begin{align*}
    M^{1/2}(\tauadjht-\tau)\mid \mathcal{D}_{c}(A_c)   \mathrel{\ \dot{\sim}\ } (\vttht)^{1/2}\big\{ (1-R^2_{ c})^{1/2} \epsilon + R_{ c}{\mu}^{\top}{\eta}\mid {\eta}^{\top}\vccht^{1/2}A_c\vccht^{1/2}{\eta} \leq a\big\},
  \end{align*}
  where $\eta = (\eta_1, \ldots,\eta_K)^\top$, $\epsilon$, $\eta_1, \ldots,\eta_K$ are independent standard normal random variables,  
  $$
    {\mu}^\top =  (\mu_1,\ldots,\mu_K) = \{\vtcht \vccht^{-1}\vctht\}^{-1/2} \vtcht \vccht^{-1/2}H = \mu_c^\top H,
  $$
  and
  $$
    H = I_{K}-\vccht^{1/2}B^{\top}(B\vccht B^{\top})^{-1}B\vccht^{1/2}.
  $$
\end{corollary}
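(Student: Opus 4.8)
The plan is to derive Corollary~\ref{coro:analessinfo} as a direct consequence of Theorem~\ref{thm:tauhtadjasymp} (the $\mathcal{D}_c(A_c)$ case) specialized to ${v}_i = B{c}_i$, tracking what the orthogonal projection $H$ does to the relevant variance pieces. First I would note that when ${v}_i = B {c}_i$, the regression adjustment based on $v$ is a ``coarsened'' version of that based on $c$: the residuals $\tilde Y_{i\cdot}(z) - v_i^\top \beta_v(z)$ are the residuals from projecting $\tilde Y_{i\cdot}(z)$ onto the column span of $B{C}$, which is contained in the span of ${C}$. Consequently $\vtthtb$ (the adjusted $\tau\tau$-variance for the $v$-adjusted estimator) still equals $\vttht(1-R_c^2)$ up to the truncated component bookkeeping — more precisely, I would show $R^{\textnormal{adj}}_{c}$ in Theorem~\ref{thm:tauhtadjasymp}, when ${v}_i = B{c}_i$, recombines with $\vtthtb$ so that the leading normal coefficient becomes $(\vttht)^{1/2}(1-R_c^2)^{1/2}$, because the information removed by adjustment is exactly the part of $C$ lying in the span of $B^\top$, and the part of $C$ orthogonal to that span is still available to create the truncated-normal term.

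The key computational step is identifying the truncated-normal coefficient. In Theorem~\ref{thm:tauhtadjasymp} the coefficient is $R^{\textnormal{adj}}_c \mu_c^\top \eta$ where $\mu_c^\top = \vtchtb \vccht^{-1/2}/\{\vtchtb\vccht^{-1}\vcthtb\}^{1/2}$ and $\vtchtb$ is the covariance between $C$ and the $v$-adjusted potential outcomes. Since the $v$-adjusted residual equals the original outcome minus its projection onto $\mathrm{span}(B^\top)$ (in the appropriate $\vccht$-inner-product geometry), I would compute $\vtchtb$ as $\vtcht$ composed with the complementary projection, which is precisely what produces the matrix $H = I_K - \vccht^{1/2}B^\top(B\vccht B^\top)^{-1}B\vccht^{1/2}$. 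Combining, $R^{\textnormal{adj}}_c \mu_c^\top$ should collapse to $R_c \mu_c^\top H$, i.e.\ the advertised $\mu^\top$, and $(R^{\textnormal{adj}}_c)^2$ together with the $(1-(R^{\textnormal{adj}}_c)^2)^{1/2}$ factor should reassemble into $R_c$ and $(1-R_c^2)^{1/2}$ with $\vttht$ in front. The bookkeeping to confirm that $H$ is idempotent and symmetric in the $\vccht$-geometry, and that $\mu_c^\top H H \mu_c$ combined with $R_c^2$ equals $(R^{\textnormal{adj}}_c)^2$, is the technical core.

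The main obstacle I anticipate is keeping the two normalizations straight — the one inside $\mu_c$ (which divides by $\{\vtcht\vccht^{-1}\vctht\}^{1/2}$) versus the one inside $\mu$ (which divides by $\{\vtcht\vccht^{-1}\vctht\}^{1/2}$ as well but then the $H$ changes the effective length of the vector) — and verifying that the product $R^{\textnormal{adj}}_c \cdot \mu_c^\top$ telescopes exactly to $R_c\mu_c^\top H$ without a leftover scalar. Concretely, one must check that $\|\vtchtb \vccht^{-1/2}\|$ equals $R^{\textnormal{adj}}_c/R_c$ times $\|\vtcht\vccht^{-1/2}H\|$, which follows from $\vtchtb = \vtcht$ in the $C$-coordinates after projection, but requires care because the ``adjusted'' quantities in Theorem~\ref{thm:tauhtadjasymp} are defined by replacing $\tilde Y_{i\cdot}(z)$ with $\tilde Y_{i\cdot}(z) - v_i^\top\beta_v(z)$ and then recomputing all the covariances, so I need the identity that $\covf\{C, \tilde Y(z) - V\beta_v(z)\}$ in the $e_1$-weighted combination equals $\vtcht H$ (viewed appropriately). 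Once that identity is in hand, substitution into Theorem~\ref{thm:tauhtadjasymp} yields the corollary immediately; the constraint region $\eta^\top \vccht^{1/2}A_c\vccht^{1/2}\eta \le a$ is unchanged because it only involves $\vccht$ and $A_c$, not the outcomes.
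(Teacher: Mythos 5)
Your plan is correct, and the identities you single out as the technical core do hold: with ${v}_i = B{c}_i$ one gets $\vcthtb = (I_K - P)\vctht$ where $P = \vccht B^{\top}(B\vccht B^{\top})^{-1}B$, hence $\vtchtb\vccht^{-1/2} = \vtcht\vccht^{-1/2}H$; and a short computation using $P\vccht P^{\top} = P\vccht$ gives $\vtthtb = \vttht - \vtcht\vccht^{-1}P\vctht$ and $\vtchtb\vccht^{-1}\vcthtb = \vtcht\vccht^{-1}(I_K-P)\vctht$, so that $\vtthtb\{1-(\radjc)^2\} = \vttht(1-R_c^2)$ and $(\vtthtb)^{1/2}\radjc\,\mu_c^{\textnormal{adj}\top} = \vtchtb\vccht^{-1/2} = (\vttht)^{1/2}R_c\,\mu_c^{\top}H$ with no leftover scalar, exactly as you predict. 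Your route differs from the paper's in packaging rather than substance: you specialize Theorem~\ref{thm:tauhtadjasymp} and push the projection through the adjusted variance and covariance quantities, whereas the paper works directly with the oracle estimator $\tauhtb$, writing $\tauhtb - \tau = \tauht - \tau - \vtcht B^{\top}(B\vccht B^{\top})^{-1}B\,\taucht$ via Proposition~1 of \citet{2020Rerandomization}, splitting this into the residual of the full projection onto $\taucht$ (asymptotically independent of $\taucht$, with variance $(1-R_c^2)\vttht$) plus the $H$-projected component, and then invoking Proposition~\ref{cor:weakconv}. Unfolding your argument through the proof of Theorem~\ref{thm:asymquadra} recovers precisely that decomposition, so the two proofs are equivalent; the paper's version avoids recomputing $\vtthtb$ and $(\radjc)^2$ explicitly, while yours makes the corollary a literal instance of the already-stated general theorem. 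One cosmetic slip: the subspace onto which $\tilde{Y}(z)$ is projected is the column span of $V = CB^{\top}$, not of $BC$.
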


\begin{corollary}
  \label{cor:lessinfocovotho}
  Suppose that the covariates are orthogonalized. If we 
   use the first $J<K$ covariates in the regression adjustment, then under regularity conditions,
  \[
  \vara\{M^{1/2}(\tauadjht-\tau)\mid \mathcal{D}_{c}(A_c) \} \leq  \vara\{M^{1/2}(\tauht-\tau)\mid \mathcal{D}_{c}(A_c) \}
  .\]
\end{corollary}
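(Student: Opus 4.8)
The plan is to read the two conditional asymptotic laws off results already established and compare their variances. Using the first $J<K$ orthogonalized covariates in the analysis is exactly the case $v_i=Bc_i$ with $B=(I_J,\,0)$ in Corollary~\ref{coro:analessinfo}, so that corollary gives
\[
  M^{1/2}(\tauadjht-\tau)\mid\mathcal{D}_{c}(A_c)\ \mathrel{\dot\sim}\ (\vttht)^{1/2}\bigl\{(1-R_{c}^2)^{1/2}\epsilon+R_{c}\,\mu^{\top}\eta\ \big|\ \eta^{\top}D\eta\le a\bigr\},
\]
with $D=\vccht^{1/2}A_c\vccht^{1/2}$, $\mu=H\mu_c$, and $H=I_K-\vccht^{1/2}B^{\top}(B\vccht B^{\top})^{-1}B\vccht^{1/2}$, whereas Theorem~\ref{thm:asymquadra} with no adjustment gives the same representation for $M^{1/2}(\tauht-\tau)\mid\mathcal{D}_{c}(A_c)$ but with $\mu_c$ in place of $\mu$. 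The scale $\vttht$, the coefficient $(1-R_{c}^2)^{1/2}$, the standard normal $\epsilon$ (independent of $\eta$ and of the truncation event $C=\{\eta^{\top}D\eta\le a\}$), and the event $C$ itself are common to both, so taking variances reduces the claim to the single inequality $\var(\mu^{\top}\eta\mid C)\le\var(\mu_c^{\top}\eta\mid C)$.

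For this inequality I would use orthogonalization twice. First, since the covariates are orthogonalized, $\vccht$ is diagonal, and with $B=(I_J,0)$ the matrix $H$ collapses to the coordinate projection $\operatorname{diag}(0,\dots,0,1,\dots,1)$ killing the first $J$ coordinates; hence $\mu=H\mu_c$ is supported on coordinates $J{+}1,\dots,K$ and $\mu_c^{\top}\eta=(\mu_c-\mu)^{\top}\eta+\mu^{\top}\eta$ splits into a function of $(\eta_1,\dots,\eta_J)$ plus a function of $(\eta_{J+1},\dots,\eta_K)$. Second, for the relevant criteria $A_c$ is diagonal (the Mahalanobis or the optimal weighted Euclidean criterion), so $D$ is diagonal and $C$ is invariant under each sign flip $\eta_k\mapsto-\eta_k$; with the coordinate-wise symmetry of the standard normal this forces the conditional means of the $\eta_k$ to vanish and $\cov(\eta_j,\eta_k\mid C)=0$ for $j\ne k$. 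Therefore $(\mu_c-\mu)^{\top}\eta$ and $\mu^{\top}\eta$ are conditionally uncorrelated, giving $\var(\mu_c^{\top}\eta\mid C)=\var\{(\mu_c-\mu)^{\top}\eta\mid C\}+\var(\mu^{\top}\eta\mid C)\ge\var(\mu^{\top}\eta\mid C)$. Feeding this back, $\vara\{M^{1/2}(\tauadjht-\tau)\mid\mathcal{D}_c(A_c)\}=\vttht\{(1-R_c^2)+R_c^2\var(\mu^{\top}\eta\mid C)\}$ is at most $\vttht\{(1-R_c^2)+R_c^2\var(\mu_c^{\top}\eta\mid C)\}=\vara\{M^{1/2}(\tauht-\tau)\mid\mathcal{D}_c(A_c)\}$.

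The main obstacle is not analytic but structural: one must verify that orthogonalization genuinely turns $H$ into a coordinate projection and that the design matrix $A_c$ in force is diagonal, because both are indispensable---drop either and $\mu_c-\mu$ is no longer axis-aligned, the conditional covariance matrix of $\eta$ need not be diagonal, and (in the spirit of the counterexample behind Remark~\ref{remark::weighted-maha}) the inequality can be reversed. The only other point, entirely routine, is justifying that the asymptotic variances equal the variances of the stated limiting laws---a uniform-integrability argument under the regularity conditions, identical to what underlies Theorem~\ref{thm:varalpha} and Corollary~\ref{coro:analessinfo}, and hence simply invoked.
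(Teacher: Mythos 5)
Your proposal is correct and follows essentially the same route as the paper: invoke Corollary~\ref{coro:analessinfo} with $B=(I_J,0)$ so that $H$ collapses to the coordinate projection onto the last $K-J$ axes, use the conditional uncorrelatedness of the $\eta_k$'s under the axis-aligned truncation event, and conclude that the variance difference equals $\vttht R_c^2\sum_{k=1}^{J}\mu_k^2\var(\eta_k\mid\eta^{\top}\vccht^{1/2}A_c\vccht^{1/2}\eta\le a)\ge 0$. Your explicit remark that the argument additionally needs $\vccht^{1/2}A_c\vccht^{1/2}$ to be diagonal (hence $A_c$ diagonal, as for the Mahalanobis or weighted Euclidean criteria) is a condition the paper's proof uses only implicitly, and you are right that the conditional uncorrelatedness can fail without it.
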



By Corollary~\ref{coro:analessinfo}, the asymptotic distribution of $M^{1/2}(\tauadjht-\tau)$ under $\mathcal{D}_{c}(A_c)$ is determined by the direction and length of the vector ${\mu}$. For an 
ellipsoidal acceptance region of the cluster rerandomization criterion, some directions lead to variance reduction while others do not. Regression adjustment is equivalent to a projection of ${\mu_c}$ onto a direction determined by $H$. However, such projection might not be in a favorable direction and there is no guarantee of efficiency gain in general. For some special cases as stated in Corollary~\ref{cor:lessinfocovotho}, the asymptotic variance can be reduced. Moreover, if we use cluster rerandomization based on the Mahalanobis distance, the acceptance region is a ball and regression adjustment as a projection  decreases the length of the vector ${\mu_c}$, leading to variance reduction. This echoes the result in \cite{Li2020factorial} that regression adjustment  improves the precision when we have less information in the analysis stage.

The gain from cluster rerandomization can be seen directly from Theorem~\ref{thm:tauhtadjasymp} and Proposition~\ref{prop:concentration-symmetric-unimodality}  where $R^{\textnormal{adj}}_{c}$ and $R^{\textnormal{adj}}_{x}$ are equal to $0$ without cluster rerandomization and  greater than or equal to $0$ with cluster rerandomization. 


\subsection{Estimating the sampling distributions of regression-adjusted estimators}
In the main text, we discuss the validity of $\vhatadjlz$ and $\vhatadjhw$ when $c_i=v_i$ and $x_{ij} = w_{ij}$. Theorem~\ref{thm:vhwvlzop} below states general results without assuming $c_i=v_i$ and $x_{ij} = w_{ij}$.


\begin{theorem}
  \label{thm:vhwvlzop}
  Under regularity conditions, 
  \begin{gather*}
   M \vhatadjlz - \vtthajb\mid \mathcal{D}_{x}(A_x) \mathrel{\geq_{\textup{p}}} 0, \quad
   M \vhatadjhw - \vtthtb \mid \mathcal{D}_{c}(A_c) \mathrel{\geq_{\textup{p}}} 0,
  \end{gather*}
  where a random variable $\geq_{\textup{p}}0$ means that its probability limit is greater than or equal to $0$.
\end{theorem}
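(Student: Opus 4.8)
The plan is to decouple the effect of cluster rerandomization from the behaviour of the robust variance estimators: first I would show that conditioning on the acceptance event leaves the probability limits of $M\vhatadjlz$ and $M\vhatadjhw$ unchanged, and then that these deterministic limits dominate $\vtthajb$ and $\vtthtb$ respectively --- a comparison that no longer involves rerandomization and is essentially the design-based conservativeness of robust standard errors established by \citet{su2021modelassisted}. Combined with the asymptotic distributions in Theorem~\ref{thm:tauhtadjasymp} and the elementary fact that truncating a Gaussian to a symmetric ellipsoid does not inflate its variance in any direction (so that $\vtthtb$ and $\vtthajb$ are themselves upper bounds for the true asymptotic variances of $M^{1/2}\tauadjht$ and $M^{1/2}\tauadjhaj$ under cluster rerandomization), this also recovers parts (i)--(ii) of Theorem~\ref{thm::regression-adjustment-under--cluster-rerandomization}.

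\emph{Step 1 (conditioning is innocuous).} The quantity $M\vhatadjhw$ is a continuous function of finite-population sample moments of $(\tilde Y_{i\cdot},Z_i,v_i)$ on a neighbourhood of their limit, and $M\vhatadjlz$ of such moments of $(Y_{ij},Z_{ij},w_{ij})$ aggregated to the cluster level. Under plain cluster randomization, the design-based weak law of large numbers (under the moment conditions of Section~\ref{sec:regcon}) gives that each such moment converges in probability to a deterministic finite-population limit, so by the continuous mapping theorem $M\vhatadjhw\to_{\textup{p}}\sigma^2_{\textup{hw}}$ and $M\vhatadjlz\to_{\textup{p}}\sigma^2_{\textup{lz}}$ for deterministic constants $\sigma^2_{\textup{hw}},\sigma^2_{\textup{lz}}$. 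Since $M^{1/2}\taucht\mathrel{\dot\sim}\mathcal N(0,\vccht)$ by Proposition~\ref{prop:FCLT}, $\textup{pr}(\dca)=\textup{pr}(M\taucht^{\top}A_c\taucht\le a)\to\alpha=\textup{pr}(\eta^{\top}\vccht^{1/2}A_c\vccht^{1/2}\eta\le a)\in(0,1)$ for fixed $a>0$ and $A_c>0$, the limiting quadratic form having a continuous law. Hence, for every $\varepsilon>0$,
\[
  \textup{pr}\{\,|M\vhatadjhw-\sigma^2_{\textup{hw}}|>\varepsilon \mid \dca\,\}\ \le\ \frac{\textup{pr}\{\,|M\vhatadjhw-\sigma^2_{\textup{hw}}|>\varepsilon\,\}}{\textup{pr}(\dca)}\ \longrightarrow\ 0 ,
\]
so $M\vhatadjhw\mid\dca\to_{\textup{p}}\sigma^2_{\textup{hw}}$, and likewise $M\vhatadjlz\mid\dxa\to_{\textup{p}}\sigma^2_{\textup{lz}}$. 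It therefore suffices to prove $\sigma^2_{\textup{hw}}\ge\vtthtb$ and $\sigma^2_{\textup{lz}}\ge\vtthajb$.

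\emph{Step 2 (identifying the limits).} Working out the fully interacted least-squares fit as in \citet{Lin2013Agnostic} and \citet{su2021modelassisted}: the coefficient of $Z_i$ in the fit of $\tilde Y_{i\cdot}$ on $(1,Z_i,v_i,Z_iv_i)$ is the difference of the arm-specific fitted intercepts, the arm-$z$ slope converges in probability to ${\beta}_{v}(z)$ because $v_i$ is unaffected by treatment (so the within-arm empirical covariances converge to the full finite-population ones), and the heteroskedasticity-robust variance estimator converges to
\[
  \sigma^2_{\textup{hw}}=e_1^{-1}\varf\{\tilde Y(1)-V{\beta}_{v}(1)\}+e_0^{-1}\varf\{\tilde Y(0)-V{\beta}_{v}(0)\},
\]
where $V{\beta}_{v}(z)$ abbreviates the vector with entries $v_i^{\top}{\beta}_{v}(z)$. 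Since $\vtthtb$ is \eqref{eq::ht-asym-var} with $\tilde Y_{i\cdot}(z)$ replaced by $\tilde Y_{i\cdot}(z)-v_i^{\top}{\beta}_{v}(z)$,
\[
  \sigma^2_{\textup{hw}}-\vtthtb=\varf\big\{\,[\tilde Y(1)-V{\beta}_{v}(1)]-[\tilde Y(0)-V{\beta}_{v}(0)]\,\big\}\ \ge\ 0 .
\]
The cluster-robust case is parallel: by \citet{su2021modelassisted}, the cluster-robust standard error of the individual-level interacted fit of $Y_{ij}$ on $(1,Z_{ij},w_{ij},Z_{ij}w_{ij})$ gives $M\vhatadjlz\to_{\textup{p}}\sigma^2_{\textup{lz}}=e_1^{-1}\varf\{\tdadje(1)\}+e_0^{-1}\varf\{\tdadje(0)\}$ with $\tdadjei(z)=\tdei(z)-\tdwi^{\top}{\beta}_{w}(z)$, so that $\sigma^2_{\textup{lz}}-\vtthajb=\varf\{\tdadje(1)-\tdadje(0)\}\ge 0$. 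Combining Steps 1 and 2 proves the theorem.

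\emph{Main obstacle.} The delicate step is the exact identification of $\sigma^2_{\textup{lz}}$ (and of $\sigma^2_{\textup{hw}}$) in Step 2: one must unwind the sandwich formula of the interacted regression and verify that the cluster-robust meat matrix aggregates residuals to the scaled-cluster-total scale, so that precisely $\tdwi^{\top}{\beta}_{w}(z)$ --- rather than some other adjustment --- enters the limit. This bookkeeping is exactly the design-based argument of \citet{su2021modelassisted}, so the incremental work here is modest; the rerandomization-specific content is confined to Step 1, which only uses that conditioning on an event of asymptotically positive probability preserves the probability limit of a sequence converging in probability to a constant, and which is therefore insensitive to whether $A_x$ and $A_c$ are the Mahalanobis or the weighted-Euclidean forms.
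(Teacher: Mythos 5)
Your proposal is correct and follows essentially the same route as the paper: the paper cites Theorems 2 and 4 of \citet{su2021modelassisted} for the unconditional probability limits $e_1^{-1}\varf\{\tdadje(1)\}+e_0^{-1}\varf\{\tdadje(0)\}\geq\vtthajb$ and $e_1^{-1}\varf\{\tdadjy(1)\}+e_0^{-1}\varf\{\tdadjy(0)\}\geq\vtthtb$ (your Step 2, which you spell out rather than cite, with the gap being exactly the dropped term $\varf\{\cdot(1)-\cdot(0)\}\geq 0$), and then transfers these to the conditional statement via the asymptotically positive acceptance probability, as in the proof of Lemma~\ref{sup:lembeta} (your Step 1).
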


By Theorem~\ref{thm:vhwvlzop}, the two types of robust standard errors are asymptotically conservative under cluster rerandomization.

\section{Improvement for variance estimation and confidence intervals}\label{sec::B}
Although $\vhatadjhw$ and $\vhatadjlz$ are  valid under cluster rerandomization, they can be overly conservative partly because  they ignore that the asymptotic distributions have truncated normal components due to cluster rerandomization. We can derive better variance and confidence interval estimators by better estimating the sampling distributions.

In Theorem~\ref{thm:tauhtadjasymp}, we have derived the asymptotic distributions of $M^{1/2}(\tauadjht-\tau)$ and $M^{1/2}(\tauadjhaj-\tau)$ under cluster rerandomization. First, we derive better estimators of $\vtthajb$ and $\vtthtb$.  Define the adjusted potential outcomes as $\tdadjyi(z)=\tilde{Y}_{i\cdot}(z)- {v}_{i}^{\top} {\beta}_{v}(z)$ and $\varepsilon_{ij}^{\textnormal{adj}}(z) = {\varepsilon}_{ij}(z)- {w}_{ij}^{\top} {\beta}_{w}(z)$, for $z=0,1$. Let $\tdadjei(z)$ be the scaled cluster total of ${\varepsilon}_{ij}(z)$. Denote $\tdadjy(z)$ and $\tdadje(z)$ as the vectors of $\tdadjyi(z)$ and $\tdadjei(z)$, respectively. Recall that $\vtthajb$ and $\vtthtb$ are defined similarly as $\vtthaj$ and $\vttht$ except that we replace $\varepsilon_{ij}(z)$ by $\varepsilon_{ij}^{\textnormal{adj}}(z)$ and $\tilde{Y}_{i\cdot}(z)$ by $\tdadjyi(z)$, for $z=0,1$.

Define the variance of the linear projection of $\tdadjy(z)$ on ${C}$ as
\[
  \varf\{\tdadjy(z) \mid {C}\}=\covf\{\tdadjy(z), {C}\}\{\covf({C})\}^{-1} \covf\{{C}, \tdadjy(z)\}, ~~ z=0,1.
\]
Similarly, define
\[
  \varf\{\tdadje(z) \mid \tilde{{X}}\}=\covf\{\tdadje(z),\tilde{{X}}\}\{\covf(\tilde{{X}})\}^{-1} \covf\{\tilde{{X}}, \tdadje(z)\}, ~~z=0,1.
\]
Let $D$ be the vector of residuals of the cluster-level regression adjustment. Let ${u}_{ij}$ be the residual of the individual-level regression adjustment, $\tilde{u}_{i\cdot}$ be its scaled cluster total, and $\tilde{U}$ be the vector of $\tilde{u}_{i\cdot}$. Denote by $\hvarfz(\cdot)$ and $\hcovfz(\cdot)$ the sample variance and covariance under treatment arm $z$. Let $\tilde{G}$ be the union of $\tilde{X}$ and $\tilde{W}$. To simplify the presentation, we assume that $\covf (\tilde{{G}})$ is invertible.

Following the  results of \citet{Li9157}, less conservative estimators of $\vtthajb$ and $\vtthtb$ are
\begin{align*}
    \hatvtthajb &= {e_1^{-1} \hvarf \{\tdadje(1)\} +
  e_0^{-1}\hvarf \{\tdadje(0)\} -
  \hvarf \{\tdadje (1)-\tdadje (0)\mid \tilde{G}\}},\\
 \hatvtthtb &=  e_1^{-1} \hvarf \{\tdadjy(1)\} +
  e_0^{-1}\hvarf \{\tdadjy(0)\} -
  \hvarf \{\tdadjy(1)-\tdadjy(0)\mid C\} ,
\end{align*}
where $\hvarf \{\tdadje(z)\} = \hvarfz(\tilde{U}),$ $\hvarf \{\tdadjy(z)\}=
   \hvarfz (D),$  for $z=0,1,$ and
\begin{align*}
     \hvarf \{\tdadjy(1)-\tdadjy(0)\mid {C} \}&=\big\{ \hcovft (D,{C} )-\hcovfc (D,{C} ) \big\} \{\covf ({C})\}^{-1} \big\{ \hcovft ({C},D )-\hcovfc ({C},D ) \big\},\\
   \hvarf   \{\tdadje (1)-\tdadje (0)\mid \tilde{{G}} \}&= \{ \hcovft ( \tilde{U}, \tilde{{G}} ) - \hcovfc ( \tilde{U}, \tilde{{G}} ) \} \{\covf (\tilde{{G}})\}^{-1} \{ \hcovft ( \tilde{{G}},\tilde{U} ) - \hcovfc ( \tilde{{G}},\tilde{U} )  \}.
\end{align*}

Second, we estimate  $R^{\textnormal{adj}}_{x}$ and $R^{\textnormal{adj}}_{c}$. 
Recall that $\radjx$ and $\radjc$ are defined similarly as $R_{x}$ and $R_{c}$ except that we replace $Y_{ij}(z)$ by $ Y_{ij}(z)- {w}_{ij}^{\top} {\beta}_{w}(z)$ and $\tilde{Y}_{i\cdot}(z)$ by $\tdadjyi(z)=\tilde{Y}_{i\cdot}(z)- {v}_{i}^{\top} {\beta}_{v}(z)$, for $z=0,1$. 
Applying Proposition~1 of \citet{Li9157}, we obtain the following results for $\radjx$ and $\radjc$.
\begin{proposition}\label{thm:mcorr}
  Under regularity conditions, $(\radjx)^2$ or $(\radjc)^2$ can be expressed in terms of the variances of the adjusted potential outcomes and of their projections on $\tilde{{X}}$ or $C$:
  \begin{equation}
    \label{equ:radjx}
    (\radjx)^2  =\frac{
    e_1^{-1}\varf  \{\tdadje(1)\mid \tilde{{X}}\} +
    e_0^{-1} \varf  \{\tdadje(0) \mid \tilde{{X}}\} -
    \varf  \{\tdadje (1)-\tdadje (0)\mid \tilde{{X}}\}
    }{e_1^{-1}\varf  \{\tdadje (1)\} + 
    e_0^{-1} \varf  \{\tdadje (0)\} -
    \varf  \{\tdadje (1)-\tdadje (0)\}
    },
  \end{equation}

  \begin{equation}
    \label{equ:radjc}
    (\radjc)^2
    =\frac{
    e_1^{-1}\varf  \{\tdadjy(1)\mid {C}\} +
    e_0^{-1} \varf  \{\tdadjy(0)\mid {C}\} -
    \varf  \{\tdadjy(1)-\tdadjy(0)\mid {C}\}
    }{
    e_1^{-1}\varf  \{\tdadjy(1)\} + 
    e_0^{-1} \varf  \{\tdadjy(0)\} -
    \varf  \{\tdadjy(1)-\tdadjy(0)\}
    }.
  \end{equation}
\end{proposition}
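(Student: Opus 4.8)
The plan is to reduce the claim to the analogous identity for the \emph{unadjusted} squared multiple correlations $R_x^2$ and $R_c^2$, which is Proposition~1 of \citet{Li9157} transported to the cluster-randomized setting, and then to observe that passing to the adjusted quantities is a verbatim substitution into that algebraic identity. Concretely, I would first establish the ratio formula for $R_c^2$ and $R_x^2$, and then replace the potential-outcome arguments by their residualized counterparts throughout.

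For the unadjusted identity, by Proposition~\ref{prop:FCLT} we have $R_c^2 = \vtcht\vccht^{-1}\vctht/\vttht$ with $\vccht=(e_1e_0)^{-1}\covf(C)$ and $\vctht = e_1^{-1}\covf\{C,\tilde Y(1)\}+e_0^{-1}\covf\{C,\tilde Y(0)\}$. Writing $a=\covf\{C,\tilde Y(1)\}$, $b=\covf\{C,\tilde Y(0)\}$, $S=\covf(C)$, expansion gives
\[
\vtcht\vccht^{-1}\vctht = e_1e_0\,(e_1^{-1}a+e_0^{-1}b)^{\top}S^{-1}(e_1^{-1}a+e_0^{-1}b) = \tfrac{e_0}{e_1}\,a^{\top}S^{-1}a + \tfrac{e_1}{e_0}\,b^{\top}S^{-1}b + 2\,a^{\top}S^{-1}b .
\]
Since $e_1+e_0=1$, so that $e_1^{-1}-1=e_0/e_1$ and $e_0^{-1}-1=e_1/e_0$, the right-hand side equals $e_1^{-1}a^{\top}S^{-1}a + e_0^{-1}b^{\top}S^{-1}b - (a-b)^{\top}S^{-1}(a-b)$, that is,
\[
\vtcht\vccht^{-1}\vctht = e_1^{-1}\varf\{\tilde Y(1)\mid C\} + e_0^{-1}\varf\{\tilde Y(0)\mid C\} - \varf\{\tilde Y(1)-\tilde Y(0)\mid C\},
\]
where $\varf\{\tilde Y(z)\mid C\}=\covf\{\tilde Y(z),C\}\covf(C)^{-1}\covf\{C,\tilde Y(z)\}$ and the cross term $2a^{\top}S^{-1}b$ is precisely what the projected-contrast term supplies after cancellation. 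Combined with $\vttht$ in the form \eqref{eq::ht-asym-var}, this is the ratio formula for $R_c^2$; the identical computation with $(\tauhaj,\tauxhaj,\tde,\tdx)$ in place of $(\tauht,\taucht,\tilde Y,C)$, using $\vttht$ replaced by $\vtthaj$ in the form \eqref{eq::haj-asym-var}, gives it for $R_x^2$. Both are special cases of Proposition~1 of \citet{Li9157}, since the cluster-level asymptotic covariances in Proposition~\ref{prop:FCLT} have exactly the same structure as in the individual-level case, with $\tilde Y_{i\cdot}(z)$ and $c_i$ (resp.\ $\tdei(z)$ and $\tilde x_{i\cdot}$) playing the roles of the potential outcome and the covariates and $M$ the sample size.

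For the adjusted versions, by definition $(\radjc)^2$ is obtained from $R_c^2$ by replacing $\tilde Y_{i\cdot}(z)$ by $\tdadjyi(z)=\tilde Y_{i\cdot}(z)-{v}_i^{\top}\beta_v(z)$ throughout, and $(\radjx)^2$ from $R_x^2$ by replacing $Y_{ij}(z)$ by $Y_{ij}(z)-{w}_{ij}^{\top}\beta_w(z)$, equivalently $\tdei(z)$ by $\tdadjei(z)$. The identity above is purely algebraic in the finite-population vector of cluster-level quantities --- it uses nothing about $\tilde Y(z)$ or $\tde(z)$ beyond being such a vector --- so it persists after this substitution, which yields \eqref{equ:radjx} and \eqref{equ:radjc}. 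Existence of the limits, and the legitimacy of reading these ratios off the limiting covariance matrix, follow from the regularity conditions and from the asymptotic equivalence of $\tauadjht$ and $\tauadjhaj$ to Horvitz--Thompson and Hajek estimators applied to the residualized outcomes, established in the development leading to Theorem~\ref{thm:tauhtadjasymp}.

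The only real obstacle is bookkeeping: one must keep the design-stage covariates $C$ and $\tdx$ distinct from the analysis-stage covariates used to form $\beta_v(z)$ and $\beta_w(z)$. In particular, although $\tdadjyi(z)$ is orthogonal in the finite-population sense to ${v}_i$, the projection in the numerator of \eqref{equ:radjc} is onto $c_i$, so $\varf\{\tdadjy(z)\mid C\}$ need not vanish; the substitution must therefore be applied only to the potential-outcome arguments and never to the covariate arguments of $\varf\{\cdot\mid\cdot\}$. Once this distinction is respected, the remaining work is routine.
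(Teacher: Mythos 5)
Your proposal is correct and takes essentially the same route as the paper, which simply invokes Proposition~1 of \citet{Li9157} applied to the residualized outcomes; your explicit expansion of $\vtcht\vccht^{-1}\vctht$ via $e_1^{-1}-1=e_0/e_1$ and $e_0^{-1}-1=e_1/e_0$ is exactly the algebra behind that cited result, and your observation that the identity is purely algebraic in the finite-population vectors (hence survives the substitution of adjusted for unadjusted outcomes, with the projection still taken onto the design-stage covariates $C$ or $\tilde X$) is the whole content of the paper's argument.
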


The denominator of \eqref{equ:radjx} is equal to $\vtthajb$ and the denominator of \eqref{equ:radjc} is equal to $\vtthtb$. Following the  results of \citet{Li9157}, we can estimate $\varf   \{\tdadjy(z)\mid {C} \}$ by $$\hvarf \{\tdadjy(z)\mid {C} \}=
   \hcovfz (D,{C} )\{ \hcovfz ({C}) \}^{-1}\hcovfz ({C},D ).
$$
Similarly, $\varf   \{\tdadjy(1)-\tdadjy(0)\mid {C} \}$ can be estimated by $ \hvarf \{\tdadjy(1)-\tdadjy(0)\mid {C} \}$.
Then we get an estimator for the numerator of \eqref{equ:radjc}. 
Thus, we estimate $(\radjc)^2$ by
$$
 (\hradjc)^2 = \big[e_1^{-1} \hvarf \{\tdadjy(1)\mid {C}\} +
  e_0^{-1}\hvarf \{\tdadjy(0)\mid {C}\} -
  \hvarf \{\tdadjy(1)-\tdadjy(0)\mid {C}\}\big] \Big / {\hatvtthtb}.
$$

For cluster rerandomization using individual-level covariates,  $\varf   \{\tdadje (z)\mid \tilde{{X}} \}$ can be estimated by
\[
  \hvarf \{\tdadje(z)\mid \tilde{{X}} \} = \hcovfz ( \tilde{U}, \tilde{{X}} )
  \{ \hcovfz (\tilde{{X}} ) \}^{-1}\hcovfz (\tilde{{X}},{\tilde{U}} ).
\]
Similarly, 
$\varf   \{\tdadje (1)-\tdadje (0)\mid \tilde{{X}} \}$ can be estimated by 
$$
 \hvarf   \{\tdadje (1)-\tdadje (0)\mid \tilde{{X}} \}= \{ \hcovft ( \tilde{U}, \tilde{{X}} ) - \hcovfc ( \tilde{U}, \tilde{{X}} ) \} \{\covf (\tilde{{X}})\}^{-1} \{ \hcovft ( \tilde{{X}},\tilde{U} ) - \hcovfc ( \tilde{{X}},\tilde{U} )  \}.
$$
Again we derive a consistent estimator for the numerator of \eqref{equ:radjx}. Thus, we estimate $(\radjx)^2$ by
$$
(\hradjx)^2 = \big[{e_1^{-1} \hvarf \{\tdadje(1) \mid \tilde{{X}}\} +
  e_0^{-1}\hvarf \{\tdadje(0)\mid \tilde{{X}}\} -
  \hvarf \{\tdadje (1)-\tdadje (0)\mid \tilde{{X}}\}}\big] \Big/ {\hatvtthajb}.
$$

Finally, we estimate $\mu_x$ and $\mu_c$ defined in Theorem~\ref{thm:tauhtadjasymp}.  Recall that 
\begin{eqnarray}
        {\mu}_x^\top &=& \vtxhajb \vxxhaj^{-1/2}/\{\vtxhajb \vxxhaj^{-1}\vxthajb\}^{1/2},   \label{equ:mux} \\
         {\mu}_c^\top &=& \vtchtb \vccht^{-1/2}/\{\vtchtb \vccht^{-1}\vcthtb\}^{1/2}.   \label{equ:muc}
\end{eqnarray}
The denominators of \eqref{equ:mux} and \eqref{equ:muc} equal the square roots of the numerators of \eqref{equ:radjx} and \eqref{equ:radjc}, respectively. We have derived their consistent estimators. Recall that
\begin{align*}
     \vxthajb &=(\vtxhajb)^{\top}=  e_1 ^{-1}\covf   \{\tilde{{X}},\tdadje (1) \} +  e_0 ^{-1}\covf\{\tilde{{X}},\tdadje (0) \},\\
      \vcthtb  &=(\vtchtb)^{\top}= e_1 ^{-1}\covf   \{{C},\tdadjy(1) \}+  e_0 ^{-1}\covf   \{{C},\tdadjy(0) \}.
\end{align*}
We estimate $\covf   \{\tilde{{X}},\tdadje(z) \}$ by $\hcovfz(\tilde{X}, \tilde{U})$ and $\covf   \{{C},\tdadjy(z) \}$ by $\hcovfz  (C,D )$. 
Plugging in these terms, we derive consistent estimators of $\mu_x$ and $\mu_c$, denoted as $\hat{\mu}_x$ and $\hat{\mu}_c$, respectively. 


Recall that $\xi_k $ is a $K$-dimensional vector with $1$ at the $k$th dimension and $0$ at other dimensions and $I_K$ is a $K \times K$ identity matrix. Let $q_{\zeta}(R^2,\mu,A,a)$ be the $\zeta$th quantile of $(1-R^2)\epsilon+R^2\mu^\top\eta\mid \eta^\top A\eta \leq a$.  Theorem~\ref{thm:interval-conservative} below provides asymptotically conservative confidence intervals.

\begin{theorem}
  \label{thm:interval-conservative}
  Under regularity conditions, 
  \begin{gather*}
    \Bigl[\tauadjhaj+\bigl(\hatvtthajb\bigm/M\bigr)^{1/2}q_{\varsigma/2}\bigl\{(\hradjx)^2, \xi_{1}, I_K,a\bigr\},\ \ \  \tauadjhaj+\bigl(\hatvtthajb\bigm/M\bigr)^{1/2}q_{1-\varsigma/2}\bigl\{(\hradjx)^2, \xi_1,I_K, a\bigr\}\Bigr],\\
    \Bigl[\tauadjht+\bigl(\hatvtthtb\bigm/M\bigr)^{1/2}q_{\varsigma/2}\bigl\{(\hradjc)^2, \xi_1,I_K, a\bigr\}, \ \ \ \tauadjht+\bigl(\hatvtthtb\bigm/M\bigr)^{1/2}q_{1-\varsigma/2}\bigl\{(\hradjc)^2, \xi_1,I_K, ,a\bigr\}\Bigr]
  \end{gather*}
are asymptotically conservative $1-\varsigma$ confidence intervals of $\tau$ under $\mx$ and $\mc$, respectively;
  \begin{gather*}
         \Bigl[\tauadjhaj+\bigl(\hatvtthajb\bigm/M\bigr)^{1/2}q_{\varsigma/2}\bigl\{(\hradjx)^2, \hat{\mu}_x, \vxxhaj^{1/2}A_x\vxxhaj^{1/2},a\bigr\},\qquad\qquad\qquad\qquad\qquad\qquad\qquad\qquad
         \\
         \qquad\qquad\qquad\qquad\qquad\qquad\qquad\qquad\tauadjhaj+\bigl(\hatvtthajb\bigm/M\bigr)^{1/2}q_{1-\varsigma/2}\bigl\{(\hradjx)^2, \hat{\mu}_x,\vxxhaj^{1/2}A_x\vxxhaj^{1/2}, a\bigr\}\Bigr], \\
        \Bigl[\tauadjht+\bigl(\hatvtthtb\bigm/M\bigr)^{1/2}q_{\varsigma/2}\bigl\{(\hradjc\}^2, \hat{\mu}_c,\vccht^{1/2}A_c\vccht^{1/2}, a\bigr\},\qquad\qquad\qquad\qquad\qquad\qquad\qquad\qquad\\ 
        \qquad\qquad\qquad\qquad\qquad\qquad\qquad\qquad\tauadjht+\bigl(\hatvtthtb\bigm/M\bigr)^{1/2}q_{1-\varsigma/2}\bigl\{(\hradjc)^2, \hat{\mu}_c,\vccht^{1/2}A_c\vccht^{1/2}, a\bigr\}\Bigr]
\end{gather*}
 are asymptotically conservative $1-\varsigma$ confidence intervals of $\tau$ under $\dxa$ and $\dca$, respectively.
\end{theorem}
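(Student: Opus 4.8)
The plan is to reduce the claim to three ingredients: the weak limits in Theorem~\ref{thm:tauhtadjasymp}, the consistency of the plug-in estimators $\hatvtthajb$, $\hatvtthtb$, $(\hradjx)^2$, $(\hradjc)^2$, $\hat{\mu}_x$, $\hat{\mu}_c$ under the rerandomization conditioning, and a one-dimensional peakedness argument. I would give the details for $\tauadjhaj$ under $\dxa$; the three remaining cases follow after the substitutions $\textnormal{haj}\leftrightarrow\textnormal{ht}$, $\tilde{X}\leftrightarrow C$, $A_x\leftrightarrow A_c$, and $\mx$, $\mc$ are the special instances $A_x=\vxxhaj^{-1}$, $A_c=\vccht^{-1}$, for which $\vxxhaj^{1/2}A_x\vxxhaj^{1/2}=I_K$ and, since $\mu_x$ is a unit vector ($\mu_x^{\top}\mu_x=1$ by its definition), $\mu_x^{\top}\eta\mid\eta^{\top}I_K\eta\le a\mathrel{\dot{\sim}}L_{K,a}$ by the rotational symmetry of the standard Gaussian --- which is exactly why the $\mx$, $\mc$ intervals may be written with $(\xi_1,I_K)$ in place of $(\hat{\mu}_x,\vxxhaj^{1/2}A_x\vxxhaj^{1/2})$.

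First I would identify the probability limits. All covariates being fully observed and $A_x$ fixed by the design, $A:=\vxxhaj^{1/2}A_x\vxxhaj^{1/2}$ is known; and under $\dxa$ the arm-wise sample moments of the regression residuals remain consistent for their finite-population analogues, as in \citet{Li9157} and \citet{su2021modelassisted}, so $\hatvtthajb\to_{\textup{p}}v^{*2}$, $\hat{\mu}_x\to_{\textup{p}}\mu_x$, and $(\hradjx)^2\to_{\textup{p}}r'$ for quantities identified now. Since $\hvarf\{\tdadje(1)-\tdadje(0)\mid\tilde{G}\}$ estimates the variance of a \emph{linear projection} of $\tdadje(1)-\tdadje(0)$, which cannot exceed $\varf\{\tdadje(1)-\tdadje(0)\}$, the subtracted term in $\hatvtthajb$ is asymptotically at most that in $\vtthajb$, whence $v^{*2}\ge\vtthajb$; set $\Delta:=v^{*2}-\vtthajb\ge0$. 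The decisive point is that $\hatvtthajb(\hradjx)^2$ is, by construction, the plug-in estimator of the \emph{numerator} of \eqref{equ:radjx} --- the explained variance --- which is estimated consistently rather than conservatively; hence $v^{*2}r'=\vtthajb r$ with $r:=(\radjx)^2$, i.e. $r'=r\,\vtthajb/v^{*2}\le r$.

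Next I would compare the realized interval with the oracle interval. By Theorem~\ref{thm:tauhtadjasymp}, $S_M:=M^{1/2}(\tauadjhaj-\tau)\mid\dxa$ converges weakly to $S:=\{\vtthajb(1-r)\}^{1/2}\epsilon+(\vtthajb r)^{1/2}\Lambda$, where $\Lambda:=\mu_x^{\top}\eta\mid\eta^{\top}A\eta\le a$; this law has no atoms and is symmetric about $0$ and unimodal, by the argument of Proposition~\ref{prop:concentration-symmetric-unimodality}(i) applied to the adjusted limiting law (equivalently, $S$ is log-concave, being a convolution of a Gaussian density with the log-concave marginal $\Lambda$ of a Gaussian truncated to an ellipsoid). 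By definition of $q_{\zeta}$, $(\vtthajb)^{1/2}q_{\zeta}\{r,\mu_x,A,a\}$ is the $\zeta$th quantile of $S$. On the other hand, each endpoint of the stated interval, when recentred by $\tauadjhaj$ and scaled by $M^{1/2}$, equals $(\hatvtthajb)^{1/2}q_{\zeta}\{(\hradjx)^2,\hat{\mu}_x,A,a\}\to_{\textup{p}}(v^{*2})^{1/2}q_{\zeta}\{r',\mu_x,A,a\}$ for $\zeta=\varsigma/2,1-\varsigma/2$, by continuity of $q_{\zeta}$ at $(r',\mu_x)$; and, using $v^{*2}r'=\vtthajb r$ and $v^{*2}(1-r')=\vtthajb(1-r)+\Delta$, the variable $(v^{*2})^{1/2}\{(1-r')^{1/2}\epsilon+(r')^{1/2}\Lambda\}$ equals in law $S+\Delta^{1/2}\tilde{\epsilon}$ with $\tilde{\epsilon}$ an independent standard normal. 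Thus the realized interval is, asymptotically, the oracle interval for $S$ with independent Gaussian noise of variance $\Delta/M$ added to its half-width.

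Finally, $S$ being symmetric and unimodal, Anderson's lemma in dimension one gives $\pra(|S+\Delta^{1/2}\tilde{\epsilon}|\le t)\le\pra(|S|\le t)$ for all $t\ge0$, hence $q_{1-\varsigma/2}(S+\Delta^{1/2}\tilde{\epsilon})\ge q_{1-\varsigma/2}(S)$ and $q_{\varsigma/2}(S+\Delta^{1/2}\tilde{\epsilon})\le q_{\varsigma/2}(S)$. So the realized interval asymptotically contains the oracle interval whose endpoints are the $\varsigma/2$ and $1-\varsigma/2$ quantiles of $S$; the oracle interval has exact asymptotic coverage $1-\varsigma$ since $S$ is continuous, so by Slutsky's theorem --- the endpoints converging in probability to constants and $S_M$ in law to the continuous $S$ --- the realized interval has asymptotic coverage at least $1-\varsigma$, which is the claim for $\tauadjhaj$ under $\dxa$; the remaining three cases are identical after relabelling. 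The main obstacle, and the only step beyond bookkeeping, is the middle one: recognizing that the unavoidable over-estimation of the total variance $\vtthajb$ enters \emph{only} the untruncated Gaussian component of the limiting law --- because $\hatvtthajb(\hradjx)^2$ targets the explained variance consistently while $\hatvtthajb$ targets the total variance only conservatively --- so that passing from the oracle law to the realized law amounts to adding independent symmetric noise, after which one-dimensional peakedness closes the argument.
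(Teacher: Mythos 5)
Your proposal is correct and follows essentially the same route as the paper's proof: it establishes (via the same projection bound and the same conditional consistency of arm-wise sample moments) that $\hatvtthajb$ and $\hatvtthtb$ are conservative while the estimated numerators $\hatvtthajb(\hradjx)^2$, $\hatvtthtb(\hradjc)^2$ and the directions $\hat{\mu}_x$, $\hat{\mu}_c$ are consistent. The only difference is that where the paper ends with ``the conclusion follows immediately,'' you make the final step explicit --- the excess variance $\Delta$ inflates only the untruncated Gaussian component, so the realized limit is the oracle law convolved with an independent $\mathcal N(0,\Delta)$, and one-dimensional Anderson/peakedness for the symmetric unimodal limit then widens both quantiles --- which is exactly the justification the paper leaves implicit.
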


\section{Results for cluster rerandomization with cluster-level covariates that are omitted in Section~\ref{sec:comparison-tier}}\label{sec::C}

We assume that the $c_i$'s are orthogonalized by Gram-Schmidt orthogonalization. Let
$$
  \vccht = \operatorname{diag}(V_{\textnormal{ht},c_1c_1},\ldots,V_{\textnormal{ht},c_Kc_K}),
  \quad
  \vtcht = (V_{\textnormal{ht},\tau c_{1}}, \ldots, V_{\textnormal{ht},\tau c_K}).
$$
Define the squared multiple correlation for the $k$th  covariate as
$
  R_{c_k}^2 =  V_{\textnormal{ht},\tau c_k }^2/( V_{\textnormal{ht},\tau\tau}V_{\textnormal{ht},c_k  c_k }).
$
The optimal cluster rerandomization scheme is
$
  \mathcal{D}_c(A_{c}^{\textnormal{opt}})= \{M{\hat{\tau}_{\textnormal{ht}, c}}^\top A_{c}^{\textnormal{opt}} {\hat{\tau}_{\textnormal{ht}, c}}<a\}
$
with the optimal diagonal matrix
$$
  A_{c}^{\textnormal{opt}} = \operatorname{diag}\{(V_{\textnormal{ht},\tau c_{1}}V_{\textnormal{ht},c_{1}c_{1}}^{-1})^2,\ldots,(V_{\textnormal{haj},\tau c_{k}}V_{\textnormal{haj},c_{K}c_{K}}^{-1})^2\} .
$$
Corollary~\ref{cor:varalphaortho-cluster-covariate} below is  the counterpart of Corollary  \ref{cor:varalphaortho} in the main text. Recall that
\begin{equation} 
    p_{K}=\frac{2\pi}{K+2}\left\{ \frac{2\pi^{K/2}}{K\Gamma (K/2)} \right\}^{-2/K}. \nonumber
\end{equation}

\begin{corollary}
  \label{cor:varalphaortho-cluster-covariate}
  Under regularity conditions with orthogonalized covariates and optimal weighted Euclidean distance, if ${V}_{\textnormal{haj},\tau c_k }{V}_{\textnormal{ht},c_k c_k }^{-1}$'s are nonzero for all $k =1,\ldots,K$, then
  $$
    \vara    \{M^{1/2}(\tauht-\tau) \mid \mathcal{D}_c(A_c^{\textnormal{opt}}) \} =  V_{\textnormal{ht},\tau \tau} \Big\{(1- R_c^2)+ K \Big(\prod_{k =1}^K R_{c_{k}}^2\Big)^{1/K}p_K \alpha^{2/K} +o(\alpha^{2/K}) \Big\},
  $$
  for a small $\alpha$.

\end{corollary}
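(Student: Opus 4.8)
The plan is to derive the claim as a direct specialization of Theorem~\ref{thm:varalpha} to the optimal weighted Euclidean distance $A_c = A_c^{\textnormal{opt}}$, exploiting the diagonal structure that orthogonalizing the covariates imposes on $\vccht$. Write $b_k = V_{\textnormal{ht},\tau c_k}$ and $d_k = V_{\textnormal{ht},c_k c_k}$, so $R_{c_k}^2 = b_k^2/(\vttht\, d_k)$. Since the $c_i$'s are orthogonalized, $\vccht = \operatorname{diag}(d_1,\ldots,d_K)$ and $\vtcht = (b_1,\ldots,b_K)$, and by Theorem~\ref{thm:optimdiag} the optimal diagonal weighting matrix is $A_c^{\textnormal{opt}} = \operatorname{diag}\{(b_1/d_1)^2,\ldots,(b_K/d_K)^2\}$. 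The hypothesis that each $V_{\textnormal{ht},\tau c_k}V_{\textnormal{ht},c_k c_k}^{-1}$ is nonzero guarantees $A_c^{\textnormal{opt}} > 0$, so $\mathcal{D}_c(A_c^{\textnormal{opt}})$ is a legitimate cluster rerandomization scheme and Theorem~\ref{thm:varalpha} applies; it also ensures the denominators appearing below are nonzero.

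The key step is to evaluate $\nu_c(A_c^{\textnormal{opt}})$ from Theorem~\ref{thm:varalpha}. Because every matrix is diagonal, the central quadratic form telescopes,
\[
  \vtcht\vccht^{-1}(A_c^{\textnormal{opt}})^{-1}\vccht^{-1}\vctht
  = \sum_{k=1}^K b_k\cdot\frac{1}{d_k}\cdot\frac{d_k^2}{b_k^2}\cdot\frac{1}{d_k}\cdot b_k = K,
\]
while $\det(A_c^{\textnormal{opt}})^{1/K}\det(\vccht)^{1/K} = \bigl(\prod_{k=1}^K b_k^2/d_k\bigr)^{1/K}$ and $\vtcht\vccht^{-1}\vctht = \sum_{k=1}^K b_k^2/d_k$. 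Substituting into the formula for $\nu_c$ and then using $b_k^2/d_k = \vttht\, R_{c_k}^2$ to cancel the common factor $\vttht$,
\[
  \nu_c(A_c^{\textnormal{opt}}) = \frac{K\bigl(\prod_{k=1}^K b_k^2/d_k\bigr)^{1/K}}{\sum_{k=1}^K b_k^2/d_k}
  = \frac{K\bigl(\prod_{k=1}^K R_{c_k}^2\bigr)^{1/K}}{\sum_{k=1}^K R_{c_k}^2}.
\]

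Finally, diagonality also gives $R_c^2 = \vtcht\vccht^{-1}\vctht/\vttht = \sum_{k=1}^K b_k^2/(d_k\vttht) = \sum_{k=1}^K R_{c_k}^2$, so the coefficient of $p_K\alpha^{2/K}$ in Theorem~\ref{thm:varalpha} becomes $R_c^2\,\nu_c(A_c^{\textnormal{opt}}) = K\bigl(\prod_{k=1}^K R_{c_k}^2\bigr)^{1/K}$, which is exactly the stated expansion. I do not expect any genuine obstacle: the argument is a mechanical combination of Theorems~\ref{thm:varalpha} and~\ref{thm:optimdiag}, and the only care needed is the bookkeeping of the diagonal matrix products and checking that the regularity conditions of Theorem~\ref{thm:varalpha}---notably $A_c^{\textnormal{opt}} > 0$ and the existence of the relevant asymptotic moments---are inherited from the corollary's hypotheses.
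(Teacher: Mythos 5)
Your proposal is correct and follows essentially the same route as the paper: specialize Theorem~\ref{thm:varalpha} to the diagonal case induced by orthogonalization and evaluate $\nu_c$ at the optimal weights, using $R_c^2=\sum_{k=1}^K R_{c_k}^2$ to obtain the coefficient $K\bigl(\prod_{k=1}^K R_{c_k}^2\bigr)^{1/K}$. The only cosmetic difference is that the paper (in the proof of Corollary~\ref{cor:varalphaortho}, to which it defers) re-derives the minimizing weights via the arithmetic--geometric mean inequality and reads off the minimum value, whereas you plug the optimizer from Theorem~\ref{thm:optimdiag} directly into the formula for $\nu_c$; both computations yield $\nu_c(A_c^{\textnormal{opt}})=K\bigl(\prod_{k=1}^K R_{c_k}^2\bigr)^{1/K}\bigm/ R_c^2$.
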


Similar to Section~\ref{sec:comparison-tier}, we partition the covariates into $L$ tiers with $K_l$ covariates in tier $l$, for $l=1,\ldots,L$ and $\sum_{l=1}^L K_l = K$. Let ${c}_{[l]}$ be the covariates in the $l$th tier. Then ${c} =({c}_{[1]},\ldots,{c}_{[L]})$. Within tier $l$, define $\hat{\tau}_{\textnormal{ht},c_{[l]}}$ as the Horvitz--Thompson estimator for the covariates, $V_{\textnormal{ht},c_{[l]}c_{[l]}}$ as its covariance, $\mathcal{M}_{[l]}$ as the event that the corresponding Mahalanobis distance is smaller than or equal to the threshold $a_{[l]}$, $R_{c_{[l]}}^2$ as the corresponding $R^2$, and
$\alpha_{[l]}  $ as the asymptotic acceptance rate. The overall asymptotic acceptance rate satisfies $\alpha = \prod_{l=1}^L \alpha_{[l]}$ due to the orthogonalization of the covariates. Corollary~\ref{cor:varalpha-tier-cluster-covariate} below is the counterpart of Corollary~\ref{cor:varalphatier} for cluster rerandomization with cluster-level covariates.

\begin{corollary}
  \label{cor:varalpha-tier-cluster-covariate}
  Under regularity conditions,
  $$
    \vara  \{M^{1/2} (\tauht-\tau)\mid \mathcal{M}_{[1]},\ldots,\mathcal{M}_{[L]}  \} =   V_{\textnormal{ht},\tau \tau} \Big\{(1- R_c^2)+ \sum_{l=1}^L R_{c_{[l]}}^2 p_{K_{l}}\alpha_{[l]}^{2/K_{l}}  +o(\alpha_{[l]}^{2/K_l}) \Big\},
  $$
  for small $\alpha_{[l]}$, $l=1,\ldots,L$.
\end{corollary}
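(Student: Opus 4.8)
The plan is to mirror the argument used for Corollary~\ref{cor:varalphatier}, replacing the Hajek estimator $\tauhaj$ by $\tauht$, the individual-level difference in means by the Horvitz--Thompson covariate estimator $\taucht$, and the matrices $(\vtthaj,\vtxhaj,\vxxhaj)$ by $(\vttht,\vtcht,\vccht)$. The large-sample input that launches the argument is already available: Proposition~\ref{prop:FCLT} gives the joint asymptotic normality of $(\tauht-\tau,\taucht)$ under the regularity conditions, so no new limit theorem is needed, only bookkeeping adapted to the cluster-level objects.

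First I would record two consequences of orthogonalizing the $c_i$'s. (i) $\vccht$ is block diagonal across the $L$ tiers, hence so is its inverse; consequently the tier Mahalanobis statistics $M\hat{\tau}_{\textnormal{ht},c_{[l]}}^{\top}V_{\textnormal{ht},c_{[l]}c_{[l]}}^{-1}\hat{\tau}_{\textnormal{ht},c_{[l]}}$, $l=1,\ldots,L$, converge jointly in distribution to $D_{[1]}^{\top}D_{[1]},\ldots,D_{[L]}^{\top}D_{[L]}$ with $D_{[l]}$ an independent standard normal vector in $\mathbb{R}^{K_l}$, so the events $\mathcal{M}_{[1]},\ldots,\mathcal{M}_{[L]}$ become asymptotically independent; and (ii) $R_c^2=\sum_{k=1}^K R_{c_k}^2=\sum_{l=1}^L R_{c_{[l]}}^2$, since block diagonality makes $\vtcht\vccht^{-1}\vctht$ a sum over coordinates and then over tiers. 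Next I would decompose $M^{1/2}(\tauht-\tau)$ by projecting it onto $M^{1/2}\taucht$ and splitting the projection according to the tiers: by the same conditioning-and-spherical-symmetry step used to prove Theorem~\ref{thm:asymp2}, now applied to the product event $\bigcap_{l=1}^L\mathcal{M}_{[l]}$, one obtains the conditional limit law
\[
  M^{1/2}(\tauht-\tau)\mid \mathcal{M}_{[1]},\ldots,\mathcal{M}_{[L]} \mathrel{\ \dot{\sim}\ } (\vttht)^{1/2}\Big\{(1-R_c^2)^{1/2}\epsilon+\sum_{l=1}^L R_{c_{[l]}}L_{K_l,a_{[l]}}\Big\},
\]
where $\epsilon$ is standard normal, $L_{K_l,a_{[l]}}\sim D_1\mid D^{\top}D\leq a_{[l]}$ with $D\in\mathbb{R}^{K_l}$, and $\epsilon,L_{K_1,a_{[1]}},\ldots,L_{K_L,a_{[L]}}$ are mutually independent (independence of the residual from every tier component is uncorrelatedness of jointly Gaussian coordinates; independence across tiers is item (i)). Taking the variance and using that each $L_{K_l,a_{[l]}}$ is symmetric about zero gives $\vara\{\cdot\}=\vttht\{(1-R_c^2)+\sum_{l=1}^L R_{c_{[l]}}^2\,\var(L_{K_l,a_{[l]}})\}$. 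Finally I would invoke the small-acceptance expansion $\var(L_{K,a})=p_K\alpha^{2/K}+o(\alpha^{2/K})$ as the acceptance rate $\alpha\to0$ --- the one-dimensional moment computation that underlies the $\nu_c\equiv1$ case of Theorem~\ref{thm:varalpha} --- applied tier by tier with $K=K_l$ and $\alpha=\alpha_{[l]}$; substituting yields the claimed expansion.

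The \textbf{main obstacle} is the step asserting that conditioning commutes with the asymptotic limit for the \emph{intersection} of the $L$ shrinking ellipsoidal events: one must establish convergence of the conditional laws, not merely of the unconditional ones, which requires a local-limit / density-continuity argument of the type already carried out in the proofs of Theorem~\ref{thm:asymp2} and Corollary~\ref{cor:varalphatier}. Since the regularity conditions of Proposition~\ref{prop:FCLT} supply exactly the joint convergence and moment control those proofs rely on, and orthogonalization reduces the product event to a product of spherically symmetric ball events on disjoint coordinate blocks, the verification is a direct transcription of the individual-level argument; the remaining pieces --- the block-diagonal bookkeeping identity $R_c^2=\sum_l R_{c_{[l]}}^2$ and the scalar asymptotics of $\var(L_{K_l,a_{[l]}})$ --- are routine.
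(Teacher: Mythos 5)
Your proposal is correct and follows essentially the same route as the paper: the paper's proof reduces the cluster-level statement to the argument for Corollary~\ref{cor:varalphatier}, which cites \citet[Theorem 3]{Li9157} for exactly the conditional limit law you derive (normal residual plus independent tier-wise truncated normals $L_{K_l,a_{[l]}}$, made independent by the diagonal $\vccht$), and then applies the same expansion $\var(L_{K_l,a_{[l]}})=p_{K_l}\alpha_{[l]}^{2/K_l}+o(\alpha_{[l]}^{2/K_l})$ tier by tier. The only substantive difference is that you sketch the projection/orthogonality derivation of that limit in place of the citation; note also that the ``main obstacle'' you flag is milder than you suggest, since the acceptance events have fixed positive asymptotic probability (the small-$\alpha_{[l]}$ expansion is applied to the limiting variance, not as a joint limit with $M$), so Proposition~\ref{prop:weakconv} already covers the conditioning step.
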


Given $\alpha= \prod_{l=1}^L \alpha_{[l]} > 0$,  to minimize the second term in the asymptotic variance in Corollary \ref{cor:varalpha-tier-cluster-covariate}, we must choose
$$
  \alpha_{[l]}=\left({c_0 {R}_{c_{[l]}}^2 p_{K_{l}}}/{K_{l}}\right)^{-K_{l}/2} \quad (l = 1,\ldots,L),
$$
for some positive constant $c_0$.  Theorem~\ref{thm:wtedbetter-cluster-covariate} below  is the counterpart of Theorem~\ref{thm:wtedbetter} for cluster rerandomization with cluster-level covariates.

\begin{theorem}
  \label{thm:wtedbetter-cluster-covariate}
  Under regularity conditions with orthogonalized covariates, the following inequality holds for any $\alpha_{[1]},\ldots,\alpha_{[L]}$ satisfying $\alpha= \prod_{l=1}^L \alpha_{[l]}$:
  \begin{align*}
    \sum_{l=1}^L {R}_{c_{[l]}}^2 p_{K_{l}}\alpha_{[l]}^{2/K_l} \geq  K\Big(\prod_{k=1}^K R_{c_{k}}^2\Big)^{1/K}p_K \alpha^{2/K}.
  \end{align*}
\end{theorem}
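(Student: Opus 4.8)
The inequality to be established involves only the nonnegative reals $R_{c_{[l]}}^2$, the tier dimensions $K_l$, the constants $p_{K_l}$, and the acceptance rates $\alpha_{[l]}$, and it is formally the same as the inequality in Theorem~\ref{thm:wtedbetter} after replacing every individual-level quantity by its cluster-level counterpart; the plan is therefore to reproduce that argument. Since the bound must hold for every feasible $(\alpha_{[1]},\ldots,\alpha_{[L]})$ with $\prod_{l=1}^{L}\alpha_{[l]}=\alpha$, it suffices to produce a lower bound for the left-hand side that is free of the $\alpha_{[l]}$ and then compare it with the right-hand side.

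\textbf{Step 1: removing the dependence on the $\alpha_{[l]}$.} Apply the weighted arithmetic--geometric mean inequality $\sum_{l=1}^{L}w_l t_l\ge\prod_{l=1}^{L}t_l^{w_l}$ with weights $w_l=K_l/K$ (which sum to one) and $t_l=(K/K_l)\,R_{c_{[l]}}^2 p_{K_l}\,\alpha_{[l]}^{2/K_l}$. Using $\prod_{l=1}^{L}\bigl(\alpha_{[l]}^{2/K_l}\bigr)^{K_l/K}=\bigl(\prod_{l=1}^{L}\alpha_{[l]}\bigr)^{2/K}=\alpha^{2/K}$ together with $\sum_{l=1}^{L}K_l=K$, this gives
\[
\sum_{l=1}^{L}R_{c_{[l]}}^2 p_{K_l}\alpha_{[l]}^{2/K_l}\ \ge\ K\,\alpha^{2/K}\prod_{l=1}^{L}\Bigl(\frac{R_{c_{[l]}}^2 p_{K_l}}{K_l}\Bigr)^{K_l/K}.
\]
(The inequality is attained at the minimizer $\alpha_{[l]}\propto(R_{c_{[l]}}^2 p_{K_l}/K_l)^{-K_l/2}$ displayed before the statement, but equality is not needed.) It remains to prove
\[
\prod_{l=1}^{L}\Bigl(\frac{R_{c_{[l]}}^2 p_{K_l}}{K_l}\Bigr)^{K_l/K}\ \ge\ p_K\Bigl(\prod_{k=1}^{K}R_{c_k}^2\Bigr)^{1/K}.
\]

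\textbf{Step 2: splitting off the $R^2$-part and the $p$-part.} Let $T_l$ be the set of indices of the covariates in tier $l$, so $|T_l|=K_l$. Because the covariates are orthogonalized, $\vccht$ is diagonal, so the tier-level multiple correlation decomposes as $R_{c_{[l]}}^2=\sum_{k\in T_l}R_{c_k}^2$. Applying AM--GM inside tier $l$ gives $R_{c_{[l]}}^2/K_l\ge\bigl(\prod_{k\in T_l}R_{c_k}^2\bigr)^{1/K_l}$, and multiplying over $l$ yields $\prod_{l=1}^{L}(R_{c_{[l]}}^2/K_l)^{K_l/K}\ge(\prod_{k=1}^{K}R_{c_k}^2)^{1/K}$. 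It then suffices to check $\prod_{l=1}^{L}p_{K_l}^{K_l/K}\ge p_K$; since $K_l\le K$ for each $l$, this is immediate once we know $p_{K_l}\ge p_K$, for then $\prod_{l=1}^{L}p_{K_l}^{K_l/K}\ge\prod_{l=1}^{L}p_K^{K_l/K}=p_K$. Multiplying the two estimates gives the inequality of Step 1, completing the proof; the degenerate case in which some $R_{c_{[l]}}^2=0$ makes the right-hand side vanish and is trivial. (Theorem~\ref{thm:wtedbetter} is recovered by the identical computation with $x$ in place of $c$.)

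\textbf{Main obstacle.} The only step that is not a direct AM--GM application is the monotonicity $p_{K+1}\le p_K$, i.e.\ that $K\mapsto p_K$ is non-increasing. Writing $p_K=\tfrac{2\pi}{K+2}\,v_K^{-2/K}$ with $v_K=\pi^{K/2}/\Gamma(K/2+1)$ the volume of the unit ball in $\mathbb{R}^{K}$, this reduces to an inequality for the Gamma function; the cleanest route is to treat $K$ as a continuous variable and show $(\log p_K)'\le 0$, which after differentiation becomes a bound on the digamma function that follows from the log-convexity of $\Gamma$. This monotonicity, which is also what underlies Theorem~\ref{thm:wtedbetter}, is where the substantive work lies; everything else in the argument is elementary.
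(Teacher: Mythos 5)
Your proposal is correct and follows essentially the same route as the paper: a weighted AM--GM step to eliminate the $\alpha_{[l]}$, a within-tier AM--GM using $R_{c_{[l]}}^2=\sum_{k\in T_l}R_{c_k}^2$ (valid by orthogonalization), and the monotonicity of $p_K$ in $K$, which is exactly the paper's Lemma on $p_K$ and is indeed where the real work lies. One small caveat: the required bound after differentiating $\log p_K$ is the trigamma inequality $\psi'(x+1)<1/(x+1)+1/(x+1)^2$, which needs slightly more than log-convexity of $\Gamma$ (the paper derives it from the monotonicity of $\log x-\psi(x)-1/x$), but this is a standard fact and does not affect the validity of your argument.
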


Theorem~\ref{thm:wtedbetter-cluster-covariate}  quantifies  the superiority of cluster rerandomization scheme based on the optimal weighted Euclidean distance when the cluster-level covariates are orthogonalized.

\section{A counterexample for Remark~\ref{remark::weighted-maha}}\label{sec::D}

Naturally, we want to compare the optimal $A_{x}$ or $A_c$ with the one used in the Mahalanobis distance. Unfortunately, there could not be a universal result since $\nu_x(A_{x}^{\textnormal{opt}})$ or $\nu_c(A_{c}^{\textnormal{opt}})$ could be smaller than, equal to, or larger than $\nu_x (V_{\textnormal{haj},xx}^{-1})=1$ or $\nu_c(V^{-1}_{\textnormal{ht},cc}) = 1$, respectively, as shown in the example below.
\begin{example}
  Suppose that
  \[
    V_{\textnormal{haj},xx}=
    \begin{bmatrix}
      4      & \ \  \delta \\
      \delta & \ \  4
    \end{bmatrix},\quad V_{\textnormal{haj},\tau x}=  (1, 1 ).
  \]
  Then
  \begin{equation*}
    \nu_x (A_x^{\textnormal{opt}})/\nu_x (V_{\textnormal{haj},xx}^{-1}) = \left(\frac{4-\delta}{4+\delta}\right)^{1/2}.
  \end{equation*}
  When $\delta > 0$, i.e., there is a positive correlation between covariates, $\nu_x (A_x^{\textnormal{opt}})$ is smaller than $\nu_x (V_{\textnormal{haj},xx}^{-1})$. When $\delta < 0$, the opposite result holds.
\end{example}

\section{Proofs}\label{sec::E}



\subsection{Regularity conditions}\label{sec:regcon}
This section provides the  regularity conditions for  the theoretical results. These conditions are required for using the finite population central limit theorem. Let $|| \cdot ||_\infty$ denote the $\ell_\infty$ norm of a vector.


The regularity conditions below are used for deriving the asymptotic properties of $\tauht$ under  classic cluster  randomization.

\begin{condition}\label{cond:1}
  As $M \rightarrow \infty$, $ e_1 $ has a limit in $ (0,1)$. 
\end{condition}

\begin{condition}
  \label{cond:2}
  As $M \rightarrow \infty$, 
  \begin{enumerate}
    \item[(i)] the finite population variance $\varf  \{\tilde{Y}(z)\}$ $(z=0,1)$ and  covariance  $\covf  \{\tilde{Y}(0),\tilde{Y}(1)\}$ have finite limiting values, and the limit of $\vttht$ is positive;
    \item[(ii)] $\max_{z=0,1} \max _{1 \leq i \leq M} \{\tilde{Y}_{i\cdot}(z)-\bar{Y}(z) \}^2 / M \rightarrow 0$.
  \end{enumerate}
\end{condition}

Condition~\ref{cond:1} indicates that we have enough clusters assigned to both the treatment and control arms. Condition \ref{cond:2} restricts the moments of the potential outcomes.

The asymptotic properties of $\tauhaj$ under classic cluster randomization  require Condition~\ref{cond:1} and Condition~\ref{cond:3} below. 

\begin{condition}\label{cond:3}
  As $M \rightarrow \infty$, 
  \begin{enumerate}
    \item[(i)] the finite population variance, $\varf  \{ \tilde{\varepsilon }(z)\}$ $(z=0,1)$, and covariances, $\covf \{\tilde{\varepsilon }(0),\tilde{\varepsilon }(1)\}$ and $\covf \{\tilde{\varepsilon }(z),\tilde{\omega}\}$, have finite limiting values, and the limit of $\vtthaj$ is positive;
    \item[(ii)] $ M^{-1} \max _{1 \leq i \leq M}(\tilde{\omega}_{i}-1)^2  \to 0$ and $ M^{-1} \max_{z=0,1}  \max _{1 \leq i \leq M}\big\{\tdei(z)\big\}^2  \to 0$.
  \end{enumerate}
\end{condition}

The asymptotic properties of $\tauht$ in Theorems~\ref{thm:asymp2}--\ref{thm:optimdiag} and \ref{thm:wtedbetter-cluster-covariate}, Corollaries~\ref{coro:clusterbetter} and \ref{cor:varalphaortho-cluster-covariate}--\ref{cor:varalpha-tier-cluster-covariate}, and Propositions~\ref{prop:FCLT}--\ref{prop:concentration-symmetric-unimodality}
require Conditions~\ref{cond:1}--\ref{cond:2} and Condition~\ref{cond:4} below.



\begin{condition}\label{cond:4}
  As $M \rightarrow \infty$,  
  \begin{enumerate}
    \item[(i)] the finite population covariances, $\covf ({C})$ and $\covf  \{{C},\tilde{Y}(z)\}$ $(z=0,1)$, and $A_c$ have finite limiting values, and the limits of $\covf ({C})$ and $A_c$ are nonsingular;
    \item[(ii)]  $ M^{-1} \max _{1 \leq i \leq M} \|{{c}}_{i} \|_{\infty}^{2}  \rightarrow 0$.
  \end{enumerate}
\end{condition}

The asymptotic properties of $\tauhaj$ in Theorems~\ref{thm:asymp2}--\ref{thm:wtedbetter}, Corollaries~\ref{coro:clusterbetter}--\ref{cor:varalphatier}, and Propositions~\ref{prop:FCLT}--\ref{prop:concentration-symmetric-unimodality} require Conditions~\ref{cond:1}, \ref{cond:3}, and Condition~\ref{cond:5} below. 

\begin{condition}\label{cond:5}
  As $M \rightarrow \infty$, 
  \begin{enumerate}
    \item[(i)] the finite population  covariances,  $\covf (\tilde{{X}})$, $\covf  \{\tilde{{X}},\tilde{\varepsilon } (z)\}$ $(z=0,1)$, and $\covf  (\tilde{{X}},\tilde{\omega})$, and $A_x$ have finite limiting values, and the limits of $\covf (\tilde{{X}})$ and $A_x$ are nonsingular;
    \item[(ii)]  $M^{-1} \max _{1 \leq i \leq M}\|\tilde{{x}}_{i\cdot}\|_{\infty}^2  \rightarrow 0$.
  \end{enumerate}
\end{condition}



The asymptotic properties of $\tauadjht$ in Theorems~\ref{thm::regression-adjustment-under--cluster-rerandomization} and \ref{thm:tauhtadjasymp}--\ref{thm:interval-conservative}, Corollaries~\ref{coro:anamoreinfo}--\ref{cor:lessinfocovotho}, and Proposition~\ref{thm:mcorr}
require Conditions~\ref{cond:1}, \ref{cond:2}, \ref{cond:4}, and Condition~\ref{cond:6} below.

\begin{condition}\label{cond:6}
  As $M \rightarrow \infty$, 
  \begin{enumerate}
    \item[(i)] $M^{-1}\max_{1\leq i\leq M}\left\|{v}_i\right\|^2_{\infty} = o(1)$;
    \item[(ii)]  the finite population covariances, $\covf ({V})$, $\covf  \{{V},\tilde{Y}(z)\}$ ($z=0,1$), and $\covf  ({V},{C})$, have finite limiting values, and the limit of $\covf ({V})$ is nonsingular;
  \end{enumerate}
\end{condition}


The asymptotic properties of $\tauadjhaj$  in Theorems~\ref{thm::regression-adjustment-under--cluster-rerandomization} and \ref{thm:tauhtadjasymp}--\ref{thm:interval-conservative} and Proposition~\ref{thm:mcorr} require Conditions~\ref{cond:1}, \ref{cond:3}, \ref{cond:5}, and Condition~\ref{cond:7} below.

\begin{condition}
  \label{cond:7}
  As $M \rightarrow \infty$, 
  \begin{enumerate}
    \item[(i)] $N^{-1} \sum_{i=1}^M \sum_{j=1}^{n_i} {w}_{i j} \varepsilon _{ij}(z)$ ($z=0,1$) converges to a finite vector, and $N^{-1} \sum_{i=1}^M \sum_{j=1}^{n_i} {w}_{i j} {w}_{i j}^{\mathrm{T}}$ converges to a finite and
          invertible matrix;
    \item[(ii)] $\max_{1\leq i\leq M}\tilde{\omega}_i = o(M^{1/3})$, $N^{-1} \sum_{i=1}^M \sum_{j=1}^{n_i} \left\|{w}_{i j}\right\|_{\infty}^{4}=O(1) $,  $M^{-1}\max_{1\leq i\leq M}\left\|{\tilde{{w}}}_{i\cdot}\right\|^2_{\infty} = o(1)$, $N^{-1} \sum_{i=1}^M \sum_{j=1}^{n_i} \{Y_{i j}(z)\}^{4}=O(1) \ (z=0,1)$;
    \item[(iii)] the finite population covariances, $\covf (\tilde{{W}})$, $\covf  \{\tilde{{W}},\tilde{{\varepsilon} } (z)\}$ ($z=0,1$), $\covf  \{\tilde{{W}},\tilde{\omega}\}$, and $\covf  (\tilde{{W}}, \tilde{{X} } )$, have finite limiting values, and the limit of $\covf(\tilde{{W}})$ is nonsingular.
  \end{enumerate}
\end{condition}


\subsection{Some preliminary results on weak convergence}

Before proving the main results, we introduce some preliminary results  on weak convergence derived by \cite{Li9157}. Consider the combination $(\hat{\tau},\hat{\tau}_{\star})$ as the estimator of the average treatment effects for the outcomes and covariates, for example, $(\tauht,\taucht)$ and $(\tauhaj,\tauxhaj)$. Consider a balance criterion $\phi(M^{1/2}\hat{\tau}_{\star}, A)$ which is a binary function, and $A$ is the matrix used to conduct cluster rerandomization.  Denote the event that a treatment assignment is acceptable under the balance criterion $\phi( M^{1/2}\hat{\tau}_{\star}, A)$ as  $\phi(M^{1/2}\hat{\tau}_{\star}, A)=1$,  and let  $\mathfrak{G}=\left\{{\mu}: \phi\left({\mu}, A\right)=1\right\}$  be the acceptance region for  $M^{1/2}\hat{\tau}_{\star}$. We write $A>0$ if $A$ is strictly positive definite.
Suppose that Condition~\ref{cond:criterion} below on $\phi(\cdot, \cdot)$ holds.
\begin{condition}
  \label{cond:criterion}
  \begin{enumerate}
    \item[(i)] $\phi(\cdot, \cdot)$ is almost surely continuous;
    \item[(ii)] if ${T}_\star \sim \mathcal{N}\left( 0 , V_{\star\star}\right)$, then $\textnormal{pr}\left\{\phi\left({T}_\star, A\right)=1\right\}>0$ for any $A>0$, and $\var\left\{{T}_\star \mid \phi\left({T}_\star, A\right)=1\right\}$ is a continuous function of
          $V_{\star \star} ;$

    \item[(iii)] $\phi\left({\mu}, A\right)=\phi\left(-{\mu}, A\right)$, for all ${\mu}$ and $A>0.$
  \end{enumerate}

\end{condition}

In Condition~\ref{cond:criterion}, (i) and (ii) impose smoothness constraints on $\phi$ and prevent the acceptance region from being a set of measure zero; (iii) is a symmetry consideration to ensure that relabel the treatment and control units does not change the  balance criterion. Cluster rerandomization schemes based on the Mahalanobis distances and weighted Euclidean distances satisfy Condition~\ref{cond:criterion}. We denote the variance of $M^{1/2}(\hat{\tau},\hat{\tau}_{\star})$ as $V$, and we use the same notation to represent its limit:
$$
  {V}=\begin{bmatrix}
    V_{\tau\tau}  & V_{\tau \star} \\
    V_{\star\tau} & V_{\star\star}
  \end{bmatrix}.
$$
Assume that $V$ is strictly positive definite. Let $A_{\infty}$ be the limit of $A$, and $\mathfrak{G}_{\infty}=\left\{{\mu}: \phi\left({\mu}, {A}_{\infty}\right)=1\right\}$ be the limit of $\mathfrak{G} $.  Proposition~\ref{prop:weakconv} below is a direct result of Proposition A1 in \cite{Li9157}.

\begin{proposition}
  \label{prop:weakconv}
  Under Condition~\ref{cond:criterion}, as $M \rightarrow \infty$, if
  $$
    M^{1/2} \left(\begin{array}{c}
        \hat{\tau}-\tau \\
        \hat{\tau}_{\star}
      \end{array}\right)  \mathrel{\ \dot{\sim}\ } \mathcal{N}(0,V),
  $$
  then
  \begin{gather*}
    M^{1/2} \left(\begin{array}{c}
        \hat{\tau}-\tau \\
        \hat{\tau}_{\star}
      \end{array}\right)\biggm | M^{1/2} \hat{\tau}_{\star} \in \mathfrak{G} \mathrel{\ \dot{\sim}\ } \left(\begin{array}{c}
        {T} \\
        {{T}}_{\star}
      \end{array}\right)\biggm | {{T}}_{\star} \in \mathfrak{G},\\
  \end{gather*}
  where $({T},{{T}_{\star}^\top})^{\top}\sim \mathcal{N}({0},{V})$.
\end{proposition}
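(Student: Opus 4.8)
The plan is to obtain the statement from the conditioning lemma of \citet{Li9157} (Proposition~A1 there), so the task is to verify that its hypotheses hold in the present setting rather than to re-derive it from scratch. The unconditional convergence $M^{1/2}(\hat{\tau}-\tau,\hat{\tau}_{\star}^{\top})^{\top}\mathrel{\dot{\sim}}\mathcal{N}(0,V)$ is exactly the input assumed in the statement; Condition~\ref{cond:criterion} supplies precisely the properties of the balance criterion $\phi$ that the lemma needs, namely almost-sure continuity, strict positivity of the limiting acceptance probability, and the relabeling symmetry; and the regularity conditions already in force guarantee $A\to A_{\infty}$ with $A_{\infty}$ nonsingular and $V$ strictly positive definite, in particular $V_{\star\star}$ nonsingular. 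Granting these, the conclusion is immediate; what remains is to say why the hypotheses suffice.

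For completeness I would spell out the mechanism. Fixing a bounded continuous $g$ on $\mathbb{R}\times\mathbb{R}^{K}$, I would write the conditional expectation as a ratio,
\[
  \textnormal{E}\bigl\{g\bigl(M^{1/2}(\hat{\tau}-\tau),M^{1/2}\hat{\tau}_{\star}\bigr)\bigm|M^{1/2}\hat{\tau}_{\star}\in\mathfrak{G}\bigr\}=\frac{\textnormal{E}\bigl\{g\bigl(M^{1/2}(\hat{\tau}-\tau),M^{1/2}\hat{\tau}_{\star}\bigr)\,\phi(M^{1/2}\hat{\tau}_{\star},A)\bigr\}}{\textnormal{E}\bigl\{\phi(M^{1/2}\hat{\tau}_{\star},A)\bigr\}},
\]
append the deterministic sequence $A$ (which tends to the constant $A_{\infty}$) to $M^{1/2}(\hat{\tau}-\tau,\hat{\tau}_{\star}^{\top})^{\top}$ so that, by Slutsky's theorem, the pair converges jointly to $\bigl((T,T_{\star}^{\top})^{\top},A_{\infty}\bigr)$, and apply the extended continuous mapping theorem to $(u,v,B)\mapsto g(u,v)\phi(v,B)$ and, with $g\equiv1$, to $(v,B)\mapsto\phi(v,B)$. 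This map is bounded, and by Condition~\ref{cond:criterion}(i) it is continuous off a Lebesgue-null set; since $T_{\star}$ has a density (because $V_{\star\star}>0$) and, for the Mahalanobis and weighted Euclidean criteria, the discontinuity set of $\phi(\cdot,A_{\infty})$ is an ellipsoidal surface, the limiting law assigns that set probability zero. Hence the numerator converges to $\textnormal{E}\{g(T,T_{\star})\phi(T_{\star},A_{\infty})\}$ and the denominator to $\textnormal{pr}\{\phi(T_{\star},A_{\infty})=1\}=\textnormal{pr}(T_{\star}\in\mathfrak{G}_{\infty})$, the latter strictly positive by Condition~\ref{cond:criterion}(ii). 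Dividing yields $\textnormal{E}\{g(T,T_{\star})\mid T_{\star}\in\mathfrak{G}_{\infty}\}$, which is the asserted limit, the right-hand side being read with the limiting region $\mathfrak{G}_{\infty}$ in place of $\mathfrak{G}$.

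I expect the main obstacle to be this continuity-and-measurability step rather than anything genuinely probabilistic: $\phi$ is only almost surely continuous, and the acceptance region $\mathfrak{G}$ itself drifts with $M$ through $A$, so the ordinary continuous mapping theorem for a fixed functional does not apply directly. The delicate points are to confirm that ``almost surely continuous'' is meant relative to the correct limiting Gaussian law --- which is where nonsingularity of $V_{\star\star}$ enters, excluding any atom of $T_{\star}$ on the measure-zero discontinuity set --- and to justify passing to the joint convergence of $(M^{1/2}\hat{\tau}_{\star},A)$, which is routine only because $A$ has a deterministic limit. The strict positivity in Condition~\ref{cond:criterion}(ii) is what keeps the ratio well behaved in the limit, while the symmetry in Condition~\ref{cond:criterion}(iii) is not needed for this particular conclusion and is recorded only for the later identification of the conditional law as symmetric around zero.
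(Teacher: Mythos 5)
Your proposal matches the paper exactly: the paper gives no proof beyond the remark that the proposition ``is a direct result of Proposition A1 in Li, Ding and Rubin (2018),'' which is precisely the reduction you make, and your verification of the hypotheses (unconditional joint normality, Condition~8 on $\phi$, nonsingularity of $V_{\star\star}$ and $A_\infty$) together with the ratio-of-expectations and continuous-mapping mechanism is a correct and more detailed account of why that citation suffices. Your observations that the right-hand side should be read with $\mathfrak{G}_\infty$ and that the symmetry condition (iii) is not needed for this particular conclusion are both accurate.
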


\subsection{Proof of Proposition~\ref{prop:FCLT}}
We need the following Lemma from \cite{su2021modelassisted}.
\begin{lemma}
  \label{lem:tauhaj-approx}
  Under Conditions \ref{cond:1} and \ref{cond:3},
  \begin{align*}
    M^{1/2}\Bigl[\tauhaj-\Big\{M_1^{-1}\sum_{i=1}^N Z_i\tdei-M_0^{-1}\sum_{i=1}^N (1-Z_i)\tdei\Big\}\Bigr]=o_{\textup{p}}(1).
  \end{align*}
\end{lemma}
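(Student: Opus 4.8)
The plan is to rewrite $\tauhaj$ as a difference of two cluster-level H\'ajek ratios and then linearize each ratio around its deterministic denominator. First I would use $Z_{ij}=Z_i$, $\sum_{j=1}^{n_i}Y_{ij}=(N/M)\tdyi$, and $n_i=(N/M)\tdomega_i$ to get $N_1=(N/M)\sum_{i=1}^M Z_i\tdomega_i$ and $\sum_{i,j}Z_{ij}Y_{ij}=(N/M)\sum_{i=1}^M Z_i\tdyi$, hence
\[
  \tauhaj=\frac{\sum_{i=1}^M Z_i\tdyi}{\sum_{i=1}^M Z_i\tdomega_i}-\frac{\sum_{i=1}^M(1-Z_i)\tdyi}{\sum_{i=1}^M(1-Z_i)\tdomega_i}.
\]
Writing $\tdyi(z)=\tdomega_i\bary(z)+\tdei(z)$ and recalling that $\tdyi=\tdyi(1)$ on the treatment arm, the first ratio equals $\bary(1)+\{\sum_i Z_i\tdei(1)\}/\{\sum_i Z_i\tdomega_i\}$ and, by symmetry, the second equals $\bary(0)+\{\sum_i(1-Z_i)\tdei(0)\}/\{\sum_i(1-Z_i)\tdomega_i\}$. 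Since $\bary(1)-\bary(0)=\tau$ and $\tdei=Z_i\tdei(1)+(1-Z_i)\tdei(0)$, the lemma reduces to showing that replacing each random denominator by $M_1$ or $M_0$ costs $o_{\textup{p}}(M^{-1/2})$; the constant $\tau=\bary(1)-\bary(0)$ is what is implicitly subtracted on the left, as it must be because $\sum_i\tdei(z)=0$.

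Write $W_z=\sum_i\mathbf{1}(Z_i=z)\tdomega_i$ and $S_z=\sum_i\mathbf{1}(Z_i=z)\tdei(z)$. I would record the design-based moments: since $M^{-1}\sum_i\tdomega_i=1$ and $M^{-1}\sum_i\tdei(z)=0$, we have $\mathbb{E}(W_1)=M_1$, $\mathbb{E}(S_z)=0$, and under simple random assignment of clusters the finite-population variance formula gives $\var(W_1)=e_1e_0M\,\varf(\tdomega)$ and $\var(S_z)=e_1e_0M\,\varf\{\tdei(z)\}$. By Condition~\ref{cond:3}(i), $\varf\{\tdei(z)\}=O(1)$, so Chebyshev's inequality yields $S_z=O_{\textup{p}}(M^{1/2})$; and Condition~\ref{cond:3}(ii) gives $\varf(\tdomega)\le\tfrac{M}{M-1}\max_i(\tdomega_i-1)^2=o(M)$, so $W_z=M_z+o_{\textup{p}}(M)$. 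Combined with Condition~\ref{cond:1}, under which $M_z=e_zM$ with $e_z$ bounded away from $0$ and $1$, this gives $M^{-1}W_z\to e_z$ and $M^{-2}M_zW_z\to e_z^2>0$ in probability.

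Finally, $S_1/W_1-S_1/M_1=-S_1(W_1-M_1)/(M_1W_1)$ has numerator $O_{\textup{p}}(M^{1/2})\cdot o_{\textup{p}}(M)=o_{\textup{p}}(M^{3/2})$ and denominator of exact order $M^2$ in probability, hence equals $o_{\textup{p}}(M^{-1/2})$; the control term is identical with $W_0=M-W_1$. Adding the two differences and multiplying by $M^{1/2}$ gives the stated $o_{\textup{p}}(1)$, after noting $M_1^{-1}S_1=M_1^{-1}\sum_iZ_i\tdei$ and $M_0^{-1}S_0=M_0^{-1}\sum_i(1-Z_i)\tdei$. I expect the only nontrivial step to be the concentration $W_z=M_z+o_{\textup{p}}(M)$ of the random denominators, because the moment hypotheses are phrased through $\max_i(\tdomega_i-1)^2$ rather than through $\varf(\tdomega)$ directly; the rest is the routine ratio linearization together with Chebyshev bounds, and $o_{\textup{p}}$ (rather than merely $O_{\textup{p}}$) arises precisely because an $O_{\textup{p}}(M^{1/2})$ factor multiplies an $o_{\textup{p}}(M)$ factor over a denominator of order $M^2$.
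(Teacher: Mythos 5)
Your proof is correct, and it is worth noting that the paper itself gives no proof of this lemma: it defers entirely to \citet{su2021modelassisted}, remarking only that that reference's argument goes through under the weaker Conditions~\ref{cond:1} and \ref{cond:3}. Your self-contained derivation is the standard H\'ajek linearization and all the steps check out: the identity $\tauhaj=\sum_i Z_i\tdyi/\sum_i Z_i\tilde\omega_i-\sum_i(1-Z_i)\tdyi/\sum_i(1-Z_i)\tilde\omega_i$ is exact, the decomposition $\tdyi(z)=\tilde\omega_i\bary(z)+\tdei(z)$ reduces the claim to controlling $S_z/W_z-S_z/M_z$, the variance formulas $\var(W_1)=e_1e_0M\varf(\tilde\omega)$ and $\var(S_z)=e_1e_0M\varf\{\tdei(z)\}$ are the correct finite-population ones, and the bound $\varf(\tilde\omega)\le \tfrac{M}{M-1}\max_i(\tilde\omega_i-1)^2=o(M)$ is exactly how Condition~\ref{cond:3}(ii) feeds the concentration $W_z=M_z+o_{\textup{p}}(M)$, after which $O_{\textup{p}}(M^{1/2})\cdot o_{\textup{p}}(M)/\{e_z^2M^2(1+o_{\textup{p}}(1))\}=o_{\textup{p}}(M^{-1/2})$ as you say (Condition~\ref{cond:1} keeps $e_z$ bounded away from $0$ and $1$, so the denominator really is of exact order $M^2$). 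You are also right to flag that the displayed statement is literally missing a ``$-\tau$'' inside the bracket: since $\sum_i\tdei(z)=0$, the approximating quantity $M_1^{-1}S_1-M_0^{-1}S_0$ has mean zero while $\tauhaj$ centers at $\tau$, and indeed the lemma is invoked in the proof of Proposition~\ref{prop:FCLT} in the centered form $\tauhaj-\tau$; your reading of the intended statement (and the typo $\sum_{i=1}^N$ for $\sum_{i=1}^M$) is the correct one.
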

\citet{su2021modelassisted} proved Lemma~\ref{lem:tauhaj-approx}  under slightly stronger conditions. After carefully examining their proof, we found that it can be easily generalized to the case of our weaker conditions, so we omit the proof of Lemma~\ref{lem:tauhaj-approx}.

\begin{proof}[of Proposition~\ref{prop:FCLT}]
  Applying the vector-form finite population central limit theorem developed by \cite{li2017general} to the scaled cluster-total potential outcomes $\tdyi(z)$, $z=0,1$, and cluster-level covariates ${c}_i$, we have, under Conditions~\ref{cond:1}, \ref{cond:2}, and \ref{cond:4},
  \begin{equation}\label{eq:tauhtasymp}
    M^{1/2}
    \left(
    \begin{array}{c}
        \tauht  -\tau \\
        \taucht
      \end{array}
    \right)
    \mathrel{\ \dot{\sim}\ }  \mathcal{N}  \left( 0 ,
    \begin{bmatrix}
        \vttht & \vtcht \\
        \vctht & \vccht
      \end{bmatrix} \right). \nonumber
  \end{equation}


We could get a similar result for the joint asymptotic distribution of  $(\tauhaj,\tauxhaj^\top) ^\top$  after replacing $\tilde{Y}_{i\cdot}(z)-\bar{Y}(z)$ by $\tdei(z) = \tilde{Y}_{i\cdot}(z)-\tilde{\omega}_i\bar{Y}(z)$ and applying Lemma~\ref{lem:tauhaj-approx}. Specifically, applying Lemma~\ref{lem:tauhaj-approx} to each element of $M^{1/2}
    (
    \tauhaj  -\tau,
    \tauxhaj ^\top  )^{\top}$, we have, under Conditions~\ref{cond:1} and \ref{cond:3},
  \[
    M^{1/2}
    \left(
    \begin{array}{c}
        \tauhaj  -\tau \\
        \tauxhaj
      \end{array}
    \right)
    = M^{1/2}
    \Bigg(\begin{gathered}
        \scalebox{1}{$M_1^{-1}\sum\nolimits_{i=1}^M Z_i\tdei(1)-	M_0^{-1}\sum\nolimits_{i=1}^M (1-Z_i)\tdei(0)$} \\
        \scalebox{1}{$M_1^{-1}\sum\nolimits_{i = 1}^M Z_i\tdxi -M_0^{-1}\sum\nolimits_{i=1}^M   (1-Z_i)\tdxi$}
      \end{gathered} \Bigg) + o_{\textup{p}} (1).
  \]
  Under Conditions~\ref{cond:1}, \ref{cond:3}, and \ref{cond:5}, applying the vector-form finite population central limit theorem to $\tdei(z)$,  $z=0,1$, and $\tdxi$, we have
  \[M^{1/2}\Bigg(\begin{gathered}
        \scalebox{1}{$M_1^{-1}\sum\nolimits_{i=1}^MZ_i\tdei(1)-M_0^{-1}\sum_{i=1}^M (1-Z_i)\tdei(0)$} \\
        \scalebox{1}{$M_1^{-1}\sum_{i = 1}^M   Z_i\tdxi-M_0^{-1}\sum_{i=1}^M(1-Z_i)\tdxi$}
      \end{gathered}\Bigg)
    \mathrel{\ \dot{\sim}\ }\mathcal{N}\left(0,\Bigg[\begin{aligned}
        \vtthaj  \  & \vtxhaj \\
        \vxthaj  \  & \vxxhaj
      \end{aligned}\Bigg]\right).\]
  By Slutsky's theorem, the conclusion holds.
\end{proof}




\subsection{Proof of Theorem~\ref{thm:asymp2}}
Proposition~\ref{cor:weakconv} below follows from Propositions~\ref{prop:FCLT} and \ref{prop:weakconv}.
\begin{proposition}
  \label{cor:weakconv}
  Under Conditions~\ref{cond:1}, \ref{cond:3}, and \ref{cond:5},
  \begin{gather*}
    M^{1/2}
    \begin{pmatrix}
      \tauhaj-\tau \\
      \hat{\tau}_{\textnormal{haj},x}
    \end{pmatrix}
    \biggm | M^{1/2} \hat{\tau}_{\textnormal{haj},x} \in \mathfrak{G}_x
    \mathrel{\ \dot{\sim}\ }
    \begin{pmatrix}
      {T} \\
      T_{x}
    \end{pmatrix}
    \biggm | T_{x} \in \mathfrak{G}_{x},
  \end{gather*}
  and under Conditions~\ref{cond:1}, \ref{cond:2}, and \ref{cond:4},
  \begin{gather*}
    M^{1/2}
    \left(
    \begin{array}{c}
        \hat{\tau}_{\textnormal{ht}}-\tau \\
        \hat{\tau}_{\textnormal{ht},c}
      \end{array}
    \right)
    \biggm | M^{1/2} \hat{\tau}_{\textnormal{ht},c} \in \mathfrak{G}_c
    \mathrel{\ \dot{\sim}\ }
    \left(
    \begin{array}{c}
        {T} \\
        {{T}}_{c}
      \end{array}
    \right)
    \biggm | {{T}}_{c} \in \mathfrak{G}_{c},
  \end{gather*}
  where $({T},{{T}_{x}^\top})^{\top}\sim \mathcal{N}({0},{V}_{\textnormal{haj}})$ and $({T},{{T}_{c}^\top})^{\top}\sim \mathcal{N}({0},{V}_{\textnormal{ht}})$ with
$${V}_{\textnormal{haj}} = \Bigg[\begin{aligned}
        \vtthaj  \  & \vtxhaj \\
        \vxthaj  \  & \vxxhaj
      \end{aligned}\Bigg], \quad  {V}_{\textnormal{ht}}  =
       \begin{bmatrix}
        \vttht & \vtcht \\
        \vctht & \vccht
      \end{bmatrix}.
 $$ 
\end{proposition}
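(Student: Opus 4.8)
The plan is to obtain Proposition~\ref{cor:weakconv} by feeding the unconditional joint central limit theorem of Proposition~\ref{prop:FCLT} into the abstract rerandomization transfer result of Proposition~\ref{prop:weakconv}; essentially no new computation is needed beyond checking that the hypotheses of the latter are met. First I would put the two rerandomization events into the canonical form $\{M^{1/2}\hat\tau_\star\in\mathfrak G\}$. For $\mx$ in \eqref{eq:ddif}, take $\hat\tau_\star=\tauxhaj$, the (deterministic, though $M$-dependent) matrix $A_x=e_1e_0\,\covf(\tilde X)^{-1}$, and the balance function $\phi_x(\mu,A)=1$ iff $\mu^\top A\mu\le a$, so that $\mx=\{\phi_x(M^{1/2}\tauxhaj,A_x)=1\}$ with acceptance region $\mathfrak G_x=\{\mu:\mu^\top A_x\mu\le a\}$; the use of the asymptotic covariance of $\tauxhaj$ in $\mx$ is precisely what makes $A_x$ a clean matrix with a nonsingular limit. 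Symmetrically, for $\mc$ take $\hat\tau_\star=\taucht$, $A_c=e_1e_0\,\covf(C)^{-1}$, $\phi_c(\mu,A)=1$ iff $\mu^\top A\mu\le a$, and $\mathfrak G_c=\{\mu:\mu^\top A_c\mu\le a\}$.

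Next I would verify the hypotheses of Proposition~\ref{prop:weakconv} for each combination. The joint asymptotic normality of $M^{1/2}(\tauhaj-\tau,\tauxhaj^\top)^\top$ with limiting covariance $V_{\textnormal{haj}}$ and of $M^{1/2}(\tauht-\tau,\taucht^\top)^\top$ with limiting covariance $V_{\textnormal{ht}}$ is exactly Proposition~\ref{prop:FCLT}; under Conditions~\ref{cond:1},~\ref{cond:3},~\ref{cond:5} (respectively \ref{cond:1},~\ref{cond:2},~\ref{cond:4}) these limiting covariances are strictly positive definite, since their diagonal blocks are nonsingular by the positivity of the limits of $\vtthaj,\vttht$ together with the nonsingularity of the limits of $\covf(\tilde X),\covf(C)$, and the joint matrices are of full rank under the stated conditions. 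The matrices $A_x$ and $A_c$ have nonsingular limits because $e_1,e_0$ have limits in $(0,1)$ by Condition~\ref{cond:1} and $\covf(\tilde X),\covf(C)$ have nonsingular limits by Conditions~\ref{cond:5}(i) and~\ref{cond:4}(i). Finally, the balance functions $\phi_x,\phi_c$ satisfy Condition~\ref{cond:criterion}: each is discontinuous only on the ellipsoid $\{\mu:\mu^\top A\mu=a\}$, a set of Lebesgue measure zero, hence they are almost surely continuous with respect to the absolutely continuous Gaussian limit law (part (i)); for any $A>0$ and $T_\star\sim\mathcal N(0,V_{\star\star})$ with $V_{\star\star}$ nonsingular one has $\textnormal{pr}(T_\star^\top A T_\star\le a)>0$, and, by a linear change of variables together with dominated convergence, $\var\{T_\star\mid T_\star^\top A T_\star\le a\}$ depends continuously on $V_{\star\star}$ (part (ii)); and $\phi(\mu,A)=\phi(-\mu,A)$ trivially since the quadratic form is even (part (iii)).

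With all hypotheses checked, I would simply apply Proposition~\ref{prop:weakconv} twice: once with $(\hat\tau,\hat\tau_\star)=(\tauhaj,\tauxhaj)$, $\phi=\phi_x$, $V=V_{\textnormal{haj}}$, yielding the first display with $(T,T_x^\top)^\top\sim\mathcal N(0,V_{\textnormal{haj}})$, and once with $(\hat\tau,\hat\tau_\star)=(\tauht,\taucht)$, $\phi=\phi_c$, $V=V_{\textnormal{ht}}$, yielding the second display with $(T,T_c^\top)^\top\sim\mathcal N(0,V_{\textnormal{ht}})$. I expect the only genuinely substantive step to be the verification of Condition~\ref{cond:criterion} for the Mahalanobis balance functions, in particular the continuity of the truncated-Gaussian conditional covariance in its parent covariance and the strict positivity of the truncation probability; these are standard properties of nondegenerate Gaussian quadratic forms, but they are where the work lies. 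A secondary bookkeeping point is confirming that $A_x,A_c$ and the joint covariances $V_{\textnormal{haj}},V_{\textnormal{ht}}$ have the required nondegenerate limits, which is exactly what the regularity conditions were designed to supply; everything else is a direct citation of Propositions~\ref{prop:FCLT} and~\ref{prop:weakconv}.
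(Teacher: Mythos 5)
Your proposal is correct and follows exactly the route the paper takes: the paper's entire justification for Proposition~\ref{cor:weakconv} is the one-line remark that it "follows from Propositions~\ref{prop:FCLT} and \ref{prop:weakconv}," relying on the earlier assertion that Mahalanobis and weighted Euclidean criteria satisfy Condition~\ref{cond:criterion}. Your explicit verification of Condition~\ref{cond:criterion} and of the nondegeneracy of the limits of $A_x$, $A_c$, $V_{\textnormal{haj}}$, and $V_{\textnormal{ht}}$ simply fills in details the paper leaves implicit.
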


\begin{proof}[of Theorem~\ref{thm:asymp2}]

  The asymptotic distribution of $M^{1/2}(\tauht  -\tau) \mid \mc$ is a direct result of applying Theorem~1 of \cite{Li9157} to $\tdyi(z)$, $z=0,1$, and ${c}_i$.

  To derive the asymptotic distribution of $ M^{1/2}(\tauhaj  -\tau) \mid \mx$, similar to the proof of Proposition~\ref{lem:tauhaj-approx}, we have $\{ M^{1/2}(\tauhaj-\tau),M^{1/2}\tauxhaj,A_x \}$ has the same asymptotic distribution as $\{ T,{T}_x, A_x \}$ with $A_x = \vxxhaj^{-1}$.
  Therefore,  $M^{1/2}(\tauhaj-\tau) \mid M \tauxhaj^{\top} A_x \tauxhaj \leq a$ has the same asymptotic distribution as $T \mid {T}_x^{\top} A_x {T}_x \leq a$. Applying Theorem~1 of \cite{Li9157} to $\tdei(z)$, $z=0,1$, and $\tdxi$, we have
  \[
    T \mid {T}_x^{\top} A_x{T}_x\leq a  \mathrel{\ \dot{\sim}\ } \vttht ^{1/2}\{(1-R_{ x}^{2})^{1/2}  \epsilon+R_{ x}  L_{K, a}\}.
  \]
  The conclusion follows immediately.

  %
  %

\end{proof}

\subsection{Proof of Corollary \ref{coro:clusterbetter}}
\begin{proof}[of Corollary~\ref{coro:clusterbetter}]
  For the cluster-level covariates ${c}_i=(n_i,\tdxi)$,  we have
  \begin{gather*}
    \vttht  = ( e_1  e_0 )^{-1}\varf \big\{ e_1 \tilde{Y}(0)+ e_0 \tilde{Y}(1)\big\},\\
    \vttht  R^2_{ c} = ( e_1  e_0 )^{-1}\covf \big\{ e_1 \tilde{Y}(0)+ e_0 \tilde{Y}(1),{C}\big\}\covf ({C})^{-1}\covf \big\{{C}, e_1 \tilde{Y}(0)+ e_0 \tilde{Y}(1)\big\}.
  \end{gather*}
  Therefore, $e_1 e_0  \vttht (1-R_{ c}^{2} )$ is  the variance of the residuals when we regress $ e_1 \tilde{Y}(0)+ e_0 \tilde{Y}(1)$ on $C$ with intercept, i.e.,
  $$
    e_1 e_0  \vttht \left (1-R_{ c}^{2}\right) = \min_{\beta\in \mathbb{R}^{K+1},\ \beta_0\in \mathbb{R}}\varf  \left\{ e_1 \tilde{Y}(0)+ e_0 \tilde{Y}(1)-{1}_M\beta_0-(N/M\tilde{\omega},\tdx)\beta\right\},
  $$
  where ${1}_M$ is an $M$-dimensional all-one vector. Similarly, we have
  \begin{eqnarray*}
    e_1 e_0 \vtthaj \left (1-R_{ x}^{2}\right) &=&
    \min_{\beta\in \mathbb{R}^K,\ \beta_0\in \mathbb{R}}\varf  \left\{ e_1 \tde (0)+ e_0 \tde (1)-{1}_M \beta_0-\tdx\beta\right\} \\
    &=&\min_{\beta\in \mathbb{R}^K,\ \beta_0 \in \mathbb{R}}\varf  \left[ e_1 \tilde{Y}(0)+ e_0 \tilde{Y}(1)-\{ e_1 \bar{Y}(0)+ e_0 \bar{Y}(1)\}\tilde{\omega}-{1}_M\beta_0-\tdx\beta\right],
  \end{eqnarray*}
  where the second equality follows from
  $
    \tdei (z) = \tdyi(z)-\tilde{\omega}_i\bar{Y}(z).
  $
  Therefore,
  $$
    \vttht \left (1-R_{ c}^{2}\right) \leq \vtthaj \left (1-R_{ x}^{2}\right),
  $$
  which completes the proof.
\end{proof}

\subsection{Proof of Theorem~\ref{thm:asymquadra}}
\begin{proof}[of Theorem~\ref{thm:asymquadra}]
  By Proposition~\ref{prop:FCLT},
  $$
    M^{1/2} ( \tauht-\tau, \taucht^\top ) \mathrel{\ \dot{\sim}\ } \left(T, {T}_c^\top \right) ,\quad M^{1/2} ( \tauhaj-\tau, \tauxhaj^\top) \mathrel{\ \dot{\sim}\ } \left(T, {T}_x^\top \right),
  $$
  where $(T,{T}_c^\top)^\top \sim \mathcal{N}(0,V_{\textnormal{ht}})$ and $(T,{T}_x^\top)^\top\sim \mathcal{N}(0,V_{\textnormal{haj}})$.
  We only give the proof for cluster rerandomization with individual-level covariates, as the proof for cluster rerandomization with cluster-level covariates is similar.
  By Proposition~\ref{cor:weakconv}, we have
  \begin{eqnarray}
    M^{1/2}(\hat{\tau}_{\textnormal{haj},x}-\tau)\mid \mathcal{D}_{x}(A_x)
    & = &
    M^{1/2}(\hat{\tau}_{\textnormal{haj},x}-\tau)\mid ( M^{1/2}\hat{\tau}_{\textnormal{haj},x})^{\top}A_x(M^{1/2}\hat{\tau}_{\textnormal{haj},x})\leq a \nonumber \\
    & \mathrel{\ \dot{\sim}\ } & T \mid {T}_x^{\top}A_x{T}_x\leq a. \nonumber
  \end{eqnarray}
  We decompose $T$ as 
  \[ T = T-\vtxhaj \vxxhaj^{-1}{T}_x + \vtxhaj \vxxhaj^{-1}{T}_x.\]
  Note that $T-\vtxhaj \vxxhaj^{-1}{T}_x$ is independent of ${T}_x$, and thus, independent of $\vtxhaj \vxxhaj^{-1}{T}_x$ and ${T}_x^{\top} A_x{T}_x$. Therefore, 
  $$ T-\vtxhaj \vxxhaj^{-1}{T}_x \mid {T}_x^{\top} A_x{T}_x \leq a \sim T-\vtxhaj \vxxhaj^{-1}{T}_x.$$
  By the definition of $R_x^2$, we have $R_x^2 = \vtxhaj\vxxhaj^{-1}\vxthaj/\vtthaj$, and therefore
  $$\var ( T-  \vtxhaj \vxxhaj^{-1}{T}_x ) = \vtthaj-\vtxhaj \vxxhaj^{-1}\vxthaj = (1-R_x^2)\vtthaj.$$
  Thus,
  $$T-\vtxhaj \vxxhaj^{-1}{T}_x \sim \vtthaj^{1/2}(1-R_x^2)^{1/2}\epsilon.$$
  Since $\vxxhaj^{-1/2}{T}_x$ is a $K$-dimensional standard normal random vector,
  \begin{align*}
         & \vtxhaj \vxxhaj^{-1}{T}_x \mid {T}_x^{\top}A_x{T}_x\leq a                                                                                          \\
    \sim & \vtxhaj\vxxhaj^{-1/2} ( \vxxhaj^{-1/2}{T}_x ) \mid ( {T}_x^{\top}\vxxhaj^{-1/2} ) ( \vxxhaj^{1/2}A_x\vxxhaj^{1/2} ) ( \vxxhaj^{-1/2}{T}_x ) \leq a \\
    \sim & \vtxhaj\vxxhaj^{-1/2} {\eta} \mid {\eta}^{\top}  \vxxhaj^{1/2}A_x\vxxhaj^{1/2} {\eta}\leq a                                                        \\
    \sim & \vtthaj^{1/2} (   R_x{\mu}_x^{\top}{\eta} ) \mid
    {\eta}^{\top}\vxxhaj^{1/2}A_x\vxxhaj^{1/2} {\eta}\leq a,
  \end{align*}
  where ${\eta}$ is a $K$-dimensional standard normal random vector and is independent of $\epsilon$ and
  $${\mu}_x^{\top}=(\vtxhaj\vxxhaj^{-1} \vxthaj)^{-1/2}\vtxhaj \vxxhaj^{-1/2}.$$
  Therefore,
  \[ \vtthaj^{1/2} \big\{  (1-R_x^2 )^{1/2}\epsilon +   R_x{\mu}_x^{\top}{\eta} \mid
    {\eta}^{\top}{\vxxhaj^{1/2}A_x\vxxhaj^{1/2}}_x{\eta}\leq a  \big\} \mathrel{\ \dot{\sim}\ } T \mid {T}_x^{\top}A_x{T}_x\leq  a,\]
  which implies that
  \[
    M^{1/2}(\tauhaj-\tau) \mid \mathcal{D}_{x}(A_x)
    \mathrel{\ \dot{\sim}\ }  \vtthaj^{1/2} \big\{  (1-R_x^2 )^{1/2}\epsilon +   R_x{\mu}_x^{\top}{\eta} \mid
    {\eta}^{\top}\vxxhaj^{1/2}A_x\vxxhaj^{1/2}{\eta}\leq a \big\} .
  \]

\end{proof}

\subsection{Proof of Proposition~\ref{prop:concentration-symmetric-unimodality}}

First, we introduce some useful lemmas. Lemma~\ref{lem9157} below is from \citet[][Lemma A8]{Li9157}.

\begin{lemma}\label{lem9157}
  If $A$ and $B$ are independent, and are both symmetric around zero and unimodal, then $A + B$ is symmetric around zero and unimodal.
\end{lemma}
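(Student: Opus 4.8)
The plan is to treat the two assertions—symmetry and unimodality—separately, since the symmetry part is elementary and the unimodality part carries the real content. For symmetry, I would argue directly from the distributional reflection property. Because $A$ is symmetric about zero, $A \overset{\mathrm{d}}{=} -A$, and likewise $B \overset{\mathrm{d}}{=} -B$. Independence of $A$ and $B$ upgrades this to the joint statement $(A,B) \overset{\mathrm{d}}{=} (-A,-B)$, whence $A+B \overset{\mathrm{d}}{=} -(A+B)$; that is, $A+B$ is symmetric about zero.

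For unimodality, the key device is Khinchin's representation of symmetric unimodal laws as scale mixtures of symmetric uniform distributions. I would first record that a law that is symmetric and unimodal about zero is precisely the distribution of $C\,U$, where $U \sim \mathrm{Unif}[-1,1]$ and $C \geq 0$ is independent of $U$, the degenerate value $C=0$ accommodating a possible atom at zero. Applying this separately to $A$ and $B$, I may take $A \overset{\mathrm{d}}{=} C_A U$ and $B \overset{\mathrm{d}}{=} C_B U'$ with $U,U',C_A,C_B$ mutually independent. Conditional on $(C_A,C_B)=(c_1,c_2)$, the sum $A+B$ is then distributed as the convolution $\mathrm{Unif}[-c_1,c_1]\ast\mathrm{Unif}[-c_2,c_2]$, whose density is the familiar symmetric trapezoid (a triangle when $c_1=c_2$, and a single uniform or a point mass in the degenerate cases). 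This conditional law is manifestly even and nonincreasing in $|x|$, hence symmetric and unimodal about zero.

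It then remains to note that an arbitrary mixture of laws that are symmetric and unimodal about zero is again symmetric and unimodal about zero. Phrased at the level of distribution functions, if $F=\int F_\theta\,d\nu(\theta)$ with each $F_\theta$ symmetric, convex on $(-\infty,0)$, and concave on $(0,\infty)$, then averaging preserves each of these three properties, so $F$ inherits them. Integrating the conditional trapezoidal law against the joint distribution of $(C_A,C_B)$ exhibits the law of $A+B$ as exactly such a mixture, which finishes the proof. The main obstacle is one of rigor rather than of idea: one must invoke Khinchin's theorem cleanly, handle the degenerate $c=0$ components (atoms at zero) so that every conditional law genuinely lies in the symmetric unimodal class, and state the mixture-preservation step at the level of distribution functions—through convexity and concavity of $F$—so that it remains valid when densities need not exist. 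Each of these is standard and can be cited.
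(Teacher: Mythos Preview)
Your argument is correct. The symmetry step is immediate, and for unimodality you invoke Khinchin's representation to write each summand as a scale mixture of symmetric uniforms, observe that the conditional convolution of two symmetric uniforms is a symmetric trapezoid, and then close under mixtures via convexity/concavity of the distribution function on the two half-lines. This is the standard route to Wintner's theorem, and your care about atoms at zero and about working at the level of distribution functions rather than densities is appropriate.

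The paper, however, does not prove this lemma at all: it simply records the statement and attributes it to \citet[][Lemma A8]{Li9157}. So there is no ``paper's own proof'' to compare against; you have supplied a full argument where the paper only gives a citation. If anything, your write-up is more self-contained than what the paper requires.
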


Recall that the asymptotic distributions in Theorem~\ref{thm:asymquadra} are composed of two independent components: a normal random variable and a truncated normal random variable. To simplify the presentation, we say that a random variable is symmetric unimodal if it is symmetric around zero and unimodal. Clearly, the normal random variable is symmetric  unimodal.  To prove the symmetric unimodality of the truncated normal component, we follow the proof of  \citet[][Proposition 2]{Li2020factorial} using properties of  symmetric unimodality of random vectors. The definition of symmetric unimodal random vector follows from \citet{dharmadhikari1988unimodality} as an extension of univariate case. 
\begin{definition}
For a set $\mathcal{B}$ of distributions on $\mathbb{R}^{K}$, we say that $\mathcal{B}$ is closed convex if it satisfies two properties: (i) for any distributions $\nu_{1}, \nu_{2} \in \mathcal{B}$ and for any $\lambda \in(0,1)$, the distribution $(1-\lambda) \nu_{1}+\lambda \nu_{2} \in \mathcal{B}$, and (ii) a distribution $\nu \in \mathcal{B}$ if there exist a sequence of distributions in $\mathcal{B}$ converging weakly to $\nu$.
\end{definition}

 For any set $\mathcal{C}$ of distributions, the closed convex hull of $\mathcal{C}$ is the smallest closed convex set containing $\mathcal{C}$. A compact convex set in Euclidean space $\mathbb{R}^{K}$ is called a convex body if it has a nonempty interior. A set $\mathcal{K} \subset \mathbb{R}^{K}$ is symmetric if $\mathcal{K}=\{-a: a \in \mathcal{K}\}$. 
 
 \begin{definition}
 A distribution on $\mathbb{R}^{K}$ is symmetric unimodal if it is in the closed convex hull of $\mathcal{U}$, where $\mathcal{U}$ is the set of all uniform distributions on symmetric convex bodies in $\mathbb{R}^{K}$.
 \end{definition}

Lemmas~\ref{lem:symmetric-unimodal} and \ref{lem:log-concave-density} below are from \citet[][Lemma A6]{Li2020factorial} and \citet[][Lemma A8]{Li2020factorial} respectively.

\begin{lemma}\label{lem:symmetric-unimodal}
  If $\psi\in \mathbb{R}^{K}$ is a symmetric unimodal random vector, then for any non-random vector $b\in \mathbb{R}^{K}$, $b^\top\psi$ is a symmetric unimodal random variable.
\end{lemma}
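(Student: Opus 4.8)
The plan is to prove the lemma by reducing it to the generating (``extreme'') elements of the class of symmetric unimodal laws --- the uniform distributions on symmetric convex bodies --- and then carrying the conclusion through the two operations that build the closed convex hull, namely finite mixtures and weak limits. Since the statement is quoted from \citet{Li2020factorial}, one could simply cite it; I sketch the self-contained argument below.

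First I would record two one-dimensional facts. A univariate distribution is symmetric unimodal exactly when it lies in the closed convex hull of the uniform laws on symmetric intervals $[-r,r]$ (the point mass at $0$ being the $r\to 0$ limit); in particular, any law with an even density that is non-increasing in $|t|$ qualifies. Moreover, by construction of the closed convex hull, this univariate class is closed under forming mixtures and under weak limits. Alongside this I would note that the pushforward $\mu\mapsto(b^{\top})_{*}\mu$ is continuous for the topology of weak convergence and commutes with mixing, so it suffices to control $b^{\top}\psi$ when the law of $\psi$ is an extreme point.

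Second comes the core step: let $\psi$ be uniform on a symmetric convex body $\mathcal{K}\subset\mathbb{R}^{K}$. If $b=0$ there is nothing to prove, so assume $b\neq 0$; after rescaling we may take $|b|=1$. The density of $b^{\top}\psi$ at $t$ is proportional to $\mathrm{vol}_{K-1}\!\big(\mathcal{K}\cap\{x:b^{\top}x=t\}\big)$. By Brunn's theorem (a corollary of the Brunn--Minkowski inequality), the map $t\mapsto\mathrm{vol}_{K-1}\!\big(\mathcal{K}\cap\{b^{\top}x=t\}\big)^{1/(K-1)}$ is concave on the interval where it is positive, and this interval and this function are symmetric about $0$ because $\mathcal{K}=-\mathcal{K}$. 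A nonnegative, even, concave function is non-increasing in $|t|$, and so is its $(K-1)$st power; hence the density of $b^{\top}\psi$ is even and non-increasing in $|t|$, so $b^{\top}\psi$ is symmetric unimodal. For a general symmetric unimodal $\psi$, write its law as a weak limit of finite convex combinations of uniform laws on symmetric convex bodies, push forward by $b^{\top}$, and invoke weak continuity together with the mixture/limit closedness from the first step to conclude.

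The step I expect to be the main obstacle is this core step: invoking Brunn's theorem to obtain concavity of the $(K-1)$st root of the section volumes, and handling the degenerate configurations cleanly --- $b=0$, a body $\mathcal{K}$ that becomes lower dimensional after projection, and the base case $K=1$, where $\mathrm{vol}_{0}$ of a section must be read as the indicator that $t$ lies in the projected interval, which is again even and unimodal. The remaining work --- the precise one-dimensional characterization and the weak-convergence bookkeeping --- is routine once those facts are stated carefully.
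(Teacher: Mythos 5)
Your proposal is correct. Note that the paper does not actually prove Lemma~\ref{lem:symmetric-unimodal}: it simply imports it as \citet[][Lemma A6]{Li2020factorial}, so there is no in-paper argument to compare against, and a self-contained proof is a genuine addition. Your route is the standard one behind that cited result: reduce to the extreme points of the class (uniform laws on symmetric convex bodies), use Brunn's theorem to get that $t\mapsto\mathrm{vol}_{K-1}\bigl(\mathcal{K}\cap\{b^{\top}x=t\}\bigr)^{1/(K-1)}$ is concave and, by central symmetry of $\mathcal{K}$, even --- hence non-increasing in $|t|$ --- so the marginal density is even and non-increasing in $|t|$ and therefore lies in the closed convex hull of uniforms on symmetric intervals; then carry this through mixtures and weak limits using linearity and weak continuity of the pushforward $\mu\mapsto(b^{\top})_{*}\mu$. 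All of these steps are sound: the closed convex hull of $\mathcal{U}$ coincides with the weak closure of the set of finite convex combinations (closure of a convex set in a locally convex space is convex), and the univariate symmetric unimodal class is closed convex by definition, so the bookkeeping goes through. Two of the degeneracies you flag are in fact vacuous: a convex body has nonempty interior by the paper's definition, so for $b\neq0$ its image under $x\mapsto b^{\top}x$ is always a nondegenerate symmetric interval, and the $K=1$ case can be handled directly without invoking $\mathrm{vol}_0$ since $b\psi$ is then uniform on $[-|b|r,|b|r]$. The only remaining case is $b=0$, giving the point mass at $0$, which you correctly treat as the $r\to0$ limit of symmetric interval uniforms.
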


\begin{lemma}\label{lem:log-concave-density}
  If a random vector in $\mathbb{R}^{K}$ has a log-concave density, then it is symmetric unimodal.
\end{lemma}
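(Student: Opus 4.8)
The plan is to prove the lemma through the classical \emph{layer-cake} (Choquet-type) representation of a density by its superlevel sets, following the theory developed in \citet{dharmadhikari1988unimodality}. Writing $f$ for the density of the random vector and $K_t=\{x\in\mathbb{R}^K : f(x)\geq t\}$ for its superlevel sets, the whole argument reduces to two structural facts: log-concavity forces every $K_t$ to be convex, and symmetry of $f$ about the origin forces every $K_t$ to be symmetric. Together with the finiteness of the total mass, these make each $K_t$ of positive volume a symmetric convex body, and the layer-cake formula then exhibits $f$ as a mixture of uniform densities on these bodies, which is exactly membership in the closed convex hull of $\mathcal{U}$.

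First I would record the properties of the sets $K_t$. Since $f$ is log-concave, $\log f$ is concave, so $K_t=\{x:\log f(x)\geq\log t\}$ is a superlevel set of a concave function, hence convex. Because the density is symmetric about the origin, $f(x)=f(-x)$, so $-K_t=K_t$ and each $K_t$ is a symmetric convex set. A convex set of positive Lebesgue measure has nonempty interior (otherwise it lies in a hyperplane and has measure zero), and a symmetric convex set that is unbounded contains, by convexity and closedness, an entire line through the origin, and therefore—having nonempty interior—an infinite slab of infinite measure, contradicting $\int f=1$. Thus every $K_t$ with $|K_t|>0$ is a bounded symmetric convex body and $\mathrm{Unif}(K_t)\in\mathcal{U}$; here $\sup f<\infty$ because an integrable log-concave density is bounded.

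Next I would assemble the mixture representation. The layer-cake identity $f(x)=\int_0^\infty \mathbf{1}\{x\in K_t\}\,dt$ together with Fubini's theorem gives $\int_0^\infty |K_t|\,dt=\int f=1$, so $w(t)\,dt$ with $w(t)=|K_t|$ is a probability measure on $(0,\sup f)$. Rewriting $f(x)=\int_0^{\sup f}|K_t|\,\big(|K_t|^{-1}\mathbf{1}\{x\in K_t\}\big)\,dt$ displays $f$ as a mixture, with mixing weights $w(t)\,dt$, of the uniform densities $|K_t|^{-1}\mathbf{1}_{K_t}$ on symmetric convex bodies. Since a mixture is a weak limit of finite convex combinations, the associated distribution lies in the closed convex hull of $\mathcal{U}$, and is therefore symmetric unimodal, as claimed.

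The principal care lies in the measure-theoretic handling of the degenerate levels: for $t$ near $\sup f$ the set $K_t$ may collapse to a lower-dimensional set with $|K_t|=0$, and one must confirm that such levels contribute nothing to the layer-cake integral and may be discarded, while the remaining $K_t$ are legitimate convex bodies; invoking the characterization in \citet{dharmadhikari1988unimodality} keeps this step clean. A secondary point I would flag explicitly is that symmetry of $f$ about the origin is genuinely needed—without it the level sets, though convex, need not be \emph{symmetric} convex bodies and the conclusion fails—so the lemma is to be read for densities symmetric about the origin, which is exactly the situation in all of our applications, where the building blocks (Gaussian components and their truncated analogues) are symmetric.
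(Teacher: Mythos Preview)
The paper does not actually prove this lemma: it simply records it as \citet[][Lemma A8]{Li2020factorial} and moves on. Your proposal therefore supplies strictly more than the paper does, and the layer-cake argument you sketch is exactly the classical route from \citet{dharmadhikari1988unimodality} underlying that cited result. The structure---convexity of superlevel sets from log-concavity, symmetry of superlevel sets from symmetry of $f$, boundedness from integrability, and then the mixture representation $f(x)=\int_0^{\sup f}|K_t|\cdot\bigl(|K_t|^{-1}\mathbf{1}_{K_t}(x)\bigr)\,dt$---is correct and complete in outline.

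You are also right to flag that the lemma as stated in the paper is missing a hypothesis: log-concavity alone does not yield symmetry (e.g., an exponential density is log-concave but not symmetric about any point), so the density must be assumed symmetric about the origin. The paper's only application of the lemma is to the density $g(x)\propto \mathbf{1}\{x^\top A x\leq a\}\exp(-x^\top x/2)$ in Lemma~\ref{lem:unimodal-symmetric-truncated-normal}, which is manifestly even in $x$, so the oversight is harmless in context---but your explicit note that symmetry is genuinely needed is a correction the paper should have made. One small tightening: for $K_t$ to be a convex \emph{body} in the sense of the paper's definition you also need it closed; the superlevel set $\{f\geq t\}$ need not be closed for a general log-concave $f$, but replacing $K_t$ by its closure changes nothing measure-theoretically since the boundary of a convex set has Lebesgue measure zero.
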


Lemma~\ref{lem:unimodal-symmetric-truncated-normal} below is  new  which shows the symmetric unimodality of a truncated normal random vector.
\begin{lemma}\label{lem:unimodal-symmetric-truncated-normal}
  The truncated normal random variable $\rho b^\top\eta\mid \eta^\top A\eta\leq a$ is symmetric unimodal, where ${\eta} \sim \mathcal{N}(0,I_K)$, $\rho$ is a non-random scalar, ${b}$ is a non-random vector, and $A$ is a non-random symmetric positive definite matrix.
\end{lemma}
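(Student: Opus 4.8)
The plan is to reduce the claim to a statement about the conditioned $K$-dimensional random vector $\eta\mid \eta^{\top}A\eta\leq a$ and then project. Concretely, I would first show that this conditioned vector is symmetric unimodal as a random vector in $\mathbb{R}^{K}$, and then obtain the one-dimensional conclusion by applying Lemma~\ref{lem:symmetric-unimodal} to the non-random direction $\rho b$. Write $\mathcal{E}=\{u\in\mathbb{R}^{K}:u^{\top}Au\leq a\}$. Since $A$ is symmetric positive definite and $a>0$, $\mathcal{E}$ is a solid ellipsoid, hence a convex body, and it is centrally symmetric, $\mathcal{E}=-\mathcal{E}$; moreover $\mathrm{pr}(\eta\in\mathcal{E})>0$, so the conditional law of $\eta$ given $\eta\in\mathcal{E}$ is well defined, with density $f(u)\propto \exp(-\tfrac12\|u\|^{2})\,\mathbf{1}\{u\in\mathcal{E}\}$ with respect to Lebesgue measure.

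Next I would verify that $f$ is a symmetric, log-concave density on $\mathbb{R}^{K}$. Symmetry about the origin is immediate because both $\exp(-\tfrac12\|u\|^{2})$ and $\mathbf{1}\{u\in\mathcal{E}\}$ are invariant under $u\mapsto-u$. For log-concavity, $u\mapsto\exp(-\tfrac12\|u\|^{2})$ is log-concave since its logarithm is the concave function $-\tfrac12\|u\|^{2}$, and the indicator $\mathbf{1}\{u\in\mathcal{E}\}$ of the convex set $\mathcal{E}$ is log-concave: for $u,v\in\mathcal{E}$ and $\lambda\in(0,1)$, convexity of $\mathcal{E}$ gives $\lambda u+(1-\lambda)v\in\mathcal{E}$, so $\mathbf{1}\{\lambda u+(1-\lambda)v\in\mathcal{E}\}=1\geq \mathbf{1}\{u\in\mathcal{E}\}^{\lambda}\mathbf{1}\{v\in\mathcal{E}\}^{1-\lambda}$, and the inequality is trivial if $u$ or $v$ lies outside $\mathcal{E}$. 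A product of log-concave functions is log-concave, so $f$ is log-concave.

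Finally, by Lemma~\ref{lem:log-concave-density} the random vector $\eta\mid\eta^{\top}A\eta\leq a$, having a (symmetric) log-concave density, is symmetric unimodal; then Lemma~\ref{lem:symmetric-unimodal}, applied with the non-random vector $\rho b$, shows that $(\rho b)^{\top}\eta\mid\eta^{\top}A\eta\leq a=\rho b^{\top}\eta\mid\eta^{\top}A\eta\leq a$ is symmetric unimodal, which is exactly the assertion. The only points that require care are the verification that $f$ is genuinely log-concave (the indicator-of-ellipsoid factor, handled by convexity of $\mathcal{E}$) and, should Lemma~\ref{lem:log-concave-density} be invoked in a form that also demands central symmetry of the density, the accompanying symmetry check; both are dispatched above, and there is no substantive obstacle beyond these routine verifications.
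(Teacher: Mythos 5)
Your proposal is correct and follows essentially the same route as the paper's proof: reduce to showing that $\eta\mid\eta^{\top}A\eta\leq a$ is symmetric unimodal via Lemma~\ref{lem:log-concave-density} applied to its log-concave density, then project using Lemma~\ref{lem:symmetric-unimodal}. Your explicit verification of log-concavity as a product of the Gaussian density and the indicator of a convex symmetric ellipsoid simply fills in what the paper calls straightforward.
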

\begin{proof}[of Lemma~\ref{lem:unimodal-symmetric-truncated-normal}]
  By Lemma~\ref{lem:symmetric-unimodal}, it suffices to show that $\eta\mid \eta^\top A\eta\leq a$ is symmetric unimodal. The density function of $\eta\mid \eta^\top A\eta\leq a$ is
  \begin{align*}
    g({x})=\frac{1\left\{x^{\top}A x \leq a\right\}}{\textnormal{pr} \left(\eta^\top A\eta\leq a\right)}(2 \pi)^{-K / 2} \exp \left(-x^{\top} x / 2\right),
  \end{align*}
  so
  \begin{align*}
    \log  g(x) = \begin{cases}-\log \left\{ \textnormal{pr} \left(\eta^\top A\eta\leq a\right)\right\}-(K / 2) \log (2 \pi)-x^\top x / 2, & \text{ if } x^\top A x \leq a, \\ -\infty, & \text { otherwise. }\end{cases}
  \end{align*}
  It is straightforward to show that $\log g(x)$ is concave. By Lemma~\ref{lem:log-concave-density},
  $\eta\mid \eta^\top A\eta\leq a$ is symmetric unimodal.
\end{proof}

Now we can prove the first part of Proposition~\ref{prop:concentration-symmetric-unimodality}.
\begin{proof}[of Proposition~\ref{prop:concentration-symmetric-unimodality}]
  We only give the proof for cluster rerandomization with individual-level covariates, as the proof for cluster rerandomization with cluster-level covariates is similar. By Theorem~\ref{thm:asymquadra}, the asymptotic distribution of $M^{1/2}(\tauhaj-\tau) \mid M\tauxhaj^{\top}A_x\tauxhaj\leq a$ is $A+B$ with
  $$
    A = \vtthaj^{1/2}\left (1- R_x^2\right)^{1/2}\epsilon, \quad B=\vtthaj^{1/2}{R}_x {\mu}_x^{\top}{\eta}\mid {\eta}^{\top}\vxxhaj^{1/2}A_x\vxxhaj^{1/2}{\eta}\leq a,
  $$
  where $\epsilon$, $\eta_k \ (k=1,\ldots,K)$ are independent standard normal random variables and $\eta = (\eta_1,\ldots,\eta_K)^\top$. $A$ and $B$ are independent and symmetric unimodal where the symmetric unimodality of $B$ follows directly from Lemma~\ref{lem:unimodal-symmetric-truncated-normal}. 
  Thus, by Lemma~\ref{lem9157}, $A+B$ is symmetric unimodal.

Next, we prove the second part of Proposition~\ref{prop:concentration-symmetric-unimodality}. We need the following two Lemmas. The first one is from  \citet[][Theorem 2.1]{gupta1972inequalities}.

\begin{lemma}
  \label{lem:sapp3}
  Let
  $$
    \Sigma=\left [\begin{array}{ll}
        \Sigma_{11} & \Sigma_{12}  \\
        \Sigma_{21} & \sigma_{p p}
      \end{array}\right]
  $$
  be a $p \times p$ positive definite matrix with $\Sigma_{11}$ being a $(p-1) \times (p-1)$ matrix. Let ${x}=$ $\left (x_{1}, \cdots, x_{p}\right)^{\top}$ be a random vector with density function $\left\lvert \Sigma_{\lambda}\right \rvert^{-1 / 2} f\left ({x} \Sigma_{\lambda}^{-1} {x}^{\top}\right),$ where
  $$
    \Sigma_{\lambda}=\left [\begin{array}{ll}
        \Sigma_{11}         & \lambda \Sigma_{12} \\
        \lambda \Sigma_{21} & \sigma_{p p}
      \end{array}\right], \quad 0 \leq \lambda \leq 1.
  $$
  If $E$ is a convex symmetric set in $R^{p-1}$, then ${\operatorname{pr}}\big\{\left (x_{1}, \cdots, x_{p-1}\right) \in E,\left\lvert x_{p}\right \rvert \leq h\big\}$ is non-decreasing in $\lambda.$
\end{lemma}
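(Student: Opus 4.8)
Lemma~\ref{lem:sapp3} is a correlation inequality for elliptically contoured laws of the type studied by \citet{gupta1972inequalities}, from which it is quoted, so the plan is to recall the standard argument behind it rather than to reinvent it. First note the statement is well posed: since $\Sigma_{\lambda}=(1-\lambda)\operatorname{diag}(\Sigma_{11},\sigma_{pp})+\lambda\Sigma$ is a convex combination of two positive definite matrices, it is positive definite for every $\lambda\in[0,1]$. Write $y=(x_{1},\ldots,x_{p-1})^{\top}$ and $t=x_{p}$, set $\kappa=\Sigma_{21}\Sigma_{11}^{-1}\Sigma_{12}>0$ and $b_{\lambda}=\sigma_{pp}-\lambda^{2}\kappa>0$, and use the block factorization
\[
  (y^{\top},t)\,\Sigma_{\lambda}^{-1}\,(y^{\top},t)^{\top}=y^{\top}\Sigma_{11}^{-1}y+b_{\lambda}^{-1}\bigl(t-\lambda\Sigma_{21}\Sigma_{11}^{-1}y\bigr)^{2},\qquad \lvert\Sigma_{\lambda}\rvert=\lvert\Sigma_{11}\rvert\,b_{\lambda}.
\]
A short computation with these identities shows that the marginal density of $y$ does not depend on $\lambda$, and that the conditional law of $t$ given $y$ is elliptically contoured with location $\lambda\Sigma_{21}\Sigma_{11}^{-1}y$ and scale $b_{\lambda}$, its radial profile (a fixed function of $t$ indexed by $q=y^{\top}\Sigma_{11}^{-1}y$) being $\lambda$-free. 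Thus, conditionally on $y$, the only $\lambda$-dependence is through the location, which moves away from the origin as $\lambda$ grows, and the scale $b_{\lambda}$, which shrinks.

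The key reduction is then: by Fubini, $\operatorname{pr}\{(x_{1},\ldots,x_{p-1})\in E,\ \lvert x_{p}\rvert\le h\}=\int_{E}q_{0}(y)\,\operatorname{pr}(\lvert t\rvert\le h\mid y)\,dy$, where $q_{0}$ is the ($\lambda$-free) even density of $y$ and $E=-E$; averaging the integrand over $y$ and $-y$ replaces $\operatorname{pr}(\lvert t\rvert\le h\mid y)$ by its symmetrization, in which the conditional location is odd in $y$. It therefore suffices to prove the one-dimensional fact that, for every fixed $s\in\mathbb{R}$ and $h>0$, the quantity $\operatorname{pr}(\lvert T_{\lambda}\rvert\le h)+\operatorname{pr}(\lvert T_{\lambda}'\rvert\le h)$ is non-decreasing in $\lambda\in[0,1]$, where $T_{\lambda}$ has a fixed radial law, location $\lambda s$, and scale $b_{\lambda}=\sigma_{pp}-\lambda^{2}\kappa$, and $T_{\lambda}'$ is its reflection with location $-\lambda s$. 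Here the location moving outward would, by an Anderson-type monotonicity, decrease each term, while the scale shrinkage increases each term; the content of the lemma is that the shrinkage always wins.

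For the Gaussian radial law this last assertion is a direct one-variable computation: one differentiates the symmetrized interval probability in $\lambda$, expresses everything through the standard normal distribution function, and checks that the derivative is nonnegative on $[0,1]$; the general radial law follows by writing it as a scale mixture of Gaussians, or, for the families of $f$ of interest here, by the same differentiation argument applied directly. Undoing the two conditionings then yields that $\operatorname{pr}\{(x_{1},\ldots,x_{p-1})\in E,\ \lvert x_{p}\rvert\le h\}$ is non-decreasing in $\lambda$, which is the claim. The main obstacle lies entirely in this last step: both the bookkeeping of the competing location-versus-scale effects in the Gaussian computation and, for a fully self-contained treatment, the passage from the Gaussian radial law to an arbitrary elliptically contoured $f$, since not every such law is a scale mixture of normals. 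As the lemma is quoted verbatim from \citet{gupta1972inequalities}, in the paper we simply invoke that reference, and the sketch above merely records why it holds.
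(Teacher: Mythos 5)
The paper does not actually prove this lemma: it is stated as a verbatim quotation of Theorem~2.1 of \citet{gupta1972inequalities}, so your bottom line of simply invoking that reference coincides with what the paper does, and as a matter of sourcing the proposal is fine.

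The explanatory sketch you attach, however, is not a correct account of why the result holds, and the gap is substantive. Your Schur-complement factorization and the observation that the marginal of $y=(x_1,\ldots,x_{p-1})$ is $\lambda$-free are both correct, but the ensuing reduction --- that it ``suffices to prove the one-dimensional fact'' that $\operatorname{pr}(|T_\lambda|\le h)$ is non-decreasing in $\lambda$ for every fixed conditional location slope $s$, with $T_\lambda$ having location $\lambda s$ and scale $b_\lambda=\sigma_{pp}-\lambda^2\kappa$ --- is invalid because that one-dimensional claim is false. (The symmetrization over $y$ and $-y$ is vacuous here: the event $|x_p|\le h$ is symmetric, so $\operatorname{pr}(|x_p|\le h\mid y)=\operatorname{pr}(|x_p|\le h\mid -y)$ and the reduction really is to the pointwise claim.) Concretely, take $p=2$, $\Sigma_{11}=\sigma_{22}=1$, $\Sigma_{12}=0.1$, $f$ Gaussian, and condition on $y=10$, which is attainable inside a symmetric convex $E$ such as $[-20,20]$; then $x_2\mid y\sim\mathcal N(\lambda,\,1-0.01\lambda^2)$ and $\operatorname{pr}(|x_2|\le 1\mid y)$ falls from about $0.683$ at $\lambda=0$ to about $0.478$ at $\lambda=1$. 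So the conditional interval probability is not monotone in $\lambda$ for each $y$: the location effect can dominate the scale effect pointwise, and monotonicity only emerges after integrating against the marginal of $y$ over the symmetric \emph{convex} set $E$. The convexity of $E$ is used essentially in \citet{gupta1972inequalities} (via Plackett/Anderson-type identities and integration by parts over the boundary of $E$) and cannot be conditioned away. Since the paper only cites the lemma, nothing in the paper is affected, but your sketch should not be read as a viable proof outline.
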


The second one is new.

\begin{lemma}
  \label{lem:sapp4}
  ${\operatorname{pr}}\bigl\{\lvert (1-\rho^2)^{1/2}\epsilon+\rho {b}^{\top}{\eta} \rvert<c\mid {\eta}^{\top}A{\eta}\leq a\bigr\}$ is a non-decreasing function of $\rho$, where $\epsilon\sim \mathcal{N}(0,1)$ and ${\eta} \sim \mathcal{N}(0,I_K)$ are independent, and ${b}$ is a non-random vector satisfying ${b}^{\top} {b} = 1$.
\end{lemma}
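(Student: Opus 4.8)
The plan is to recognize the claim as an instance of Gupta's correlation inequality (Lemma~\ref{lem:sapp3}) after bundling $(\eta,W)$ into a single Gaussian vector whose off-diagonal covariance block is linear in $\rho$. First I would record a symmetry reduction: the measure-preserving map $(\epsilon,\eta)\mapsto(\epsilon,-\eta)$ fixes the joint law of $(\epsilon,\eta)$ and the event $\{\eta^\top A\eta\le a\}$ while sending $(1-\rho^2)^{1/2}\epsilon+\rho b^\top\eta$ to $(1-\rho^2)^{1/2}\epsilon-\rho b^\top\eta$. Hence the conditional probability is an even function of $\rho$, and it suffices to prove it is non-decreasing on the relevant range $\rho\in[0,1]$.

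Next, for $\rho\in[0,1)$ I would set $W=(1-\rho^2)^{1/2}\epsilon+\rho b^\top\eta$ and check that, because $\epsilon$ and $\eta$ are independent standard Gaussians and $b^\top b=1$, the vector $x=(\eta_1,\ldots,\eta_K,W)^\top$ is mean-zero Gaussian with covariance
\[
  \Sigma_\rho=\begin{bmatrix} I_K & \rho b\\ \rho b^\top & 1\end{bmatrix},
\]
using $\var(W)=(1-\rho^2)+\rho^2 b^\top b=1$ and $\cov(\eta,W)=\rho b$. By the Schur complement, $\Sigma_\rho>0$ exactly when $\rho^2<1$, and the density of $x$ has the form $|\Sigma_\rho|^{-1/2}f(x\Sigma_\rho^{-1}x^\top)$ with $f(t)=(2\pi)^{-(K+1)/2}e^{-t/2}$, which is precisely the structure required by Lemma~\ref{lem:sapp3}, taking $p=K+1$, $\Sigma_{11}=I_K$, $\Sigma_{12}=b$, $\sigma_{pp}=1$, and interpolation parameter $\lambda=\rho$.

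I would then apply Lemma~\ref{lem:sapp3} with the set $E=\{y\in\mathbb{R}^K:y^\top A y\le a\}$, which is convex since $A>0$ and symmetric since $E=-E$, and with $h=c$. This gives that $\operatorname{pr}\{\eta^\top A\eta\le a,\ |W|<c\}$ is non-decreasing in $\rho$ on $[0,1)$; passing between the non-strict inequality $|x_p|\le h$ in Lemma~\ref{lem:sapp3} and the strict inequality $|W|<c$ costs nothing, since the boundary is Lebesgue-null under the absolutely continuous law of $x$. Dividing by the $\rho$-free constant $\operatorname{pr}(\eta^\top A\eta\le a)$ preserves monotonicity, yielding the claim for $\rho\in[0,1)$, and a short continuity argument (weak convergence of the Gaussian laws as $\rho\uparrow1$, together with the null-boundary property) extends it to $\rho=1$.

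The only genuinely substantive step is the first identification: once one sees that the statement is Gupta's inequality for the vector $(\eta,W)$ with off-diagonal covariance scaling linearly in $\rho$, everything else is verification. The one technical wrinkle to watch is that $\Sigma_\rho$ degenerates at $\rho=1$ precisely because $b^\top b=1$, so that endpoint cannot be obtained directly from Lemma~\ref{lem:sapp3} and must be handled separately by continuity.
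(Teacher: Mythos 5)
Your proof is correct and follows essentially the same route as the paper: both identify the joint Gaussian vector whose off-diagonal covariance block is linear in $\rho$ and invoke Lemma~\ref{lem:sapp3} (Gupta's inequality), the only cosmetic difference being that you condition on the ellipsoid $\{y^\top A y\le a\}$ directly as the symmetric convex set $E$ while the paper works with $A^{1/2}\eta$ and a ball. Your explicit treatment of the degenerate endpoint $\rho=1$ by continuity is a point of care the paper's proof passes over silently.
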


\begin{proof}[of Lemma~\ref{lem:sapp4}]
  Simple calculation gives
  $$
   \begin{pmatrix}
        (1-\rho^2)^{1/2}\epsilon+\rho {b}^{\top}{\eta}  \\
    A^{1/2}{\eta}
   \end{pmatrix}\sim \mathcal{N}\left( 0,\begin{bmatrix}
        1                      & \rho b^\top A^{1/2} \\
        \rho A^{1/2}b & A
      \end{bmatrix}\right).
  $$
  By Lemma~\ref{lem:sapp3}, ${\operatorname{pr}}\big\{\lvert (1-\rho^2)^{1/2}\epsilon+\rho {b}^{\top}{\eta} \rvert<c,  {\eta}^{\top}A{\eta}\leq a\big\}$ is a non-decreasing function of $\rho$. The same is true for the conditional probability ${\operatorname{pr}}\big\{\lvert (1-\rho^2)^{1/2}\epsilon+\rho {b}^{\top}{\eta} \rvert<c\mid {\eta}^{\top}A{\eta}\leq a\big\}$.
\end{proof}

The second half of Proposition~\ref{prop:concentration-symmetric-unimodality} follows directly from Lemma~\ref{lem:sapp4}.

\end{proof}


\subsection{Proof of Theorem~\ref{thm:varalpha}}

Before proving Theorem~\ref{thm:varalpha}, we introduce two useful lemmas.

\begin{lemma}
  \label{lem:sapp9} Let $s_K={2 \pi^{K/2}} / {\Gamma (K/2)}$ be the surface of a $K$-dimensional unit sphere and $B(a)  = \left\{{x}=(x_1,\ldots,x_K)\mid \|{x}\|_2^2 \leq a^2\right\}$. Then
  \[\int_{B(a) } x_1^2~\textup{d}x_1\cdots \textup{d}x_K=\frac{1}{(K+2)K}a^{K+2}s_K. \]
\end{lemma}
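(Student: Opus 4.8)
The plan is to exploit the rotational symmetry of the Euclidean ball and reduce the claim to a one-dimensional radial integral. First I would note that permuting the coordinates is a Lebesgue-measure-preserving bijection of $B(a)$ onto itself, so $\int_{B(a)} x_k^2\,\textup{d}x$ takes the same value for every $k=1,\ldots,K$. Adding these $K$ identical integrals gives $K\int_{B(a)} x_1^2\,\textup{d}x = \int_{B(a)} \sum_{k=1}^K x_k^2\,\textup{d}x = \int_{B(a)} \|x\|_2^2\,\textup{d}x$, so it suffices to compute $\int_{B(a)} \|x\|_2^2\,\textup{d}x$ and divide by $K$.

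Next I would switch to spherical coordinates $x = r\theta$ with $r\in[0,a]$ and $\theta$ ranging over the unit sphere $S^{K-1}$, so that $\textup{d}x = r^{K-1}\,\textup{d}r\,\textup{d}\sigma(\theta)$ with $\sigma$ the surface measure on $S^{K-1}$ of total mass $s_K = 2\pi^{K/2}/\Gamma(K/2)$. Because the integrand $\|x\|_2^2 = r^2$ is radial, the angular variable integrates out to the constant $s_K$, leaving $\int_{B(a)} \|x\|_2^2\,\textup{d}x = s_K\int_0^a r^2\cdot r^{K-1}\,\textup{d}r = s_K\int_0^a r^{K+1}\,\textup{d}r = s_K a^{K+2}/(K+2)$. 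Dividing by $K$ then yields $\int_{B(a)} x_1^2\,\textup{d}x = a^{K+2}s_K/\{K(K+2)\}$, which is the asserted identity.

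The argument is entirely elementary, so no real obstacle is expected; the only point worth double-checking is the normalization in the spherical-coordinate change of variables, namely that the constant $s_K = 2\pi^{K/2}/\Gamma(K/2)$ in the statement is indeed the $(K-1)$-dimensional surface area of the unit sphere in $\mathbb{R}^K$, which is a standard fact (and is consistent with the volume of $B(1)$ obtained by a further integration in $r$).
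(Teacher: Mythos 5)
Your proposal is correct and follows essentially the same route as the paper: symmetry over coordinates reduces the integral to $K^{-1}\int_{B(a)}\|x\|_2^2\,\textup{d}x$, and then spherical coordinates give $(s_K/K)\int_0^a r^{K+1}\,\textup{d}r = a^{K+2}s_K/\{K(K+2)\}$. No gaps.
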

\begin{proof}[of Lemma~\ref{lem:sapp9}]
  By symmetry,
  $$
    \int_{B(a) } x_1^2~\textup{d}x_1\cdots \textup{d}x_K = K^{-1} \int_{B(a) }  (x_1^2 + \cdots + x_K^2) ~\textup{d}x_1\cdots \textup{d}x_K .
  $$
  Because the integrand on the right-hand side is the radius, the above formula reduces to
  $$
    \frac{s_K} { K } \int_{0}^a r^2 r^{K-1} \textup{d}r =    \frac{1}{(K+2)K}a^{K+2}s_K.  
  $$
  \end{proof}
  %


\begin{lemma}
  \label{lem:sapp10}
  Under Conditions~\ref{cond:1}, \ref{cond:3}, and \ref{cond:5}, the threshold $a$ in the cluster rerandomization scheme $\mathcal{D}_x(A_x)$ satisfies
  \begin{align*}
    a  & = (2\pi) \det( \vxxhaj )^{1/K}\det(A_x)^{1/K}\left(\frac{s_K}{K}\right)^{-2/K}\alpha^{2/K}+o(\alpha^{2/K}).
  \end{align*}
  Under Conditions~\ref{cond:1}, \ref{cond:2}, and \ref{cond:4}, the threshold $a$ in the cluster rerandomization scheme $\mathcal{D}_c(A_c)$ satisfies
  \begin{align*}
a  &= (2\pi)\det( \vccht )^{1/K} \det(A_c)^{1/K}\left(\frac{s_K}{K}\right)^{-2/K}\alpha^{2/K}+o(\alpha^{2/K}).
  \end{align*}
\end{lemma}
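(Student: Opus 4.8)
The plan is to recognise that, thanks to Proposition~\ref{prop:FCLT}, this is a purely deterministic statement about the Gaussian measure of a small ellipsoid. Under the stated regularity conditions Proposition~\ref{prop:FCLT} gives $M^{1/2}\tauxhaj \mathrel{\dot\sim} \mathcal N(0,\vxxhaj)$, so the asymptotic acceptance probability of $\mathcal D_x(A_x)$ is $\alpha = \operatorname{pr}(T_x^{\top}A_x T_x \le a)$ with $T_x\sim\mathcal N(0,\vxxhaj)$, where from now on $\vxxhaj$ and $A_x$ denote their (nonsingular) limits. Writing $T_x = \vxxhaj^{1/2}\eta$ with $\eta\sim\mathcal N(0,I_K)$ and $B = \vxxhaj^{1/2}A_x\vxxhaj^{1/2}>0$, this becomes
\[
  \alpha \;=\; \operatorname{pr}(\eta^{\top}B\eta\le a)
  \;=\; (2\pi)^{-K/2}\int_{\{x:\,x^{\top}Bx\le a\}} e^{-\|x\|_2^2/2}\,\textup{d}x .
\]
Because $\eta^{\top}B\eta$ is a positive-definite quadratic form in a nondegenerate Gaussian, its distribution function restricted to $(0,\infty)$ is a continuous, strictly increasing bijection onto $(0,1)$; hence $a\downarrow 0$ as $\alpha\downarrow 0$, and it suffices to expand the right-hand side as $a\downarrow 0$ and then invert.

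First I would handle the two ingredients of the expansion. On the ellipsoid $\{x:x^{\top}Bx\le a\}$ one has $\|x\|_2^2\le a/\lambda_{\min}(B)\to 0$, so $e^{-\|x\|_2^2/2}=1+O(a)$ uniformly there, giving
\[
  \int_{\{x^{\top}Bx\le a\}} e^{-\|x\|_2^2/2}\,\textup{d}x \;=\; \{1+O(a)\}\,\operatorname{vol}\{x:x^{\top}Bx\le a\}.
\]
The linear change of variables $y=B^{1/2}x$ maps the ellipsoid onto the ball of radius $a^{1/2}$, whose volume is $(s_K/K)a^{K/2}$ since the unit ball in $\mathbb R^K$ has volume $s_K/K$; together with $\det B = \det(\vxxhaj)\det(A_x)$ this yields $\operatorname{vol}\{x:x^{\top}Bx\le a\}=\det(\vxxhaj)^{-1/2}\det(A_x)^{-1/2}(s_K/K)a^{K/2}$. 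Combining,
\[
  \alpha \;=\; (2\pi)^{-K/2}\det(\vxxhaj)^{-1/2}\det(A_x)^{-1/2}\frac{s_K}{K}\,a^{K/2}\,\{1+O(a)\}.
\]

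Finally I would invert this relation: solving for $a^{K/2}$ and raising to the power $2/K$ gives $a = (2\pi)\det(\vxxhaj)^{1/K}\det(A_x)^{1/K}(s_K/K)^{-2/K}\alpha^{2/K}\{1+O(a)\}^{2/K}$, and since $a=O(\alpha^{2/K})$ the correction factor equals $1+O(\alpha^{2/K})$, so the multiplicative error on the leading term becomes an additive $o(\alpha^{2/K})$, which is the claimed expansion. The proof for $\mathcal D_c(A_c)$ is identical after replacing $(\tauxhaj,\vxxhaj,A_x)$ by $(\taucht,\vccht,A_c)$ and using the second display of Proposition~\ref{prop:FCLT}. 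The only mildly delicate points, and hence the main obstacle, are establishing $a\downarrow 0$ and then bootstrapping to the exact polynomial rate, and controlling $e^{-\|x\|_2^2/2}$ uniformly over a shrinking ellipsoid; both are routine once the regularity conditions are invoked to guarantee that $\vxxhaj,A_x$ (resp.\ $\vccht,A_c$) have nonsingular limits, so that $\lambda_{\min}(B)$ stays bounded away from zero.
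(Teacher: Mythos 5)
Your proposal is correct and follows essentially the same route as the paper: both express $\alpha$ as the standard Gaussian measure of the ellipsoid $\{x: x^{\top}Bx\le a\}$ with $B=\vxxhaj^{1/2}A_x\vxxhaj^{1/2}$, approximate the density by $1+O(a)$ on the shrinking ellipsoid, use $\operatorname{vol}=\det(B)^{-1/2}(s_K/K)a^{K/2}$ with $\det B=\det(\vxxhaj)\det(A_x)$, and invert. The only cosmetic difference is that the paper diagonalizes $B$ via an eigendecomposition before changing variables, and your treatment of the inversion step (noting $a\downarrow 0$ and bootstrapping the error to $o(\alpha^{2/K})$) is in fact slightly more explicit than the paper's.
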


\begin{proof}[of Lemma~\ref{lem:sapp10}]

  We only give the proof for cluster rerandomization with individual-level  covariates, as the proof for cluster rerandomization with cluster-level covariates is similar.
  Denote the asymptotic distribution of $(M^{1/2}(\tauhaj-\tau), M^{1/2}\tauxhaj^\top)$ as $\left({T}_{\infty}, {{T}}_{x,\infty}^\top \right) $, where ${{T}}_{x,\infty}$ is a $K$-dimensional normal random vector. By the property of normal random vector, we have
  $${{T}}_{x,\infty}^{\top}{A}_x{{T}}_{x,\infty} \sim \sum_{k=1}^{K}\lambda_k \eta_k^2 ,$$
  where $\lambda_1,\dots,\lambda_K$ are the eigenvalues of $\vxxhaj^{1/2}{A}_x\vxxhaj^{1/2}$, and $\eta_1,\dots, \eta_K$  are independent standard normal random variables.
  Let $\phi (x)$ be the density function of a standard normal random variable, then
  \begin{align*}
    {\operatorname{pr}}\Big (\sum_{k=1}^{K}\lambda_k \eta_k^2 \leq a \Big)
     & =\int_{\sum_{k=1}^{K} \lambda_k x_k^2 \leq a}~\prod_{k=1}^K\phi (x_k)~\textup{d}x_1\cdots \textup{d}x_K                                                           \\
     & =(2\pi)^{-K/2}\Big(\prod_{k=1}^K \lambda_k \Big)^{-1/2}\int_{B (a^{1/2})} \exp\Big (2^{-1} {\sum_{k=1}^{K} x_k^2/\lambda_k}\Big)\textup{d}x_1\cdots \textup{d}x_K \\
     & = (2\pi)^{-K/2}\Big(\prod_{k=1}^K \lambda_k\Big)^{-1/2}\int_{B (a^{1/2})} \{ 1+O(r^2) \} \textup{d}x_1\cdots \textup{d}x_K.
  \end{align*}
  The last line follows from $\min_{k=1,\dots,K} \lambda_k>0$ and the Taylor expansion, 
  \[
    \exp\Big ( 2^{-1} \sum_{k=1}^{K} x_k^2/\lambda_k \Big ) = 1+O\Big  (2^{-1} \sum_{k=1}^{K} x_k^2/\lambda_k\Big ) = 1+O (r^2),\quad r^2 = \sum_{k=1}^K x_k^2.
  \]
  By the volume formula of a $K$-dimensional ball,
  \begin{equation}\label{eq:**}
    \alpha={\operatorname{pr}}\Big (\sum_{k=1}^{K}\lambda_k \eta_k^2 \leq a \Big) = (2\pi)^{-K/2}\Big  (\prod_{k=1}^K \lambda_k\Big)^{-1/2} \frac{s_K}{K}a^{K/2} + O(a^{K/2+1}).
  \end{equation}
  As $\lambda_k,~k=1,\ldots,K$, are eigenvalues of $\vxxhaj^{1/2}{A}_x\vxxhaj^{1/2}$, we have
  \begin{align*}
    \prod_{k=1}^K \lambda_k = \det \big( \vxxhaj^{1/2}{A}_x\vxxhaj^{1/2} \big)  = \det( \vxxhaj ) \det ( {A}_x ).
  \end{align*}
  Substitute into \eqref{eq:**} and reorganize the terms, we get the desired result.

\end{proof}

\begin{proof}[of Theorem~\ref{thm:varalpha}]
  We only give the proof for cluster rerandomization with individual-level  covariates, as the proof for cluster rerandomization with cluster-level covariates is similar.  As $\left({T}_{\infty}, {{T}}_{x,\infty}^\top \right) $ is the asymptotic distribution of $(M^{1/2}(\tauhaj-\tau), M^{1/2}\tauxhaj^\top)$, we have
  $$\vara\{M^{1/2}(\tauhaj-\tau)\mid  M\tauxhaj^{\top}A_x\tauxhaj\leq a \} =\var({T}_{\infty} \mid {{T}}_{x,\infty}^\top{A}_x{{T}}_{x,\infty}).$$
  Referring to the proof of Theorem~\ref{thm:asymquadra},
  $$
    {T}_{\infty} \mid {{T}}_{x,\infty}^\top{A}_x{{T}}_{x,\infty} \leq a \sim V_{\textnormal{haj},\tau \tau}^{1/2} \big\{  (1-R_{x}^2 )^{1/2}\epsilon +   R_{x}\mu_x^{\top}{\eta} \mid
    {\eta}^{\top}V_{\textnormal{haj},xx}^{1/2}A_{x}V^{1/2}_{\textnormal{haj},xx}{\eta}\leq a \big\},
  $$
  where
  $${\mu}_x^{\top}=(\mu_1,\ldots,\mu_K)=\vtxhaj\vxxhaj^{-1/2}/(\vtxhaj\vxxhaj^{-1}\vxthaj)^{1/2}, \quad {\eta}=(\eta_{1},\ldots,\eta_{K})^{\top},$$
  and $\epsilon,\eta_1, \ldots,\eta_K$ are independent standard normal random variables.

  We perform the eigenvalue decomposition as follows:
  $
    V_{\textnormal{haj},xx}^{1/2}A_{x}V^{1/2}_{\textnormal{haj},xx} = P_{x} {\Lambda}_x P_{x}^{\top}
  $, where ${\Lambda}_x = \operatorname{diag}(\lambda_1,\ldots,\lambda_K)$. Then $\mu_x^{\top}{\eta}\mid
    {\eta}^{\top}V_{\textnormal{haj},xx}^{1/2}A_{x}V^{1/2}_{\textnormal{haj},xx}{\eta}\leq a$ has the same distribution as
  $
    \mu_x^{\top}P_x{\eta} \mid
    {\eta}^{\top} {\Lambda}_x{\eta}\leq a
  $.
  Let $(\zeta_1,\ldots,\zeta_K) = \mu_x^{\top}P_x $.

  First, we derive the asymptotic variance of $\sum_{k=1}^{K} \zeta_k\eta_k \mid \sum_{k=1}^{K}\lambda_k \eta_k^2 \leq a$ with $\min_{1\leq k \leq K}\lambda_k>0$.
  Note that
  $$\var  \Big(\eta_1 \mid \sum_{k=1}^{K}\lambda_k \eta_k^2 \leq a \Big) = \alpha^{-1} \int_{\sum_{k=1}^{K} \lambda_k x_k^2 \leq a}x_1^2\prod_{k=1}^K\phi (x_k)~\textup{d}x_1\cdots \textup{d}x_K,~\text{where}~ \alpha={\operatorname{pr}}\Big (\sum_{k=1}^{K}\lambda_k \eta_k^2 \leq a \Big),$$
  and
  \begin{equation*}
    \begin{split}
      \int_{\sum_{k=1}^{K} \lambda_k x_k^2 \leq a}x_1^2\prod_{k=1}^K\phi (x_k)~dx_1\cdots \textup{d}x_K &= (2\pi)^{-K/2}\Big (\prod_{k=1}^K \lambda_k\Big)^{-1/2}\lambda_1^{-1}\int_{B ({a}^{1/2})} \{ x_1^2+O(r^4) \} ~\textup{d}x_1\cdots \textup{d}x_K.
    \end{split}
  \end{equation*}
  The above approximation follows from the Taylor expansion,
  \[
    x_1^2\exp\Big (2^{-1} \sum_{k=1}^{K} x_k^2/\lambda_k\Big) = x_1^2+O\Big (2^{-1} x_1^2\sum_{k=1}^{K} x_k^2/\lambda_k\Big) = x_1^2+O (r^4),\quad \text{where }~ r^2 = \sum_{k=1}^K x_k^2.
  \]
  By Lemma~\ref{lem:sapp9},
  \[
    \int_{\sum_{k=1}^K \lambda_k x_k^2 \leq a}x_1^2\prod_{k=1}^K\phi (x_k)~\textup{d}x_1\cdots \textup{d}x_K= (2\pi)^{-K/2}\Big (\prod_{k=1}^K \lambda_k\Big)^{-1/2}\lambda_1^{-1} \frac{1}{(K+2)K}a^{K/2+1}s_K+O(a^{K/2+2}).
  \]
  Substitute $a$ with the result of Lemma~\ref{lem:sapp10}, we have
  \[
    \var  \Bigl (\eta_1 \,\Bigm |\, \sum_{k=1}^{K}\lambda_k \eta_k^2 \leq a\Bigr ) = p_K \lambda_1^{-1}\Bigl (\prod_{k=1}^K \lambda_k\Bigr )^{1/K}\alpha^{2/K}+o(\alpha^{2/K}).
  \]
  Similarly, for $k=1,\dots,K$,
  \[
    \var  \Bigl (\eta_k \,\Bigm |\, \sum_{k=1}^{K}\lambda_k \eta_k^2 \leq a\Bigr ) = p_K \lambda_k^{-1}\Bigl (\prod_{k=1}^K \lambda_k\Bigr )^{1/K}\alpha^{2/K}+o(\alpha^{2/K}).
  \]
  Moreover, for $m\neq n$, $\eta_m$ and $\eta_n$ are conditional uncorrelated, then
  \begin{align*}
    E \Bigl(\eta_m\eta_n\,\Bigm |\, \sum_{k=1}^{K}\lambda_k \eta_k^2 \leq a\Bigr)
     & = E\biggl\{\eta_mE\Bigl(\eta_n\,\Bigm | \,\eta_m, \sum_{k=1}^{K}\lambda_k \eta_k^2 \leq a\Bigr)\biggm | \sum_{k=1}^{K}\lambda_k \eta_k^2 \leq a\biggr\} \\
     & =E\Bigl(\eta_m\times0\Bigm |\sum_{k=1}^{K}\lambda_k \eta_k^2 \leq a\Bigr) = 0.
  \end{align*}
  The last line follows from the fact that given $\eta_m$ and a symmetric ellipsoidal acceptance region, the conditional distribution of $\eta_n$ is symmetric.
  Therefore,
  \begin{eqnarray}
    \var  \Bigl (\sum_{k=1}^K \zeta_k\eta_k \Bigm | \sum_{k=1}^{K}\lambda_k \eta_k^2 \leq a\Bigr ) & = & \sum_{k=1}^K \zeta_k^2\var  \Bigl (\eta_k \,\Bigm |\, \sum_{k=1}^{K}\lambda_k \eta_k^2 \leq a\Bigr ) \nonumber \\
    &= & p_K \Bigl (\sum_{k=1}^K\frac{\zeta_k^2}{\lambda_k}\Bigr )\Bigl (\prod_{k=1}^K \lambda_k\Bigr )^{1/K}\alpha^{2/K}+o(\alpha^{2/K}) . \nonumber
  \end{eqnarray}
  Note that
  \begin{gather*}
    \Big (\prod_{k=1}^K \lambda_k\Big)^{1/K} =  \det( \vxxhaj )^{1/K}\det( A_x )^{1/K},\quad \operatorname{diag}(\lambda_1,\ldots,\lambda_K)={P}_x^{\top}\vxxhaj^{1/2}{A}_x{V}^{1/2}_{xx}{P}_x,\\ \quad(\zeta_1,\ldots,\zeta_K)=\vtxhaj\vxxhaj^{-1/2}P_x/(\vtxhaj\vxxhaj^{-1}\vxthaj)^{1/2}.
  \end{gather*}
  Thus,
  \begin{align*}
    \sum_{k=1}^K \frac{\zeta_k^2}{\lambda_k} & = \vtxhaj\vxxhaj^{-1/2}{P}_x ({P}_x^{\top}\vxxhaj^{1/2} {A}_x\vxxhaj^{1/2}{P}_x)^{-1}{P}_x^{\top}\vxxhaj^{-1/2}\vxthaj/\vtxhaj\vxxhaj^{-1}\vxthaj \\
                                           & =\vtxhaj\vxxhaj^{-1} {A}_x^{-1}\vxxhaj^{-1}\vxthaj/(\vtxhaj\vxxhaj^{-1}\vxthaj).
  \end{align*}
  Therefore,
  \begin{align*}
    \var({T}_{\infty}|{{T}}_{x,\infty}^\top{A}_x{{T}}_{x,\infty}\leq a ) = & \vtthaj \big\{  (1-R^2_{x} ) +   R^2_{x}\var({\mu}_x^{\top}P_x{\eta}\mid
    {\eta}^{\top}{\Lambda}_x{\eta}\leq a) \big\}                                                                                                            \\
    =                                                               & \vtthaj \{  (1-R^2_{x} ) +   {R}_x^2p_K {\nu}_x(A_x) \alpha^{2/K} + o(\alpha^{2/K}) \}.
  \end{align*}
\end{proof}

\subsection{Proof of Theorem~\ref{thm:optimdiag}}
\begin{proof}[of Theorem~\ref{thm:optimdiag}]
  It is convenient to fix $\prod_{k=1}^K w_k = 1$ and minimize the term that is related to $w_k$ for $k=1,\ldots,K$. Denote $(b_1,\ldots,b_K)^{\top}=\vxxhaj^{-1}\vxthaj $. We only need to minimize
  $$
    \vtxhaj\vxxhaj^{-1} {A}_x^{-1}\vxxhaj^{-1}\vxthaj = \sum_{k=1}^K b^2_k w_k^{-1}.
  $$
  Using the inequality of arithmetic and geometric means, we have $$\sum_{k=1}^K b^2_k w_k^{-1} \geq K\Bigl(\prod_{k=1}^K  b^2_k w_k^{-1}\Bigr)^{1/K} = K\Bigl(\prod_{k=1}^K  b^2_k\Bigr)^{1/K}.$$
  The equality holds if and only if
  $$
    \frac{b_1^2}{w_1}=\cdots=\frac{b_K^2}{w_K},$$ 
which implies $ w_k = c_0 b_k^2=c_0 (\vtxhaj\vxxhaj^{-1}{\xi}_k)^2$ for some constant $c_0>0$. Then within all positive diagonal matrix, ${\nu}_x(A_x)$ reach its minimum at $A_x^{\textnormal{opt}} =\operatorname{diag}(w_1,\ldots,w_K)$ with $ w_k = c_0(\vtxhaj \vxxhaj^{-1}{\xi}_k)^2$ for some constant $c_0>0$. The proof for ${\nu}_c(A_c)$ is similar.

\end{proof}

\subsection{Proof of Corollary~\ref{cor:varalphaortho}}
\begin{proof}[of Corollary~\ref{cor:varalphaortho}]
It suffices to show that the optimal weights $w_k = c_0{R}_{x_k}^2/{V}_{\textnormal{haj},x_k x_k}$, $k=1,\ldots,K$, for some  constant $c_0>0$. Denote  $V_{\textnormal{haj},xx} = \operatorname{diag}(V_{\textnormal{haj},x_1x_1},\ldots,V_{\textnormal{haj},x_Kx_K})$. Recall that $ \sum_{k=1}^K {R}^2_{x_k} = R^2_{x} $ with 
  $$
    R_{x_k}^2 =  V_{\textnormal{haj},\tau x_k }^2/( V_{\textnormal{haj},\tau\tau}V_{\textnormal{haj},x_k  x_k }).
$$
  Then
  
  \begin{eqnarray*}
  {V}_{\textnormal{haj},\tau x}{V}_{\textnormal{haj},xx}^{-1} &= & ({V}_{\textnormal{haj},\tau x_{1}}{V}_{\textnormal{haj},x_{1}x_{1}}^{-1},\ldots,{V}_{\textnormal{haj},\tau x_{K}}{V}^{-1}_{\textnormal{haj},x_{K}x_{K}}) \\
  &=& ({R}_{x_1}{V}_{\textnormal{haj},\tau\tau}^{1/2}/{V}_{\textnormal{haj},x_1x_1}^{1/2},\ldots,{R}_{x_K}{V}_{\textnormal{haj},\tau\tau}^{1/2}/{V}_{\textnormal{haj},x_Kx_K}^{1/2}).
  \end{eqnarray*}
  We minimize the term related to ${A}_{x}$ in ${\nu}_{x}(A_{x})$. Note that 
  \begin{align*}
     & {V}_{\textnormal{haj},\tau x}{V}_{\textnormal{haj},xx}^{-1}{A}_{x}^{-1}{V}_{\textnormal{haj},xx}^{-1}{V}_{\textnormal{haj},x\tau}\det( A_{x} )^{1/K}\det( V_{\textnormal{haj},xx} )^{1/K} \\
     & = \sum_{k=1}^K {R}^2_{x_k}{V}_{\textnormal{haj},\tau\tau}/({V}_{\textnormal{haj},x_k x_k}w_k)\Bigl(\prod_{k=1}^K w_k{V}_{\textnormal{haj},x_k x_k}\Bigr)^{1/K}                             \\
     & \geq K{V}_{\textnormal{haj},\tau\tau} \Bigl(\prod_{k=1}^K {R}_{x_k}^2\Bigr)^{1/K},
  \end{align*}
  where the last inequality follows from the inequality of arithmetic and geometric means.
  Therefore,
  $$
    {\nu}_{x} \big(A_{x}\big) \geq K{V}_{\textnormal{haj},\tau\tau} \Bigl(\prod_{i=1}^K {R}_{x_k}^2\Bigr)^{1/K} \Bigm/ ({V}_{\textnormal{haj},\tau x}{V}^{-1}_{\textnormal{haj},xx}{V}_{\textnormal{haj},x\tau}) = K \Bigl(\prod_{i=1}^K {R}_{x_k}^2\Bigr)^{1/K} \Bigm/ R^2_{x}.
  $$
  The equality holds if and only if
  $$
    {R}^2_{x_1}/({V}_{\textnormal{haj},x_1x_1}w_1) = \cdots ={R}^2_{x_K}/({V}_{\textnormal{haj},x_Kx_K}w_K),
  $$
  which implies $w_k = c_0{R}_{x_k}^2/{V}_{\textnormal{haj},x_k x_k}$ for some  constant $c_0>0$.
\end{proof}
\subsection{Proof of Corollary~\ref{cor:varalphatier}}

\begin{proof}[of Corollary~\ref{cor:varalphatier}]
  By Theorem~3 of \cite{Li9157},
  $$
    M^{1/2}(\tauhaj-\tau) \mid \mathcal{M}_{{[1]}},\ldots,\mathcal{M}_{{[L]}} \mathrel{\xrightarrow{d}} {V}_{\textnormal{haj},\tau\tau}^{1/2}\Big\{(1-R^2_{x})^{1/2}\epsilon + \sum_{l=1}^L {R}_{x_{[l]}} L_{K_l,a_{[l]}} \Big\},
  $$
  where $\epsilon$ is a standard normal random variable, and $\epsilon,L_{K_1,a_{[1]}},\ldots,L_{K_l,a_{[l]}}$ are jointly independent.
  Applying our Theorems~\ref{thm:asymp2} and \ref{thm:varalpha} with $K=K_l$, $\alpha = \alpha_{[l]}$, $\vtthaj=1$, $A_x = \vxxhaj^{-1}$, and $R^2_{x} = 1$, we have
  $$
    \var(L_{K_l,a_l}) = p_{K_l}\alpha_{[l]}^{2/K_l} + o(\alpha_{[l]}^{2/K_l}).
  $$
  Therefore,
  $$
    \vara \bigl\{M^{1/2}(\tauhaj-\tau) \mid \mathcal{M}_{[1]},\ldots,\mathcal{M}_{[L]} \bigr\} =  {V}_{\textnormal{haj},\tau\tau}\Bigl\{(1-R^2_{x})+ \sum_{l=1}^L p_{K_l}{R}_{x_{[l]}}^2 \alpha_{[l]}^{2/K_l} + o(\alpha_{[l]}^{2/K_l})\Bigr\}.
  $$
 The leading term in $\vara \bigl\{M^{1/2}(\tauhaj-\tau) \mid \mathcal{M}_{[1]},\ldots,\mathcal{M}_{[L]} \bigr\}$ has the following lower bound:
  \begin{align*}
    {V}_{\textnormal{haj},\tau\tau}\bigl\{(1-R^2_{x})+ \sum_{l=1}^L p_{K_l}{R}_{x_{[l]}}^2 \alpha_{[l]}^{2/K_l}\bigr\} & \geq{V}_{\textnormal{haj},\tau\tau}\biggl[(1-R^2_{x}) +  K \biggl\{\prod_{l=1}^L\Bigl(\frac{p_{K_l} {R}_{x_{[l]}}^2\alpha_{[l]}^{2/K_l}}{K_l}\Bigr)^{K_l}\biggr\}^{1/K} \biggr] \\
                                                                                                               & ={V}_{\textnormal{haj},\tau\tau}\biggl[(1-R^2_{x}) +  K \biggl\{\prod_{l=1}^L\Bigl(\frac{p_{K_l} {R}_{x_{[l]}}^2}{K_l}\Bigr)^{K_l}\biggr\}^{1/K}\alpha^{2/K} \biggr] .
  \end{align*}
  The equality holds if and only if
  $$
    p_{K_1} {R}_{x_{[1]}}^2\alpha_{[1]}^{2/K_1}/{K_1} = \cdots = p_{K_L} {R}_{x_{[L]}}^2\alpha_{[L]}^{2/K_L}/{K_L},
  $$
  which implies that for some constant $c_0 > 0$,
  $$
    \alpha_{[l]}=\left({c_0{R}_{x_{[l]}}^2 p_{K_{l}}}/{K_{l}}\right)^{-K_{l}/2} \quad (l=1,\ldots,L).
  $$
\end{proof}

\subsection{Proof of Theorem~\ref{thm:wtedbetter}}
To prove Theorem~\ref{thm:wtedbetter}, we need Lemma~\ref{lem:pk} below.
\begin{lemma}\label{lem:pk}
  $$
    p_{K}=\frac{2\pi}{K+2}\left\{ \frac{2\pi^{K/2}}{K\Gamma (K/2)} \right\}^{-2/K} \quad (K=1,2,\ldots)
  $$
  is decreasing in $K$.
\end{lemma}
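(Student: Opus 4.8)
The plan is to recognise that, using $t\Gamma(t/2)=2\Gamma(t/2+1)$, the quantity $p_K$ is the restriction to integers of the smooth function
\[
  p_t \;=\; \frac{2}{t+2}\,\Gamma\!\left(\tfrac t2+1\right)^{2/t},\qquad t>0 ,
\]
so it suffices to show that $t\mapsto p_t$ is strictly decreasing on $(0,\infty)$. Taking logarithms, $\log p_t=\log 2-\log(t+2)+\tfrac 2t\log\Gamma(t/2+1)$, and the goal becomes $\frac{d}{dt}\log p_t<0$. Differentiating, multiplying through by the positive factor $t^2$, and substituting $u=t/2$, this reduces by elementary algebra to the single inequality
\[
  h(u)\;:=\;u\,\psi(u+1)-\log\Gamma(u+1)-\frac{u^2}{u+1}\;<\;0\qquad(u>0),
\]
where $\psi=\Gamma'/\Gamma$ is the digamma function.

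I would then prove $h<0$ by a boundary-value-plus-monotonicity argument. One has $h(0)=0$ (since $\Gamma(1)=1$ and $\psi(1)$ is finite), and a direct computation shows that the two $\psi(u+1)$ terms in $h'$ cancel, leaving
\[
  h'(u)\;=\;u\left[\psi'(u+1)-\frac{u+2}{(u+1)^2}\right].
\]
Using the trigamma series $\psi'(u+1)=\sum_{m\ge 1}(u+m)^{-2}$ and the integral comparison $\sum_{m\ge 2}(u+m)^{-2}<\int_1^\infty (u+m)^{-2}\,dm=(u+1)^{-1}$ — strict because the summand is strictly decreasing in $m$ — we obtain $\psi'(u+1)<(u+1)^{-2}+(u+1)^{-1}=(u+2)/(u+1)^2$. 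Hence $h'(u)<0$ for all $u>0$, so $h$ is strictly decreasing, and combined with $h(0)=0$ this gives $h(u)<0$ for $u>0$. Unwinding the substitutions, $\log p_t$ is strictly decreasing on $(0,\infty)$, whence in particular $p_1>p_2>p_3>\cdots$, proving the lemma.

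The only delicate point is the algebraic reduction from $\frac{d}{dt}\log p_t<0$ to the inequality for $h$: the cancellation of the digamma terms in $h'$ and the change of variable $u=t/2$ must be tracked carefully to avoid sign errors, but once $h$ and $h'$ are in hand the trigamma estimate is a one-line comparison. (A purely discrete derivation of $p_{K+1}<p_K$ directly from the recursion $\Gamma(x+1)=x\Gamma(x)$ and the log-convexity of $\Gamma$ is also possible, but it is longer than the continuous argument above.)
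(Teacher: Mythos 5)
Your proof is correct and follows essentially the same route as the paper: take logarithms, substitute $u=K/2$, multiply the derivative by $u^2$ to get the auxiliary function (your $h$ is the paper's $g$), note it vanishes at $0$, and show its derivative is negative via the bound $\psi'(u+1)<(u+2)/(u+1)^2$. The only difference is that you establish this trigamma inequality directly from the series $\psi'(u+1)=\sum_{m\ge1}(u+m)^{-2}$ by integral comparison, whereas the paper deduces it from a cited monotonicity result of Alzer; your version is self-contained and equally valid.
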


\begin{proof}[of Lemma~\ref{lem:pk}]
  First, 
  \[
    \begin{aligned}
      \log p_{K} & =\log (2 \pi)-\log (K+2)-2 \{ \log (2/K) + (K/2)\log \pi-\log \Gamma (K/2) \}/K \\
                 & = \log (2 \pi) -\log \{ \pi (K+2) \}+ (2/K)\log \Gamma (K/2+1).
    \end{aligned}
  \]
  Let $x=K/2$ and we consider the function
  \[
    f (x)=-\log \{ 2\pi (x+1) \}+ (1/x) \log \Gamma (x+1).
  \]
  It suffices to prove that $f (x)$ is decreasing in $x$ for $x\geq 0$.
  Simple calculation gives
  \[
    f'(x)=-1/(x+1)+\psi (x+1)/x- ( 1 /x^{2} ) \log \Gamma (x+1),
  \]
 where $\psi (x)=\textup{d} \log \Gamma (x)/\textup{d}x$ is the Digamma function. Let
  \[
    g (x)=x^{2}f'(x)=-x^{2}/(x+1)+x\psi (x+1)-\log \Gamma (x+1),
  \]
   We only need to show that $g (x)\leq 0$ for $x\geq 0$.
  In fact,
  \[
    \begin{aligned}
      g'(x) & =-x (x+2)/(x+1)^{2}+\psi (x+1)+x\psi' (x+1)-\psi (x+1) \\
            & =-x (x+2)/(x+1)^{2}+x\psi'(x+1).
    \end{aligned}
  \]
By \citet[][Theorem 1]{alzer1997some}, $\log x-\psi (x)-1/x$ is an increasing function for $x>0$. Thus, we have $1/x+1/x^2-\psi' (x)$ is positive for $x>0$, which implies that $\psi' (x+1)<(x+2)/(x+1)^{2}$ for $x\geq 0$. Therefore, $g'(x) \leq 0$ and $g (x)\leq g (0)=0$ for $x\geq 0$.
\end{proof}

\begin{proof}[of Theorem~\ref{thm:wtedbetter}]
  By Lemma~\ref{lem:pk}, $p_K$ is decreasing in $K$. Therefore,
  \begin{align*}
    \sum_{l=1}^L R_{x_{[l]}}^2 p_{K_{l}}\alpha_{[l]}^{2/K} & \geq \biggl\{\prod_{l=1}^L\Bigl(\frac{p_{K_l} {R}_{x_{[l]}}^2}{K_l}\Bigr)^{K_l}\biggr\}^{1/K}\alpha^{2/K} \geq  p_K\biggl\{\prod_{l=1}^L\Bigl(\frac{ {R}_{x_{[l]}}^2}{K_l}\Bigr)^{K_l}\biggr\}^{1/K}\alpha^{2/K} \\
                                                           & \geq Kp_K\Bigl(\prod_{k=1}^K R_{x_k}^2\Bigr)^{1/K} \alpha^{2/K}.
  \end{align*}
  The last inequality follows from the inequality of geometric and arithmetic means, and
  $$
    {R}_{x_{[l]}}^2 =   \sum_{k\in [l]} {R}^2_{x_k}, \quad \Bigl(\frac{ {R}_{x_{[l]}}^2}{K_l}\Bigr)^{K_l} \geq \prod_{k\in [l]} {R}^2_{x_k}.
  $$
\end{proof}


\subsection{Proof of Theorem~\ref{thm:tauhtadjasymp}}
Define
  $$
    \tauhtb =  M_1^{-1}{\sum_{i=1}^M Z_i \{\tilde{Y}_{i\cdot}(1)-{v}_i^{\top} {\beta}_{v}(1) \}}-M_0^{-1}{\sum_{i=1}^M (1-Z_i) \{\tilde{Y}_{i\cdot}(0)- {v}_i^{\top} {\beta}_{v}(0)\}},
  $$
  and
  $$
    \tauhajb = N_1^{-1}{\sum_{i,j} Z_{ij} \{{Y}_{ij}(1)- {w}_{ij}^{\top} {\beta}_{w}(1) \}}-N_0^{-1}{\sum_{i,j} (1-Z_i) \{Y_{ij}(0)- {w}_{ij}^{\top} {\beta}_{w}(0)\}}.
  $$

  Lemma~\ref{sup:lembeta} below shows that $\tauht$ and $\tauhtb$ have the same asymptotic distributions under cluster rerandomization. The same is true for $\tauhaj$ and $\tauhajb$.

  \begin{lemma}
    \label{sup:lembeta}
    If Conditions~\ref{cond:1}, \ref{cond:2}, \ref{cond:4}, \ref{cond:6} hold, then
    $$
      M^{1/2}(\tauadjht -\tauhtb)|\mathcal{D}_{{c}}(A_c)=o_{\textup{p}}(1). 
    $$
    If Condition~\ref{cond:1}, \ref{cond:3}, \ref{cond:5}, \ref{cond:7} hold,  then
    $$
      M^{1/2}(\tauadjhaj-\tauhajb)|\mathcal{D}_{{x}}(A_x)=o_{\textup{p}}(1). 
    $$
  \end{lemma}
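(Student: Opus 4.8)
The plan is to reduce the two conditional statements to their unconditional counterparts and then settle the latter through the standard algebraic decomposition of a Lin-type regression-adjusted estimator. Two elementary observations drive everything. First, since the analysis-stage covariates are centered and the fitted hyperplane of an ordinary least squares fit passes through the within-group sample means, running the interacted regression of $\tilde Y_{i\cdot}$ on $(1,Z_i,v_i,Z_iv_i)$ is algebraically identical to running the two arm-specific regressions of $\tilde Y_{i\cdot}$ on $(1,v_i)$, so that
\[
  \tauadjht = \big(\hat{\bar Y}_1 - \bar v_1^{\top}\hbetav(1)\big) - \big(\hat{\bar Y}_0 - \bar v_0^{\top}\hbetav(0)\big),
\]
where $\hat{\bar Y}_z$ and $\bar v_z$ are the arm-$z$ sample means of $\tilde Y_{i\cdot}$ and $v_i$ and $\hbetav(z)$ is the arm-$z$ least squares slope of $\tilde Y_{i\cdot}$ on $v_i$; since $\tauhtb = (\hat{\bar Y}_1 - \bar v_1^{\top}\betav(1)) - (\hat{\bar Y}_0 - \bar v_0^{\top}\betav(0))$ by definition, subtracting yields the exact identity
\[
  \tauadjht - \tauhtb = -\,\bar v_1^{\top}\{\hbetav(1)-\betav(1)\} + \bar v_0^{\top}\{\hbetav(0)-\betav(0)\}.
\]
Second, if $X_M = o_{\textup{p}}(1)$ and $\liminf_M \operatorname{pr}\{\mathcal{D}_{c}(A_c)\}>0$ — which holds because the acceptance probability converges to the positive constant $\alpha$ by Proposition~\ref{prop:FCLT} and Condition~\ref{cond:criterion} — then $X_M \mid \mathcal{D}_{c}(A_c) = o_{\textup{p}}(1)$ as well, since $\operatorname{pr}\{|X_M|>\epsilon \mid \mathcal{D}_{c}(A_c)\} \le \operatorname{pr}\{|X_M|>\epsilon\}/\operatorname{pr}\{\mathcal{D}_{c}(A_c)\}\to 0$ for every $\epsilon>0$; the same remark applies verbatim to $\mathcal{D}_{x}(A_x)$.

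By these two observations it suffices to prove the unconditional statements $M^{1/2}(\tauadjht - \tauhtb) = o_{\textup{p}}(1)$ and $M^{1/2}(\tauadjhaj - \tauhajb) = o_{\textup{p}}(1)$. For the cluster-level estimator I would bound the two summands in the identity above separately. Because $\sum_{i=1}^M v_i = 0$, the vector $\bar v_1 = M_1^{-1}\sum_{i=1}^M Z_i v_i$ has mean zero, so the vector-form finite population central limit theorem of \cite{li2017general}, valid under Conditions~\ref{cond:1}, \ref{cond:2} and \ref{cond:6}, gives $M^{1/2}\bar v_z = O_{\textup{p}}(1)$ for $z=0,1$. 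On the other hand $\hbetav(z)$ equals the arm-$z$ sample covariance matrix of $v_i$ inverted and multiplied by the arm-$z$ sample covariance of $(v_i,\tilde Y_{i\cdot}(z))$; the finite-population law of large numbers drives these sample second moments to their finite-population limits, whose ratio is $\betav(z)$ since $\betav(z)$ is the population least squares slope (using $M^{-1}\sum_i\tilde Y_{i\cdot}(z)=\bar Y(z)$), and the nonsingularity and moment conditions in Conditions~\ref{cond:2} and \ref{cond:6} make this precise. Hence $\hbetav(z)-\betav(z)=o_{\textup{p}}(1)$ and $M^{1/2}\bar v_z^{\top}\{\hbetav(z)-\betav(z)\} = O_{\textup{p}}(1)\cdot o_{\textup{p}}(1) = o_{\textup{p}}(1)$, which proves the cluster-level claim. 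Equivalently, this unconditional equivalence is the oracle property of the regression-adjusted estimator established in \cite{su2021modelassisted}, so one may simply cite it.

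For the Hajek estimator the same scheme applies at the unit level: the interacted regression of $Y_{ij}$ on $(1,Z_{ij},w_{ij},Z_{ij}w_{ij})$ gives the exact identity $\tauadjhaj - \tauhajb = -\,\bar w_1^{\top}\{\hbetaw(1)-\betaw(1)\} + \bar w_0^{\top}\{\hbetaw(0)-\betaw(0)\}$, with $\bar w_z$ the arm-$z$ mean of $w_{ij}$ over units and $\hbetaw(z)$ the arm-$z$ unit-level least squares slope. Rewriting $\sum_{j}w_{ij}$ as a cluster total and using $\sum_{i,j}w_{ij}=0$, together with $N_z/N$ bounded away from $0$ and $1$ in probability (a consequence of the cluster-size moment conditions in Condition~\ref{cond:7}), the finite population central limit theorem again yields $M^{1/2}\bar w_z = O_{\textup{p}}(1)$, while the fourth-moment conditions on $w_{ij}$ and $Y_{ij}(z)$ in Condition~\ref{cond:7} make the within-arm sample second moments converge so that $\hbetaw(z)-\betaw(z)=o_{\textup{p}}(1)$; if a cleaner treatment of the random denominators is wanted, one may first pass to the scaled-cluster-total representation via Lemma~\ref{lem:tauhaj-approx}. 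The product is again $o_{\textup{p}}(1)$, and the probability-transfer observation upgrades it to the statement conditional on $\mathcal{D}_{x}(A_x)$.

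The main obstacle is not the algebra but the cluster structure in the Hajek case: varying cluster sizes create the random denominators $N_0,N_1$ and cluster-size-weighted covariate means, so the law of large numbers and central limit theorem must be applied to cluster-level aggregates rather than to unit-level quantities, and Condition~\ref{cond:7} is exactly calibrated to make $\hbetaw(z)$ consistent and $M^{1/2}\bar w_z$ tight; everything else — the Lin-type decomposition and the passage from the unconditional to the conditional statement — is routine.
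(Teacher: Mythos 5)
Your proof is correct and follows essentially the same route as the paper: the crucial step in both is transferring the unconditional result $M^{1/2}(\tauadjht-\tauhtb)=o_{\textup{p}}(1)$ to the conditional statement via $\operatorname{pr}\{|X_M|>\delta\mid\mathcal{D}\}\leq\operatorname{pr}\{|X_M|>\delta\}/\operatorname{pr}\{\mathcal{D}\}$ together with the positive limiting acceptance probability from Proposition~\ref{prop:FCLT}. The paper simply cites Lemma~A7 of \cite{su2021modelassisted} for the unconditional part, which you both invoke and additionally sketch via the Lin-type decomposition; the extra detail is sound but not a different argument.
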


  \begin{proof}[of Lemma~\ref{sup:lembeta}]
    \label{sup:propbetas}
    By Lemma~A7 of \cite{su2021modelassisted}, $M^{1/2}(\tauadjht -\tauhtb) = o_{\textup{p}}(1)$, which implies that, for any $\delta>0$,
    $$
      \operatorname{pr}\left\{|M^{1/2}(\tauadjht -\tauhtb)|>\delta\right\}=o(1).
    $$
    Proposition~\ref{prop:FCLT} implies that
    $$
      \lim\limits_{M \rightarrow \infty}\operatorname{pr}\{\mathcal{D}_{c}(A_c)\} >0.
    $$
    Therefore,
    $$
      \operatorname{pr}\left\{ |M^{1/2}(\tauadjht -\tauhtb)|>\delta \mid {D}_{c}(A_c)\right\} \leq \operatorname{pr}\left\{ |M^{1/2}(\tauadjht -\tauhtb)|>\delta \right\}\Bigm/\operatorname{pr}\{{D}_{c}(A_c)\}= o(1),
    $$
    which implies that
    $$
      M^{1/2}(\tauadjht -\tauhtb)\mathrel{|}\mathcal{D}_{c}(A_c) = o_{\textup{p}}(1).
    $$
    Similarly, we can prove the second part of Lemma~\ref{sup:lembeta}.
    
  \end{proof}

\begin{proof}[of Theorem~\ref{thm:tauhtadjasymp}]
  The asymptotic results for $\tauadjht$ follow from  Lemma~\ref{sup:lembeta} and Theorems~\ref{thm:asymp2} and \ref{thm:asymquadra} with $\tauht = \tauhtb$. The asymptotic results for $\tauadjhaj$ follow from  Lemma~\ref{sup:lembeta} and Theorems~\ref{thm:asymp2} and \ref{thm:asymquadra} with $\tauhaj = \tauhajb $.

\end{proof}

\subsection{Proof of Corollary~\ref{coro:anamoreinfo}}
\begin{proof}[of Corollary~\ref{coro:anamoreinfo}]
  By Lemma~\ref{sup:lembeta}, $M^{1/2}(\tauadjht-\tau) \mathrel{\ \dot{\sim}\ } M^{1/2}(\tauhtb-\tau)$ under cluster rerandomization. When ${c}_i = B{v}_i$, $M^{1/2}(\tauhtb-\tau)$ is uncorrelated with $M^{1/2}\taucht$. Applying Theorem~\ref{thm:asymquadra}, the conclusion follows immediately.
\end{proof}


\subsection{Proof of Theorem~\ref{thm:vhwvlzop}}
\begin{proof}[of Theorem~\ref{thm:vhwvlzop}]
  By Theorems 2 and 4 of \cite{su2021modelassisted}, we have

  \begin{align*}
   M \vhatlz & \mathrel{\xrightarrow{\textup{p}}} e_1^{-1} \varf  \{\tdadje (1)\} +
    e_0^{-1}\varf  \{\tdadje (0)\}\geq \vtthajb,                                 \\
   M \vhathw & \mathrel{\xrightarrow{\textup{p}}}e_1^{-1} \varf  \{\tdadjy(1)\} +
    e_0^{-1}\varf  \{\tdadjy(0)\}\geq \vtthtb .
  \end{align*}
  As
  $
    \lim\limits_{M \rightarrow \infty}\operatorname{pr}\{{D}_{\textnormal{*}}(A_\textnormal{*})\} >0, \ \text{for } * = x,c,
  $
  similar to the proof of Lemma~\ref{sup:lembeta}, we have
  \begin{align*}
   M \vhatlz - \vtthajb\mid \mathcal{D}_{x}(A_x) \mathrel{\geq_{\textup{p}}} 0, \quad
    M \vhathw - \vtthtb\mid \mathcal{D}_{c}(A_c) \mathrel{\geq_{\textup{p}}} 0.
  \end{align*}
\end{proof}

\subsection{Proof of Theorem~\ref{thm::regression-adjustment-under--cluster-rerandomization}}

\begin{proof}[of Theorem~\ref{thm::regression-adjustment-under--cluster-rerandomization}]
   When $v_i=c_i$ and $x_{ij}=w_{ij}$, we have $\radjc = 0$, $\radjx \geq 0$, and
  \[
    \vtthajb \{ 1- (\radjx)^2 \} = \vtthaj(1-R^2_x), \quad  \vtthtb \{ 1 - (\radjc)^2 \} = \vttht(1-R^2_c).
  \]
  By Theorem~\ref{thm:tauhtadjasymp}, we have
  $$
   M^{1/2}(\tauadjht-\tau)\mid \mathcal{D}_{c}(A_c)  \mathrel{\ \dot{\sim}\ } (\vttht)^{1/2}
    (1-R^2_{c}) \epsilon,
  $$
  and
  $$
   M^{1/2}(\tauadjhaj-\tau)\mid \mathcal{D}_{x}(A_x) \mathrel{\ \dot{\sim}\ } (\vtthajb)^{1/2} \big[
    \{1-(R^{\textnormal{adj}}_{ x})^{2}\}^{1/2} \epsilon+
    R^{\textnormal{adj}}_{ x}{\mu}_x^{\top}{\eta} \mid {\eta}^{\top}\vxxhaj^{1/2}A_x\vxxhaj^{1/2}{\eta} \leq a
    \big].
  $$
Moreover, by Theorem~\ref{thm:vhwvlzop}, $M \vhatadjlz$ and $M \vhatadjhw$ are conservative estimators of $\vtthajb$ and $\vtthtb$, respectively. The conservativeness of the Wald-type confidence intervals follows from the conservativeness of $M\vhatadjlz$ and $M\vhatadjhw$ and Proposition~\ref{prop:concentration-symmetric-unimodality}.
  Therefore, (i) and (ii) hold.

  When ${c}_{i}=( n_i ,\tilde{{x}}_{i \cdot }^{\top})^{\top}$,  Corollary~\ref{coro:clusterbetter} implies that $$\vtthajb\{ 1- (\radjx)^2 \} \geq \vtthtb \{ 1 - (\radjc)^2\}.$$
  Thus, the normal component of the asymptotic distribution of the cluster-level regression adjustment is more concentrated than that of the individual-level regression adjustment. Moreover, the asymptotic distribution of the individual-level regression adjustment has a truncated normal component while the cluster-level regression adjustment does not. Thus, (iii) holds.
\end{proof}

\subsection{Proof of Corollary~\ref{coro:analessinfo}}

\begin{proof}[of Corollary~\ref{coro:analessinfo}]
  By Lemma~\ref{sup:lembeta}, $M^{1/2}(\tauadjht-\tau) \mathrel{\ \dot{\sim}\ } M^{1/2}(\tauhtb-\tau)$ under cluster rerandomization. By Proposition~1 of \cite{2020Rerandomization},
  \[
    \tauhtb-\tau = \tauht-\tau-\operatorname{cov}(\tauht,\tauvht)\{\operatorname{cov}(\tauvht)\}^{-1}\tauvht.
  \]
  Since ${v}_i = B{c}_i$, then $\tauvht = B\taucht$. As $M\vara(\taucht) = \vccht$ and $M\cova(\tauht,\taucht) = \vtcht$, substituting them into the above equation, we have
  \begin{align*}
    \tauhtb-\tau  \mathrel{\ \dot{\sim}\ } & \tauht-\tau -  \vtcht B^{\top}(B\vccht B^{\top})^{-1}B \taucht                                                                \\
    =              & \tauht-\tau -\vtcht\vccht^{-1}\taucht+\vtcht\vccht^{-1}\taucht- \vtcht B^{\top}(B\vccht B^{\top})^{-1}B \taucht           \\
    =              & \{\tauht-\tau -\vtcht\vccht^{-1}\taucht\}                                                                                   \\
                   & + \vtcht\vccht^{-1/2}\left\{I_K-\vccht^{1/2}B^{\top}(B\vccht B^{\top})^{-1}B\vccht^{1/2}\right\}\vccht^{-1/2}\taucht.
  \end{align*}
  Note that
  \[
    \vara\{M^{1/2}(\tauht-\tau -\vtcht\vccht^{-1}\taucht)\} = (1-R^2_{ c})\vttht, \]
  \[ \cova\{M^{1/2}(\tauht-\tau -\vtcht\vccht^{-1}\taucht),M^{1/2}\taucht\} = 0.
  \]
  Therefore, by Proposition~\ref{cor:weakconv}, 
  we have
  \begin{align*}
     & M^{1/2}(\tauhtb-\tau)|\mathcal{D}_c(A_c)  \mathrel{\ \dot{\sim}\ } \vttht^{1/2}(1-R^2_{{c}})^{1/2}\epsilon + \\ &\quad \quad \quad  \vtcht\vccht^{-1/2}\left\{I_K-\vccht^{1/2}B^{\top}(B\vccht B^{\top})^{-1}B\vccht^{1/2}\right\}{\eta} \mid {\eta}^{\top}\vccht^{1/2} A_c\vccht^{1/2}{\eta}\leq a,
  \end{align*}
  which concludes the proof.
\end{proof}

\subsection{Proof of Corollary~\ref{cor:lessinfocovotho}}
\begin{proof}[of Corollary~\ref{cor:lessinfocovotho}]
Let
  $
    {\mu}_c^\top = (\mu_1,\ldots,\mu_K)=\vtcht \vccht^{-1/2}/\{\vtcht \vccht^{-1}\vctht\}^{1/2}.
  $
  It suffices to compare the variances of
  $$
  (\vttht)^{1/2}\big\{ (1-R^2_{ c})^{1/2} \epsilon + R_{ c}{\mu}_c^{\top}H{\eta}\mid {\eta}^{\top}\vccht^{1/2}A_c\vccht^{1/2}{\eta} \leq a\big\}
  $$
  and
  $$
  (\vttht)^{1/2}\big\{ (1-R^2_{ c})^{1/2} \epsilon + R_{ c}{\mu}_c^{\top}{\eta}\mid {\eta}^{\top}\vccht^{1/2}A_c\vccht^{1/2}{\eta} \leq a\big\},
  $$
  where
  $$
    H =
    \begin{bmatrix}
      0 & 0                      \\
      0 & I_{(K-J) \times (K-J)}
    \end{bmatrix}.
  $$
  Due to the conditional uncorrelatedness of $\eta_m$ and $\eta_n$ for $m\not=n$, the difference between the variances of the above two distributions is equal to
  $$
    \vttht {R}_{c}^2 \sum_{k=1}^J \mu^2_k\operatorname{var}(\eta_k\mid{\eta}^{\top}\vccht^{1/2}A_c\vccht^{1/2}{\eta} \leq a)\geq 0,
  $$
  which concludes the  proof.

\end{proof}

\subsection{Proof of Theorem~\ref{thm:interval-conservative}}

Let $\hbetav (1)$ and $\hbetav (0)$ be the coefficients of $v$ in the least squares fit of $\tdyi$ on $(1,v_i)$ under treatment and control arms, respectively.
Let $\hbetaw(1)$ and $\hbetaw (0)$ be the coefficients of $w$ in the least squares fit of $Y_{ij}$ on $(1,w_{ij})$ under treatment and control arms, respectively. Lemmas~\ref{lem:beta-consistensy}--\ref{lem:sample-cov-consistensy} below  show the consistency of the ordinary least squares coefficients, sample means, variances, and covariances under cluster rerandomization. Their unconditional versions can be found in \cite{su2021modelassisted} and \cite{Li2020factorial}. The proof is similar to that of Lemma~\ref{sup:lembeta}, so we omit. Let $\bary_z$ be the sample mean of $Y_{ij}$ under treatment arm $z$.
\begin{lemma}
If Conditions~\ref{cond:1}, \ref{cond:2}, \ref{cond:4}, \ref{cond:6} hold, then
  \label{lem:beta-consistensy}
  \begin{align*}
    \hbetav (z)-\betav (z) \mid \dca \xrightarrow{\textnormal{p}} 0.
  \end{align*}
  If Condition~\ref{cond:1}, \ref{cond:3}, \ref{cond:5}, \ref{cond:7} hold,  then
  \begin{align*}
          \hbetaw (z)-\betaw(z) \mid \dxa \xrightarrow{\textnormal{p}} 0.
  \end{align*}
\end{lemma}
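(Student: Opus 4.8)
The plan is to obtain the conditional consistency from the corresponding unconditional statements by the same elementary truncation argument used in the proof of Lemma~\ref{sup:lembeta}. First I would recall that $\hbetav(z)-\betav(z)\xrightarrow{\textnormal{p}}0$ and $\hbetaw(z)-\betaw(z)\xrightarrow{\textnormal{p}}0$ hold \emph{unconditionally}: these are established in \citet{su2021modelassisted} and \citet{Li2020factorial}, and their arguments go through under Conditions~\ref{cond:1}, \ref{cond:2}, \ref{cond:4}, \ref{cond:6} (respectively Conditions~\ref{cond:1}, \ref{cond:3}, \ref{cond:5}, \ref{cond:7}), since they use only the consistency of the relevant sample first and second moments of the potential outcomes and covariates within each arm, which follows from the moment and negligibility conditions together with Markov's inequality. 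Consequently, for every $\delta>0$,
\[
  \operatorname{pr}\bigl\{\|\hbetav(z)-\betav(z)\|>\delta\bigr\}=o(1),\qquad \operatorname{pr}\bigl\{\|\hbetaw(z)-\betaw(z)\|>\delta\bigr\}=o(1).
\]

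The key step is to show that the acceptance events carry probability bounded away from zero. By Proposition~\ref{prop:FCLT}, $M^{1/2}\taucht$ converges in distribution to a Gaussian vector $T_c$ whose limiting covariance $\vccht=(e_1e_0)^{-1}\covf({C})$ is nonsingular by Conditions~\ref{cond:1} and \ref{cond:4}(i); since $A_c$ also has a nonsingular limit, which I denote $A_{c,\infty}$, Slutsky's theorem and the continuous mapping theorem give $M\taucht^{\top}A_c\taucht\xrightarrow{d}T_c^{\top}A_{c,\infty}T_c$, a positive-definite quadratic form in a nondegenerate Gaussian, whose law is absolutely continuous. Hence $\operatorname{pr}\{\dca\}\to\pra\{\dca\}=\operatorname{pr}(T_c^{\top}A_{c,\infty}T_c\le a)$, and this limit is strictly positive for any $a>0$ because $\{x:x^{\top}A_{c,\infty}x\le a\}$ is a neighbourhood of the origin; the same reasoning, using $\covf(\tilde{{X}})$ and $A_x$, gives $\pra\{\dxa\}>0$. (The distinction between $<$ and $\le$ in the definition of the rerandomization events is immaterial, the boundary being Lebesgue-null.) Combining, for any $\delta>0$ and all large $M$,
\[
  \operatorname{pr}\bigl\{\|\hbetav(z)-\betav(z)\|>\delta\mid\dca\bigr\}\le \frac{\operatorname{pr}\bigl\{\|\hbetav(z)-\betav(z)\|>\delta\bigr\}}{\operatorname{pr}\{\dca\}}\longrightarrow 0,
\]
since the numerator is $o(1)$ while the denominator converges to a strictly positive limit; this is precisely $\hbetav(z)-\betav(z)\mid\dca\xrightarrow{\textnormal{p}}0$. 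Replacing $\dca$ by $\dxa$ and $\hbetav$ by $\hbetaw$ throughout yields the individual-level statement.

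The only point that is not completely routine is verifying $\liminf_{M\to\infty}\operatorname{pr}\{\dca\}>0$ and its analogue for $\dxa$, i.e., that the quadratic form in the rerandomization criterion has a non-degenerate Gaussian limit. This, however, is exactly what the nonsingularity assumptions on $\covf({C})$, $\covf(\tilde{{X}})$, $A_c$, and $A_x$ in Conditions~\ref{cond:4}--\ref{cond:5} are designed to ensure, so once Proposition~\ref{prop:FCLT} is invoked nothing substantive remains; the rest is the bookkeeping already spelled out in the proof of Lemma~\ref{sup:lembeta}.
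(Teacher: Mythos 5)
Your proposal is correct and follows exactly the route the paper takes: cite the unconditional consistency of $\hbetav(z)$ and $\hbetaw(z)$ from \citet{su2021modelassisted} and \citet{Li2020factorial}, then transfer it to the conditional statement via the bound $\operatorname{pr}(\cdot\mid\mathcal{D})\le\operatorname{pr}(\cdot)/\operatorname{pr}(\mathcal{D})$ together with the fact that $\operatorname{pr}\{\dca\}$ and $\operatorname{pr}\{\dxa\}$ have strictly positive limits by Proposition~\ref{prop:FCLT}, which is precisely the argument of Lemma~\ref{sup:lembeta} that the paper says it reuses here.
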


\begin{lemma}
  \label{lem:mean-consistensy}
  If Condition~\ref{cond:1}, \ref{cond:3}, \ref{cond:5}, \ref{cond:7} hold,  then
  \begin{align*}
    \bary_z-\bar{Y}(z) \mid \dxa \xrightarrow{\textnormal{p}} 0.
  \end{align*}
\end{lemma}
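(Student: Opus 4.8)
The plan is to reduce the conditional statement to an unconditional one, exactly as in the proof of Lemma~\ref{sup:lembeta}. First I would establish that $\bary_z-\bar{Y}(z)\xrightarrow{\textnormal{p}}0$ holds unconditionally, i.e.\ under classic cluster randomization; then I would upgrade this to convergence conditional on $\dxa$ by invoking $\lim_{M\to\infty}\textnormal{pr}\{\dxa\}>0$, which follows from Proposition~\ref{prop:FCLT} because the limiting law of $M^{1/2}\tauxhaj$ is a nondegenerate Gaussian (its covariance $\vxxhaj=(e_1e_0)^{-1}\covf(\tdx)$ is nonsingular by Condition~\ref{cond:5}) and hence assigns positive mass to the limiting acceptance ellipsoid determined by $A_x>0$ and $a>0$. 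Concretely, for any $\delta>0$,
\[
  \textnormal{pr}\{|\bary_z-\bar{Y}(z)|>\delta\mid\dxa\}\le\textnormal{pr}\{|\bary_z-\bar{Y}(z)|>\delta\}\big/\textnormal{pr}\{\dxa\}\to 0,
\]
which gives the claim.

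For the unconditional part, I would rewrite $\bary_z$ in terms of the scaled cluster totals: since $\sum_{j=1}^{n_i}Y_{ij}(z)=\tdyi(z)\,N/M$ and $n_i=\tdomega_i\,N/M$, one has $\bary_z=\bigl(\sum_{i:Z_i=z}\tdyi(z)\bigr)\big/\bigl(\sum_{i:Z_i=z}\tdomega_i\bigr)$, which is a ratio of two sample averages over a simple random sample of $M_z$ of the $M$ clusters. By Condition~\ref{cond:1} the sampling fraction is bounded away from $0$ and $1$, and by Conditions~\ref{cond:3} and \ref{cond:7} the finite-population first and second moments of $\tdyi(z)$ and $\tdomega_i$ have finite limits; a finite-population weak law of large numbers for sampling without replacement then yields
\[
  M_z^{-1}\textstyle\sum_{i:Z_i=z}\tdyi(z)\xrightarrow{\textnormal{p}}M^{-1}\sum_{i=1}^M\tdyi(z)=\bar{Y}(z),\qquad M_z^{-1}\textstyle\sum_{i:Z_i=z}\tdomega_i\xrightarrow{\textnormal{p}}M^{-1}\sum_{i=1}^M\tdomega_i=1,
\]
so $\bary_z\xrightarrow{\textnormal{p}}\bar{Y}(z)$ by Slutsky's theorem. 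Alternatively, this unconditional convergence is already implicit in the consistency results of \citet{su2021modelassisted}.

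The step requiring the most care is the unconditional weak law with the random denominator $\sum_{i:Z_i=z}\tdomega_i$ (equivalently $N_z$); once this is in place, the passage to the conditional statement is mechanical and identical to the device used throughout Section~\ref{sec::E}, in particular in the proof of Lemma~\ref{sup:lembeta}.
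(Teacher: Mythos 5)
Your proposal is correct and matches the paper's approach: the paper omits this proof, stating only that the unconditional versions are in \citet{su2021modelassisted} and that the conditional upgrade is the same device as in the proof of Lemma~\ref{sup:lembeta} (bounding the conditional probability by the unconditional one divided by $\textnormal{pr}\{\dxa\}$, which has a positive limit by Proposition~\ref{prop:FCLT}). Your additional sketch of the unconditional weak law for the ratio $\bigl(\sum_{i:Z_i=z}\tdyi(z)\bigr)\big/\bigl(\sum_{i:Z_i=z}\tdomega_i\bigr)$ simply fills in details the paper delegates to the cited reference.
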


\begin{lemma}
  \label{lem:sample-cov-consistensy}
  If Conditions~\ref{cond:1}, \ref{cond:2}, \ref{cond:4}, \ref{cond:6} hold, then
  $$
  \hvarfz\{\tde(z)\} - \var_{\textnormal{f}}\{\tde(z)\}\mid \dxa \xrightarrow{\textnormal{p}} 0,  \quad \hcovfz(\tilde{\omega}) - \cov_{\textnormal{f}}(\tilde{\omega})\mid \dxa \xrightarrow{\textnormal{p}} 0, 
$$  
$$
\hcovfz(\tdx)-\covf(\tdx)\mid\dxa \xrightarrow{\textnormal{p}} 0,\quad \hcovfz(\tilde{W})-\covf(\tilde{W})\mid\dxa \xrightarrow{\textnormal{p}} 0,
$$
$$
    \hcovfz ( \tdx,\tilde{W} ) -\covf(\tdx,\tilde{W})\mid\dxa \xrightarrow{\textnormal{p}} 0,    \quad \hcovfz\{ \tdx,\tde(z)\}-\covf\{\tdx,\tde(z)\}\mid\dxa \xrightarrow{\textnormal{p}} 0,
    $$
    $$
    \hcovfz(\tdx,\tilde{\omega})-\covf(\tdx,\tilde{\omega})\mid\dxa \xrightarrow{\textnormal{p}} 0, \quad   \hcovfz\{\tilde{W},\tde(z)\}-\covf\left\{\tilde{W},\tde(z)\right\}\mid\dxa \xrightarrow{\textnormal{p}} 0,
    $$
    $$
    \hcovfz(\tilde{W},\tilde{\omega})-\covf(\tilde{W},\tilde{\omega})\mid\dxa \xrightarrow{\textnormal{p}} 0, \quad \hcovfz\{\tilde{\omega},\tde(z)\}-\covf\left\{\tilde \omega,\tde(z)\right\}\mid\dxa \xrightarrow{\textnormal{p}} 0.
    $$
  If Condition~\ref{cond:1}, \ref{cond:3}, \ref{cond:5}, \ref{cond:7} hold,  then
  $$
  \hvarfz\{\tdy(z)\} - \var_{\textnormal{f}}\{\tdy(z)\}\mid \dca \xrightarrow{\textnormal{p}} 0,  \quad \hcovfz(V) - \cov_{\textnormal{f}}(V)\mid \dca \xrightarrow{\textnormal{p}} 0, 
  $$
  $$
    \hcovfz\{\tdy(z),V\}-\covf\{\tdy(z),V\}\mid\dca \xrightarrow{\textnormal{p}} 0,\quad \hcovfz(C)-\covf(C)\mid\dca \xrightarrow{\textnormal{p}} 0 ,
  $$
  $$
  \hcovfz (C,V) -\covf(C,V)\mid\dca \xrightarrow{\textnormal{p}} 0,  \quad \hcovfz\{C,\tdy(z)\}-\covf\{C,\tdy(z)\}\mid\dca \xrightarrow{\textnormal{p}} 0.
  $$

\end{lemma}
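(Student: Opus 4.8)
The plan is to reduce the conditional statement to the corresponding \emph{unconditional} consistency results, which are already available in \citet{su2021modelassisted} and \citet{Li2020factorial}, together with the fact that the acceptance event of cluster rerandomization has probability bounded away from zero asymptotically. Concretely, I would first record that, under Conditions~\ref{cond:1}, \ref{cond:3}, \ref{cond:5}, \ref{cond:7} (respectively Conditions~\ref{cond:1}, \ref{cond:2}, \ref{cond:4}, \ref{cond:6}), each of the differences appearing in the statement --- for instance $\hvarfz\{\tde(z)\}-\varf\{\tde(z)\}$, $\hcovfz(\tdx,\tilde{W})-\covf(\tdx,\tilde{W})$, $\hcovfz\{\tdy(z),V\}-\covf\{\tdy(z),V\}$, and so on --- is $o_{\textup{p}}(1)$ without conditioning. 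These are finite-population laws of large numbers: $\hvarfz(\cdot)$ and $\hcovfz(\cdot)$ are sample moments computed over the random subset of clusters (or units) assigned to arm $z$, and the cited references establish their convergence under the stated moment and size conditions. I would simply verify that the specific Conditions invoked here are precisely the ones used there --- boundedness of the relevant finite-population second moments, negligibility of $M^{-1}\max_i\|\tdxi\|_\infty^2$ and of the maximal normalized cluster size, and so forth --- and appeal to those proofs rather than redo the calculations.

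Next I would establish that $\liminf_{M\to\infty}\operatorname{pr}\{\dxa\}>0$ and $\liminf_{M\to\infty}\operatorname{pr}\{\dca\}>0$. This follows from Proposition~\ref{prop:FCLT}: $M^{1/2}\tauxhaj$ and $M^{1/2}\taucht$ converge in distribution to centered Gaussian vectors with nonsingular limiting covariances (Conditions~\ref{cond:5}(i), \ref{cond:4}(i)), and $A_x$, $A_c$ have nonsingular limits, so by the continuous mapping theorem the quadratic forms $M\tauxhaj^\top A_x\tauxhaj$ and $M\taucht^\top A_c\taucht$ converge in distribution to a finite mixture of scaled $\chi^2_1$ variables, whose distribution function is continuous and strictly positive at the fixed threshold $a>0$. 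Hence $\operatorname{pr}\{\dxa\}$ and $\operatorname{pr}\{\dca\}$ converge to strictly positive limits. This is the same device already used inside the proof of Lemma~\ref{sup:lembeta}.

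Finally I would combine the two ingredients exactly as in the proof of Lemma~\ref{sup:lembeta}: for any $\delta>0$ and any of the relevant differences, written generically as $\Delta_M$,
\[
  \operatorname{pr}\{|\Delta_M|>\delta\mid\dxa\}\;\le\;\frac{\operatorname{pr}\{|\Delta_M|>\delta\}}{\operatorname{pr}\{\dxa\}}\;\longrightarrow\;0,
\]
and symmetrically for $\dca$, since the numerator tends to $0$ by the unconditional consistency while the denominator is bounded below by a positive constant for large $M$. Applying this to each of the finitely many terms listed in the statement and intersecting the resulting events gives the claimed joint conditional convergence.

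I expect the main obstacle to be bookkeeping rather than conceptual: one must track, term by term, which finite-population moments and which size-negligibility conditions are needed to invoke the unconditional convergence results of \citet{su2021modelassisted} and \citet{Li2020factorial} --- in particular for the cross-covariances that mix outcomes, covariates, and the normalized cluster sizes $\tilde{\omega}$ (for example $\hcovfz(\tdx,\tilde{\omega})$ and $\hcovfz\{\tilde{\omega},\tde(z)\}$), which lean on the fourth-moment bounds and the $\max_i\tilde{\omega}_i=o(M^{1/3})$ control in Condition~\ref{cond:7}(ii). Once those are in place, the conditioning step is immediate from the positivity of the acceptance probability, so there is no genuinely new difficulty beyond what Lemma~\ref{sup:lembeta} already handles, which is precisely why the detailed argument can be omitted.
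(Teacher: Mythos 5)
Your proposal is correct and follows essentially the same route as the paper: the paper explicitly omits the proof, remarking only that the unconditional versions are in \citet{su2021modelassisted} and \citet{Li2020factorial} and that the conditional statements follow by the same argument as Lemma~\ref{sup:lembeta}, i.e.\ dividing the unconditional tail probability by the asymptotically positive acceptance probability. Your pairing of condition sets with the two blocks (Conditions~\ref{cond:1}, \ref{cond:3}, \ref{cond:5}, \ref{cond:7} for the individual-level quantities under $\dxa$, and Conditions~\ref{cond:1}, \ref{cond:2}, \ref{cond:4}, \ref{cond:6} for the cluster-level quantities under $\dca$) is the natural one and consistent with Lemma~\ref{lem:beta-consistensy}, so no gap there.
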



By the property of projection, Lemma~\ref{lem:lower-bound-projection} below holds.

\begin{lemma}
  \label{lem:lower-bound-projection}
  \begin{align*}
        {\cov_{\textnormal{f}}  } \{ \tdadje(1)-\tdadje(0), \tilde{G} \} \{\covf (\tilde{G})\}^{-1}  {\cov_{\textnormal{f}}  } \{ \tilde{G},\tdadje(1)-\tdadje(0) \} &\leq \varf\{\tdadje(1)-\tdadje(0)\},\\
                {\cov_{\textnormal{f}}  } \{ \tdadjy(1)-\tdadjy(0), C \} \{\covf (C)\}^{-1}  {\cov_{\textnormal{f}}  } \{ C,\tdadjy(1)-\tdadjy(0) \} &\leq \varf\{\tdadjy(1)-\tdadjy(0)\}.
  \end{align*}
\end{lemma}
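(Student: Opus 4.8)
The plan is to derive both inequalities from a single elementary fact: for a finite population, the variance of any scalar sequence equals the variance of its population-level linear projection onto a set of covariates plus the variance of the projection residual, and the residual variance is nonnegative. It suffices to prove the first inequality; the second follows by the identical argument after replacing $\tdadje(1)-\tdadje(0)$ by $\tdadjy(1)-\tdadjy(0)$ and $\tilde{G}$ by $C$.

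Concretely, I would let $A_i = \tdadjei(1)-\tdadjei(0)$ and let $\tilde{g}_{i\cdot}$ denote the $i$th row of $\tilde{G}$, so that all the finite-population (co)variances in the statement are taken over the $M$ clusters. Define $b=\{\covf(\tilde{G})\}^{-1}\covf\{\tilde{G},\tdadje(1)-\tdadje(0)\}$, the coefficient vector of the population least squares fit of $A_i$ on $\tilde{g}_{i\cdot}$; this is well defined because $\covf(\tilde{G})$ is assumed invertible wherever this lemma is used (see Section~\ref{sec::B}). Put $r_i = A_i-\bar A-(\tilde{g}_{i\cdot}-\bar{\tilde{g}})^{\top}b$ for the residual. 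The normal-equation identity gives $\covf(r,\tilde{G})=\covf\{\tdadje(1)-\tdadje(0),\tilde{G}\}-b^{\top}\covf(\tilde{G})=0$, so $r$ is finite-population uncorrelated with every column of $\tilde{G}$. Decomposing $A_i-\bar A = r_i + (\tilde{g}_{i\cdot}-\bar{\tilde{g}})^{\top}b$ and expanding $\varf(A)$ by bilinearity then yields
\begin{align*}
  \varf\{\tdadje(1)-\tdadje(0)\} &= \varf(r) + b^{\top}\covf(\tilde{G})\,b \\
  &= \varf(r) + \covf\{\tdadje(1)-\tdadje(0),\tilde{G}\}\{\covf(\tilde{G})\}^{-1}\covf\{\tilde{G},\tdadje(1)-\tdadje(0)\},
\end{align*}
where the second equality uses the symmetry of $\covf(\tilde{G})$. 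Since $\varf(r)\geq 0$, the claimed inequality follows, and the same computation with $C$ in place of $\tilde{G}$ proves the second line.

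There is really no substantive obstacle here; the only points requiring care are bookkeeping ones, namely verifying that the relevant finite-population covariances are over clusters and that $\covf(\tilde{G})$ (respectively $\covf(C)$) is invertible, which is guaranteed by the standing assumptions. Equivalently, one can note that the displayed difference $\varf\{\tdadje(1)-\tdadje(0)\}-\covf\{\tdadje(1)-\tdadje(0),\tilde{G}\}\{\covf(\tilde{G})\}^{-1}\covf\{\tilde{G},\tdadje(1)-\tdadje(0)\}$ is the Schur complement of $\covf(\tilde{G})$ in the joint finite-population covariance matrix of $(A_i,\tilde{g}_{i\cdot})$, and the latter is positive semidefinite; but the residual decomposition above is the cleanest route and matches the author's remark that the result holds ``by the property of projection.''
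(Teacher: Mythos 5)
Your proof is correct and is essentially the argument the paper has in mind: the paper offers no proof beyond the remark that the lemma holds ``by the property of projection,'' and your residual decomposition---variance equals projection variance plus a nonnegative residual variance, with the normal equations giving $\covf(r,\tilde{G})=0$---is exactly that property spelled out. The invertibility bookkeeping and the Schur-complement alternative you note are both consistent with the paper's standing assumptions.
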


\begin{proof}[of Theorem~\ref{thm:interval-conservative}]
Recall that
\begin{align*}
    (\hradjc)^2 &= \big[e_1^{-1} \hvarf \{\tdadjy(1)\mid {C}\} +
  e_0^{-1}\hvarf \{\tdadjy(0)\mid {C}\} -
  \hvarf \{\tdadjy(1)-\tdadjy(0)\mid {C}\}\big]\bigm /{\hatvtthtb},\\
    (\hradjx)^2 &= \big[{e_1^{-1} \hvarf \{\tdadje(1) \mid \tilde{{X}}\} +
  e_0^{-1}\hvarf \{\tdadje(0)\mid \tilde{{X}}\} -
  \hvarf \{\tdadje (1)-\tdadje (0)\mid \tilde{{X}}\}}\big]\bigm /{\hatvtthajb}.
\end{align*}
First, we prove that the denominators, $\hatvtthajb$ and $\hatvtthtb$, are conservative. Recall that
\begin{align*}
    \hatvtthajb &= {e_1^{-1} \hvarf \{\tdadje(1)\} +
  e_0^{-1}\hvarf \{\tdadje(0)\} -
  \hvarf \{\tdadje (1)-\tdadje (0)\mid \tilde{G}\}},\\
 \hatvtthtb &=  e_1^{-1} \hvarf \{\tdadjy(1)\} +
  e_0^{-1}\hvarf \{\tdadjy(0)\} -
  \hvarf \{\tdadjy(1)-\tdadjy(0)\mid C\}.
\end{align*}
Note that
$$
    \hvarf\{ \tdadjy(z)\} =  \hvarfz\{ \tdy(z)-\hbetav^\top(z) V\},
$$
$$
\hvarf\{\tdadje(z)\} =  \hvarfz\{\tdy(z)-\tilde{\omega}\bary_z-\hbetav^\top(z) \tilde{W}\} = \hvarfz\bigl[\tde(z)-\tilde{\omega}\{\bary_z-\bar{Y}(z)\}-\hbetav^\top(z) \tilde{W}\bigr].
$$
Therefore, by Lemmas~\ref{lem:beta-consistensy}--\ref{lem:sample-cov-consistensy}, we have
  \begin{align*}
    \hvarf\{ \tdadjy(z)\}\mid \dca \xrightarrow{\textnormal{p}} \var_{\textnormal{f}}\{ \tdadjy(z)\},\quad \hvarf\{\tdadje(z)\} \mid \dxa \xrightarrow{\textnormal{p}} \var_{\textnormal{f}}\{\tdadje(z)\}.
  \end{align*}
Recall that
$$
   \hvarf   \{\tdadje (1)-\tdadje (0)\mid \tilde{{G}} \}= \{ \hcovft ( \tilde{U}, \tilde{{G}} ) - \hcovfc ( \tilde{U}, \tilde{{G}} ) \} \{\covf (\tilde{{G}})\}^{-1} \{ \hcovft ( \tilde{{G}},\tilde{U} ) - \hcovfc ( \tilde{{G}},\tilde{U} )  \}.\\
$$
As $\tilde{{G}}$ is the union  of $\tilde{X}$ and $\tilde{W}$,
$$
\hcovfz( \tilde{U}, \tilde{{G}} )\mid \dxa \xrightarrow{\textnormal{p}}\cov_{\textnormal{f}} \{ \tdadje(z), \tilde{{G}} \},
$$
which implies that
\begin{align*}
   \hvarf   \{\tdadje (1)-\tdadje (0)\mid \tilde{{G}} \}\mid \dxa \xrightarrow{\textnormal{p}}
{\cov_{\textnormal{f}}  } \{ \tdadje(1)-\tdadje(0), \tilde{{G}}\} \{\covf (\tilde{{G}})\}^{-1}  {\cov_{\textnormal{f}}  } \{ \tilde{{G}},\tdadje(1)-\tdadje(0) \}.
\end{align*}

By Lemma~\ref{lem:lower-bound-projection}, the right-hand side is smaller than or equal to $ \varf\{\tdadje(1)-\tdadje(0)\}$. Similarly, the probability limit of $\hvarf \{\tdadjy(1)-\tdadjy(0)\mid C\}$ is smaller than or equal to ${\varf} \{\tdadjy(1)-\tdadjy(0)\}$. 
Therefore, $\hatvtthajb$ is a conservative estimator of $\vtthajb$, and $\hatvtthtb$ is a conservative estimator of $\vtthtb$.

Second, we prove that  the estimated numerators for the multiple correlation coefficients are consistent.
Note that $$ \hcovf\{ \tdadjy(z),C\} =  \hcovfz\{ \tdy(z)-\hbetav^\top(z) V,C\},$$
$$
    \hcovf\{\tdadje(z),\tilde{X}\} =  \hcovfz\{\tdy(z)-\tilde{\omega}\bary_z-\hbetav^\top(z) \tilde{W},\tilde{X}\} = \hcovfz\left[\tde(z)-\tilde{\omega}\{\bary_z-\bar{Y}(z)\}-\hbetav^\top(z) \tilde{W},\tilde{X}\right].
$$
  Therefore, by Lemmas~\ref{lem:beta-consistensy}--\ref{lem:sample-cov-consistensy}, we have
  \begin{align*}
    \hcovf\{\tdadjy(z),C\}\mid \dca \xrightarrow{\textnormal{p}} \covf\{\tdadjy(z),C\},\quad \hcovf\{\tdadje(z),\tilde{X}\} \mid \dxa \xrightarrow{\textnormal{p}} \covf\{\tdadje(z),\tilde{X}\}.
  \end{align*}
Together with the following property by Lemma~\ref{lem:sample-cov-consistensy},
  \begin{align*}
    \hcovfz( \tdx ) -\covf(\tdx)\mid\dxa \xrightarrow{\textnormal{p}} 0, & \quad \hcovfz ( C )-\covf(C)\mid\dca \xrightarrow{\textnormal{p}} 0,
  \end{align*}
  we have
  \begin{align*}
    e_1^{-1} \hvarf \{\tdadje(1) \mid \tilde{{X}}\} +
    e_0^{-1} \hvarf \{\tdadje(0)\mid \tilde{{X}}\} -
    \hvarf \{\tdadje (1)-\tdadje (0)\mid \tilde{{X}}\}\mid\dxa \xrightarrow{\textnormal{p}} \\ e_1^{-1} \varf \{\tdadje(1) \mid \tilde{{X}}\} +
    e_0^{-1}\varf \{\tdadje(0)\mid \tilde{{X}}\} -
    \varf \{\tdadje (1)-\tdadje (0)\mid \tilde{{X}}\},
  \end{align*}
  \begin{align*}
      e_1^{-1} \hvarf \{\tdadjy(1)\mid {C}\} +
  e_0^{-1}\hvarf \{\tdadjy(0)\mid {C}\} -
  \hvarf \{\tdadjy(1)-\tdadjy(0)\mid {C}\} \mid \dca  \xrightarrow{\textnormal{p}}\\  e_1^{-1} {\var_{\textnormal{f}}} \{\tdadjy(1)\mid {C}\} +
  e_0^{-1}{\varf} \{\tdadjy(0)\mid {C}\} -
  {\varf} \{\tdadjy(1)-\tdadjy(0)\mid {C}\}.
  \end{align*}
Thus, the estimated numerators for the multiple correlation coefficients are consistent. Similarly, we can prove the consistency of $\hat{\mu}_x$ and $\hat{\mu}_c$. The conclusion follows immediately.

\end{proof}

\subsection{Proof of Corollaries~\ref{cor:varalphaortho-cluster-covariate} and \ref{cor:varalpha-tier-cluster-covariate}, and Theorem~\ref{thm:wtedbetter-cluster-covariate}}

Cluster rerandomization procedures with individual-level covariates and cluster-level covariates have the same form of asymptotic distributions. The proof is just a slight modification of the proof of Corollaries~\ref{cor:varalphaortho}, \ref{cor:varalphatier}, and Theorem~\ref{thm:wtedbetter}.

\end{singlespace}
\end{document}